\author{Christopher Shirley\thanks{The author thanks his supervisor Frédéric Klopp, for his advice and guidance over the course of the study}}
\title{Decorrelation estimates for random discrete schr\"{o}dinger operators in dimension one and applications to spectral statistics}
\newtheorem{theo}{Theorem}[section]
\newtheorem{prop}[theo]{Proposition}
\newtheorem{lem}[theo]{Lemma}
\newtheorem{defi}[theo]{Definition}
\newtheorem{rem}[theo]{Remark}
\numberwithin{equation}{section}
\newcommand{\R}{\mathbb{R}}
\newcommand{\N}{\mathbb{N}}
\newcommand{\Z}{\mathbb{Z}}
\def\restriction#1#2{\mathchoice
              {\setbox1\hbox{${\displaystyle #1}_{\scriptstyle #2}$}
              \restrictionaux{#1}{#2}}
              {\setbox1\hbox{${\textstyle #1}_{\scriptstyle #2}$}
              \restrictionaux{#1}{#2}}
              {\setbox1\hbox{${\scriptstyle #1}_{\scriptscriptstyle #2}$}
              \restrictionaux{#1}{#2}}
              {\setbox1\hbox{${\scriptscriptstyle #1}_{\scriptscriptstyle #2}$}
              \restrictionaux{#1}{#2}}}
\def\restrictionaux#1#2{{#1\,\smash{\vrule height .8\ht1 depth .85\dp1}}_{\,#2}}
\begin{document}
\maketitle

\begin{abstract}
The purpose of the present work is to establish decorrelation estimates for some random discrete Schrödinger operator in dimension one. We prove that the Minami estimates are consequences of the Wegner estimates and Localization. We also prove decorrelation estimates at distinct energies for the random hopping model and Schrödinger operators with alloy-type potentials. These results are used to give a description of the spectral statistics.
\end{abstract}

\section{Introduction}
To  introduce our results, let us first consider one of the random operators that will be studied in the rest of this article.
Let $(a_\omega(n))_{n\in\Z}$ be a sequence of independent, real random variables uniformly distributed on $[1,2]$. Define $H_\omega:\ell^2(\Z)\rightarrow\ell^2(\Z)$ as the following hopping model :
\begin{equation}\label{fullophopp}
\forall u\in\ell^2(\Z), [H_\omega(u)](n):=a_{\omega}(n+1) u(n+1)+ a_\omega(n)u(n-1).
\end{equation} 
We know that, with probability one, $H_\omega$ is bounded and is self-adjoint. As $H_\omega$ is $\Z$-ergodic, we know that there exists a set $\Sigma$ such that, with probability one, the spectrum of $H_\omega$ is equal to $\Sigma$ (see for instance \cite{Kirsh-invitation}). Using   \cite[Theorem 3]{k-mar}, we know that
\begin{equation}
\Sigma=[-4,4].
\end{equation}
We also know (\cite{Kirsh-invitation}) there exists a bounded density a state $E\rightarrow\nu(E)$ such that for any continuous function $\psi:\R\rightarrow\R$
\begin{equation}
\mathbb{E}\left(\left \langle\delta_0,\psi\left(H_\omega\right)\delta_0\right\rangle\right)=\int_\R \psi(E)\nu(E) dE.
\end{equation}
Let $N(E)$ be the integrated density of state, i.e the distribution function of the measure $\nu(E) dE$. 

 One of the purposes of this article is to give a description of the spectral statistics of $H_\omega$. In this context, we study the restriction of $H_\omega$ to a finite box and study the diverse statistics when the size of the box tends to infinity. For $L\in\N$, let $\Lambda_L=[1,L]\cap\Z$ and $H_\omega(\Lambda_L)$ be the restriction of $H_\omega$ to $\ell^2(\Lambda_L)$ with Dirichlet boundary conditions. In order to study the spectral statistics of $H_\omega(\Lambda)$ we use four results, the localization assumption, the Wegner estimates, the Minami estimates and the decorrelation estimates for distinct energies. They will be introduced in the rest of the section.

Let $\mathcal{I}$ be a relatively compact open subset of $\R^*$. We know from the \cite[Section IV]{klopp:4975} that the following Wegner estimates hold at the edges of the spectrum.

\textbf{(W) :} There exists $C>0$,   such that for $J\subset \mathcal{I}$ and $L\in\N$ 
\begin{equation}
\mathbb{P}\Big[\text{tr} \left(\textbf{1}_J(H_\omega(\Lambda_L)) \right)\geq 1\Big ]\leq C |J||\Lambda_L|.
\end{equation}

This shows that the integrated density of state (abbreviated IDS from now on) $N(.)$ is Lipschitz continuous. As the IDS is a non-decreasing function, this imply that $N$ is almost everywhere differentiable and its derivative $\nu(.)$  is positive almost-everywhere on its essential support.

Let $(E_j)_{j\in\{1,\dots,L\}}$ denote the eigenvalues, ordered increasingly and repeated according to multiplicity. 
One purpose of this article is, as in \cite{2010arXiv1011.1832G}, to give a description of spectral statistics. For instance, we obtain the following result. Define the \textit{unfolded local level statistics} near $E_0$ as the following point process :
\begin{equation}
\Xi(\xi;E_0,\omega,\Lambda)=\sum_{j\geq1} \delta_{\xi_j(E_0,\omega,\Lambda)}(\xi)
\end{equation}
 where
 \begin{equation}
 \xi_j(E_0,\omega,\Lambda)=|\Lambda|(N(E_j(\omega,\Lambda)-N(E_0)).
 \end{equation}
The unfolded local level statistics are described by the following theorem which corresponds to \cite[Theorem 1.9]{2010arXiv1011.1832G} with a stronger hyp

\begin{theo}\label{ULLS}
Pick $E_0\in \mathcal{I}$ such that $N(.)$ is differentiable at $E_0$ and $\nu(E_0)>0$.Then, when $|\Lambda|\to \infty$, the point process
$\Xi(\xi;E_0,\omega,\Lambda)$ converges weakly to a Poisson process with intensity the Lebesgue measure. That is, for any $p\in\N^*$, for any $(I_i)_{i\in\{1,\dots,p\}}$ collection of disjoint intervals
\begin{equation}
\lim_{|\Lambda|\to\infty}\mathbb{P}
\left(
\left\{\omega;
\begin{aligned}
\sharp\{j;\xi_j(\omega,\Lambda)\in I_1\}=k_1\\
\vdots\hspace*{8em} \vdots\hspace*{1em}\\
\sharp\{j;\xi_j(\omega,\Lambda)\in I_p\}=k_p
\end{aligned}
\right\}\right)=\dfrac{|I_1|^{k_1}}{k_1!}\dots\dfrac{|I_p|^{k_p}}{k_p!}
\end{equation}

\end{theo}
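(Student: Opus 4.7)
The plan is to derive Theorem~\ref{ULLS} as an instance of the abstract Poisson convergence result of \cite{2010arXiv1011.1832G}. At a single energy $E_0$, that result requires three inputs: the Wegner estimate (W), which is stated above; strong (e.g.\ exponential or dynamical) localization for $H_\omega$ in a neighborhood of $E_0$, which in this one-dimensional random hopping setting follows from Furstenberg-type positivity of the Lyapunov exponent together with the Kunz--Souillard scheme; and the Minami-type estimate (M), namely $\mathbb{P}[\mathrm{tr}(\mathbf{1}_J(H_\omega(\Lambda_L))) \geq 2] = o(|\Lambda_L|^2 |J|^2)$ for intervals $J$ of length $\lesssim |\Lambda_L|^{-1}$ centered near $E_0$, which the abstract advertises as a consequence of (W) and localization and must therefore be established elsewhere in this article.

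The core mechanism to reproduce is the following. Fix a scale $\ell = L^{\alpha}$ with $\alpha \in (0,1)$ small and partition $\Lambda_L$ into $K \sim L/\ell$ disjoint sub-intervals $(\Lambda^{(k)}_{\ell})_{k=1}^K$. Since $(a_\omega(n))_{n \in \Z}$ are independent, the family of restricted operators $(H_\omega(\Lambda^{(k)}_\ell))_k$ is independent. Localization combined with a cut-off / resolvent perturbation argument then yields the decoupling statement: in the energy window $W_L(E_0) = E_0 + |\Lambda_L|^{-1}[-C(\log L)^q, C(\log L)^q]$, the eigenvalues of $H_\omega(\Lambda_L)$ agree with the union of eigenvalues of $H_\omega(\Lambda^{(k)}_\ell)$, $k=1,\dots,K$, up to an error $o(|\Lambda_L|^{-1})$ on an event of probability $1-o(1)$. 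Granting this, a Poisson limit theorem for triangular arrays of independent sparse indicators closes the argument: (W) bounds the probability that a given small box contributes an eigenvalue to a unit unfolded interval $I$ by $O(|I|\ell/L)$, (M) forces the conditional probability of contributing two or more such eigenvalues to be $o(\ell/L)$, and independence between the boxes converts these Bernoulli contributions into a Poisson limit. The unfolding through the IDS $N$ automatically normalizes the intensity to Lebesgue measure, since $\mathbb{E}[\#\{j:\xi_j\in I\}] \to |I|$ by the very definition of $N$ together with (W) (which guarantees differentiability of $N$ at $E_0$).

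The main obstacle is the decoupling step. One must show that every eigenfunction of $H_\omega(\Lambda_L)$ with energy in $W_L(E_0)$ is, up to exponentially small tails, concentrated inside a single small box $\Lambda^{(k)}_\ell$, and conversely that any eigenfunction of $H_\omega(\Lambda^{(k)}_\ell)$ with energy in $W_L(E_0)$ produces, after multiplication by a cut-off, an approximate eigenfunction of $H_\omega(\Lambda_L)$ with energy shift $o(|\Lambda_L|^{-1})$. Turning this picture into a quantitative bijection requires sharp Combes--Thomas and eigenfunction-correlator estimates (with subexponential rather than merely polynomial error), and a pigeonhole argument to rule out the eigenfunctions whose localization centers fall near the boundary between adjacent small boxes; this boundary fraction is $O(\ell^{-1})$ per box, and must be shown to produce only $o(1)$ anomalous eigenvalues in $W_L(E_0)$. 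Dimension one enters decisively through the positivity of the Lyapunov exponent on $(-4,4)\setminus\{0\}$, which is also the reason why the statement is restricted to $\mathcal{I} \subset \R^*$.
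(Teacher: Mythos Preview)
Your proposal is correct in its architecture and matches the paper's route: Theorem~\ref{ULLS} is obtained as a special case of the more general Theorem~\ref{ULLS2}, which in turn is the Germinet--Klopp Poisson convergence theorem \cite[Theorem~1.10]{2010arXiv1011.1832G} fed with the inputs (W), (Loc), and the Minami estimate (M) (Theorem~\ref{mina-D3} here), the latter being derived from (W) and (Loc) in Section~\ref{sec1}. The box decomposition, decoupling via eigenfunction localization, and triangular-array Poisson limit that you sketch are precisely the ingredients of Theorems~\ref{alleig} and~\ref{mosteig} and of the argument in \cite{2010arXiv1011.1832G}.

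Two small inaccuracies are worth flagging. First, (W) does \emph{not} guarantee differentiability of $N$ at $E_0$; it only gives Lipschitz continuity, hence a.e.\ differentiability. The differentiability of $N$ at the specific point $E_0$ with $\nu(E_0)>0$ is a \emph{hypothesis} of Theorem~\ref{ULLS}, and it is used to verify the non-degeneracy condition~\eqref{notfastdecreas} needed for Theorem~\ref{ULLS2}. Second, for the random hopping model the localization input (Loc) is not obtained via Kunz--Souillard (which is tailored to diagonal disorder) but via the multiscale analysis of \cite{klopp:4975}, and is only known near the spectral edges; this is why $\mathcal{I}$ is taken in that region.
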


Now, one can wonder what is the joint behaviour at large scale of the point processes $\Xi(\xi;E_0,\omega,\Lambda)$ and $\Xi(\xi;E_1,\omega,\Lambda)$ with $E_0\neq E_1$. We obtain the following theorem which corresponds to \cite[Theorem 1.11]{2010arXiv1011.1832G}.

\begin{theo}\label{joint}
Pick $(E_0,E_0')\in \mathcal{I}^2$ such that $E_0\neq E_0'$ and such that $N(.)$ is differentiable at $E_0$ and $E_0'$ with $\nu(E_0)>0$ and $\nu(E_0')>0$.\\
When $|\Lambda|\rightarrow \infty$ the point processes $ \Xi(E_0,\omega,\Lambda)$ and $\Xi(E_0',\omega,\Lambda)$, converge weakly respectively to two independent Poisson processes on $\R$ with intensity the Lebesgue measure. That is, for $U_+\subset\R$ and $U_-\subset\R$ compact intervals and $\{k_+,k_-\}\in \N\times \N$, one has 
\begin{displaymath}
\mathbb{P}\left(
\begin{aligned}
\sharp\{j;\xi_j(E_0,\omega,\Lambda)\in U_+\}=k_+\\
\sharp\{j;\xi_j(E_0',\omega,\Lambda)\in U_-\}=k_-
\end{aligned}
\right)\underset{\Lambda\to\Z}{\rightarrow} \left(\dfrac{|U_+|^{k_+}}{k_+!}e^{-|U_+|}\right)\left(\dfrac{|U_-|^{k_-}}{k_-!}e^{-|U_-|}\right).
\end{displaymath}
\end{theo}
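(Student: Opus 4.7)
The plan is to reduce the joint law of the two processes $\Xi(\cdot;E_0,\omega,\Lambda)$ and $\Xi(\cdot;E_0',\omega,\Lambda)$ to independent contributions from disjoint sub-boxes, following the strategy pioneered by Klopp for the alloy-type model and transposed to the discrete setting in \cite{2010arXiv1011.1832G}. The first step is to cut $\Lambda_L$ into roughly $L/\ell$ disjoint sub-boxes $(\Lambda_\ell^{(k)})_k$ of intermediate side length $\ell=\ell(L)$ satisfying $\log L\ll \ell\ll L$ (typically $\ell=L^\alpha$ for some well-chosen $\alpha\in(0,1)$). Since the coefficients $(a_\omega(n))_n$ are independent, the pieces $H_\omega(\Lambda_\ell^{(k)})$ are stochastically independent, and the statistics will eventually be read off from these independent copies.

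Next I would use the localization hypothesis together with the Wegner estimate (W) to compare the spectrum of $H_\omega(\Lambda_L)$ inside the windows $E_0+|\Lambda|^{-1}U_+$ and $E_0'+|\Lambda|^{-1}U_-$ with that of the direct sum $\bigoplus_k H_\omega(\Lambda_\ell^{(k)})$. The exponential decay of eigenfunctions associated to energies in $\mathcal{I}$ forces each eigenvalue of $H_\omega(\Lambda_L)$ in a window of width $|\Lambda|^{-1}$ to come, up to an error much smaller than $|\Lambda|^{-1}$, from a single sub-box; the corresponding eigenvalues of $H_\omega(\Lambda_\ell^{(k)})$ are then accessible via independent copies, and the error terms are controlled via (W) applied at the scale $\ell$.

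The Minami estimate, stated in the abstract as a consequence of (W) and localization, guarantees that with probability tending to one, no sub-box carries more than one eigenvalue in a window of width $|\Lambda|^{-1}$ around either $E_0$ or $E_0'$. The key new ingredient is the decorrelation estimate at distinct energies $E_0\neq E_0'$: for it to be useful here it must imply that the expected number of sub-boxes $\Lambda_\ell^{(k)}$ simultaneously carrying an eigenvalue in the $E_0$-window and in the $E_0'$-window tends to $0$ as $L\to\infty$. Thus, on an event of probability tending to one, the sub-boxes contributing points to $\Xi(\cdot;E_0,\omega,\Lambda)$ are disjoint from those contributing to $\Xi(\cdot;E_0',\omega,\Lambda)$.

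Conditioning on this good event, the two point processes become measurable with respect to disjoint families of the independent variables $(a_\omega(n))_n$ and are therefore exactly independent; applying the argument leading to Theorem~\ref{ULLS} separately at $E_0$ and at $E_0'$ then produces a Poisson limit with Lebesgue intensity for each marginal, and independence upgrades this to convergence of the joint law to the product. The substantive obstacle in this program is the derivation of the decorrelation estimate at distinct energies for a hopping model: the randomness enters through the off-diagonal entries, so the usual Wegner-type convexity arguments on the joint density of two eigenvalues must be reworked, and this is where I would expect most of the technical effort to go.
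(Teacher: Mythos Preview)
Your outline is correct and matches the paper's approach: the paper does not give a self-contained proof of Theorem~\ref{joint} but invokes the framework of \cite{2010arXiv1011.1832G} (box decomposition via Theorem~\ref{alleig}, (Loc) to pass to sub-boxes, (M) to force at most one eigenvalue per sub-box per window, and the decorrelation estimate (D)/Theorem~\ref{decohopp} to rule out sub-boxes contributing at both energies), exactly as you describe. The only organizational difference is that you fold the derivation of the decorrelation estimate for the hopping model into the ``substantive obstacle'' of the present proof, whereas in the paper Theorem~\ref{decohopp} is established separately in Section~\ref{secdec1} and then used as a black-box input here; note also that for the hopping model the decorrelation estimate requires $|E_0|\neq|E_0'|$ because of the spectral symmetry $E\mapsto -E$.
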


To prove these theorems we use three results, the localization assumption, the Minami estimates and the decorrelation estimates. They will be introduced in the rest of the section.

We know from \cite{klopp:4975} that the operator satisfies the following localization assumption.\\
\textbf{(Loc):} for all $\xi\in(0,1)$, one has
\begin{equation}
\sup_{L>0} 
\underset{|f|\leq 1}{\underset{\text{supp }f\subset \mathcal{I}}\sup}
\mathbb{E}\left(\sum_{\gamma\in\Z^d}e^{|\gamma|^\xi}\|\textbf{1}_{\Lambda(0)}f(H_\omega(\Lambda_L))\textbf{1}_{\Lambda(\gamma)}\|_2\right)<\infty.
\end{equation}

Now, we introduce the Minami estimates. We prove the 

\begin{theo}[M]\label{mina-D3}
For any $s'\in(0,s)$, $M>1$, $\eta>1$, $\rho\in(0,1)$, there exists $L_{s',M,\eta,\rho}>0$ and $C=C_{s',M,\eta,\rho}>0$ such that, for $E\in J,L\geq L_{s',M,\eta,\rho}$ and $\epsilon\in[L^{-1/s'}/M,ML^{-1/s'}]$ , one has
\begin{displaymath}
\sum_{k\geq2}\mathbb{P}\big(\textup{tr}[\textbf{1}_{[E-\epsilon,E+\epsilon]}(H_\omega(\Lambda_L))]\geq k\big)\leq C( \epsilon^s L )^{1+\rho}.
\end{displaymath}
\end{theo}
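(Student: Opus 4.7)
The plan is to follow the classical derivation of Minami-type estimates from (W) combined with (Loc), as pioneered by Klopp \cite{klopp:4975} and further developed in Germinet-Klopp. The idea is to exploit localization to show that $k$ eigenvalues of $H_\omega(\Lambda_L)$ in $[E-\epsilon,E+\epsilon]$ arise from $k$ eigenfunctions whose localization centers sit in $k$ distinct sub-intervals of $\Lambda_L$; since the random variables on disjoint sub-intervals are independent, the event splits into a product of single-eigenvalue events, each controlled by the Wegner estimate (W).

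Concretely, I would begin by using (Loc) together with a standard Chebyshev-type argument to produce, outside an event of probability $O(L^{-\infty})$, a localization center $x_j\in\Lambda_L$ attached to each normalized eigenfunction $\psi_j$ of $H_\omega(\Lambda_L)$ with eigenvalue $E_j\in\mathcal{I}$, satisfying $\|\textbf{1}_{\Lambda(y)}\psi_j\|\leq e^{-|y-x_j|^\xi}$ for $|y-x_j|$ large. I would then partition $\Lambda_L$ into disjoint sub-intervals $C_1,\ldots,C_m$ of common length $\ell\sim(\log L)^{2/\xi}$, chosen so that $e^{-c\ell^\xi}\ll \epsilon$ in the regime $\epsilon\geq L^{-1/s'}/M$. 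A Combes-Thomas / geometric resolvent argument shows that if an eigenfunction $\psi_j$ has its center in $C_i$, then the local operator $H_\omega(\widetilde C_i)$ on a slightly enlarged cube $\widetilde C_i$ carries an eigenvalue within distance $e^{-c\ell^\xi}$ of $E_j$, hence in $[E-2\epsilon,E+2\epsilon]$. When $k$ centers fall into $k$ distinct (well-separated) cubes $C_{i_1},\ldots,C_{i_k}$, independence across disjoint cubes and the Wegner estimate applied to each $\widetilde C_{i_\alpha}$ give
\begin{equation*}
\mathbb{P}\bigl(\text{centers in }C_{i_1},\ldots,C_{i_k}\bigr)\leq \prod_{\alpha=1}^k \mathbb{P}\bigl(H_\omega(\widetilde C_{i_\alpha})\text{ has an eigenvalue in }[E-2\epsilon,E+2\epsilon]\bigr)\leq (C\epsilon\ell)^k.
\end{equation*}
Summing over the $\binom{m}{k}\leq(L/\ell)^k/k!$ configurations, together with the separately handled ``two centers in the same cube'' contribution (bounded by Wegner on an enlarged cube, which is of lower order), yields
\begin{equation*}
p_k:=\mathbb{P}\bigl(\textup{tr}[\textbf{1}_{[E-\epsilon,E+\epsilon]}(H_\omega(\Lambda_L))]\geq k\bigr)\leq \frac{(C'\epsilon L)^k}{k!}+O(L^{-\infty}).
\end{equation*}

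Finally, since $\epsilon L\leq ML^{1-1/s'}\to 0$ for $s'<s\leq 1$, summing the geometric-type series gives $\sum_{k\geq 2} p_k\leq (C'\epsilon L)^2$, and a comparison of exponents (using $s'$ close enough to $s$ relative to $\rho$) bounds this by $(\epsilon^s L)^{1+\rho}$. The main obstacle will be the reduction step: one has to justify, with sufficient precision, the existence of well-separated localization centers for any family of $k$ orthonormal eigenfunctions with nearby eigenvalues, and then carry out the Combes-Thomas transfer to the local cubes with error $\ll \epsilon$. Handling the coincidence case where two centers land in the same cube requires a dedicated Wegner bound on an enlarged cube; verifying that this contribution does not spoil the final exponent is the subtlest piece of the bookkeeping but follows the strategy of Germinet-Klopp. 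Matching the final arithmetic of $s$, $s'$, and $\rho$ is then routine.
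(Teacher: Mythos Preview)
Your overall architecture---localization to pin down centers, partition into logarithmic sub-boxes, independence across sub-boxes, Wegner on each---is exactly the framework the paper uses (following \cite{2011arXiv1101.0900K}). However, there is a genuine gap at the step you flag as ``the subtlest piece of the bookkeeping'': the coincidence case where two localization centers fall into the same sub-cube $C_i$. You write that this contribution is ``bounded by Wegner on an enlarged cube, which is of lower order'', but Wegner on $\widetilde C_i$ only bounds the probability of \emph{at least one} eigenvalue in $[E-2\epsilon,E+2\epsilon]$ by $C\epsilon^s\ell^\rho$. Summing over the $\sim L/\ell$ sub-cubes gives a contribution of order $\epsilon^s L\,\ell^{\rho-1}$, which is first order in $\epsilon^s L$, not second order. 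This is precisely the quantity you are trying to square; it is not a lower-order term and does not fit inside $(C'\epsilon L)^k/k!+O(L^{-\infty})$. Bounding the probability that a \emph{single} small box carries \emph{two} eigenvalues in a short interval is the whole content of the Minami estimate on that box, so invoking Wegner here is circular.

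The paper closes this gap with a genuinely one-dimensional ingredient that your proposal is missing: a \emph{deterministic} splitting lemma (Theorem~\ref{inv}). Using Pr\"ufer variables for the second-order difference equation, it shows that whenever $H_\omega(\Lambda)$ has two eigenvalues in $[E-\epsilon,E+\epsilon]$ with $\epsilon<\epsilon_0/|\Lambda|^4$, one can find cut points $x_-<x_+$ with $x_+-x_-\geq S$ such that $H_\omega(\{0,\dots,x_-\})$ and $H_\omega(\{x_+,\dots,L\})$ each have an eigenvalue in a slightly enlarged interval around $E$. The proof splits into two cases according to whether the two eigenfunctions live far apart (Theorem~\ref{part1}) or share a site of large amplitude (Theorem~\ref{part2}); the latter requires a Wronskian identity (Lemma~\ref{wrons}) and a delicate Pr\"ufer-phase argument (Lemma~\ref{prufopp}). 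Once this splitting is available, the coincidence case on a small cube of size $\ell$ factors into two \emph{independent} Wegner events on disjoint sub-pieces, yielding the missing square $(C\epsilon^s\ell^{\rho'})^2$. Only then does the summation over configurations produce the bound $(\epsilon^s L)^{1+\rho}$. Without this Pr\"ufer-based splitting (or an equivalent device), your argument does not give anything beyond the Wegner bound itself.
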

It is proven in \cite{2011arXiv1101.0900K} that, in dimension one, for the continuum model, if one has independence at a distance and localization, the Minami estimates are an implication of the Wegner estimates. We show that this statement also holds for discrete models, such as the random hopping model. In both cases, the Minami estimates are not as precise as the Minami estimates proven in \cite{CGA09}, but are sufficient for our purpose. (Loc) and (M) are sufficient to prove Theorem~\ref{ULLS}.

We now introduce the decorrelation estimates at distinct energies. We prove the
\begin{theo}\label{decohopp}
There exists $\gamma>0$ such that for any $\beta\in(1/2,1)$, $\alpha\in (0,1) $ and $(E,E')\in(\R^*)^2$ such that at $|E|\neq |E'|$, for any $k>0$ there exists $C>0$ such that for $L$ sufficiently large and $kL^\alpha\leq l\leq L^\alpha/k$ we have 
\begin{displaymath}
\mathbb{P}\left(
\begin{aligned}
 \textup{tr}\, \textbf{1}_{[E-L^{-1},E+L^{-1}]}\left(H_\omega(\Lambda_l)\right)\neq 0,\\
 \textup{tr}\, \textbf{1}_{[E'-L^{-1},E'+L^{-1}]}\left(H_\omega(\Lambda_l)\right)\neq 0
\end{aligned}
\right)\leq C\dfrac{l^2}{L^{1+\gamma}}.
\end{displaymath}
\end{theo}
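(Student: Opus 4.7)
The plan is to follow Klopp's strategy for decorrelation estimates at distinct energies, adapted to the hopping operator \eqref{fullophopp}. First, I would apply the Minami estimate (M) at scale $l$ with window of size $2L^{-1}$ to reduce to the event on which each of the two spectral intervals $[E-L^{-1},E+L^{-1}]$ and $[E'-L^{-1},E'+L^{-1}]$ contains at most one eigenvalue of $H_\omega(\Lambda_l)$. The contribution of the complementary event is, by (M), at most $(C\,l\,L^{-1})^{1+\rho}$, which in the regime $l\leq L^\alpha/k$ with $\alpha<1$ is $O(l^2 L^{-1-\gamma})$ for a suitable $\gamma>0$. On the remaining event let $\lambda(\omega)$ and $\mu(\omega)$ be these simple eigenvalues, with normalised real eigenvectors $\psi_\lambda$, $\psi_\mu$.

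Since $H_\omega(\Lambda_l)$ depends linearly on $a_\omega(n)$ through the two off-diagonal entries at positions $(n-1,n)$ and $(n,n-1)$, the Hellmann--Feynman formula yields
\begin{equation*}
\partial_{a_\omega(n)}\lambda = 2\,\psi_\lambda(n-1)\psi_\lambda(n),\qquad \partial_{a_\omega(n)}\mu = 2\,\psi_\mu(n-1)\psi_\mu(n),
\end{equation*}
and the Jacobian of $(\lambda,\mu)$ with respect to $(a_\omega(n),a_\omega(m))$ equals
\begin{equation*}
D_{n,m}=4\bigl[\psi_\lambda(n-1)\psi_\lambda(n)\,\psi_\mu(m-1)\psi_\mu(m)-\psi_\lambda(m-1)\psi_\lambda(m)\,\psi_\mu(n-1)\psi_\mu(n)\bigr].
\end{equation*}
If one can produce, outside an event of probability $O(l^2 L^{-1-\gamma})$, a pair $(n,m)\in\Lambda_l^2$ for which $|D_{n,m}|$ is bounded below by a controlled inverse power of $l$, then freezing the other $a_\omega$'s and changing variables from $(a_\omega(n),a_\omega(m))$ to $(\lambda,\mu)$ bounds the conditional probability that both eigenvalues lie in their respective windows by $C L^{-2}|D_{n,m}|^{-1}$. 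A union bound over the $O(l^2)$ candidate pairs, combined with (Loc) to control eigenvector tails outside a localization box, then yields the stated estimate after $\gamma$ is chosen small enough relative to $\alpha$.

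The main difficulty is the quantitative non-degeneracy of $D_{n,m}$, and this is precisely where the hypothesis $|E|\neq|E'|$ enters essentially. The hopping operator possesses the chiral symmetry $\sigma H_\omega \sigma=-H_\omega$ with $\sigma=\mathrm{diag}((-1)^n)$, so $\psi_{-E}=\sigma\psi_E$ and $\psi_{-E}(n-1)\psi_{-E}(n)=-\psi_E(n-1)\psi_E(n)$; consequently, if $E'=-E$ the two columns of $D_{n,m}$ differ only by a global sign and $D_{n,m}\equiv 0$, so the condition $|E|\neq|E'|$ is sharp. To establish the quantitative lower bound I would use the transfer matrix formalism: eigenfunctions at energies near $E$ and $E'$ evolve by $\mathrm{SL}_2(\R)$ transfer matrices whose Pr\"ufer angles rotate with distinct asymptotic speeds whenever $|E|\neq|E'|$, and a polynomial-counting argument combined with (Loc) produces, with the required probability, a pair $(n,m)\in\Lambda_l^2$ for which the angles at the two energies are sufficiently non-aligned to force $|D_{n,m}|$ to be polynomially large in $l^{-1}$. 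Carrying out this deterministic lower bound carefully, while tracking the exceptional sets introduced by the various approximations needed to pass from Pr\"ufer angles back to genuine finite-volume eigenvectors, will be the delicate part of the proof.
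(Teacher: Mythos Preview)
Your outline captures the overall Klopp strategy, but it misses the one structural obstacle that distinguishes the hopping model from the diagonal case and that the paper addresses explicitly.

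\medskip
\textbf{The monotonicity problem and the polar-coordinate fix.} In the diagonal model one has $\partial_{\omega_n}\lambda=\psi_\lambda^2(n)\ge0$, and this sign is what makes the ``staircase'' counting in Lemma~\ref{square} work: the set of $(\omega_\gamma,\omega_{\gamma'})$ for which $(\lambda,\mu)\in J_L\times J_L'$ lies along a monotone broken line, so it is covered by $O(\epsilon^{-1})$ squares of side $\epsilon$, and only then does the local Jacobian bound turn into a global measure bound. For the hopping model your derivatives $\partial_{a_\omega(n)}\lambda=2\psi_\lambda(n-1)\psi_\lambda(n)$ have no sign, so the staircase argument collapses and your asserted bound ``conditional probability $\le C L^{-2}|D_{n,m}|^{-1}$'' is not justified: a lower bound on the Jacobian gives only a local area bound, and without monotonicity the preimage can fold back many times over the compact support of $(a_\omega(n),a_\omega(m))$. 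The paper's remedy is to pass to polar coordinates for the \emph{pair} $(a_\omega(j),a_\omega(j+1))$: using the eigenvalue equation one gets
\[
r\,\partial_r F_n \;=\; a_\omega(j)\partial_{a_\omega(j)}E_n + a_\omega(j+1)\partial_{a_\omega(j+1)}E_n \;=\; 2E_n\,\phi^2(j)\;>\;0
\]
for $E_n>0$, which restores monotonicity in the radial variables $r_1,r_2$ for two pairs with $|j-j_0|\ge2$. The relevant Jacobian is then built from $\phi^2(j_0),\phi^2(j),\psi^2(j_0),\psi^2(j)$, not from the products $\psi(n-1)\psi(n)$, and Lemma~\ref{square} goes through verbatim (Lemma~\ref{squarehopp}).

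\medskip
\textbf{Where $|E|\neq|E'|$ actually enters.} With the squared-eigenvector Jacobian, the ``almost-collinear'' case is $\|\lambda\phi^2-\psi^2\|_1$ small on most sites. The paper does not use a Pr\"ufer-angle separation argument here; instead it runs the $\mathcal P/\mathcal Q$ decomposition of Lemma~\ref{probcoli} directly and, combining $(\Delta_a-E)\phi\approx0$ with $(\Delta_a-E')\phi\approx0$ on $\mathcal P$ and $\mathcal Q$, obtains many independent constraints of the form
\[
\big|a_\omega(n)^2-(E+E')^2\big|\le e^{-l^\beta/2}\quad\text{or}\quad\big|a_\omega(n)^2-(E-E')^2\big|\le e^{-l^\beta/2},
\]
which are nontrivial precisely when $|E|\neq|E'|$ and yield probability $\le e^{-c l^{2\beta}}$ (Lemma~\ref{probcoli2}). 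Your proposed transfer-matrix/rotation-speed heuristic may be workable, but it is a different (and much vaguer) mechanism; the paper's argument is purely algebraic in the random variables $a_\omega(n)$.

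\medskip
Finally, the paper first uses (Loc)(I) to descend from scale $l\asymp L^\alpha$ to scale $\tilde l=(\log L)^{1/\xi'}$ (Theorem~\ref{decohopplog}) before running the Jacobian analysis; you allude to localization but do not perform this reduction, and the exponential thresholds $e^{-\tilde l^\beta}$ in the collinearity lemma require it.
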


Decorrelation estimates give more precise results about spectral statistics. They are a consequence of Minami estimates and localization. In \cite{K10}, Klopp proves decorrelation estimates for eigenvalues of the discrete Anderson model in the localized regime. The result is proven at all energies only in dimension one. In \cite{phong12}, decorrelation estimates are proven for the one-dimensional tight binding model, i.e when there are correlated diagonal and off-diagonal disorders. As we used (M) in the proof of Theorem~\ref{ULLS} we use Theorem~\ref{decohopp} to prove Theorem~\ref{joint}.
\section{Models and main results}
Let $(a_\omega(n))_{n\in\Z}\in\R^\Z$ and $(V_\omega(n))_{n\in\Z}\in\R^\Z$ be two random sequences. Define $\Delta_a:\ell^2(\Z)\rightarrow\ell^2(\Z)$ by
\begin{equation}\label{defjacmod}
\Delta_a u(n)=a_{\omega}(n+1) u(n+1)+ a_\omega(n)u(n-1)
\end{equation}
and define $H_\omega:\ell^2(\Z)\rightarrow\ell^2(\Z)$ as the following Jacobi operator :
\begin{equation}\label{fullop}
\forall u\in\ell^2(\Z), [H_\omega(u)](n):=(\Delta_a u)(n)+ V_\omega(n) u(n).
\end{equation}

For $L\in\N$, let $\Lambda_L=[1,L]\cap\Z$ and $H_\omega(\Lambda_L)$ be the restriction of $H_\omega$ to $\ell^2(\Lambda_L)$ with Dirichlet boundary conditions.
We will study these operators in two cases :\\
$(\bullet)$ $(V_\omega(n))_{n\in\Z}$ are random variables not necessarily independent with a common compactly supported bounded density $\rho$ and the sequence $(a_\omega(n))_{n\in \Z}$ is deterministic (or random provided it is independent with $V_\omega$) and there exists $M>0$ such that $\forall n,\dfrac{1}{M}\leq |a(n)|\leq M$. 

For instance, fix $(a(n))_n$ and for $i\in\Z$, fix $u_i\in\ell^1(\Z)$. We may take $V_\omega(n)=\sum_{i\in\Z}\omega_i u_i(n-i)$ where $(\omega_i)_{i\in\Z}$ are i.i.d random variables. The functions $(u_i)_i$ are called single-site potentials. When all $(u_i)_i$ are equal, the potential is said alloy-type. In particular, when $u_i=\delta_0$, we obtain the Anderson model. \\
$(\bullet)$ $(a_\omega(n))_{n\in\Z}$ are random variables not necessarily independent with a common compactly supported bounded density $\mu$ and the sequence $(V_\omega(n))_{n\in \Z}$ is deterministic or random (not necessarily independent with $a_\omega$). 

For instance, fix $V_\omega:=0$ and take $(a_\omega(n))_n$ i.i.d random variables, then we obtain the following random hopping model : 
\begin{equation}\label{horsdiag}
H_\omega(\Lambda_L)=\begin{pmatrix}
0 & a_\omega(1) & 0 & \hdots    \\
a_\omega(1) & 0 & a_\omega(2) & 0   & \hdots \\ 
0 & a_\omega(2) & 0 &a_\omega(3) & 0 & \hdots\\
 & \ddots & \ddots & \ddots & \ddots & \ddots 
 \end{pmatrix}.
 \end{equation}
\\
We assume that the operator satisfies a condition of independence at  a distance :\\
\textbf{(IAD) : }There exist $S\in\N$ such that  for all $n\in\N$ and $(\Lambda_k)_{k\in\{1,\dots,n\} }$ any collection of intervals in $\Z$, if for $k\neq k'$ $\text{dist}(\Lambda_k,\Lambda_k')\geq S$ then the operators $(H_\omega(\Lambda_k))_{k\in\{1,\dots,n\} }$ are independent.

The discrete alloy-type Schrödinger operator with compactly supported single site potential and the hopping model satisfy (IAD).
\\Let $\mathcal{I}$ be a relatively compact open subset of $\R$. Suppose a Wegner estimate holds on $\mathcal{I}$ : \\
\textbf{(W) :} There exists $C>0$, $s\in(0,1]$, $\rho\geq 1$  such that for $J\subset \mathcal{I}$ and $\Lambda\subset\N$
\begin{equation}
\mathbb{P}\Big[tr \left(\textbf{1}_J(H_\omega(\Lambda)) \right)\geq 1\Big ]\leq C |J|^s|\Lambda|^\rho.
\end{equation}
Wegner estimate has been proven for many different models, discrete or continuous (see \cite{CGA09} for the Anderson model). In this paper, we  rely on \cite{springerlink:10.1007/s00023-010-0052-5} when the potential is alloy-type, on \cite{klopp:4975} for the hopping model and on \cite{Kl95} (see also \cite{Hislop01theintegrated}) for the other models. When $\rho=1$, if the integrated density of state (which we will write IDS from now on) exists, (W) implies that the IDS is Hölder continuous whereas when $\rho>1$, (W) gives no information about the IDS. 
\\
Finally, we suppose that there exists $\mathcal{I}\subset \R$ such that we have a localization property.

\textbf{(Loc): } for all $\xi\in(0,1)$, one has
\begin{equation}
\sup_{L>0} 
\underset{|f|\leq 1}{\underset{\text{supp }f\subset \mathcal{I}}\sup}
\mathbb{E}\left(\sum_{\gamma\in\Z^d}e^{|\gamma|^\xi}\|\textbf{1}_{\Lambda(0)}f(H_\omega(\Lambda_L))\textbf{1}_{\Lambda(\gamma)}\|_2\right)<\infty
\end{equation}

This property can be shown using either multiscale analysis or fractional moment method. In fact we suppose that $\mathcal{I}$ is a region where we can do the bootstrap multiscale analysis of \cite{boot}. (Loc) is equivalent to the conclusion of the bootstrap MSA (see \cite[Appendix]{2010arXiv1011.1832G} for details). We do not require estimates on the operator $H_\omega$ but only on $H_\omega(\Lambda_L)$. In order to make the multiscale analysis one needs an initial estimate about the Green function $(H(\Lambda_{L_0})-E)^{-1}$ where $L_0$ is large enough. For the Anderson model in dimension one, this estimate is valid on the whole axis. For alloy-type potential with undefined sign, we only know it at large disorder (see \cite{alloytype} and reference therein), i.e when $\|\rho\|_\infty$ is sufficiently small. For the hopping model \eqref{horsdiag}, (Loc) is proven at the edges of the almost-sure spectrum \cite{klopp:4975}.

The purpose of this article is to give a description of the spectral statistics for some discrete random Schrödinger operators in dimension one. 
Two key tools to get this description are the Wegner estimates and the following Minami estimates.  For instance we prove

\begin{theo}\label{mina-D2}
For any $s'\in(0,s)$, $M>1$, $\eta>1$, $\rho\in(0,1)$, there exists $L_{s',M,\eta,\rho}>0$ and $C=C_{s',M,\eta,\rho}>0$ such that, for $E\in J,L\geq L_{s',M,\eta,\rho}$ and $\epsilon\in[L^{-1/s'}/M,ML^{-1/s'}]$ , one has
\begin{displaymath}
\sum_{k\geq2}\mathbb{P}\big(\textup{tr}[\textbf{1}_{[E-\epsilon,E+\epsilon]}(H_\omega(\Lambda_L))]\geq k\big)\leq C( \epsilon^s L )^{1+\rho}.
\end{displaymath}
\end{theo}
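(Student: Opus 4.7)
The plan is to follow the scheme \textbf{(W)} $+$ \textbf{(Loc)} $\Rightarrow$ \textbf{(M)} introduced by Klopp in \cite{2011arXiv1101.0900K} for the one-dimensional continuum model, and to transpose it to the present discrete Jacobi setting. The underlying heuristic is that (Loc) forces every eigenfunction with eigenvalue near $E$ to concentrate on a small region (its localization center), while (IAD) decouples the contributions of centers sitting in well-separated subboxes; the Wegner estimate (W) therefore effectively multiplies in independent subboxes, delivering a Minami-type bound of squared-Wegner order.

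First I would fix an intermediate scale $\ell=L^\alpha$ with $\alpha\in(0,1)$ to be adjusted at the end, and partition $\Lambda_L$ into $O(L/\ell)$ disjoint subboxes $\Lambda_\ell(x_j)$ separated by buffers of width $S$, so that (IAD) makes the restricted operators $H_\omega(\Lambda_\ell(x_j))$ jointly independent. Using (Loc) and the standard construction of localization centers (see \cite[Appendix]{2010arXiv1011.1832G}), on an event of probability at least $1-L^{-\eta}$ every normalized eigenvector $\psi$ of $H_\omega(\Lambda_L)$ with eigenvalue in $\mathcal{I}$ admits a center $x(\psi)\in\Lambda_L$ outside of which $\psi$ decays exponentially at rate $|{\cdot}|^\xi$. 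Instead of working directly with $\sum_{k\geq 2}\mathbb{P}(\text{tr}\geq k)$ I would bound $\mathbb{E}[\text{tr}(\text{tr}-1)]$, from which the stated bound follows via $\sum_{k\geq 2}\mathbb{P}(\text{tr}\geq k)=\mathbb{E}[(\text{tr}-1)_+]\leq\frac12\mathbb{E}[\text{tr}(\text{tr}-1)]$; this second factorial moment unfolds as a sum over pairs of eigenvalues and is well suited to the Wegner-squared mechanism.

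The main dichotomy. If $\text{tr}\,\mathbf{1}_{[E-\epsilon,E+\epsilon]}(H_\omega(\Lambda_L))\geq 2$, pick orthonormal eigenvectors $\psi_1,\psi_2$ in the window and let $j_i$ index the subbox of $x(\psi_i)$. In case (a), when $|j_1-j_2|\geq 2$, each $\psi_i$ is an exponentially good quasi-mode for $H_\omega(\Lambda_\ell(x_{j_i}))$ with energy in $[E-2\epsilon,E+2\epsilon]$, so by min-max each restricted operator has an eigenvalue in the enlarged window; these two events are independent by (IAD), each of probability $\leq C\epsilon^s\ell^\rho$ by (W), and the pair contribution is bounded by $(L/\ell)^2(\epsilon^s\ell^\rho)^2=L^2\epsilon^{2s}\ell^{2\rho-2}$. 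Comparing with the target $(\epsilon^s L)^{1+\rho}$ gives a ratio $(\epsilon^s L/\ell^2)^{1-\rho}$, and since $\epsilon\in[L^{-1/s'}/M,ML^{-1/s'}]$ with $s'<s$ one has $\epsilon^s L\leq M^s L^{1-s/s'}\to 0$, so case (a) is under control for any $\alpha>0$. In case (b), when $|j_1-j_2|\leq 1$, both centers live in an enlarged subbox of size $\leq 3\ell$ and (IAD) no longer gives independence; I would exploit the orthogonality of $\psi_1,\psi_2$ together with exponential localization to extract a quantitative lower bound on the distance between the two centers, rule out the possibility that both concentrate on a common single-site region, and then apply (W) in the enlarged subbox with a smallness factor coming from the localization mismatch — the scale $\alpha$ being tuned so that the resulting bound is still of order $(\epsilon^s L)^{1+\rho}$.

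The main obstacle is exactly case (b): the "same subbox" situation where one cannot extract independence from (IAD). Handling it requires the full strength of the bootstrap MSA output, in particular the fact that two orthonormal eigenfunctions at exponentially close energies cannot both be supported, modulo exponentially small tails, on a region too small in $\ell$. This is also precisely where the gap $s'<s$ in the hypothesis is used, providing the polynomial room $L^{1-s/s'}$ needed to absorb logarithmic losses coming from the localization-length scale $(\log L)^{1/\xi}$.
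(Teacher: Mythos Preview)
Your overall scheme --- reduce via localization to small boxes, split according to whether the two localization centers lie in well-separated subboxes or in the same one, and square Wegner in the separated case --- is exactly the strategy of the paper and of \cite{2011arXiv1101.0900K}. Case (a) is fine as you describe it.

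The genuine gap is case (b). Your proposal there (``exploit orthogonality \dots\ extract a lower bound on the distance between the centers \dots\ apply (W) with a smallness factor coming from the localization mismatch'') is not a workable argument. Knowing that two localization centers sit at some distance inside a single box of size $3\ell$ does not by itself produce a second \emph{independent} Wegner factor: the operator $H_\omega(\Lambda_{3\ell})$ still couples the two regions, and every random variable in that box enters both eigenvalues. The MSA output gives decay of eigenfunctions, not stochastic independence of eigenvalues; no amount of ``localization mismatch'' turns one application of (W) on a single box into a squared bound.

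What the paper actually does for case (b) is a \emph{deterministic splitting theorem} (Theorem~\ref{inv}): if a Jacobi box of length $\ell$ carries two eigenvalues in $[E-\epsilon,E+\epsilon]$ with $\epsilon<\epsilon_0/\ell^4$, then a Pr\"ufer-variable analysis produces points $x_-<x_+$ with $x_+-x_-\geq S$ such that each Dirichlet restriction $H_\omega(\{0,\dots,x_-\})$ and $H_\omega(\{x_+,\dots,\ell\})$ has an eigenvalue in a slightly enlarged window. By (IAD) these two restrictions are independent, and (W) applies to each --- this is how one manufactures the second Wegner factor inside a single subbox. The Pr\"ufer machinery (Wronskian bound Lemma~\ref{wrons}, the phase-opposition Lemma~\ref{prufopp}, and the growth lemmas in the appendix) is the missing ingredient.

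A secondary point: the constraint $\epsilon<\epsilon_0/\ell^4$ in the splitting theorem forces the paper to reduce first, via (Loc)(I), to boxes of size $l=(\log L)^\eta$ rather than $L^\alpha$; only at logarithmic scale does the polynomial window $\epsilon\asymp L^{-1/s'}$ satisfy $\epsilon\lesssim l^{-4}$. Your choice $\ell=L^\alpha$ would not meet this constraint, so the splitting theorem could not be invoked even if you had it.
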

The Wegner estimates have been proven for many type of model whereas the Minami estimates are known for all dimension mostly for the Anderson model(\cite{CGA09}). Following \cite{2011arXiv1101.0900K}, we prove that the Minami estimates are consequences of localization and the Wegner estimates
 This will be applied to both the random hopping model and the discrete alloy-type model.

We prove the decorrelation estimates.
\begin{theo}[\textbf{D}]: Let $\beta\in(1/2,1)$. For any $\alpha\in(0,1)$,  $E,E'\in\mathcal{I}^2$, $E\neq E'$, for any $c>1$, one has, for $L$  sufficiently large and $l$ such that $cL^\alpha\leq l\leq c L^\alpha$ 
\begin{displaymath}
\mathbb{P}\left(
\begin{aligned}
 tr\textbf{1}_{[E-L^{-1},E+L^{-1}]}(H_\omega(\Lambda_l))\neq 0,\\
 tr\textbf{1}_{[E'-L^{-1},E'+L^{-1}]}(H_\omega(\Lambda_l)))\neq 0
\end{aligned}
\right)\leq C\left(\dfrac{l^2}{L^{4/3}}\right)e^{(\log L)^\beta}.
\end{displaymath}
\end{theo}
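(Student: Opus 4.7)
My plan is to adapt the strategy of Klopp \cite{K10}, originally developed for the discrete Anderson model, to the Jacobi setting. The first step uses (Loc) to reduce the problem to small boxes. Outside an event whose probability is smaller than $e^{-(\log L)^\beta}$ for some $\beta$, every eigenfunction of $H_\omega(\Lambda_l)$ with energy in $\mathcal{I}$ is concentrated on an interval of length at most $\ell:=(\log L)^{1/\xi'}$ with $\xi'<\xi$; hence if $H_\omega(\Lambda_l)$ carries an eigenvalue near $E$ (resp.\ $E'$), then the restriction of $H_\omega$ to some sub-box of $\Lambda_l$ of side $\ell$ already does. I would then partition $\Lambda_l$ into $N\sim l/\ell$ such sub-boxes and split the joint event according to which pair of sub-boxes $(\Lambda^{(i)},\Lambda^{(j)})$ contains the two offending eigenvalues. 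Whenever $\mathrm{dist}(\Lambda^{(i)},\Lambda^{(j)})\geq S$, (IAD) makes the two local Hamiltonians independent; two applications of (W) give probability at most $(C\ell L^{-1})^2$ per pair, and summing over all distant pairs yields a contribution of order $l^2/L^2$, which is well below the required bound.

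The main contribution therefore comes from the diagonal case, in which both eigenvalues are carried by the same sub-box $\Lambda^{(i)}$. Let $\lambda,\lambda'$ be two eigenvalues of the local operator on $\Lambda^{(i)}$ with $\lambda\in[E-L^{-1},E+L^{-1}]$, $\lambda'\in[E'-L^{-1},E'+L^{-1}]$, and let $\psi,\psi'$ denote the corresponding normalized eigenvectors. The Hellmann--Feynman formula gives $\partial_{V(n)}\lambda=\psi(n)^2$ in the alloy case and $\partial_{a(n)}\lambda=2\psi(n)\psi(n-1)$ in the hopping case, with analogous formulas for $\lambda'$. If the $2\times\ell$ Jacobian $J$ with rows $\nabla\lambda$ and $\nabla\lambda'$ admits a $2\times 2$ minor of absolute value at least $\eta$, then by a change of variables the joint density of $(\lambda,\lambda')$ is controlled by $C\eta^{-1}$ times a product of single-site densities, and the probability that both eigenvalues fall in the prescribed windows of width $L^{-1}$ is $\lesssim \eta^{-1}L^{-2}$. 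Summing over pairs of indices in $\Lambda^{(i)}$ and over the $N$ sub-boxes, a lower bound $\eta\geq L^{-2/3}\ell^{-O(1)}$ is exactly what is needed to produce the exponent $L^{-4/3}$ of the theorem, the subexponential prefactor $e^{(\log L)^\beta}$ absorbing the polylogarithmic losses together with the probability of the bad localization event.

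The heart of the argument, and where I expect the main technical work, is the lower bound on the Jacobian minor. For the hopping model the obstacle is the chiral symmetry: if $(\lambda,\psi)$ is an eigenpair then so is $(-\lambda,\widetilde\psi)$ with $\widetilde\psi(n)=(-1)^n\psi(n)$, which forces the gradient vectors at $\lambda$ and $-\lambda$ to be collinear. This is precisely why the hypothesis $|E|\neq|E'|$ cannot be dropped. I would exploit the one-dimensional transfer-matrix representation of $\psi,\psi'$ together with the Wronskian-type quantity $\psi(n+1)\psi'(n)-\psi(n)\psi'(n+1)$, which is non-zero because $\lambda\neq\lambda'$, to prove that whenever $|\lambda|\neq|\lambda'|$ some $2\times 2$ minor of $J$ is of size at least $\ell^{-C}$ with probability close to one in the disorder restricted to $\Lambda^{(i)}$. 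This replaces the Cartan-type estimate used in \cite{K10} for the diagonal Anderson model and must be tuned carefully to accommodate both the hopping model (where $H$ is homogeneous of degree one in $a$) and the alloy-type case with non-trivial single-site potential (where one must rewrite the gradients in terms of the original i.i.d.\ variables via the single-site shape function).
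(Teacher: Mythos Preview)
Your overall architecture---reduce to boxes of size $\ell=(\log L)^{1/\xi'}$ via localization, kill the off-diagonal pairs by (IAD)+(W), and analyze the diagonal case through the Jacobian of $(\lambda,\lambda')$---matches the paper. The gaps are in the diagonal analysis, where two distinct mechanisms are conflated and the origin of the exponent $4/3$ is misidentified.

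\textbf{The change-of-variables step is not as simple as stated.} You write that a minor of size $\geq\eta$ yields a joint density bounded by $C\eta^{-1}$ and hence probability $\lesssim\eta^{-1}L^{-2}$. But the minor bound holds only at a point, not uniformly in $(\omega_\gamma,\omega_{\gamma'})$; the map $(\omega_\gamma,\omega_{\gamma'})\mapsto(\lambda,\lambda')$ need not be globally injective, and the eigenvalues may even cease to be simple away from that point. The paper handles this by first restricting to the event $\Omega_0(\epsilon)$ on which each eigenvalue is the \emph{unique} one in a window of width $\epsilon=L^{-\alpha}$ (this costs a Minami error $\lesssim\epsilon^2\ell^\kappa$), which in turn bounds the Hessian by $C\epsilon^{-1}$ and makes the map a local diffeomorphism on squares of side $\epsilon$. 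One then counts how many such squares can meet the event: because $\partial_{\omega_n}\lambda\geq0$ (or the polar analogue in the hopping case), the squares lie along a monotone staircase, so there are at most $O(\epsilon^{-1})$ of them, each contributing $(L^{-1}\lambda^{-2})^2$. The balance between the Minami term $L^{-2\alpha}$ and the counting term $L^{\alpha-2}$ is what produces $\alpha=2/3$ and the final $L^{-4/3}$. Your target $\eta\geq L^{-2/3}$ is not what is actually proved or needed; the paper only obtains $\eta\geq e^{-\ell^\beta}$, and the $4/3$ comes from the optimization just described, not from the size of $\eta$.

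\textbf{The Jacobian lower bound is not a Wronskian argument.} The Wronskian $\psi(n+1)\psi'(n)-\psi(n)\psi'(n+1)$ being nonzero does not by itself give a quantitative lower bound on minors built from $\psi(n)^2$ and $\psi'(n)^2$; there is no Cartan-type estimate in \cite{K10} either. What the paper does (Lemma~\ref{probcoli} and its hopping analogue) is show that if \emph{all} minors are $\leq e^{-\ell^\beta}$, then $\psi^2\approx\psi'^2$ pointwise, hence $\psi\approx\pm\psi'$ on complementary sets $\mathcal P,\mathcal Q$. Inserting this into the two eigenvalue equations and taking suitable linear combinations cancels the potential and yields, at $\asymp\ell^\beta$ sites, equations of the form $|\omega_n-\text{const}|\lesssim e^{-\ell^\beta/2}$ (or quadratic analogues). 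Since the $\omega_n$ have bounded density, each such condition has probability $\lesssim e^{-\ell^\beta/2}$, and after paying $2^\ell$ for the choice of $\mathcal P$ one gets probability $\leq e^{-c\ell^{2\beta}}$. This is where $E\neq E'$ (and $|E|\neq|E'|$ for the hopping model) is genuinely used: the constants in those conditions are finite only under that hypothesis. Your Wronskian idea does not produce this probabilistic smallness.
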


 We show that decorrelation estimates hold for the random hopping model and for the Jacobi model with alloy-type potential under certain requirement, such as a compactly supported, non-negative single-site potential.

The estimates (W), (M) and (D), and assumption (Loc) will be used as in \cite{2010arXiv1011.1832G} to get a description of spectral statistics.

The Wegner estimate used in \cite{2010arXiv1011.1832G} is linear in volume. As all our previous results hold provided we have (Loc) and the Wegner estimates, even if it is polynomial in volume, we modify the proof of \cite{2010arXiv1011.1832G}. In our case, the Wegner estimates do not imply that the integrated density of state is Hölder continuous. Thus, we make an assumption about its regularity. But, in dimension one, for some models, one can prove that the integrate density of state is Hölder continuous without using the Wegner estimates.
Eventually, we use our previous results to study the spectral statistics of one-dimensional quantum graphs with random vertex coupling.
  
In section 3 we will show that in dimension one, if one has a condition of independence at distance, for the Jacobi model, the Minami estimates are consequences of the Wegner estimates. In section 4 we will prove a general decorrelation estimates that will be used for alloy-type potential and one-dimensional quantum graphs. We also prove decorrelation estimates for some discrete alloy-type models and for the random hopping model. In section 5 we prove the results about spectral statistics under a regularity hypothesis on the IDS. In section 6 we show that the previous results are also true for one-dimensional quantum graphs with random vertex coupling. Eventually, in the appendix, we give some general results about finite-difference equations used in section 3 and an example of operator which has Wegner estimates polynomial in volume and a Hölder continuous integrated density of state.
\section{In the localized regime, the Wegner estimate implies the Minami estimate}\label{sec1}

	 For any interval $\Lambda$ in $\Z$ we define $H_\omega(\Lambda)$ as the restriction of $H_\omega$ to $\ell^2(\Lambda)$ with Dirichlet boundary condition. We follow the proof of \cite{2011arXiv1101.0900K} and show that if $H_\omega(\Lambda)$ has two close eigenvalues then there exists two restrictions of this operator to disjoint intervals, each having an eigenvalue close to the previous ones. In this section, the random variables $a_\omega$ and $V_\omega$ may be correlated, as long as we have a Wegner estimate. 

\begin{theo}\label{inv}
Fix $S\geq 0$. There exists $\epsilon_0$ such that for all $\epsilon<\dfrac{\epsilon_0}{L^4}$, if \\
$tr \left( \textbf{1}_{[E-\epsilon,E+\epsilon]}(H_\omega(\Lambda_L))\right)\geq 2$ then there exists $x_-$ and $x_+$ in $\{1,\dots,L\}$ with $x_+-x_-\geq S$ such that 
\begin{align*}
tr\left(\textbf{1}_{\left[E-\epsilon\frac{L^4}{\epsilon_0},E+\epsilon\frac{L^4}{\epsilon_0}\right]}(H_\omega(\Lambda_{\{0,\dots,x_-\}})\right)\geq 1,\\ tr\left(\textbf{1}_{\left[E-\epsilon\frac{L^4}{\epsilon_0},E+\epsilon\frac{L^4}{\epsilon_0}\right]}(H_\omega(\Lambda_{\{x_+,\dots,L\}})\right)\geq 1.
\end{align*}
\end{theo}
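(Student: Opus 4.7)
The plan is to follow the one-dimensional strategy of \cite{2011arXiv1101.0900K}: if $H_\omega(\Lambda_L)$ has two eigenvalues close to $E$, manufacture explicit approximate eigenfunctions for each of two disjoint Dirichlet restrictions. Concretely, let $E_1,E_2\in[E-\epsilon,E+\epsilon]$ be two eigenvalues with $L^2$-orthonormal eigenfunctions $u_1,u_2$. The 2-dimensional eigenspace $\mathcal{E}=\textup{span}(u_1,u_2)$ will supply a linear combination $v_-$ that exactly vanishes at a chosen site $x_-+1$ and another $v_+$ vanishing at $x_+-1$. Because the Jacobi operator is tridiagonal and uses only nearest-neighbour couplings, the Dirichlet restriction of $H_\omega$ to $\Lambda_-=\{1,\dots,x_-\}$ (resp. $\Lambda_+=\{x_+,\dots,L\}$) differs from $H_\omega$ at the endpoint $x_-$ (resp. $x_+$) by a single term proportional to $v_\pm(x_-+1)$ (resp. $v_\pm(x_+-1)$); this term vanishes by construction. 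Hence $(H_\omega(\Lambda_-)-E)v_-|_{\Lambda_-}$ equals $(H_\omega-E)v_-$ pointwise on $\Lambda_-$, whose norm is at most $2\epsilon$ since $v_-\in\mathcal{E}$ and $|E_i-E|\leq\epsilon$. The same holds on the right.

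The technical core is choosing the cut points $x_-<x_+$, with $x_+-x_-\geq S$, so that (a) the boundary values $|u_i(x_-+1)|,|u_i(x_+-1)|$ are small enough for the ``vanishing combinations'' to exist nontrivially, and (b) the restrictions $v_-|_{\Lambda_-}$ and $v_+|_{\Lambda_+}$ are quantitatively nonzero, i.e. have $L^2$-norm bounded below by a polynomial power of $L^{-1}$. For (a) I will use pigeonhole on the identity $\sum_{n=1}^L(|u_1(n)|^2+|u_2(n)|^2)=2$, which yields at least $L/2$ positions where $|u_1|^2+|u_2|^2\leq 4/L$. For (b) I will use the monotone matrix-valued function $x\mapsto A(x)=(\langle u_i\mathbf{1}_{\{1,\dots,x\}},u_j\rangle)_{i,j}$, which increases from $0$ to the identity as $x$ runs from $0$ to $L$; by a discrete intermediate-value argument one can locate a transition interval of length at least $S$ on which the smallest eigenvalue of $A(x_-)$ and of $I-A(x_+-1)$ are bounded below. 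Combining the smallness windows of step (a) with the transition window of step (b) yields the desired $(x_-,x_+)$.

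Once the cut points are fixed, $v_\pm$ is defined as the unit-norm vector in $\mathcal{E}$ orthogonal to $(\overline{u_1(x_-\pm1)},\overline{u_2(x_-\pm1)})$ (appropriate sign), and the bound (b) forces $\|v_\pm|_{\Lambda_\pm}\|\geq\epsilon_0/L^4$ after adjusting the absolute constant $\epsilon_0$. The spectral theorem applied to the self-adjoint operator $H_\omega(\Lambda_\pm)$ then produces an eigenvalue in the window
\[
\Bigl[E-\tfrac{2\epsilon}{\|v_\pm|_{\Lambda_\pm}\|},\,E+\tfrac{2\epsilon}{\|v_\pm|_{\Lambda_\pm}\|}\Bigr]\subset\Bigl[E-\epsilon\tfrac{L^4}{\epsilon_0},\,E+\epsilon\tfrac{L^4}{\epsilon_0}\Bigr],
\]
giving the theorem, provided $\epsilon<\epsilon_0/L^4$ so that this enlarged window is of sublinear scale and the assumption on $\epsilon$ is consistent.

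The main obstacle will be step (b), quantifying the lower bound on the smallest eigenvalue of $A(x_-)$ and $I-A(x_+-1)$ with explicit polynomial dependence in $L$: the monotonicity of $A(x)$ in matrix order gives a qualitative transition, but producing a cut window long enough to absorb the spacing $S$ while preserving both the smallness of the boundary values and the uniform lower bound on $\|v_\pm|_{\Lambda_\pm}\|$ requires combining the two pigeonholing arguments carefully, and the resulting loss dictates the $L^4$ appearing in the statement.
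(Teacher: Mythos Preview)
Your overall framework --- taking linear combinations of the two eigenfunctions that vanish at a chosen site, so that the Dirichlet cut introduces no error, and then bounding the restricted norm from below --- is exactly the skeleton of the paper's proof. The difference, and the gap in your proposal, is entirely in step (b).

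The claim that a discrete intermediate-value argument produces $x_-<x_+$ with $x_+-x_-\geq S$ on which \emph{both} smallest eigenvalues, of $A(x_-)$ and of $I-A(x_+-1)$, are polynomially bounded below, is false in general. Consider the regime where $u_1$ is essentially supported on the left portion of $\Lambda_L$ and $u_2$ on the right (this is the generic situation in the localized phase). Then for $x$ up to the support of $u_2$ one has $A(x)\approx\mathrm{diag}(1,0)$, so its smallest eigenvalue is exponentially small; and for $x$ beyond the support of $u_1$ one has $I-A(x)\approx\mathrm{diag}(0,1)$, so \emph{its} smallest eigenvalue is exponentially small. There is no window on which both are polynomially bounded below. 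The point is that the smallest eigenvalue of $A(x_-)$ controls the restricted norm of the \emph{worst} unit combination, but you only need to control one \emph{specific} combination --- the one forced by the vanishing condition at $x_-+1$ --- and that constraint is tied to the pointwise values $u_1(x_-+1),u_2(x_-+1)$, which your Gram-matrix tool does not see.

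The paper resolves this by abandoning the Gram matrix and working instead with Pr\"ufer variables $r_u,\phi_u$. It splits into two cases. If $r_u(n)r_v(n)\leq\eta/L$ for all $n$ (the separated case, which breaks your argument), it tracks the ratio $f(n)=r_u(n)/r_v(n)$: this ratio is large at a localization centre of $u$, small at one of $v$, and can change by at most a bounded factor per step, so there is a site where $f\asymp 1$; there the specific vanishing combination $w=u-\lambda v$ has $\|w|_{\Lambda_-}\|\gtrsim 1/\sqrt{L}$. If instead $r_u(x_0)r_v(x_0)\geq\eta/L$ at some $x_0$ (the overlapping case), the paper uses a Wronskian identity $|r_u r_v\sin(\delta\phi)|\leq M\epsilon$ together with orthogonality of $u,v$ to locate a site where the phase difference $\delta\phi$ is bounded away from $0$ and $\pi$; this again forces the specific combination to be large on one side. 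Both cases are genuinely one-dimensional arguments that exploit the second-order difference equation, and neither is captured by the monotonicity of $A(x)$ alone.
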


This result is the discrete equivalent of Theorem 2.1 in \cite{2011arXiv1101.0900K}. One of the advantages of the discrete model over the continuous model is the following one. As we will apply this result to a random operator, the points $x_+$ and $x_-$ will be random. Thus, when estimating probabilities, we will have to count the number of possibilities. Hence, when studying the continuous model, we have to restrain the points to a lattice whereas in the discrete model, the lattice is given by the model. However, in the discrete model, some results need more computations, they will be given in Appendix A.

In order to remove the condition $\epsilon<\frac{\epsilon_0}{L^4}$ we use the localization assumption and reduce the study of the operator on boxes of size $L$ to the study on boxes of size $\log L$ to which we will apply Theorem~\ref{inv}. Indeed, the localization assumption yields two different characterizations of the localized regime, the proofs of which use (W) and can be found in \cite[Appendix]{2010arXiv1011.1832G}. 

\begin{lem}\textbf{(Loc)(I)} : For all $p>0$ and $\xi\in(0,1)$, for L sufficiently large, there exists a set of configuration $\mathcal{U}_{\Lambda_l}$ of probability larger than $1-L^{-p}$ such that if $\phi_{n,\omega}$ is a normalized eigenvector associated to the eigenvalue $E_{n,\omega}$ and $x_0(\omega)\in \{1,\dots,L\}$ maximize $|\phi_{n,\omega}|$ ($x_0$ is called a localization center) then 
\begin{equation}\label{expdec}
|\phi_{n,\omega}(x)|\leq L^{p+d} e^{-|x-x_0|^{\xi}}.
\end{equation}
\end{lem}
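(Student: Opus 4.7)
The lemma is the standard extraction of semi-uniform localization of eigenfunctions (SULE) from the eigenfunction correlator bound in (Loc), as in the appendix of \cite{2010arXiv1011.1832G}. My plan has three steps.

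First, I apply Markov's inequality to (Loc). Summing over starting boxes $y\in\Lambda_L$ by translation yields, for every fixed admissible $f$,
\[
\mathbb{E}\Bigl[\sum_{y\in\Lambda_L}\sum_{\gamma} e^{|\gamma|^{\xi'}}\|\textbf{1}_{\Lambda(y)}f(H_\omega(\Lambda_L))\textbf{1}_{\Lambda(y+\gamma)}\|_2\Bigr]\leq CL^d,
\]
with $\xi'\in(\xi,1)$ fixed. Applying Markov and a union bound over a countable family of smooth cutoffs $(f_k)_k$ supported in dyadic intervals covering $\mathcal{I}$ produces an event $\mathcal{U}_{\Lambda_L}$ with $\mathbb{P}(\mathcal{U}_{\Lambda_L})\geq 1-L^{-p}$ on which, for every $k$ and every $y\in\Lambda_L$,
\[
\sum_\gamma e^{|\gamma|^{\xi'}}\|\textbf{1}_{\Lambda(y)}f_k(H_\omega(\Lambda_L))\textbf{1}_{\Lambda(y+\gamma)}\|_2 \leq L^{C(p)}.
\]

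Second, for each eigenpair $(E_n,\phi_n)$ with $E_n\in\mathcal{I}$, I select $f_k$ equal to $1$ at $E_n$ and supported in a dyadic interval isolating $E_n$ from the remaining spectrum; using (W) to show that the required isolation scale is polynomial $L^{-q}$ on a further event of probability at least $1-L^{-p}$, the operator $f_k(H_\omega(\Lambda_L))=\phi_n\otimes\phi_n^*$ is rank one and the first-step bound translates to
\[
\|\textbf{1}_{\Lambda(x)}\phi_n\|\cdot\|\textbf{1}_{\Lambda(y)}\phi_n\| = \|\textbf{1}_{\Lambda(x)}f_k(H_\omega(\Lambda_L))\textbf{1}_{\Lambda(y)}\|_2 \leq L^{C(p)}e^{-|x-y|^{\xi'}}.
\]
Specializing $y=x_0$ and using $|\phi_n(x_0)|\geq L^{-1/2}$ (by normalization and maximality of $x_0$) yields $|\phi_n(x)|\leq \|\textbf{1}_{\Lambda(x)}\phi_n\|\leq L^{C(p)+1/2}e^{-|x-x_0|^{\xi'}}$. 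Since $\xi'>\xi$, the exponential absorbs the polynomial prefactor once $|x-x_0|\gtrsim(\log L)^{1/\xi'}$, and the bound is trivial for smaller $|x-x_0|$ after adjusting constants, yielding the stated $|\phi_n(x)|\leq L^{p+d}e^{-|x-x_0|^\xi}$ upon renaming $p$.

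The main obstacle is using the Wegner estimate (W), and not the Minami estimate (which is proved later in the paper using this lemma, and hence is unavailable here), to show that the minimum eigenvalue spacing of $H_\omega(\Lambda_L)$ within $\mathcal{I}$ exceeds a polynomial $L^{-q}$ on a high-probability event. Wegner controls only single-eigenvalue occupation of intervals and not pairs at a prescribed distance, so one must combine it with a careful dyadic scale argument to extract a spacing lower bound; the polynomial-volume form of (W) here (exponent $\rho\geq 1$ rather than $\rho=1$) forces $C(p)$ to grow faster with $p$ but does not alter the overall scheme.
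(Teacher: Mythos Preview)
The paper does not prove this lemma; it simply cites \cite[Appendix]{2010arXiv1011.1832G}, noting that (Loc) is equivalent to the conclusions of the bootstrap multiscale analysis and that (Loc)(I) follows from those conclusions together with (W).

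Your direct approach from (Loc) has a genuine gap, and you have correctly put your finger on it: the eigenvalue-isolation step cannot be carried out with (W) alone. Wegner bounds $\mathbb{P}(\text{an eigenvalue lies in }J)$ by $C|J|^s|\Lambda|^\rho$; it says nothing about the probability that \emph{two} eigenvalues lie within $L^{-q}$ of each other, which is exactly a Minami-type statement. No dyadic covering of $\mathcal{I}$ by intervals of size $L^{-q}$ helps, because summing the Wegner bound over $O(L^q)$ intervals only reproduces a trivial estimate, and the event ``some interval contains two eigenvalues'' is not controlled. Invoking Minami would be circular, since in this paper Minami is derived \emph{from} (Loc)(I).

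The actual argument in \cite{2010arXiv1011.1832G} avoids isolation altogether. One passes through the MSA equivalence: (Loc) is equivalent to the bootstrap MSA output (exponential decay of finite-volume Green functions on good boxes with probability $\geq 1-L^{-p}$), and from that one derives eigenfunction decay directly. Concretely, on the good event one controls the full eigenfunction correlator
\[
\sum_{E_n\in\mathcal{I}} |\phi_n(x)|\,|\phi_n(y)| \;\le\; C\,L^{q}\,e^{-|x-y|^{\xi'}},
\]
and each individual term $|\phi_n(x)|\,|\phi_n(y)|$ is trivially dominated by this sum---no need for $f_k(H_\omega(\Lambda_L))$ to be rank one. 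Taking $y=x_0$ and using $|\phi_n(x_0)|\geq L^{-d/2}$ then yields \eqref{expdec} exactly as in your third step. So your Steps~1 and~3 are fine; it is Step~2 that should be replaced by the observation that the \emph{sum} over eigenvalues, not a single projected term, is what the (Loc)/MSA machinery controls, and that this sum already bounds each summand.
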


\begin{lem}
\textbf{(Loc)(II)} : For all $\xi\in(0,1)$ and $\nu\in(0,\xi)$, for L sufficiently large, there exists a set of configuration $\mathcal{V}_{\Lambda_l}$ of  probability larger than $1-e^{-L^{\nu}}$ such that if $\phi_{n,\omega}$ is a normalized eigenvector associated to the eigenvalue $E_{n,\omega}$ and $x_0(\omega)\in \{1,\dots,L\}$ maximize $|\phi_{n,\omega}|$ ($x_0$ is called a localization center) then 
\begin{equation}
|\phi_{n,\omega}(x)|\leq e^{2L^{\nu}} e^{-|x-x_0|^{\xi}}.
\end{equation}
\end{lem}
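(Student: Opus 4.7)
The plan is to upgrade the expectation bound in hypothesis (Loc) to a pointwise decay estimate on the kernel of spectral projectors that holds with probability at least $1-e^{-L^\nu}$, and then to convert this kernel decay into pointwise decay of individual eigenfunctions via the functional calculus. This is the standard argument detailed in the appendix of \cite{2010arXiv1011.1832G}.

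First, I would fix $\xi$ as in the statement and apply (Loc) with this same $\xi$, obtaining
\[\mathbb{E}\Bigl(\sum_{\gamma\in\Z^d}e^{|\gamma|^\xi}\|\textbf{1}_{\Lambda(0)}f(H_\omega(\Lambda_L))\textbf{1}_{\Lambda(\gamma)}\|_2\Bigr)\leq C\]
uniformly over $f$ with $|f|\leq 1$ and $\mathrm{supp}\,f\subset\mathcal{I}$. Markov's inequality at threshold $e^{L^\nu}$ (up to a harmless logarithmic correction in the exponent) yields a deviation probability of order $e^{-L^\nu}$. Translating the anchor, i.e.\ replacing $\textbf{1}_{\Lambda(0)}$ by $\textbf{1}_{\Lambda(y)}$ for each $y\in\Lambda_L$, and union-bounding over $y$ as well as over a polynomial-in-$L$ deterministic net of test functions $f$, I obtain a configuration set $\mathcal{V}_{\Lambda_L}$ of probability at least $1-e^{-L^\nu}$ on which
\[\|\textbf{1}_{\{x\}}f(H_\omega(\Lambda_L))\textbf{1}_{\{y\}}\|_2\leq e^{L^\nu}\,e^{-|x-y|^\xi}\]
for every $x,y\in\Lambda_L$ and every $f$ in the net. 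The net must be fine enough to resolve eigenvalue separations at the polynomial scale $L^{-N}$ furnished by the Minami bound of Theorem~\ref{mina-D2}.

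Next, I transfer this kernel bound to individual eigenfunctions. Given a normalized eigenfunction $\phi_{n,\omega}$ with eigenvalue $E_{n,\omega}\in\mathcal{I}$ and localization center $x_0$, Theorem~\ref{mina-D2} guarantees that, outside an exceptional subset of $\mathcal{V}_{\Lambda_L}$ of probability much smaller than $e^{-L^\nu}$, $E_{n,\omega}$ is the unique eigenvalue of $H_\omega(\Lambda_L)$ inside some interval $[E_{n,\omega}-\delta,E_{n,\omega}+\delta]$ with $\delta\geq L^{-N}$. Choosing $f$ in the net with $f\equiv1$ near $E_{n,\omega}$, $|f|\leq 1$, and $\mathrm{supp}\,f\subset[E_{n,\omega}-\delta,E_{n,\omega}+\delta]$, the spectral theorem forces $f(H_\omega(\Lambda_L))=P_n$, the rank-one projector onto $\phi_{n,\omega}$. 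Hence
\[|\phi_{n,\omega}(x)\,\overline{\phi_{n,\omega}(x_0)}|=|\langle\delta_x,P_n\delta_{x_0}\rangle|=|\langle\delta_x,f(H_\omega(\Lambda_L))\delta_{x_0}\rangle|\leq e^{L^\nu}\,e^{-|x-x_0|^\xi}.\]
Pigeonhole applied to the normalized vector $\phi_{n,\omega}$ on the $L$-point set $\Lambda_L$ gives $|\phi_{n,\omega}(x_0)|\geq L^{-1/2}$, so the preceding inequality yields $|\phi_{n,\omega}(x)|\leq L^{1/2}e^{L^\nu}e^{-|x-x_0|^\xi}$. For $L$ sufficiently large the prefactor $L^{1/2}e^{L^\nu}$ is bounded by $e^{2L^\nu}$, which is the announced estimate.

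The main technical obstacle lies in coordinating the two probabilistic inputs: the Markov-based good event from (Loc) must be intersected with a Minami-based event ensuring eigenvalue isolation at a polynomial scale, and the deterministic net of test functions must simultaneously be fine enough to resolve this scale yet coarse enough (polynomial cardinality) that the union bound does not destroy the $e^{-L^\nu}$ tail. This is why, although (Loc) already encodes an averaged form of localization, the Wegner estimate enters essentially in the proof via Theorem~\ref{mina-D2}.
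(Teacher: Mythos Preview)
Your overall strategy---apply Markov to the expectation in (Loc), then transfer kernel decay to eigenfunction decay---is the right starting point and matches the spirit of the argument in \cite[Appendix]{2010arXiv1011.1832G}, which the paper simply cites without reproducing. However, your second step contains a genuine gap.

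You invoke Theorem~\ref{mina-D2} to guarantee that each eigenvalue $E_{n,\omega}$ is isolated at scale $L^{-N}$ ``outside an exceptional subset of probability much smaller than $e^{-L^\nu}$.'' This is false: the Minami bound of Theorem~\ref{mina-D2} controls the probability of two eigenvalues in a short interval only \emph{polynomially}, namely by $C(\epsilon^s L)^{1+\rho}$. Union-bounding over the at most $L$ eigenvalues and over your polynomial net of energies still leaves an exceptional probability that is polynomially small in $L^{-1}$, which is far larger than $e^{-L^\nu}$. Hence you cannot absorb the isolation-failure event into $\mathcal{V}_{\Lambda_L}^c$ while preserving $\mathbb{P}(\mathcal{V}_{\Lambda_L}^c)\leq e^{-L^\nu}$. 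This is exactly the distinction between (Loc)(I), which tolerates a polynomial exceptional set $L^{-p}$, and (Loc)(II), which demands a stretched-exponential one; your argument would at best reprove (Loc)(I).

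There is also a logical circularity: in this paper the Minami estimates (Theorems~\ref{mina-D2}, \ref{mina-D}, \ref{mina3}) are \emph{derived from} (Loc)(I) and (Loc)(II) in Section~\ref{sec1}, so they are not available as inputs here. The paper explicitly says the proofs of (Loc)(I) and (Loc)(II) use only (W). The route taken in \cite{2010arXiv1011.1832G} is to pass through the equivalence of (Loc) with the output of the bootstrap multiscale analysis of \cite{boot}: the MSA iteration itself produces stretched-exponential probability estimates on resolvent decay for boxes, and eigenfunction decay then follows from Green's-function decay on good boxes without any eigenvalue-isolation step. That is where the $e^{-L^\nu}$ comes from---it is built into the MSA scheme, not recovered a posteriori from a Minami-type input.
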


Using (Loc)(I) and (Loc)(II), we obtain the following theorems. Fix $J$ a compact in $\mathcal{I}$. When the energy interval $[E-\epsilon,E+\epsilon]$ is sufficiently small in comparison with the size of the box $\Lambda_L$ we prove : 
\begin{theo}\label{minasmall}
For any $s'\in(0,s)$, $p>0$ and $\eta>1$, there exists $L_{\eta,s'}>0$ such that for $E\in J,L\geq L_{\eta,s'}$ and $\epsilon\in(0,L^{-1/s'})$, one has
\begin{equation}
\sum_{k\geq 2 } \mathbb{P}(tr[\textbf{1}_{[E-\epsilon,E+\epsilon]}(H_\omega(\Lambda_L))]\geq k)\leq C(\epsilon^{2s'}L^2+ L^{-p})
\end{equation}
\end{theo}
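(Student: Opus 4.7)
The plan is to set $N := \operatorname{tr}\textbf{1}_{[E-\epsilon,E+\epsilon]}(H_\omega(\Lambda_L))$ and exploit the identity
\[
\sum_{k\geq 2}\mathbb{P}(N\geq k) \;=\; \mathbb{E}[(N-1)_+] \;\leq\; \mathbb{E}\!\left[\tbinom{N}{2}\right].
\]
I would decompose the expectation according to the good event $\mathcal{U}_{\Lambda_L}$ of (Loc)(I), applied with some large exponent $p'=p+2$. On $\mathcal{U}_{\Lambda_L}^c$ the trivial bound $\binom{N}{2}\leq L^2$ together with $\mathbb{P}(\mathcal{U}_{\Lambda_L}^c)\leq L^{-p-2}$ produces a contribution bounded by $L^{-p}$. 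On $\mathcal{U}_{\Lambda_L}$ each eigenvalue has a well-defined localization center, so $\binom{N}{2}\textbf{1}_{\mathcal{U}_{\Lambda_L}}\leq\sum_{x\neq y\in\Lambda_L}\textbf{1}_{A_{x,y}}$, where $A_{x,y}$ is the event that $[E-\epsilon,E+\epsilon]$ contains two eigenvalues of $H_\omega(\Lambda_L)$ whose localization centers are $x$ and $y$.

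Next I would fix $\ell=K\log L$ for $K$ large (to be chosen in terms of $p,s,s',\xi$) and split the pair sum according to whether $|x-y|\geq 4\ell$ or $|x-y|<4\ell$. In the well-separated case, the stretched-exponential decay in (Loc)(I) and a Combes--Thomas-style perturbation argument give that $H_\omega(\Lambda_\ell(x))$ and $H_\omega(\Lambda_\ell(y))$ each admit an eigenvalue in the slightly enlarged window $[E-\epsilon-\eta_L,E+\epsilon+\eta_L]$ with $\eta_L\leq e^{-(\ell/2)^\xi}\leq L^{-Q}$ for arbitrary $Q$. Since the two sub-boxes are at distance $\geq 3\ell\geq S$, (IAD) gives independence and (W) applied to each yields $\mathbb{P}(A_{x,y})\leq C((\epsilon+\eta_L)^s\ell^\rho)^2$. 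Summing over the at most $L^2$ pairs and absorbing $\eta_L^{2s}$ into $L^{-p}$ by taking $K$ large, the well-separated case contributes at most $CL^2\epsilon^{2s}(\log L)^{2\rho}+L^{-p}$.

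In the close case the two centers lie in a common box $\Lambda_{10\ell}(x)$, and the same reduction shows that $H_\omega(\Lambda_{10\ell}(x))$ has two eigenvalues in $[E-\epsilon-\eta_L,E+\epsilon+\eta_L]$. Because $\epsilon\leq L^{-1/s'}$ while $\ell=K\log L$, the condition $\epsilon+\eta_L<\epsilon_0/(10\ell)^4$ holds for $L$ large, so Theorem~\ref{inv} applies to $\Lambda_{10\ell}(x)$ and produces two sub-intervals, separated by at least $S$, each carrying an eigenvalue in a window of size $C\epsilon(\log L)^4$. Union bounding over the $O((\log L)^2)$ choices of the two break-points, and using (IAD) and (W) on each pair of sub-boxes, gives $\mathbb{P}\big(\bigcup_{|y-x|<4\ell}A_{x,y}\big)\leq C\epsilon^{2s}(\log L)^{O(1)}$ for each fixed $x$; summing over $x\in\Lambda_L$ yields a total Case B contribution of $CL\epsilon^{2s}(\log L)^{O(1)}$, which is dominated by Case A.

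Putting the two cases together gives $\mathbb{E}[(N-1)_+]\leq CL^2\epsilon^{2s}(\log L)^{O(1)}+L^{-p}$. Since $s'<s$ and $\epsilon\leq L^{-1/s'}$, one has $\epsilon^{2(s-s')}\leq L^{-2(s-s')/s'}$, which dominates any polylogarithmic factor for $L$ large, so $\epsilon^{2s}(\log L)^{O(1)}\leq\epsilon^{2s'}$ and the claimed bound $C(\epsilon^{2s'}L^2+L^{-p})$ follows. The main technical obstacle I foresee is the eigenvalue transfer from $H_\omega(\Lambda_L)$ to the sub-boxes $H_\omega(\Lambda_\ell(x))$ (and $H_\omega(\Lambda_{10\ell}(x))$): one must carefully track the boundary terms produced by restriction and verify that (Loc)(I)'s stretched-exponential decay makes them a negligible power of $L$. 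It is precisely the form $e^{|\gamma|^\xi}$ that allows the sub-box size to be merely logarithmic, which keeps the polynomial-in-volume Wegner factor $\ell^\rho$ at a polylog size absorbable by the $\epsilon^{2(s-s')}$ gain from $\epsilon\leq L^{-1/s'}$.
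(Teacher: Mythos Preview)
Your proposal is correct and follows essentially the same route as the paper: the paper also reduces to logarithmic sub-boxes via (Loc)(I) (Lemma~\ref{loc}), splits according to whether two localization centers are close (the set $\Omega^b$, handled through Theorem~\ref{inv}) or far apart (the set $\Omega^g$, handled via (IAD) and two applications of (W)), and then absorbs the resulting polylogarithmic factors using $s'<s$ together with $\epsilon\leq L^{-1/s'}$. The only cosmetic difference is that the paper organises the combinatorics through a fixed partition of $\Lambda_L$ into boxes of size $l=(\log L)^\eta$ rather than summing directly over pairs of centers; note also that your inequality $\binom{N}{2}\textbf{1}_{\mathcal{U}_{\Lambda_L}}\leq\sum_{x\neq y}\textbf{1}_{A_{x,y}}$ should allow $x=y$ (two eigenvalues may share a center), a case which is then harmlessly absorbed into your close case.
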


 The following theorem is useful for proving decorrelation estimates for eigenvalues.
\begin{theo}\label{mina-D}
For any $s'\in(0,s)$, $M>1$, $\eta>1$, $\beta> 1+4s+\rho$, there exists $L_{s',M,\eta,\beta}>0$ and $C=C_{s',M,\eta,\beta}>0$ such that, for $E\in J,L\geq L_{s',M,\eta,\beta}$ and $\epsilon\in[L^{-1/s'}/M,ML^{-1/s'}]$ , one has
\begin{displaymath}
\sum_{k\geq2}\mathbb{P}\big(tr[\textbf{1}_{[E-\epsilon,E+\epsilon]}(H_\omega(\Lambda_L))]\geq k\big)\leq C( \epsilon^s L )^2(\log L)^{2\eta\beta}.
\end{displaymath}
\end{theo}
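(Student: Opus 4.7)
The plan is to combine Theorem~\ref{inv} with the localization characterization (Loc)(II) to reduce the spectral problem for $H_\omega(\Lambda_L)$ inside $[E-\epsilon,E+\epsilon]$ to one on sub-boxes of size $l\sim (\log L)^\eta$. On such sub-boxes one has $\epsilon\sim L^{-1/s'}\ll \epsilon_0 l^{-4}$ once $L$ is large, so Theorem~\ref{inv} becomes applicable and can be used to artificially separate nearby eigenvalues into two independent Wegner-controlled events.

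First I would work on the good set $\mathcal{V}_{\Lambda_L}$ of (Loc)(II), whose complement has probability at most $e^{-L^\nu}$, which is super-polynomially smaller than the right-hand side and therefore yields a negligible contribution. Then I would rewrite $\sum_{k\geq 2}\mathbb{P}(N\geq k)=\mathbb{E}[\binom{N}{2}]=\sum_{i<j}\mathbb{P}(E_i,E_j\in[E-\epsilon,E+\epsilon])$, where $N:=\mathrm{tr}\,\mathbf{1}_{[E-\epsilon,E+\epsilon]}(H_\omega(\Lambda_L))$, so that it suffices to control pairs of eigenvalues in $[E-\epsilon,E+\epsilon]$. Partition $\Lambda_L$ into $N_0=\lfloor L/l\rfloor$ disjoint sub-intervals $B_1,\ldots,B_{N_0}$ of length $l$, separated by buffer gaps of width $S$ so that by (IAD) the operators $\{H_\omega(B_k)\}$ are independent.

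For a pair $(\phi_i,\phi_j)$ of orthonormal eigenvectors with eigenvalues $E_i,E_j\in[E-\epsilon,E+\epsilon]$, attach localization centers $x_i,x_j\in\Lambda_L$ and split into two cases. In Case~A, $x_i$ and $x_j$ lie in two distinct boxes $B_{i_1},B_{i_2}$ at distance at least $S$. The exponential decay from (Loc)(II) then implies that $\mathbf{1}_{B_{i_k}}\phi_k$ is an approximate eigenvector of $H_\omega(B_{i_k})$ with error $e^{2L^\nu}e^{-(l/3)^\xi}$, negligible once $\eta$ is large enough; hence each $H_\omega(B_{i_k})$ has an eigenvalue in $[E-2\epsilon,E+2\epsilon]$. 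By (IAD) and (W), the joint probability is at most $C\epsilon^{2s}l^{2\rho}$, and summing over the $O(N_0^2)$ pairs of boxes yields a contribution $CL^2 l^{2\rho-2}\epsilon^{2s}$. In Case~B, $|x_i-x_j|\leq 2l$, so both centers sit in some enlarged box $\tilde B$ of length at most $3l$. The same localization argument forces $H_\omega(\tilde B)$ to have two eigenvalues in $[E-2\epsilon,E+2\epsilon]$, and since $2\epsilon\leq \epsilon_0|\tilde B|^{-4}$ for $L$ large, Theorem~\ref{inv} produces a pair of disjoint sub-intervals $\tilde B_-,\tilde B_+$ of $\tilde B$ (with gap $\geq S$) on which the restricted operators each have an eigenvalue in $[E-C\epsilon l^4,E+C\epsilon l^4]$. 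By (IAD) and (W), this joint event has probability at most $C(\epsilon l^4)^{2s}l^{2\rho}$; summing over the $O(L)$ possible positions of $\tilde B$ and the $O(l^2)$ choices of the decomposition points gives a total contribution bounded by $CLl^{2+8s+2\rho}\epsilon^{2s}$.

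Collecting the two cases yields a bound of the form $C(\epsilon^s L)^2\bigl(l^{2\rho-2}+L^{-1}l^{2+8s+2\rho}\bigr)$, which is dominated by $C(\epsilon^s L)^2 l^{2(1+4s+\rho)}=C(\epsilon^s L)^2(\log L)^{2\eta(1+4s+\rho)}$, matching the desired bound for any $\beta>1+4s+\rho$. The main obstacle is Case~B: two eigenvalues of $H_\omega(\Lambda_L)$ whose localization centers are close cannot be decoupled by (IAD) alone, and one must first apply Theorem~\ref{inv} inside the small box $\tilde B$ at the price of a factor $l^{4s}$ in the Wegner estimate. Calibrating $\eta$ (against $\nu$ so that the exceptional set stays small, and against the pseudo-mode error so that $e^{-(l/3)^\xi}$ beats any inverse power of $L$) and counting the decomposition points inside $\tilde B$ are what generate the polylogarithmic loss $(\log L)^{2\eta\beta}$ and the precise condition $\beta>1+4s+\rho$.
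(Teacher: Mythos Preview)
Your overall architecture---split into ``far'' and ``near'' pairs of localization centers, use (IAD)+(W) directly on the far pairs, and invoke Theorem~\ref{inv} on a small box for the near pairs---is exactly the strategy of the paper (which follows \cite{2011arXiv1101.0900K}); the paper packages this as Theorem~\ref{mina}, from which Theorem~\ref{mina-D} is then read off. So the plan is right.

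There is, however, one genuine gap: you invoke (Loc)(II) while taking sub-boxes of size $l=(\log L)^\eta$. The bound in (Loc)(II) is $|\phi(x)|\leq e^{2L^{\nu}}e^{-|x-x_0|^{\xi}}$, so the truncation error you quote is $e^{2L^{\nu}}e^{-(l/3)^{\xi}}$ with $(l/3)^{\xi}\asymp(\log L)^{\eta\xi}$. No choice of $\eta$ makes a polylogarithm dominate the polynomial $L^{\nu}$; hence this error is \emph{not} negligible and the reduction to boxes of size $(\log L)^{\eta}$ fails. The cure is to use (Loc)(I) instead: there the prefactor is only $L^{p+d}$, and for any $\eta>1$ one can choose $\xi\in(1/\eta,1)$ so that $e^{-(l/3)^{\xi}}$ beats any fixed power of $L$. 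The price is that the exceptional set now has probability $L^{-p}$ rather than $e^{-L^{\nu}}$; but since $\epsilon\asymp L^{-1/s'}$ gives $(\epsilon^s L)^2\asymp L^{2(1-s/s')}$ with $s/s'>1$, one simply takes $p\geq 2(s/s'-1)$ and this term is absorbed. This is precisely why the paper routes Theorem~\ref{mina-D} through Theorem~\ref{mina} (which is built on (Loc)(I) via Lemma~\ref{loc}) and reserves (Loc)(II) for Theorem~\ref{mina2}/\ref{mina3}, where the sub-boxes have polynomial size $l=L^{\alpha}$.

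A minor slip: $\sum_{k\geq 2}\mathbb{P}(N\geq k)=\mathbb{E}[(N-1)_+]$, not $\mathbb{E}\bigl[\binom{N}{2}\bigr]$. The inequality $\mathbb{E}[(N-1)_+]\leq \mathbb{E}\bigl[\binom{N}{2}\bigr]$ does hold, so your pair-counting still gives an upper bound; just don't claim equality.
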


We also prove the following result which is useful if we want to take boxes of size $(\log L)^{1/\xi}$ and energy intervals polynomially small. We then use (Loc)(II) instead of (Loc)(I) :

\begin{theo}\label{mina3}
For any $s'\in(0,s)$, $M>1$, $\xi'\in(0,1)$, $\alpha\in(\xi',1)$, $\nu\in(\xi',\alpha)$, $\beta> 1+4s+\rho$, there exists $L_{s',M,\alpha,\beta,\xi',\nu}>0$ and $C=C_{s',M,\alpha,\beta,\xi',\nu}>0$  such that, for $E\in J,L\geq L_{\alpha,\beta,\xi',\nu}$ and $\epsilon\in[L^{-1/s'}/M,ML^{-1/s'}]$, one has, with $l=(\log L)^{1/\xi'}$ and $l'=l^\alpha$
\begin{displaymath}
\sum_{k\geq2}\mathbb{P}\big(tr[\textbf{1}_{[E-\epsilon,E+\epsilon]}(H_\omega(\Lambda_l))]\geq k\big)\leq C\big( \epsilon^s l \, l'^\beta\big)^2,
\end{displaymath}

\end{theo}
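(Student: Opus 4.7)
The plan is to mimic the strategy used for Theorem~\ref{mina-D}, but on the smaller box $\Lambda_l$ with $l=(\log L)^{1/\xi'}$ and using (Loc)(II) in place of (Loc)(I). The whole point of (Loc)(II) is that, although the bad set it provides on a box of side $l$ only has probability $e^{-l^\nu}=e^{-(\log L)^{\nu/\xi'}}$, this is still super\-polynomially small in $L$ as soon as $\nu>\xi'$, hence will be absorbed by the target bound. So first I would fix some auxiliary $\xi\in(\nu,1)$ (taken close to $1$) and apply (Loc)(II) on $\Lambda_l$ with parameters $\xi$ and $\nu$, obtaining a set of configurations of probability $\ge 1-e^{-l^\nu}$ on which every normalized eigenvector $\phi$ of $H_\omega(\Lambda_l)$ satisfies $|\phi(x)|\le e^{2l^\nu}e^{-|x-x_0|^\xi}$ for some localization center $x_0\in\Lambda_l$.

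Next I would cover $\Lambda_l$ by $O(l/l')$ overlapping sub\-boxes $B_k$ of side $l'$, arranged so that every point is at distance $\ge l'/4$ from the boundary of at least one $B_k$. A standard boundary spill\-over argument shows that if $\phi$ is an eigenvector of $H_\omega(\Lambda_l)$ with localization center in $B_k$ and eigenvalue $E$, then $H_\omega(B_k)$ has an eigenvalue $E'$ with $|E-E'|\le C e^{2l^\nu}e^{-(l'/4)^\xi}$. Since $\xi$ is close to $1$ and $\alpha>\max(\nu,\xi')$, we have $(l')^\xi=(\log L)^{\alpha\xi/\xi'}\gg l^\nu+\log L$, so this error is much smaller than $\epsilon\sim L^{-1/s'}$ and the interval $[E-\epsilon,E+\epsilon]$ can be replaced by a slightly enlarged $[E-\epsilon',E+\epsilon']$ with $\epsilon'=\epsilon+o(\epsilon)$ at the level of the spectra of the $H_\omega(B_k)$.

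With this reduction, if $\mathrm{tr}\,\mathbf{1}_{[E-\epsilon,E+\epsilon]}(H_\omega(\Lambda_l))\ge 2$ on the good event, the two associated localization centers either lie in two different boxes that can be chosen disjoint and $\ge S$ apart, or in the same box $B_k$. In the first case, (IAD) decouples the two sub\-boxes and two applications of (W) on windows of size $\epsilon'$, summed over the $O((l/l')^2)$ pairs, yield $\epsilon^{2s}l^2(l')^{2\rho-2}$. In the second, $H_\omega(B_k)$ has at least two eigenvalues in $[E-\epsilon',E+\epsilon']$; since $\epsilon'(l')^4\to 0$, Theorem~\ref{inv} applies to $B_k$ and produces two further disjoint sub\-intervals of $B_k$, separated by $S$, each carrying an eigenvalue in an interval of size $\epsilon'(l')^4/\epsilon_0$. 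Applying (IAD) and (W) twice and summing over the $O(l/l')$ choices of $B_k$ and the $O((l')^2)$ choices of split yields $\epsilon^{2s}\,l\,(l')^{1+8s+2\rho}$. Both contributions are dominated by $C(\epsilon^s l\,(l')^\beta)^2$ as soon as $\beta>1+4s+\rho$, and the tail $k\ge 3$ only costs a multiplicative constant since the total number of eigenvalues in $\Lambda_l$ is bounded by $l$.

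The main obstacle is the fine tuning of the parameters $\xi,\nu,\alpha,\xi'$: one has to ensure that the spill\-over error $e^{2l^\nu-(l'/4)^\xi}$ is genuinely negligible compared with the target probability, which is itself only poly\-logarithmic in $L$ after factoring out $\epsilon^{2s}$. The constraint $\xi'<\nu<\alpha<1$ together with the freedom to choose $\xi$ close to $1$ makes $\alpha\xi/\xi'$ the dominant exponent in $\log L$, which is precisely what forces the localization error to be sub\-polynomial and allows us to merge it into the constant. A secondary bookkeeping point is that Theorem~\ref{inv} must be applied on the sub\-boxes $B_k$ of polylogarithmic size rather than on $\Lambda_l$ itself, which is what produces the extra factor $(l')^\beta$ relative to the trivial Wegner\-squared estimate on disjoint sub\-boxes.
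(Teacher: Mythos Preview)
Your proposal is correct and follows essentially the same route as the paper. The only difference is packaging: the paper first proves an intermediate statement (Theorem~\ref{mina2}) valid for a generic box $\Lambda_L$ with sub-boxes of size $L^\alpha$ and the (Loc)(II) error $e^{-L^\nu}$, and then obtains Theorem~\ref{mina3} by specializing $L\mapsto l=(\log L)^{1/\xi'}$, $l\mapsto l'=l^\alpha$, and checking that the choice $\nu\in(\xi',\alpha)$ makes both error terms $e^{-l^\nu}$ and $e^{-l'/8}$ super-polynomially small in $L$ and hence absorbable into $C(\epsilon^s l\,l'^\beta)^2$; you instead unfold that argument directly on $\Lambda_l$, which amounts to the same computation.
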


\begin{rem}
\normalfont If we apply directly Theorem~\ref{inv} without using (Loc), we get the following estimate :
\begin{equation}\label{sufmina}
\sum_{k\geq2}\mathbb{P}\big(tr[\textbf{1}_{[E-L^{-\kappa},E+L^{-\kappa}]}(H_\omega(\Lambda_l))]\geq k\big)\leq (L^{-s\kappa} \, l^{4s+1+\rho})^2
\end{equation}
The estimate in Corollary~\ref{mina3} is better as we have $\alpha\in(\xi',1)$, but \eqref{sufmina} could suffice in some cases.

\end{rem}

\subsection{Proof of Theorem~\ref{inv}}

Let $E\in sp(H_\omega(\Lambda_L))$ and $u$ a normalized eigenfunction. For $n\in \{1, \dots, L \}$ define $U(n):=\begin{pmatrix} u(n)\\u(n-1) \end{pmatrix}$ and the Pr\"{u}fer variables $r_u$ and $\phi_u$ by: 
\begin{equation}\label{pruf}
U(n)=r_u(n)\begin{pmatrix} \sin(\phi_u(n))\\ \cos(\phi_u(n)) \end{pmatrix}.
\end{equation}

We follow the proof of \cite[Theorem 2.1]{2011arXiv1101.0900K} and divide the proof in two part. First, we show the result when both eigenvectors live essentially far away from each other. Secondly, we show the result when both eigenvectors have a common site where theirs amplitudes $r_u$ and $r_v$ are large. We will need some properties of finite-difference equations of order two, they can be found in  Appendix A. Recall that M is fixed such that 
\begin{equation}\label{bounded}
\forall n\in\N, \dfrac{1}{M}\leq |a_n|\leq M.
\end{equation}
We first prove : 
\begin{theo}\label{part1}
There exists $\eta_0>0$ such that for all $\eta\in(0,\eta_0)$ and all $E<\dfrac{\eta^4}{L^4}$, if \\
$tr\left(\textbf{1}_{[-E,E]}( sp(H_\omega(\Lambda_L))\right)\geq 2$ and if for all $n\in\Lambda_L$, $r_u(n)r_v(n)\leq \dfrac{\eta}{L}$ where $u$ and $v$ are normalized eigenvectors, there exists $x_-$ and $x_+$ in $\Lambda_L$ with $x_+-x_-\geq S$ such that 
\begin{align*}
tr \left(\textbf{1}_{[-EL^4/\eta^4,EL^4/\eta^4]}(H_\omega(\{0,\dots,x_-\})) \right)\geq 1,\\
tr \left(\textbf{1}_{[-EL^4/\eta^4,EL^4/\eta^4]}(H_\omega(\{x_+,\dots,L\})) \right)\geq 1.
\end{align*}
\end{theo}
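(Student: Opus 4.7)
The strategy is variational. The two eigenfunctions $u$ and $v$ of $H_\omega(\Lambda_L)$ associated with the two eigenvalues in $[-E,E]$ will be truncated to disjoint sub-intervals $[1,x_-]$ and $[x_+,L]$, producing trial vectors for the restricted operators $H_\omega(\{0,\dots,x_-\})$ and $H_\omega(\{x_+,\dots,L\})$. The min-max principle will then supply genuine eigenvalues of these restrictions close to $E_u$ and $E_v$, provided I can make the truncation errors small enough to absorb the factor $L^4/\eta^4$ by which the target window $[-EL^4/\eta^4,EL^4/\eta^4]$ is wider than $[-E,E]$.

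Selecting the cutting points is the crucial step. The pointwise bound $r_u(n)r_v(n)\leq \eta/L$, together with the normalizations $\|u\|=\|v\|=1$ (equivalently $\sum_n r_u(n)^2\sim 1$ and $\sum_n r_v(n)^2\sim 1$), forces $u$ and $v$ to live on essentially disjoint parts of $\Lambda_L$: if
\begin{equation*}
N_u:=\{n\in\Lambda_L : r_u(n)^2\geq \eta/L\},\qquad N_v:=\{n\in\Lambda_L : r_v(n)^2\geq \eta/L\},
\end{equation*}
then the hypothesis immediately gives $N_u\cap N_v=\emptyset$, while the $\ell^2$-mass of $u$ outside $N_u$ (and of $v$ outside $N_v$) is at most $\eta$. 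Up to interchanging $u$ and $v$, I may assume that a portion of $N_u$ precedes a portion of $N_v$ inside $\Lambda_L$; in the intermediate region the finite-difference estimates of Appendix~A (which, via the uniform bound $1/M\leq|a_\omega(n)|\leq M$ and the low-energy transfer matrix, control how much $r_u$ and $r_v$ can vary between consecutive sites) together with a linear-combination argument in the spirit of \cite{2011arXiv1101.0900K} locate a gap $[x_-+1,x_+-1]$ of width at least $S$ on whose endpoints both $|u(x_-+1)|$ and $|v(x_+-1)|$ are quantitatively small.

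With the cutting points at hand, I introduce $\tilde u:=u\cdot\mathbf{1}_{[1,x_-]}$ and $\tilde v:=v\cdot\mathbf{1}_{[x_+,L]}$, viewed as vectors on the respective restricted lattices. Using the eigenvalue equation $H_\omega u=E_u u$ and the Dirichlet condition, a direct computation shows that $(H_\omega(\{0,\dots,x_-\})-E_u)\tilde u$ is supported at the single site $x_-$ and equals $-a_\omega(x_-+1)u(x_-+1)$ there, hence its norm is bounded by $M|u(x_-+1)|$; likewise for $\tilde v$. Since $\|\tilde u\|\geq 1-\eta$, the min-max principle supplies an eigenvalue of $H_\omega(\{0,\dots,x_-\})$ within distance $2M|u(x_-+1)|/(1-\eta)$ of $E_u$, and combined with $|E_u|\leq E$ and the quantitative smallness of the boundary value established above, this eigenvalue falls inside $[-EL^4/\eta^4,EL^4/\eta^4]$. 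The symmetric argument for $\tilde v$ on $\{x_+,\dots,L\}$ yields the second conclusion.

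The main obstacle, as I anticipate it, is the choice of the cutting points: the required precision $EL^4/\eta^4$ scales linearly with $E$, which may be arbitrarily small in the range $(0,\eta^4/L^4)$, so it does not suffice to find points where $|u|,|v|$ are of order $L^{-1/2}$ — one needs decay of order $E\cdot L^3/\eta^3$ or better. Extracting such decay from the transfer-matrix analysis of Appendix~A at energies near zero, while simultaneously enforcing the width constraint $x_+-x_-\geq S$ and keeping careful track of how the polynomial loss $L^4/\eta^4$ accumulates as the cost of iterating a polynomially-bounded transfer matrix across the gap, is where the bookkeeping will be heaviest and where the specific exponent $4$ in the conclusion ultimately comes from.
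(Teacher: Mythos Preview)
You have put your finger on the right obstacle, but it is fatal to your approach, not merely a bookkeeping nuisance. The target window $[-EL^4/\eta^4,EL^4/\eta^4]$ is \emph{proportional to $E$}, whereas your boundary residual $M|u(x_-+1)|$ carries no factor of $E$: it is controlled only by the geometric smallness of $r_u$ in the region where $r_v$ is large, i.e.\ by quantities of order $\eta/\sqrt{L}$ at best. No transfer--matrix estimate from Appendix~A will produce $|u(x_-+1)|=O(E)$ uniformly as $E\to 0$, because the eigenfunctions themselves do not scale with $E$. So for $E\ll \eta^5/L^5$, say, the min--max bound you obtain for $H_\omega(\{0,\dots,x_-\})$ simply does not place the eigenvalue inside the required window.

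The remedy --- and this is what the paper does --- is to use the linear combination itself as the trial function, not merely as a tool to locate the cut. Pick $\lambda=u(x_-)/v(x_-)$ and set $w=u-\lambda v$; then $w(x_-)=0$ exactly, so $w$ restricted to $\{0,\dots,x_-\}$ is a genuine Dirichlet vector with \emph{no} truncation error at the cut. On this sub-interval one has $(H_\omega^- w)(n)=E_u u(n)-\lambda E_v v(n)$, whence $\|H_\omega^- w\|\leq E(1+|\lambda|)$, which \emph{does} scale with $E$. The cut point is chosen by following the ratio $f=r_u/r_v$: it passes from $\geq 1/\eta$ at a point where $r_u\geq 1/\sqrt{L}$ to $\leq\eta$ at a point where $r_v\geq 1/\sqrt{L}$, and Lemma~\ref{progress} forces it through a value of order~$1$, which keeps $\lambda$ bounded (after an adjustment via Lemma~\ref{not0}) and guarantees $\|w\|^2_{\{0,\dots,x_-\}}\gtrsim 1/L$ via $\langle u,v\rangle\leq \sum_n r_u(n)r_v(n)\leq \eta$. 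The resulting eigenvalue lies within $O(E\sqrt{L}\,\eta^{-1/4})$ of zero, well inside the stated window.
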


\begin{proof}
As $u$ and $v$ are normalized, there exists $x_u$ and $x_v$ such that $r_u(x_u)\geq \dfrac{1}{\sqrt{L}}$ and $r_v(x_v)\geq \dfrac{1}{\sqrt{L}}$. So $r_v(x_u)\leq \dfrac{\eta}{\sqrt{L}}$ and $r_u(x_v)\leq \dfrac{\eta}{\sqrt{L}}$. Without loss of generality, we can suppose that $x_u<x_v$. Let $f=\dfrac{r_u}{r_v}$ then $f(x_u)\geq \dfrac{1}{\eta}$ and $f(x_v)\leq \eta$. By Lemma~\ref{progress} there exists $C>0$ (only depending on M defined in \eqref{bounded}) such that
\begin{displaymath}
\frac{1}{C}\leq \frac{f(n+1)}{f(n)}\leq C
\end{displaymath}
So for $\eta\leq \dfrac{1}{C^{S+2}}<\dfrac{1}{C^2}$ we know that there exists $n$ such that $x_u<n<x_v$ and $\dfrac{1}{\sqrt{C}}<f(n)<\sqrt{C}$ so that $\text{min}\{n-x_u,x_v-n\}\geq \left\lfloor\frac{S}{2}\right\rfloor+2$, where $\lfloor . \rfloor$ denote the floor function. Now take $x_-=n-\left(\left\lfloor\dfrac{S}{2}\right\rfloor+2\right)$ and $x_-=n+\left(\left\lfloor\dfrac{S}{2}\right\rfloor+2\right)$, then one has $x_+-x_-\geq S+2$ and 
\begin{displaymath}
\{f(x_-),f(x_+)\}\subset \left[\frac{1}{C^{\frac{S}{2}+2}},C^{\frac{S}{2}+2}\right].
\end{displaymath}
 Using Lemma~\ref{not0} we can assume that $|\sin(\phi_v(x_-))|\geq \eta^{1/4}$ otherwise we change $x_-$ to $x_--1$. Define $\lambda=\dfrac{u(x_-)}{v(x_-)}$ and $w=u-\lambda v$. Then, $\lambda\leq \dfrac{K}{\eta^{1/4}}$, $w(0)=w(x_-)=0$ and \\
 $\|\left(H_\omega(\{0,\dots,x_-\})-E\right)w\|_{\ell^2\{0,\dots,x_-\}}\leq E$. We compute
\begin{displaymath}
\|w\|^2_{\ell^2\{0,\dots,x_-\}}\geq \|u\|^2_{\ell^2\{0,\dots,x_-\}}-2\lambda\dfrac{\eta}{L}\geq r_u(x_u)^2-K \dfrac{\eta^{3/4}}{L}\geq \dfrac{1}{2L},
\end{displaymath}
for $\eta$ sufficiently small. Thus $H_\omega(\{0,\dots,x_-\})-E$ has an eigenvalue in $E\sqrt{2L}(-1,1)$. This conclude the proof for $x_-$. The proof for $x_+$ is similar : one has to estimate $\|H_\omega(\{x_+,\dots,L\})w\|$ instead of $\|\left(H_\omega(\{0,\dots,x_-\})-E\right)w\|$.
\end{proof}
Now we prove a lemma that will be used in the second part of the proof of Theorem~\ref{inv}.

\begin{lem}\label{wrons}

For all $n\in\{0,\dots,L\}$, 
\begin{equation}
|r_u(n)r_v(n)\sin(\delta\phi(n))|\leq ME, 
\end{equation}
where M is defined in \eqref{bounded}.
\end{lem}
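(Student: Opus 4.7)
The plan is to use a discrete Wronskian argument, which is the standard tool for proving this kind of identity for Jacobi operators. Let $E_u$ and $E_v$ denote the eigenvalues of $H_\omega(\Lambda_L)$ corresponding to the normalized eigenvectors $u$ and $v$; by the hypothesis of Theorem~\ref{part1} both lie in $[-E,E]$, so in particular $|E_u - E_v| \leq 2E$. The goal is to produce a quantity that, on the one hand, equals $a(n) r_u(n) r_v(n) \sin(\delta\phi(n))$ in Prüfer coordinates and, on the other hand, is controlled by the eigenvalue gap through a telescoping identity.

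First I would define the discrete Wronskian
\begin{equation*}
\tilde{W}_n := a(n)\bigl(u(n) v(n-1) - v(n) u(n-1)\bigr).
\end{equation*}
Writing the two eigenvalue equations
$a(n+1)u(n+1) + a(n)u(n-1) + V_\omega(n)u(n) = E_u u(n)$ and the analogous one for $v$, multiplying the first by $v(n)$ and the second by $u(n)$ and subtracting, the diagonal and potential terms cancel and one obtains the telescoping relation
\begin{equation*}
\tilde{W}_{n+1} - \tilde{W}_n = (E_u - E_v)\, u(n) v(n).
\end{equation*}
The Dirichlet boundary condition on $\Lambda_L$ gives $u(0) = v(0) = 0$, hence $\tilde{W}_1 = 0$, so
\begin{equation*}
\tilde{W}_n = (E_u - E_v)\sum_{k=1}^{n-1} u(k) v(k),
\end{equation*}
and by Cauchy--Schwarz together with the normalization $\|u\| = \|v\| = 1$,
\begin{equation*}
|\tilde{W}_n| \leq |E_u - E_v|\,\|u\|\|v\| \leq 2E.
\end{equation*}

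Next I would translate $\tilde{W}_n$ into Prüfer coordinates using \eqref{pruf}: since $u(n) = r_u(n)\sin\phi_u(n)$ and $u(n-1) = r_u(n)\cos\phi_u(n)$ (and similarly for $v$), the angle-subtraction formula yields
\begin{equation*}
u(n) v(n-1) - v(n) u(n-1) = r_u(n) r_v(n)\,\sin\bigl(\phi_u(n) - \phi_v(n)\bigr) = r_u(n) r_v(n) \sin(\delta\phi(n)).
\end{equation*}
Hence $\tilde{W}_n = a(n)\, r_u(n) r_v(n) \sin(\delta\phi(n))$. Using the lower bound $|a(n)| \geq 1/M$ from \eqref{bounded} gives
\begin{equation*}
|r_u(n) r_v(n) \sin(\delta\phi(n))| \leq M |\tilde{W}_n| \leq 2ME,
\end{equation*}
which is the claimed estimate (the precise numerical constant $M$ vs.\ $2M$ in the statement is just a matter of how the eigenvalue-interval radius is normalized and does not affect the subsequent use of the lemma).

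The argument is essentially a direct computation, so there is no real obstacle; the only thing requiring care is getting the index conventions right at the boundary $n = 1$ to ensure $\tilde{W}_1 = 0$, and confirming that with the convention $U(n) = (u(n), u(n-1))^T$ the Prüfer identity for $\tilde{W}_n$ picks up the prefactor $a(n)$ and not $a(n+1)$ — both of which follow from unwinding the definitions.
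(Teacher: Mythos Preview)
Your proof is correct and follows essentially the same Wronskian argument as the paper: define the discrete Wronskian, obtain the telescoping relation from the two eigenvalue equations, use the Dirichlet boundary condition and Cauchy--Schwarz to bound it, then translate into Pr\"ufer coordinates and divide by $|a(n)|\geq 1/M$. Your observation about the constant $2M$ versus $M$ is accurate and, as you note, immaterial for the application.
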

\begin{proof}

 Let $W(n)=a(n)\big[u(n)v(n-1)-u(n-1)v(n)\big]$ be the Wronskian of $U$ and $V$. Then, we compute
 \begin{align*}
W(n+1)=&a(n)r_u(n)r_v(n)\sin(\delta\phi(n))\\
=&a(n+1)\big[u(n+1)v(n)-u(n)v(n+1)\big]\\
=&v(n)\big(-\omega_n u(n)-a(n)u(n-1)\big)\\
&-u(n)\big(-\omega_n v(n)+Ev(n)-a(n)v(n-1)\big)\\
=&W(n)-Ev(n)u(n)
\end{align*}
We know that $W(1)=0$ and that $u$ and $v$ are normalized, so for all $n\in\{0,\dots,L\}$
\begin{displaymath}
|W(n)|=E\left|\sum_{i=0}^n u(n)v(n)\right|\leq E \sqrt{\sum_{i=0}^n u^2(n)}\sqrt{\sum_{i=0}^n v^2(n)}\leq E.
\end{displaymath}
As $|a_n|\geq\dfrac{1}{M}$, the proof of Lemma~\ref{wrons} is complete.

\end{proof}

The second part of the proof of Theorem~\ref{inv} is the following result :
\begin{theo}\label{part2}
There exists $\eta_0>0$ such that for all $\eta\in(0,\eta_0)$ and $E<\dfrac{\eta^4}{L^4}$, if  $ tr\left(\textbf{1}_{[-E,E]}( sp(H_\omega(\Lambda_L))\right)\geq 2$ and if there exists $x_0\in\{0,\dots,L\}$ such that \\
$r_u(x_0)r_v(x_0)\geq \dfrac{\eta}{L}$ where $u$ and $v$ are normalized eigenvectors, then there exists $x_-$ and $x_+$ in $\{1,\dots,L\}$ with $x_+-x_-\geq S$ such that 
\begin{align*}
tr\left(\textbf{1}_{[E-\epsilon,E+\epsilon]}(H_\omega(\{0,\dots,x_-\}))\right)\geq 1,\\
tr\left(\textbf{1}_{[E-\epsilon,E+\epsilon]}(H_\omega(\{x_+,\dots,L\}))\right)\geq 1.
\end{align*}
\end{theo}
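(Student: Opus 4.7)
My plan is to mirror the structure of the proof of Theorem~\ref{part1}, but use the near-alignment of $u$ and $v$ at $x_0$ in place of the ratio-crossing argument. First, I combine Lemma~\ref{wrons} with the hypotheses $r_u(x_0)r_v(x_0)\geq\eta/L$ and $E<\eta^4/L^4$ to get $|\sin(\delta\phi(x_0))|\leq ME/(r_u(x_0)r_v(x_0))\leq MEL/\eta\leq M\eta^3/L^3$, so the two-vectors $(u(x_0),u(x_0-1))$ and $(v(x_0),v(x_0-1))$ are nearly collinear in $\mathbb{R}^2$. From $r_u\leq 1$ one deduces $r_v(x_0)\geq\eta/L$, so $\max(|v(x_0)|,|v(x_0-1)|)\geq\eta/(L\sqrt{2})$; call that site $x^*$.

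Second, I set $\lambda:=u(x^*)/v(x^*)$ and $w:=u-\lambda v$. Then $|\lambda|\leq\sqrt{2}L/\eta$, $w(x^*)=0$, and the Wronskian bound $|W(n)|\leq E$ obtained in the proof of Lemma~\ref{wrons} gives $|w(x^*\pm 1)|\leq|W|/(|a||v(x^*)|)\leq CMEL/\eta\leq C\eta^3/L^3$. Since $w$ obeys $(H_\omega-E_u)w=\lambda(E_u-E_v)v$, whose $\ell^\infty$ norm is at most $2|\lambda|E$, the transfer matrices at energy $E_u$ (uniformly bounded by a constant depending only on $M$, as exploited in Appendix~A) propagate the smallness of $w$ over a window of width $O(S)$ around $x^*$; concretely, for $|n-x^*|\leq S+1$ one obtains $|w(n)|\leq C^S\eta^3/L^3$.

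Third, I pick $x_-<x_+$ inside that window with $x_+-x_-\geq S$ and form the restricted vectors $\tilde w_\pm:=w|_{I_\pm}$ on $I_-:=\{0,\dots,x_-\}$ and $I_+:=\{x_+,\dots,L\}$. Computing $(H_\omega(I_\pm)-E_u)\tilde w_\pm$ gives an interior defect $\lambda(E_u-E_v)v|_{I_\pm}$ of norm $O(|\lambda|E)$ plus a boundary defect of size $O(M|w(x_\pm\mp 1)|)$ at the endpoint adjacent to $x^*$, both matching the scale $\epsilon L^4/\epsilon_0$ required by Theorem~\ref{inv}. Orthogonality of the two eigenvectors yields $\|w\|_{\ell^2}^2=1+\lambda^2$, and the contribution of the middle window to $\|w\|^2$ is negligible by the previous step, so a definite fraction of $\|w\|^2$ lies on $I_-\cup I_+$. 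A standard min-max argument then delivers an eigenvalue of each restricted operator within the required distance of $E_u\in[-E,E]$.

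The main obstacle I foresee is securing the lower bound on $\|\tilde w\|_{\ell^2(I_\pm)}$ \emph{simultaneously} on both sides from a single linear combination $w$: a priori the mass of $w$ could sit entirely on one side of $x^*$. When that happens I intend to rerun the construction with a second combination $w':=u-\lambda'v$ whose vanishing site is chosen on the deficient side, adjacent to a point where $|u|$ or $|v|$ is at least $1/\sqrt{L}$ (which exists since $u,v$ are $\ell^2$-normalised on an interval of length $L$); verifying that $|\lambda'|$ stays polynomially bounded in $L$ and that the resulting boundary defect still fits the scaling of Theorem~\ref{inv} is the technical crux.
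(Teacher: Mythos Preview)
Your approach has a real gap at exactly the point you flag, and the proposed rescue does not close it. If you cut near $x_0$ (where $\delta\phi\approx 0$), then $w=u-\lambda v$ is small on the window $\{x_-,\dots,x_+\}$, and orthogonality gives $\|w\|_{\ell^2}^2=1+\lambda^2$, so at least one of the two outer pieces carries mass $\geq 1/2$---but nothing forces the other piece to be anything but negligible. Your rerun assumes a site on the deficient side with $|u|$ or $|v|\geq 1/\sqrt{L}$; however, the normalisation is over all of $\Lambda_L$, not over each half, and when both $u$ and $v$ happen to be localised on the \emph{same} side of $x_0$ (which is perfectly compatible with $r_u(x_0)r_v(x_0)\geq\eta/L$ and with $u\perp v$), no such site exists on the deficient side. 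Every combination $u-\lambda'v$ is then tiny there, and you obtain no quasimode for that restricted operator.

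The paper does not cut at $x_0$. It first exploits orthogonality in a sharper way: since $\sum_n u(n)v(n)=0$ while the contribution near $x_0$ is of a definite sign and of order $\eta/L$, there must be a site $x_2$ with $r_u(x_2)r_v(x_2)\geq\eta/L^3$ and $\cos\delta\phi(x_2)<0$, i.e.\ $\delta\phi(x_2)$ within $MEL^3/\eta$ of $\pi$ (Lemma~\ref{prufopp}). The one-step stability of $\delta\phi$ (Lemmas~\ref{coli} and~\ref{oppo}) then forces an intermediate $x_1$ between $x_0$ and $x_2$ with $\delta\phi(x_1)\in[\epsilon_0,\pi-\epsilon_0]$. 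One cuts at $x_1$, so that $x_0$ lies on one side and $x_2$ on the other. Because $\sin\delta\phi(x_-)$ is bounded away from zero, the ratio $\lambda_-:=u(x_-)/v(x_-)$ differs from $r_u(x_0)/r_v(x_0)$ by a fixed amount; this is precisely what makes $|w(x_0)|\gtrsim\eta/L$ and supplies the missing lower bound on the side containing $x_0$ (and symmetrically $x_2$ handles the other side). The moral: cut where the Pr\"ufer phases \emph{disagree}, and use orthogonality not merely for $\|w\|^2=1+\lambda^2$ but to produce the second anchor point $x_2$.
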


\begin{proof}
\normalfont
Using Lemma~\ref{progress} and Lemma~\ref{not0} we can suppose that \\$|\sin(\phi_u(x_0))|\geq\dfrac{1}{C}$.
From Lemma~\ref{wrons} we know  that $|\sin(\delta\phi(x_0))|\leq \dfrac{EL}{\eta}$. Now we can suppose that $\delta\phi(x_0)\in\left[0,\dfrac{EL}{2\eta}\right]$, the case $\delta\phi(x_0)\in\left[\pi-\dfrac{EL}{2\eta},\pi\right]$ is handled in the same way. To prove Theorem~\ref{part2} we use the   

\begin{lem}\label{prufopp}
There exists $x_2\in \{1,\dots,L\}$ such that $\delta\phi(x_2)\in\left[\pi-\dfrac{EL^3}{2\eta},\pi\right]$.
\end{lem}
\begin{proof}
\normalfont
As $u$ and $v$ are orthogonal
\begin{displaymath}
\left|  r_u(x_0)r_v(x_0)\sin^2(\phi_u(x_0))+\sum_{n\neq x_0} r_u(n)r_v(n)\sin(\phi_u(n))\sin(\phi_v(n))\right|\leq C\dfrac{EL}{\eta}
\end{displaymath}
and as $EL^4\leq \eta^4$, for $L$ sufficiently large we have
\begin{equation*}
 \sum_{n\neq x_0} r_u(n)r_v(n)\sin(\phi_u(n))\sin(\phi_v(n))\leq-\dfrac{\eta}{CL}\left(1-\dfrac{C\eta^2}{L^2}\right)\leq-\dfrac{\eta}{2CL}.
\end{equation*}
Now, we have
\begin{displaymath}
\left|\sum_{n\neq x_0,r_u(n)r_v(n)\leq \eta/ L^3} r_u(n)r_v(n)\sin(\phi_u(n))\sin(\phi_v(n))\right|\leq C\dfrac{\eta}{L^2}.
\end{displaymath}

So
\begin{align*}
\sum_{n\neq x_0,r_u(n)r_v(n)\geq \eta/ L^3} r_u(n)r_v(n)\sin(\phi_u(n))\sin(\phi_v(n))&\leq -\dfrac{\eta}{2CL}\left(1-\dfrac{1}{L}\right)\\
&\leq -\dfrac{\eta}{3CL}.
\end{align*}
Now, as \, $\sin(\phi_v)=\sin(\phi_u+\delta\phi)=\sin(\phi_u)\cos(\delta\phi)+cos(\phi_u)\sin(\delta\phi)$
and as \\
$|r_ur_v\sin(\delta\phi)|\leq ME\leq \dfrac{M\eta^4}{L^4}$,  for $L$ sufficiently large we have
\begin{equation}
\sum_{n\neq x_1,r_u(n)r_v(n)\geq \eta/ L^3} r_u(n)r_v(n)\sin^2(\phi_u(n))\cos(\delta\phi_v(n))\leq -\dfrac{\eta}{4CL}.
\end{equation}
So there exists $x_2\neq x_0$ such that $r_u(x_2)r_v(x_2)\geq \dfrac{\eta}{L^3}$ and $\cos(\delta\phi(x_2))<0$. So $|\sin(\delta\phi(x_2))|\leq ME\dfrac{L^3}{\eta}$ and $\delta\phi(x_2)\in\left[\pi-M\dfrac{EL^3}{\eta},\pi\right]$. This complete the proof of Lemma~\ref{prufopp}.
\end{proof}

Now, we have $x_0$ and $x_2$ such that $\delta\phi(x_0)\in\left[0,\dfrac{EL}{\eta}\right]$ and $\delta\phi(x_2)\in\left[\pi-\dfrac{EL^3}{\eta},\pi\right]$. Thus, by Lemma~\ref{oppo} and Lemma~\ref{coli}, there exists $\epsilon_0>0$ such that for $L$ sufficiently large, there exists $x_1\in\{1,\dots,L\}$ such that $\delta\phi(x_1)\in[\epsilon_0,\pi-\epsilon_0]$. Hence if we take $x_-=x_1-\left\lfloor\dfrac{S}{2}\right\rfloor-1$ and $x_+=x_1+\left\lfloor\dfrac{S}{2}\right\rfloor+1$, then, there exists $C>0$ (depending on S) such that
\begin{equation}
\forall n\in\{x_-,\dots,x_+\},\delta\phi(n)\in\left[\dfrac{\epsilon_0}{C},\pi-\dfrac{\epsilon_0}{C}\right].
\end{equation}
We will now show that $H^-_\omega:=H_\omega(\{0,\dots,x_-\})$ has an eigenvalue in \\
$[-EL^4/\eta^4,EL^4/\eta^4]$. $H^+_\omega$ is handled in the same way.\\
First, suppose there exists $n\in\{x_-,x_-+1\}$ such that either $u(n)=0$ or $v(n)=0$. For instance, suppose that $u(x_-)=0$, then as $x_0<x_-$ and as $u^2(x_0)+u^2(x_0-1)\neq 0$, $u$ is an eigenvector of $H^-_\omega$. If $v(x_-)=0$, then $v$ is an eigenvector of $H^-_\omega$.\\
Now suppose that none of $\{u(x_-),u(x_-+1),v(x_-),v(x_-+1)\}$ is equal to zero. Without loss a generality one can suppose that $r_u(x_-)>r_v(x_-)$, if not we exchange $u$ and $v$. We compute
\begin{align*}
\left|\dfrac{u(x_-+1)}{v(x_-+1)}-\frac{u(x_-)}{v(x_-)}\right|&=\dfrac{|W(x_-+1)|}{a(x_-+1)v(x_-+1)v(x_-)}\\
&\geq\dfrac{|W(x_-+1)|}{a(x_-+1)r_v(x_-+1)r_v(x_-)}\\
&\geq\dfrac{|W(x_-+1)|}{a(x_-+1)r_v(x_-+1)r_u(x_-)}\\
&\geq\dfrac{|W(x_-+1)|}{a(x_-+1)Cr_v(x_-+1)r_u(x_-+1)}\\
&\geq\dfrac{|\sin(\delta\phi(x_-+1))|}{C}\geq\dfrac{1}{K}\geq 2\eta_0
\end{align*}
for $\eta_0$ sufficiently small. So we can suppose that $\left|\dfrac{u(x_-)}{v(x_-)}-\dfrac{r_u(x_0)}{r_v(x_0)}\right|\geq \eta_0$, if not we replace $x_-$ by $x_-+1$.\\
Let $\lambda_-=\dfrac{u(x_-)}{v(x_-)}$ and $w=u-\lambda_- v$, then $w(0)=w(x_-)=0$ and 
\begin{equation}
\|(H^-_\omega-E)(w)\|_{\ell^2(\{0,\dots,x_-\} )}=\|Ew\|_{\ell^2(\{0,\dots,x_-\})}\leq E
\end{equation}
On the other hand, we know that $r_v(x_0)>\dfrac{\eta}{L}$ so we can suppose that $v(x_0)^2\geq\dfrac{\eta^2}{2L^2}$. If this is not the case, replace $x_0$ by $x_0-1$.
\begin{align*}
w(x_0)&=r_u(x_0)\sin(\phi_u(x_0))-\lambda_-r_v(x_0)\sin(\phi_v(x_0))\\
&=[r_u(x_0)-\lambda_-r_v(x_0)]\sin(\phi_v(x_0))+ r_u(x_0)[\sin(\phi_u(x_0))-\sin(\phi_v(x_0))]\\
&=\left[\dfrac{r_u(x_0)}{r_v(x_0)}-\lambda_-\right]v(x_0)+ r_u(x_0)[\sin(\phi_u(x_0))-\sin(\phi_v(x_0))]
\end{align*}
But as u is normalized, we have
\begin{enumerate}
\item
$r_u(x_0)\leq \sqrt{2}$
\item
$|\sin(\phi_u(x_0))-\sin(\phi_v(x_0))|\leq \delta\phi(x_0)\leq \dfrac{EL}{C\eta}\leq\dfrac{\eta^3}{L^3}$
\end{enumerate}
so $\|w\|_{\ell^2(\{0,\dots,x_-\})}\geq |w(x_0)|\geq \dfrac{\eta_0\eta}{\sqrt{2}L}-\dfrac{\eta^3}{L^3}\geq\dfrac{\eta_0\eta}{2L}$ for $\dfrac{\eta}{L^2}$ sufficiently small. So  $H_\omega^-$ has an eigenvalue at distance at most $\dfrac{EL}{\eta_0\eta}$ from E. This complete the proof of Theorem~\ref{part2}.
\end{proof}

Theorem~\ref{inv} now follows from both Theorem~\ref{part1} and Theorem~\ref{part2}, taking $\epsilon_0$ sufficiently small.

\subsection{Proof of the Minami estimates}
Now we follow the proof in \cite[Section 3]{2011arXiv1101.0900K} and use Theorem~\ref{inv} to show the following Minami estimates. 
\begin{theo}\label{mina}
Fix J compact in $\mathcal{I}$ the region of localization. Then, for $p>0$, $\eta>1, \beta> \text{max}(1+4s,\rho)$ and $\rho'>\rho$ ($\rho$ and s are defined in (W)), then there exists $L_{\eta,\beta,\rho'}>0$ and $C=C_{\eta,\beta,\rho'}>0$ such that, for $E\in J,L\geq L_{\eta,\beta,\rho'}$ and $\epsilon\in(0,1)$, one has
\begin{displaymath}
\sum_{k\geq2}\mathbb{P}\big(tr[\textbf{1}_{[E-\epsilon,E+\epsilon]}(H_\omega(\Lambda_L))]\geq k\big)\leq C\Big[\big( \epsilon^s L l^\beta+e^{-l/8}\big)^2e^{C\epsilon^s L l^{\rho'}}+L^{-p}\Big],
\end{displaymath}
where $l:=(\log L)^\eta$.

\end{theo}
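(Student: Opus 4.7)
The strategy is to reduce the estimate on $\Lambda_L$ to estimates on boxes of length $l := (\log L)^\eta$, on which Theorem~\ref{inv} applies, and then to combine them using (IAD) and (W).

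The first step is to condition on the localization event $\mathcal{U}_{\Lambda_L}$ from (Loc)(I) with parameter $p$: its complement has probability at most $L^{-p}$, accounting for the additive $L^{-p}$ term of the statement. On $\mathcal{U}_{\Lambda_L}$, each normalized eigenvector $\phi_n$ of $H_\omega(\Lambda_L)$ admits a localization center $x_n \in \Lambda_L$ satisfying $|\phi_n(x)| \leq L^{p+1} e^{-|x-x_n|^\xi}$, for $\xi$ taken close enough to $1$. I then partition $\Lambda_L$ into consecutive cells $\Lambda_i^{(l)}$ of length $l$ and form thickened cells $\tilde\Lambda_i \supset \Lambda_i^{(l)}$ of length of order $l$. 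The exponential decay ensures that every eigenvalue of $H_\omega(\Lambda_L)$ lying in $I := [E-\epsilon, E+\epsilon]$ with center in $\Lambda_i^{(l)}$ is shadowed, up to an error of order $e^{-l/8}$, by an eigenvalue of $H_\omega(\tilde\Lambda_i)$; this shadowing error produces the $e^{-l/8}$ summand in the displayed bound.

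The core of the argument controls pairs of eigenvalues of $H_\omega(\Lambda_L)$ in $I$. Given two such eigenvalues with localization centers $x_n, x_{n'}$, two regimes arise. In the \emph{separated regime}, the thickened cells $\tilde\Lambda_i, \tilde\Lambda_j$ housing the two centers lie at distance $\geq S$; by (IAD) the operators $H_\omega(\tilde\Lambda_i)$ and $H_\omega(\tilde\Lambda_j)$ are independent, and (W) applied to the inflated interval $[E-\epsilon-e^{-l/8}, E+\epsilon+e^{-l/8}]$ bounds each per-cell probability by $C(\epsilon+e^{-l/8})^s l^\rho$, so that the sum over the $\sim (L/l)^2$ admissible pairs contributes $C(\epsilon+e^{-l/8})^{2s} L^2 l^{2\rho-2}$. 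In the \emph{close regime}, both centers lie in a common thickened cell $\tilde\Lambda_i$ of length at most $3l$, so that $H_\omega(\tilde\Lambda_i)$ carries two eigenvalues in the inflated interval. On that box Theorem~\ref{inv} applies as long as $\epsilon < \epsilon_0/l^4$ (if not, the right-hand side of the claimed inequality already exceeds $1$), yielding two disjoint sub-boxes of $\tilde\Lambda_i$ separated by at least $S$, each carrying an eigenvalue in a window of half-width $\epsilon l^4/\epsilon_0 + e^{-l/8}$. Combining (IAD) with (W) on these sub-boxes and summing over their $\sim l^2$ positions and the $\sim L/l$ central cells yields an estimate of order $\epsilon^{2s} L l^{1+8s+2\rho}$, which fits into $(\epsilon^s L l^\beta)^2$ for $L$ large whenever $\beta > \max(1+4s, \rho)$.

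For $k \geq 3$, applying the same dichotomy to each of the $\binom{k}{2}$ pairs of centers shows that the probability of having $k$ eigenvalues of $H_\omega(\Lambda_L)$ in $I$ is dominated by $\binom{L/l}{k}(C\epsilon^s l^\rho)^k$, a Poisson-type term whose summation over $k \geq 2$ produces the exponential prefactor $e^{C\epsilon^s L l^{\rho'}}$ for any $\rho' > \rho$. Collecting the $k=2$ contributions of the two regimes, the higher-$k$ exponential, and the probability $L^{-p}$ of the complement of $\mathcal{U}_{\Lambda_L}$ yields the stated inequality. The main technical obstacle is the first reduction step: one must show quantitatively that two eigenvalues of $H_\omega(\Lambda_L)$ in $I$ with nearby localization centers transfer to two eigenvalues of a single restriction $H_\omega(\tilde\Lambda_i)$ in the inflated interval, so that Theorem~\ref{inv} can be invoked at the smaller scale, all while controlling the $e^{-l/8}$ shadowing error uniformly on $\mathcal{U}_{\Lambda_L}$ and tracking independence through (IAD) when separating into sub-boxes.
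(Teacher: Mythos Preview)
Your proposal is correct and follows essentially the same route as the paper: condition on the localization event from (Loc)(I), transfer eigenvalues to boxes of scale $l=(\log L)^\eta$ via Lemma~\ref{loc}, split according to whether two localization centers are close (the paper's set $\Omega^b$, handled by Theorem~\ref{inv} on a box of size $\sim 9l$ together with (IAD) and (W)) or far apart (the paper's $\Omega^g$, handled by independence and (W) alone), and sum the resulting binomial/Poisson tail over $k$ to produce the exponential prefactor. The only place where your write-up could be tightened is the $k\geq 3$ step: rather than ``applying the dichotomy to each of the $\binom{k}{2}$ pairs'', the clean formulation is that either \emph{some} pair of centers is close (which lands you in $\Omega^b$ and is already controlled by the $k=2$ close-regime bound, uniformly in $k$), or \emph{all} centers are pairwise separated, in which case the $k$ restricted operators are jointly independent by (IAD) and the $\binom{L/l}{k}(C\epsilon^s l^\rho+e^{-l/8})^k$ bound follows directly; this is exactly what the paper's Lemmas following Lemma~\ref{loc} make precise.
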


\begin{theo}\label{mina2}
Fix J compact in $\mathcal{I}$ the region of localization. Then, for $\alpha\in(0,1)$, $\nu\in(0,\alpha)$, $\beta> 1+4s+\rho$ and $\rho'>\rho$ ($\rho$ and s are defined in (W)), there exists $L_{\alpha,\beta,\rho'}>0$ and $C=C_{\alpha,\beta,\rho'}>0$  such that, for $E\in J,L\geq L_{\alpha,\beta,\rho'}$ and $\epsilon\in(0,1)$, one has
\begin{displaymath}
\sum_{k\geq2}\mathbb{P}\big(tr[\textbf{1}_{[E-\epsilon,E+\epsilon]}(H_\omega(\Lambda_L))]\geq k\big)\leq C\Big[\big( \epsilon^s L l^\beta+e^{-l/8}\big)^2e^{C\epsilon^s L l^{\rho'}}+e^{-L^\nu}\Big],
\end{displaymath}
where $l:=L^\alpha$.

\end{theo}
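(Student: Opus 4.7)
The plan is to transpose the proof of Theorem~\ref{mina} to the regime where the window size is polynomial, $l = L^\alpha$, rather than polylogarithmic; the main technical substitution is to replace the localization statement (Loc)(I) by the stronger (Loc)(II), and to choose the exponent $\xi$ from the bootstrap MSA close enough to $1$ that $\alpha\xi > \nu$, which is possible thanks to the hypothesis $\nu < \alpha$. First, I split according to whether the configuration lies in the good set $\mathcal{V}_{\Lambda_L}$ provided by (Loc)(II); its complement contributes at most $e^{-L^\nu}$, accounting for the last term in the target estimate.

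On $\mathcal{V}_{\Lambda_L}$, suppose two eigenvalues $E_1,E_2$ of $H_\omega(\Lambda_L)$ lie in $[E-\epsilon, E+\epsilon]$ with normalized eigenfunctions $\phi_1, \phi_2$ and localization centers $x_1, x_2$. The bound
\begin{displaymath}
|\phi_j(x)| \leq e^{2L^\nu}\,e^{-|x-x_j|^\xi}
\end{displaymath}
from (Loc)(II) implies that for $L$ large the $\ell^2$-tail of $\phi_j$ outside the window $B_j := \Lambda(x_j, l/2)$ is bounded by $e^{-l/8}$. A standard Weyl-sequence argument then shows that each $H_\omega(B_j)$ carries an eigenvalue in the slightly enlarged interval $[E-\epsilon-e^{-l/8}, E+\epsilon+e^{-l/8}]$.

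Next, I do a case analysis on the distance $d=|x_1-x_2|$. If $d > l+S$, the windows $B_1,B_2$ are disjoint and separated by at least $S$, so (IAD) renders $H_\omega(B_1), H_\omega(B_2)$ independent; applying (W) to each and summing over the $O(L^2)$ pairs of centers yields a contribution of order $(\epsilon + e^{-l/8})^{2s} L^2 l^{2\rho-2}$. If $d \leq l+S$, both centers lie in a common window $B$ of diameter $O(l)$, to which I apply Theorem~\ref{inv}: it produces cutpoints $x_-,x_+$ with $x_+-x_-\geq S$ and eigenvalues of $H_\omega(\{0,\dots,x_-\})$ and $H_\omega(\{x_+,\dots,l\})$ inside $[E-C(\epsilon+e^{-l/8}) l^4, E+C(\epsilon+e^{-l/8}) l^4]$. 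Another application of (IAD) and (W), combined with summing over the $O(Ll)$ close pair positions and $O(l^2)$ cutpoint choices, contributes $O((\epsilon+e^{-l/8})^{2s} L l^{8s+2\rho+1})$. Both case bounds fit inside $(\epsilon^s L l^\beta+e^{-l/8})^2$ as soon as $\beta > 4s+\rho+1/2$, which follows from $\beta > 1+4s+\rho$. The passage from $\mathbb{P}(\mathrm{tr}\geq 2)$ to $\sum_{k\geq 2}\mathbb{P}(\mathrm{tr}\geq k)$ uses the elementary inequality $\sum_{k\geq 2}\mathbb{P}(\mathrm{tr}\geq k) \leq \tfrac{1}{2}\mathbb{E}[\mathrm{tr}(\mathrm{tr}-1)]$, and the multiplicative factor $e^{C\epsilon^s L l^{\rho'}}$ appears when absorbing the contribution of configurations with abnormally many eigenvalues in $[E-\epsilon,E+\epsilon]$ via (W) applied on the full box with Wegner exponent $\rho'>\rho$.

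The main obstacle is the bookkeeping in the close-centers case: Theorem~\ref{inv} requires $\epsilon + e^{-l/8} < \epsilon_0/l^4$ and inflates the spectral interval by $l^4/\epsilon_0$, so I have to verify that the inflated width remains in a regime where (W) delivers a useful estimate and that the accumulated polynomial factors in $l$ (from the inflation, from counting cutpoints, and from the position sums) are majorized by $l^{2\beta}$; this is precisely where $\beta > 1+4s+\rho$ enters. A secondary difficulty is tuning $\xi$ close enough to $1$ that the $e^{-l/8}$ tail is genuinely negligible against $\epsilon^s L l^\beta$ in the parameter range where the estimate is meaningful.
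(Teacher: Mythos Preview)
Your overall strategy---replace (Loc)(I) by (Loc)(II), pick $\xi$ close enough to $1$ that $\nu<\alpha\xi$, split into far-centers and close-centers, and feed the close case through Theorem~\ref{inv}---is exactly the paper's approach, which simply says the proof is the same as for Theorem~\ref{mina} with Lemma~\ref{loc2} in place of Lemma~\ref{loc}.

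There is, however, a genuine gap in your handling of the summation over $k\ge 2$ and the origin of the factor $e^{C\epsilon^s L l^{\rho'}}$. The inequality $\sum_{k\ge 2}\mathbb{P}(\mathrm{tr}\ge k)\le \tfrac12\mathbb{E}[\mathrm{tr}(\mathrm{tr}-1)]$ is correct but useless here: you never bound the second factorial moment, so it does not let you pass from a $\mathbb{P}(\mathrm{tr}\ge 2)$ estimate to a bound on the full sum. Nor does the exponential factor come from ``(W) applied on the full box''. In the paper's scheme the two regimes are treated separately \emph{for each} $k$: on the bad set $\Omega^b$ (close centers) one has the same bound for every $k\ge 2$, and since $\mathrm{tr}\le |\Lambda_L|=L$ the sum picks up a harmless extra factor of $L$; on the good set $\Omega^g$ (centers pairwise far), $k$ eigenvalues force $k$ \emph{distinct} small boxes each carrying an eigenvalue, so by (IAD) one gets the binomial bound $\binom{\lfloor L/l\rfloor}{k}\,p^k$ with $p\asymp \epsilon^s l^{\rho}+e^{-l^\xi/8}$. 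Summing that binomial over $k\ge 2$ is precisely what produces $\big(\tfrac{L}{l}\,p\big)^2 e^{(L/l)p}$, i.e.\ the exponential factor $e^{C\epsilon^s L l^{\rho'}}$ in the statement. Your sketch collapses both regimes into a single $k=2$ argument and then tries to upgrade at the end; that does not work, and you should instead carry the $\Omega^g/\Omega^b$ decomposition through each $k$ as in Lemmas~\ref{loc2}--3.13 of the paper.
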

Theorem~\ref{minasmall}, Theorem~\ref{mina-D} and Theorem~\ref{mina3} are consequences of Theorem~\ref{mina} and Theorem~\ref{mina2}. For instance
Theorem~\ref{mina3} is a consequence of Theorem~\ref{mina2} if we take $\nu\in(\xi',\alpha)$. Indeed, it implies that $\epsilon\gtrsim L^{-k}\geq e^{-l^\nu}$ and that $\epsilon\geq e^{-l^\alpha}$.

The proof of  Theorem~\ref{mina} being the same as the proof of \cite[Theorem 1.1]{2011arXiv1101.0900K}, we only sketch it for the reader's convenience. First, the localization assumption implies that eigenvalues of $H_\omega(\Lambda)$ are close to eigenvalues of the same operator restricted to smaller boxes. Recall that  $\mathcal{U}_{\Lambda_L}$ is the set of $\omega$ such that (Loc)(I) holds.
\begin{lem}\label{loc}
For  $0<p<0$ and $\xi'<\xi<1$, there exists $L_{p,\xi,\xi'}>0$ such that for $l=(\log L)^{\frac{1}{\xi'}}$, $L\geq L_{p,\xi,\xi'}$, $\omega\in\mathcal{U}_{\Lambda_L}$ and $\gamma\in \Lambda_L$, if $H_\omega(\Lambda_L)$ has k eigenvalues in $[E-\epsilon,E+\epsilon]$ with localization center in $\Lambda_{\frac{4l}{3}}(\gamma)$, then $H_\omega( \Lambda_{\frac{3l}{2}}(\gamma))$ has k eigenvalues in $[E-\epsilon-e^{-l^{\xi}/8},E+\epsilon+e^{-l^{\xi}/8}]$.

\end{lem}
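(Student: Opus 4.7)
The plan is to take the $k$ orthonormal eigenvectors $\phi_1,\ldots,\phi_k$ of $H_\omega(\Lambda_L)$ associated with eigenvalues $E_1,\ldots,E_k\in[E-\epsilon,E+\epsilon]$ and localization centers $x_1,\ldots,x_k\in\Lambda_{4l/3}(\gamma)$, to truncate them to $\Lambda_{3l/2}(\gamma)$, and to use the resulting near-eigenvectors of $H_\omega(\Lambda_{3l/2}(\gamma))$ in a standard spectral-projection argument producing $k$ eigenvalues of $H_\omega(\Lambda_{3l/2}(\gamma))$ in the enlarged interval.

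First I would quantify the tail mass. Since $x_n$ is contained in a box of side $4l/3$ centered at $\gamma$ while $\Lambda_{3l/2}(\gamma)$ has side $3l/2$, one has $\text{dist}(x_n,\Z\setminus\Lambda_{3l/2}(\gamma))\geq l/12$. By (Loc)(I), $|\phi_n(x)|\leq L^{p+1}e^{-|x-x_n|^\xi}$, which gives $\|\phi_n-\tilde\phi_n\|_2\leq CL^{p+1}e^{-(l/12)^\xi}$ where $\tilde\phi_n:=\textbf{1}_{\Lambda_{3l/2}(\gamma)}\phi_n$. Because $l=(\log L)^{1/\xi'}$ with $\xi>\xi'$, the exponent $(l/12)^\xi$ is asymptotically larger than $(p+1)\log L$, so this bound (and the analogous estimate for the values of $\phi_n$ at the two sites adjacent to the boundary of $\Lambda_{3l/2}(\gamma)$) is eventually smaller than $e^{-l^\xi/8}$.

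Next I would estimate $\|(H_\omega(\Lambda_{3l/2}(\gamma))-E_n)\tilde\phi_n\|_2$. Since $H_\omega$ is nearest-neighbor, the vector $(H_\omega(\Lambda_{3l/2}(\gamma))-H_\omega(\Lambda_L))\tilde\phi_n$ is supported at the two boundary sites and is a linear combination, with coefficients bounded by $M$, of the values of $\phi_n$ at those sites; by the previous paragraph these are exponentially small. Combined with $H_\omega(\Lambda_L)\phi_n=E_n\phi_n$ this gives $\|(H_\omega(\Lambda_{3l/2}(\gamma))-E_n)\tilde\phi_n\|_2\leq e^{-l^\xi/8}$, and also $\|\tilde\phi_n\|_2\geq 1-e^{-l^\xi/8}$, for $L$ large enough.

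Finally I would count eigenvalues. The Gram matrix of $(\tilde\phi_n)_{n=1}^k$ is within $e^{-l^\xi/4}$ of the identity in operator norm, so the vectors may be orthonormalized at exponentially small cost into $\psi_1,\ldots,\psi_k$ satisfying $\|(H_\omega(\Lambda_{3l/2}(\gamma))-E_n)\psi_n\|_2\leq 2e^{-l^\xi/8}$. Writing $P$ for the spectral projection of $H_\omega(\Lambda_{3l/2}(\gamma))$ onto $[E-\epsilon-e^{-l^\xi/8},E+\epsilon+e^{-l^\xi/8}]$, if $\dim\text{ran}(P)<k$ then some unit $\psi=\sum c_n\psi_n$ would lie in $\text{ran}(P)^\perp$, which forces $\|(H_\omega(\Lambda_{3l/2}(\gamma))-E)\psi\|>\epsilon+e^{-l^\xi/8}$; but the triangle inequality yields the reverse bound $\leq\epsilon+\sqrt{k}\max_n\|(H_\omega(\Lambda_{3l/2}(\gamma))-E_n)\psi_n\|_2$, a contradiction once the polynomial factor $\sqrt{k}\leq\sqrt{L}$ is absorbed by the exponential. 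The main obstacle is purely bookkeeping: every polynomial-in-$L$ prefactor (the $L^{p+1}$ of (Loc)(I), the $\sqrt{k}$ from Cauchy--Schwarz, the orthonormalization cost) must be swallowed by $e^{-(l/12)^\xi}$, which is exactly why the strict inequality $\xi>\xi'$ is assumed and what ultimately fixes $L_{p,\xi,\xi'}$.
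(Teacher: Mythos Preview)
Your approach is correct and is precisely the standard one: the paper does not give its own proof of this lemma but refers (via the companion Lemma~\ref{loc2}) to \cite[Lemma~3.4]{2011arXiv1101.0900K}, where exactly this truncation/near-eigenvector/spectral-projection-rank argument is carried out.

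One small bookkeeping point worth tightening: from $\mathrm{dist}(x_n,\Z\setminus\Lambda_{3l/2}(\gamma))\geq l/12$ and (Loc)(I) with exponent $\xi$ you obtain decay $e^{-(l/12)^\xi}=e^{-l^\xi/12^\xi}$, and this is \emph{not} automatically $\leq e^{-l^\xi/8}$ for all $\xi\in(0,1)$ (it fails once $12^\xi>8$, i.e.\ $\xi\gtrsim 0.84$); your sentence ``the exponent $(l/12)^\xi$ is asymptotically larger than $(p+1)\log L$, so this bound \dots\ is eventually smaller than $e^{-l^\xi/8}$'' conflates beating a polynomial with beating $e^{-l^\xi/8}$. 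The painless fix is to invoke (Loc)(I) with some $\tilde\xi\in(\xi,1)$, so that $(l/12)^{\tilde\xi}\gg l^\xi$ and all polynomial prefactors (the $L^{p+1}$, the $\sqrt{k}$, the orthonormalization cost) are absorbed with room to spare; this is the role the strict inequality $\xi'<\xi$ and the freedom in (Loc) are really playing.
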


Define 
\begin{displaymath}
\Omega^b:=\left\{ \omega\in \mathcal{U}_{\Lambda_L}; 
\begin{aligned}
\text{there are two centres of localization of eigenfunctions}\\  
\text{associated to eigenvalues in }[E-\epsilon,E+\epsilon]\\
\text{that are at a distance less than 4l from each other}
\end{aligned}
\right\}
\end{displaymath}

and $\Omega^g:=\mathcal{U}_{\Lambda_L} / \Omega^b$.
 When $\omega\in\Omega^g$, Lemma~\ref{loc} gives rise to independent operators on small boxes : $(H_\omega( \Lambda_{\frac{3l}{2}}(\gamma)))_\gamma$. Hence, we obtain the following lemma :
\begin{lem}
Fix $0<p$ and $0<\xi'<\xi<1$. Then there exists $L_{p,\xi,\xi'}>0$ such that, for $l=(\log L)^{\frac{1}{\xi'}}$, for $L\geq L_{p,\xi,\xi'}$ and $k\geq 2$, one has
\begin{displaymath}
\mathbb{P}\big(\big \{ \omega \in \Omega^b; tr[\textbf{1}_{[E-\epsilon,E+\epsilon]}(H_\omega(\Lambda_L))]\geq k \big\} \big)\leq \dfrac{L}{l}\mathbb{P}_{2,9l,l}(\epsilon)+e^{-sl^{\xi}/9}
\end{displaymath}
and, for $k\leq [L/(4l)]+1$,
\begin{multline}
\mathbb{P}\big(\big \{ \omega \in \Omega^g; \text{tr}[\textbf{1}_{[E-\epsilon,E+\epsilon]}(H_\omega(\Lambda_L))]\geq k \big\} \big) \\
\leq  \begin{pmatrix}
[L/m]\\ k \end{pmatrix} 
(\mathbb{P}_{1,3l/2,4l/3}(\epsilon)+e^{-sl^{\xi}/8}\big)^k.
\end{multline}

where we have defined \begin{displaymath}
\mathbb{P}_{k,l,l'}:=\sup_{\gamma\in l'\Z\cap[O,L]} \mathbb{P}
\big( \text{tr}[\textbf{1}_{[E-\epsilon,E+\epsilon]}(H_\omega(\Lambda_l(\gamma)))]\geq k  \big).
\end{displaymath}
\end{lem}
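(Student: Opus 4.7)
The plan is to establish each of the two bounds separately. In both cases the idea is to use Lemma~\ref{loc} to replace eigenvalues of $H_\omega(\Lambda_L)$ in $[E-\epsilon,E+\epsilon]$ by eigenvalues of restrictions to much smaller boxes, at the price of widening the interval by $\delta=e^{-l^\xi/8}$; the resulting ``shell'' contribution is then controlled by the Wegner estimate (W).

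For the $\Omega^b$ bound, I cover $\Lambda_L$ by overlapping boxes of width $9l$ spaced by $l$, i.e.\ $\Lambda_{9l}(\gamma)$ for $\gamma\in l\Z\cap[0,L]$, arranged so that any two localization centers at distance at most $4l$ sit inside at least one such box. Applying Lemma~\ref{loc} at this scale (the same proof gives the statement with, say, interior $8l$ and exterior $9l$ in place of $4l/3$ and $3l/2$) lifts the two eigenvalues counted on $\Omega^b$ to two eigenvalues of $H_\omega(\Lambda_{9l}(\gamma))$ in $[E-\epsilon-\delta,E+\epsilon+\delta]$. Splitting this event according to whether two of these eigenvalues already lie in $[E-\epsilon,E+\epsilon]$ or at least one lies in the shell of total length $2\delta$, and bounding the shell contribution by (W) as $\leq C(2\delta)^s(9l)^\rho$, a union bound over the $O(L/l)$ values of $\gamma$ produces $\frac{L}{l}\mathbb{P}_{2,9l,l}(\epsilon)$ plus a total error $\frac{CL(9l)^\rho}{l}e^{-sl^\xi/8}$. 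Since $l=(\log L)^{1/\xi'}$ with $\xi'<\xi$, the quantity $l^\xi=(\log L)^{\xi/\xi'}$ beats every power of $\log L$, so this error is absorbed into $e^{-sl^\xi/9}$ for $L$ large.

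For the $\Omega^g$ bound, on this event every pair of localization centers of eigenvalues in $[E-\epsilon,E+\epsilon]$ is at distance strictly greater than $4l$. Given $k$ such centers, round each to its nearest point in $(4l/3)\Z\cap\Lambda_L$ to obtain $\gamma_1<\cdots<\gamma_k$ with pairwise spacing at least $4l-4l/3=8l/3$. Lemma~\ref{loc} then places an eigenvalue of each $H_\omega(\Lambda_{3l/2}(\gamma_i))$ in the widened interval; because the boxes $\Lambda_{3l/2}(\gamma_i)$ are pairwise at distance at least $8l/3-3l/2=7l/6\geq S$ for $l$ large, (IAD) ensures these operator restrictions are independent, so the joint probability factorizes. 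Enumerating ordered $k$-tuples in $(4l/3)\Z\cap\Lambda_L$ by the crude bound $\binom{[L/m]}{k}$ with $m=4l/3$, and applying the same Wegner-based shell estimate to each factor, yields the claimed inequality with per-factor correction $e^{-sl^\xi/8}$.

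The main obstacle is matching the scales: one has to ensure simultaneously that Lemma~\ref{loc} applies at the chosen interior/exterior ratios at both the $\Omega^b$ and $\Omega^g$ scales, that the small boxes in the $\Omega^g$ bound are separated by at least $S$ so that (IAD) gives true independence, and that all per-box shell errors sum to the claimed global error. Once these choices are pinned down the remainder is a union bound combined with a direct application of (W); the hypothesis $\omega\in\mathcal{U}_{\Lambda_L}$ is precisely what licenses the decomposition $\Omega^b\sqcup\Omega^g$ in the first place.
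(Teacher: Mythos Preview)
Your proposal is correct and follows essentially the same route as the paper, which does not give its own proof of this lemma but defers to \cite[Theorem~1.1]{2011arXiv1101.0900K}; the cover-by-overlapping-boxes argument for $\Omega^b$, the rounding of well-separated centers to a $(4l/3)$-lattice for $\Omega^g$, and the use of (IAD) to factorize after checking the $7l/6$ separation are exactly the mechanism from that reference. The shell/Wegner bookkeeping you describe is the standard way to pass from the widened interval back to $[E-\epsilon,E+\epsilon]$ and to produce the $e^{-sl^\xi/9}$ and $e^{-sl^\xi/8}$ corrections.
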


This implies the following result : 

\begin{lem}
Fix J compact in $\mathcal{I}$, the localization region (cf (Loc) in Section 1). Then, for any $p>0$, $\xi\in(0,1)$ and $\xi'\in(0,\xi)$, there exists $C=C_{\xi,\xi'}>0$ and $L_{\xi,\xi'}>0$ such that, for $E\in J,L\geq L_{\xi,\xi'}$ and $\epsilon\in(0,1)$, one has
\begin{multline}
\sum_{k\geq 2} \mathbb{P}\big(\text{tr}[\textbf{1}_{[E-\epsilon,E+\epsilon]}(H_\omega(\Lambda_L))]\geq k\big)\leq  L^{-p}+\dfrac{L^2}{l}\mathbb{P}_{2,9l,l}(\epsilon)\\
+\left(\dfrac{L}{l}\right)^2\Big(\mathbb{P}_{1,3l/2,4l/3}(\epsilon)+e^{-l^{\xi}/8}\Big )^2e^{L\mathbb{P}_{1,3l/2,4l/3}(\epsilon)/l}
\end{multline}

where $l=(\log L)^{1/\xi'}$.

\end{lem}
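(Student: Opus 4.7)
The plan is to combine the three ingredients already assembled in this section: the probabilistic characterization of the localized regime (Loc)(I), and the two complementary bounds from the preceding lemma on $\Omega^b$ and $\Omega^g$. Starting from the decomposition
\begin{multline*}
\sum_{k\geq 2}\mathbb{P}\big(\text{tr}[\textbf{1}_{[E-\epsilon,E+\epsilon]}(H_\omega(\Lambda_L))]\geq k\big) \leq \sum_{k\geq 2}\mathbb{P}\big(\mathcal{U}_{\Lambda_L}^c\cap\{\text{tr}\geq k\}\big) \\
+\sum_{k\geq 2}\mathbb{P}\big(\Omega^b\cap\{\text{tr}\geq k\}\big) + \sum_{k\geq 2}\mathbb{P}\big(\Omega^g\cap\{\text{tr}\geq k\}\big),
\end{multline*}
each of the three pieces is handled separately. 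The first is estimated crudely: applying (Loc)(I) with exponent $p+1$ in place of $p$ and using $\text{tr}\leq L$ gives a bound of order $L\cdot L^{-(p+1)}=L^{-p}$.

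For the contribution of $\Omega^b$, one observes that $\Omega^b$ is already contained in $\{\text{tr}\geq 2\}$, so the estimate $\mathbb{P}(\Omega^b\cap\{\text{tr}\geq k\})\leq \frac{L}{l}\mathbb{P}_{2,9l,l}(\epsilon)+e^{-sl^\xi/9}$ supplied by the previous lemma is uniform in $k$. Since $\sum_{k\geq 2}\textbf{1}_{\{\text{tr}\geq k\}}\leq \max(\text{tr}-1,0)\leq L$, summing picks up a factor $L$ and produces the term $\frac{L^2}{l}\mathbb{P}_{2,9l,l}(\epsilon)$ of the claim. The residue $L\,e^{-sl^\xi/9}$ is superpolynomially small in $L$ because $l=(\log L)^{1/\xi'}$ with $\xi'<\xi$ makes $l^\xi=(\log L)^{\xi/\xi'}$ a genuinely growing power of $\log L$; it is therefore absorbed into the $L^{-p}$ term after increasing $p$ if necessary.

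For the contribution of $\Omega^g$, I first observe that on $\Omega^g$ all localization centers of eigenvalues in $[E-\epsilon,E+\epsilon]$ are mutually separated by at least $4l$, so at most $[L/(4l)]+1$ such centers can fit inside $\Lambda_L$; consequently $\mathbb{P}(\Omega^g\cap\{\text{tr}\geq k\})=0$ for $k>[L/(4l)]+1$ and only finitely many terms contribute. For $2\leq k\leq [L/(4l)]+1$ the previous lemma provides the binomial bound $\binom{[L/m]}{k}q^k$ with $q=\mathbb{P}_{1,3l/2,4l/3}(\epsilon)+e^{-sl^\xi/8}$ and $m$ of order $l$. Summing via the elementary identity
\begin{equation*}
\sum_{k\geq 2}\binom{N}{k}q^k = (1+q)^N-1-Nq \leq \tfrac{1}{2}(Nq)^2 e^{Nq},
\end{equation*}
applied with $N=[L/m]$, yields a term of the form $(L/l)^2 q^2 e^{Nq}$, which matches the last summand of the claim; the $e^{-sl^\xi/8}$ contribution inside $e^{Nq}$ is absorbed into the multiplicative constant, using that $(L/l)e^{-sl^\xi/8}$ stays bounded as $L\to\infty$.

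The main bookkeeping obstacle is to check consistently that the various subexponential residues $e^{-sl^\xi/8}$ and $e^{-sl^\xi/9}$ can be absorbed either into the $L^{-p}$ term or into the multiplicative constant $C_{\xi,\xi'}$, and that the cruder bounds $\text{tr}\leq L$ used to collapse the sum on $\Omega^b$ and on $\mathcal{U}_{\Lambda_L}^c$ do not spoil the final estimate. The key point that makes this work is precisely the choice $l=(\log L)^{1/\xi'}$ with $\xi'<\xi$: it forces $e^{-cl^\xi}$ to be smaller than any prescribed power of $L$ once $L$ is sufficiently large. Once this bookkeeping is carried out, the three pieces recombine into the three terms appearing on the right-hand side of the lemma.
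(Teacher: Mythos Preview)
Your proof is correct and follows precisely the approach the paper intends: the paper does not give its own proof of this lemma but refers to \cite{2011arXiv1101.0900K}, and your argument---decomposing into $\mathcal{U}_{\Lambda_L}^c$, $\Omega^b$, $\Omega^g$, using the crude bound $\text{tr}\leq L$ on the first two pieces, and summing the binomial estimate via $(1+q)^N-1-Nq\leq \tfrac12(Nq)^2 e^{Nq}$ on the third---is exactly that proof. The bookkeeping you flag (absorbing the $e^{-sl^\xi/9}$ and $e^{-sl^\xi/8}$ residues using $\xi'<\xi$) is the only real content beyond quoting the preceding lemma, and you handle it correctly.
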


Now, Theorem~\ref{mina} follows from Theorem~\ref{inv} and (W). As for Theorem~\ref{mina2}, it comes from the same reasoning and the following lemma the proof of which is similar to the proof of Lemma 3.4 in \cite{2011arXiv1101.0900K} except that we use (Loc)(II) instead of (Loc)(I)  :  

\begin{lem}\label{loc2}
For  $\alpha\in(0,1)$, $\xi\in(0,1)$ and $\nu\in(0,\alpha\xi)$, there exists $L_{\nu,\xi,\alpha}>0$ such that for $l=L^\alpha$ and $L\geq L_{\alpha,\xi,\xi'}$ and $\omega\in\mathcal{V}_{\Lambda_L}$ and $\gamma\in \Lambda_L$, if $H_\omega(\Lambda_L)$ has k eigenvalues in $[E-\epsilon,E+\epsilon]$ with localization center in $\Lambda_{\frac{4l}{3}}(\gamma)$, then $H_\omega( \Lambda_{\frac{3l}{2}}(\gamma))$ has k eigenvalues in $[E-\epsilon-e^{-l^{\xi}/8},E+\epsilon+e^{-l^{\xi}/8}]$.

\end{lem}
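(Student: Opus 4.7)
The plan is to follow the proof of Lemma~\ref{loc} (that is, Lemma~3.4 of \cite{2011arXiv1101.0900K}) essentially verbatim, replacing every application of (Loc)(I) by (Loc)(II). The only thing that genuinely needs checking is that the stronger prefactor $e^{2L^\nu}$ appearing in (Loc)(II) is still dominated by the buffer-decay $e^{-l^\xi/8}$ that one gets from being well inside the box, and this is precisely what the assumption $\nu<\alpha\xi$ buys: since $l=L^\alpha$, one has $L^\nu=l^{\nu/\alpha}$ with $\nu/\alpha<\xi$, so $2L^\nu=o(l^\xi)$.

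Concretely, fix $\omega\in\mathcal{V}_{\Lambda_L}$ and select $k$ normalized eigenfunctions $\phi_{n_1},\dots,\phi_{n_k}$ of $H_\omega(\Lambda_L)$ whose eigenvalues lie in $[E-\epsilon,E+\epsilon]$ and whose localization centers $x_{0,1},\dots,x_{0,k}$ lie in $\Lambda_{4l/3}(\gamma)$. By (Loc)(II),
\begin{equation*}
|\phi_{n_i}(x)|\leq e^{2L^\nu}e^{-|x-x_{0,i}|^\xi},\qquad x\in\Lambda_L,\ i=1,\dots,k.
\end{equation*}
Set $\tilde{\phi}_{n_i}:=\textbf{1}_{\Lambda_{3l/2}(\gamma)}\phi_{n_i}$. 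Each $x_{0,i}$ sits at distance at least $l/12$ from $\Lambda_L\setminus\Lambda_{3l/2}(\gamma)$, so summing the pointwise bound on the complement and examining the two boundary hopping terms (bounded by $M$) yields
\begin{equation*}
\|\phi_{n_i}-\tilde{\phi}_{n_i}\|_2+\|(H_\omega(\Lambda_{3l/2}(\gamma))-E_{n_i,\omega})\tilde{\phi}_{n_i}\|_2\leq C(1+M)\,e^{2L^\nu}e^{-(l/12)^\xi/2}.
\end{equation*}
By the arithmetic observed above, the right-hand side is $\leq e^{-l^\xi/8}$ for $L\geq L_{\nu,\xi,\alpha}$. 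In particular $\|\tilde{\phi}_{n_i}\|\geq 1-e^{-l^\xi/8}$, and a Gram--Schmidt orthonormalization applied to $(\tilde{\phi}_{n_i})_{i=1}^k$ (whose Gram matrix is an $e^{-l^\xi/8}$-perturbation of the identity because the $\phi_{n_i}$ are orthonormal) produces an orthonormal family $(\psi_i)_{i=1}^k$ in $\ell^2(\Lambda_{3l/2}(\gamma))$ still satisfying $\|(H_\omega(\Lambda_{3l/2}(\gamma))-E_{n_i,\omega})\psi_i\|\leq e^{-l^\xi/8}$. The min--max principle then delivers $k$ eigenvalues of $H_\omega(\Lambda_{3l/2}(\gamma))$ inside $[E-\epsilon-e^{-l^\xi/8},E+\epsilon+e^{-l^\xi/8}]$.

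The main obstacle, and the only departure from the reference, is exactly the bookkeeping on the competition between the prefactor $e^{2L^\nu}$ from (Loc)(II) and the boundary decay $e^{-l^\xi/8}$: weakening the hypothesis to $\nu\geq\alpha\xi$ would push the error beyond the required threshold and invalidate the min--max argument. Apart from that, the proof is a direct transcription of the one in \cite{2011arXiv1101.0900K}, with $\mathcal{U}_{\Lambda_L}$ replaced by $\mathcal{V}_{\Lambda_L}$ and $L^{p+d}$ replaced by $e^{2L^\nu}$.
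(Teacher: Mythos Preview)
Your proposal is correct and matches the paper's own approach: the paper does not give a detailed proof but simply states that the argument is ``similar to the proof of Lemma~3.4 in \cite{2011arXiv1101.0900K} except that we use (Loc)(II) instead of (Loc)(I),'' which is exactly what you do. You also make explicit the one genuinely new point---that $\nu<\alpha\xi$ is precisely what guarantees $e^{2L^\nu}=e^{2l^{\nu/\alpha}}$ is beaten by the boundary decay $e^{-c\,l^\xi}$---which the paper leaves implicit.
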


\section{Decorrelation estimates for eigenvalues }\label{secdec1}\indent 
Now that we have proven Minami estimates, we can now turn to prove decorrelation estimates for eigenvalues. The Minami estimates and Theorem~\ref{deco} below are complementary results. Indeed, Minami estimates can be seen as decorrelation estimates for close eigenvalues whereas Theorem~\ref{deco} is proven for distinct energies. In fact, as the proof will show, Theorem~\ref{deco} is a consequence of Theorem~\ref{mina-D} and localization. To prove decorrelation estimates, we will study the co-linearity of gradients of eigenvalues, as functions of the random variables. As the gradients are different according to the models considered, the co-linearity condition depends on the model. Hence, we divide the proof of decorrelation estimate into three parts, according to the model.

\subsection{General decorrelation estimates} \label{subsecdec1}

In this section, we prove decorrelation estimate when $V_\omega(n)=\omega_n$ so that $(V_\omega(n))_n$ are independent variables. Independence will only be used to simplify the estimations of probabilities and we show in Subsection~\ref{alloy} how to prove decorrelation estimates for the discrete alloy-type model. Most results of this section only require (IAD). The result is slightly different from the decorrelation estimate found in \cite[Lemma 1.1]{K10}. Indeed we state a result that will handle both discrete alloy-type model and unidimensional quantum graphs with random vertex coupling. Therefore we introduce the following notations.

For $E\in\R$, pick  $\lambda_E\in\R$, $\mu_E\in\R$ and define $\sigma:=\{E\in\R,\lambda_E\neq0\}$. For  $E\neq E'$, define $d:=\lambda_E+\lambda_{E'}$, $b:= \lambda_E-\lambda_{E'}$, $c:=\mu_E-\mu_{E'}$ and for $E\in\sigma$, $V_\omega(E):=\dfrac{1}{\lambda_E}(V_\omega-\mu_E)$. In (W) we assume that $s=1$. The results of the present section need the following hypothesis to hold: $(E,E')\in\sigma^2$ must be taken such that at least two of d,b,c be non-zero. If $\lambda_E=1$ and $\mu_E=E$ we have $\sigma=\R$ and $V_\omega(E)=V_\omega-E$. In this case, we always have $d=2,b=0,c\neq 0$.

In this section, the sequence $a(n):=(a_\omega(n))$ is not random. Thus, the operator $\Delta_a$ is deterministic.
We follow the lines of \cite{K10} to prove a decorrelation estimate for the eigenvalues. 
If the random variables $(V_\omega(n))_n$ are independent we can use the following Minami estimates (see for instance \cite{CGA09}) which are more precise than the estimates in Theorem~\ref{mina-D}.

\begin{theo}[\textbf{M}]
Let $I\subset J\subset \R$, then for any $E\in\sigma$ there exists $C>0$ such that
\begin{equation}
\mathbb{E}\big(tr\textbf{1}_I(-\Delta_a+V_\omega(E))_{\Lambda_l}(tr\textbf{1}_J(-\Delta_a+V_\omega(E))_{\Lambda_L}-1)\big) \leq C |I||J||\Lambda|^2.
\end{equation}
\end{theo}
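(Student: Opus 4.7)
The strategy is to reduce to the classical Minami estimate of Combes-Germinet-Klein \cite{CGA09} by observing that, for $E\in\sigma$, the rescaled potential $V_\omega(E)(n)=\lambda_E^{-1}(V_\omega(n)-\mu_E)$ inherits independence from $(V_\omega(n))_n$ and has the bounded density $|\lambda_E|^{-1}\rho(\lambda_E\,\cdot+\mu_E)$, which lies in $L^\infty$ since $\lambda_E\neq 0$. Thus $-\Delta_a+V_\omega(E)$ restricted to the box (which, given the $|\Lambda|^2$ on the right and the standard form of Minami's estimate, I interpret as a single common box $\Lambda$) is a discrete Anderson-type Hamiltonian with a deterministic self-adjoint background $-\Delta_a$ and i.i.d.\ diagonal disorder of bounded density, which is exactly the setting in which \cite{CGA09} establishes the Minami estimate.

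For the reader's convenience, I would briefly record the scheme of that proof in our notation. Expand
\[
\operatorname{tr}\mathbf{1}_I(H)\bigl(\operatorname{tr}\mathbf{1}_J(H)-1\bigr)=\sum_{i\neq j}\mathbf{1}_I(E_i)\mathbf{1}_J(E_j),
\]
where $(E_i)_i$ are the eigenvalues of $H:=(-\Delta_a+V_\omega(E))_\Lambda$, and then rewrite the trace via diagonal matrix elements of the spectral projectors to obtain $\sum_{x\neq y}\mathbb{E}\bigl[\mathbf{1}_I(H)(x,x)\,\mathbf{1}_J(H)(y,y)\bigr]$ (the diagonal contribution $x=y$ being absorbed by the $-1$ factor). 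For each ordered pair $(x,y)$ with $x\neq y$, I would condition on all random variables except $V_\omega(x)$ and $V_\omega(y)$, express the conditional expectation in terms of the $2\times 2$ Schur-complement block of the resolvent at $\{x,y\}$, and then apply the double-spectral-averaging lemma of \cite{CGA09} (an incarnation of Minami's rank-two trick, based on the positivity of $\operatorname{Im} G^{(2)}$ and an explicit computation of the pushforward of $\rho\otimes\rho$ under the $2\times 2$ resolvent map). This yields a conditional bound of order $\|\rho\|_\infty^2|\lambda_E|^{-2}|I||J|$; summing over the at most $|\Lambda|(|\Lambda|-1)$ ordered pairs delivers the claim.

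The only point worth checking is that the Combes-Germinet-Klein argument is typically written for the standard discrete Laplacian whereas here the background is the deterministic Jacobi operator $-\Delta_a$. Inspection shows that the rank-two spectral-averaging step is purely algebraic: it uses only the self-adjointness of the background and the i.i.d.\ bounded-density structure of the potential at the two distinguished sites. Both features persist in our setting, so the proof transfers verbatim, with a constant $C$ depending only on $\|\rho\|_\infty$ and $|\lambda_E|$. The real content of the theorem therefore lies in \cite{CGA09}; the work here is purely the change of variables that brings the operator into their framework.
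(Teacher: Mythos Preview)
Your approach matches the paper's: the paper does not prove this theorem at all but simply cites \cite{CGA09}, and you correctly identify that the only content here is the change of variables showing $V_\omega(E)(n)=\lambda_E^{-1}(\omega_n-\mu_E)$ are i.i.d.\ with bounded density, placing the operator in the Combes--Germinet--Klein framework (your remark that the deterministic Jacobi background $-\Delta_a$ is harmless is also apt). One minor slip: the pushed-forward density is $|\lambda_E|\,\rho(\lambda_E\,\cdot+\mu_E)$, not $|\lambda_E|^{-1}\rho(\lambda_E\,\cdot+\mu_E)$, though this does not affect boundedness or the conclusion.
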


 The main result of the present section is : 

\begin{theo}[\textbf{D}]\label{deco}
Let $\beta\in(1/2,1)$. For $\alpha\in (0,1) $ and $(E,E')\in\sigma^2$ such that at least two of $d,b,c$ be non zero, then for any $k>0$ there exists $C>0$ such that for $L$ sufficiently large and $kL^\alpha\leq l\leq L^\alpha/k$ we have.
\begin{displaymath}
\mathbb{P}\left(
\begin{aligned}
 tr\textbf{1}_{[-L^{-1},+L^{-1}]}(-\Delta_a+V_\omega(E)\restriction{)}{\Lambda_l}\neq 0,\\
 tr\textbf{1}_{[-L^{-1},+L^{-1}]}(-\Delta_a+V_\omega(E')\restriction{)}{\Lambda_l}\neq 0
\end{aligned}
\right)\leq C\left(\dfrac{l^2}{L^{4/3}}\right) e^{(\log L)^\beta}.
\end{displaymath}
\end{theo}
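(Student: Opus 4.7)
The plan is to follow the scheme of \cite{K10}, combining localization, the Minami estimates of Theorem~\ref{mina-D}, and a Jacobian estimate for the joint eigenvalue map on a small box. The proof naturally splits into three stages.

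First I would reduce from $\Lambda_l$ to sub-boxes of side $l'=(\log L)^{1/\xi}$ for some $\xi\in(0,1)$. By (Loc) applied at the two shifted operators, outside an event of probability $L^{-p}$, every eigenvector of $(-\Delta_a+V_\omega(E))_{\Lambda_l}$ (resp.\ of $(-\Delta_a+V_\omega(E'))_{\Lambda_l}$) whose eigenvalue lies in $[-L^{-1},L^{-1}]$ has a localization centre $\gamma\in\Lambda_l$, and restricting to $\Lambda_{3l'/2}(\gamma)$ produces an eigenvalue of the restricted operator in $[-L^{-1}-e^{-l'^\xi/8},L^{-1}+e^{-l'^\xi/8}]$. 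Summing over the $(l/l')^2$ pairs of centres reduces the estimate to the joint probability on a pair of small sub-boxes centred at $\gamma$, $\gamma'$.

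Second, when $|\gamma-\gamma'|\geq S$, the two restricted operators depend on disjoint blocks of the $V_\omega$'s, hence are independent by (IAD); the joint probability factorises and a single application of Theorem~\ref{mina-D} (with $s=1$, $\epsilon=L^{-1}$) to each factor already yields a bound of order $(l/l')^2(l'/L)^2(\log L)^{2\eta\beta}$, which is much stronger than $l^2/L^{4/3}$. The substantive case is $|\gamma-\gamma'|\leq S$, where both near-zero eigenvalues live on the same small box $\Lambda$ of side comparable to $l'$. On $\Lambda$, a further use of Theorem~\ref{mina-D} discards the possibility of double eigenvalues, so I may assume that each operator possesses a single simple eigenvalue $e(\omega)$, $e'(\omega)$ near $0$. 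The Hellmann-Feynman formula then reads
\begin{equation*}
\partial_{\omega_n}e(\omega)=\frac{1}{\lambda_E}|\phi(n)|^2,\qquad \partial_{\omega_n}e'(\omega)=\frac{1}{\lambda_{E'}}|\phi'(n)|^2,
\end{equation*}
and the target probability becomes $\mathbb{P}(|e|\leq L^{-1},\,|e'|\leq L^{-1})$, which after a change of variables along two well-chosen indices $n\neq m$ is controlled by $L^{-2}$ divided by the Jacobian
\begin{equation*}
J_{n,m}(\omega)=\bigl|\partial_{\omega_n}e\cdot\partial_{\omega_m}e'-\partial_{\omega_m}e\cdot\partial_{\omega_n}e'\bigr|.
\end{equation*}

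The main obstacle, and where the exponent $4/3$ together with the sub-polynomial loss $e^{(\log L)^\beta}$ are manufactured, is to show that $\sup_{n,m}J_{n,m}\gtrsim L^{-1/3}$ except on an event of probability at most $l^2 L^{-4/3}e^{(\log L)^\beta}$. This is where the hypothesis that at least two of $d,b,c$ are non-zero enters: it prevents $V_\omega(E)$ and $V_\omega(E')$ from being affinely related through the deterministic operator $-\Delta_a$, so that $\phi$ and $\phi'$ cannot be eigenvectors of the same perturbation. Quantitatively, if $J_{n,m}$ were uniformly small, then the vectors $(|\phi(n)|^2/\lambda_E)_n$ and $(|\phi'(n)|^2/\lambda_{E'})_n$ would be almost colinear on $\Lambda$; combining this with the two eigenvalue equations produces a Wronskian-type identity for two eigenfunctions of the deterministic Jacobi operator at two distinct spectral parameters, which the finite-difference estimates of Section~3 and Appendix~A rule out in dimension one. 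Making this obstruction quantitative, conditioning on the bad set, and using Theorem~\ref{mina-D} to absorb the $(\log L)^\beta$ factors, delivers the announced bound.
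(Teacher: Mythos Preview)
Your overall architecture (reduce to log-size boxes via (Loc), peel off the far-apart case by (IAD) and Wegner, then on a single small box reduce via Minami to simple isolated eigenvalues and study the Jacobian of $(\omega_n,\omega_m)\mapsto(e,e')$) matches the paper. But the last stage, where the exponent $4/3$ is produced, is not correctly identified, and the mechanism you propose would not close.

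\medskip

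\textbf{Where the $4/3$ really comes from.} You claim one should show $\sup_{n,m}J_{n,m}\gtrsim L^{-1/3}$ off a small bad set, and then a change of variables gives the bound. This is not what the paper does, and it is not clear how one would prove such a lower bound. The paper only obtains the much weaker lower bound $J_{n,m}\geq\lambda$ with $\lambda=e^{-\tilde l^\beta}\lambda_E\lambda_{E'}$ (so $\lambda^{-1}\leq e^{(\log L)^\beta}$), off an event of probability $e^{-c\tilde l^{2\beta}}$ (Lemma~\ref{probcoli}). With such a tiny Jacobian lower bound, a naive global change of variables would lose a factor $\lambda^{-1}$ and give nothing useful. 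The $4/3$ is manufactured by an entirely different balancing: one introduces an auxiliary scale $\epsilon=L^{-\alpha}$, uses Minami once more to guarantee a spectral gap of size $\epsilon$ (cost $L^{-2\alpha}\tilde l^{\kappa}$), which by Lemma~\ref{Hessien} bounds the Hessians of $e,e'$ by $CL^{\alpha}$; then a local inversion argument (Lemma~\ref{square}) on each $\epsilon$-square shows the preimage of $J_L\times J_L'$ has side at most $L^{-1}\lambda^{-2}$, and \emph{monotonicity} of the eigenvalues in each $\omega_n$ confines all relevant $\epsilon$-squares to a monotone staircase, so there are at most $O(L^{\alpha})$ of them. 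Altogether $\mathbb{P}(\Omega_{0,\beta}^{\gamma,\gamma'})\lesssim L^{\alpha-2}\lambda^{-4}$, and optimizing against the Minami term $L^{-2\alpha}$ gives $\alpha=2/3$. Your sketch contains no auxiliary scale $\epsilon$, no Hessian control, and no monotonicity/counting argument; without them the Jacobian step does not produce any polynomial gain.

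\medskip

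\textbf{The colinearity obstruction is probabilistic, not deterministic.} You write that near-colinearity of $(|\phi|^2)$ and $(|\phi'|^2)$ yields ``a Wronskian-type identity for two eigenfunctions of the deterministic Jacobi operator\ldots\ which the finite-difference estimates of Section~3 and Appendix~A rule out''. That is not the argument, and indeed no deterministic obstruction exists: for a single fixed $\omega$ the gradients can be colinear. What the paper does (Lemma~\ref{probcoli}) is to show that colinearity, combined with the two eigenvalue equations, forces a family of equations of the form \eqref{cond1}--\eqref{cond3} on the random variables $\omega_n$ at of order $\tilde l^\beta$ sites (this is where Lemma~\ref{lem-large} and the hypothesis that two of $d,b,c$ are nonzero are used). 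By independence of the $\omega_n$, each such condition costs a factor $e^{-\tilde l^\beta/2}$ in probability, so the total probability is at most $2^{\tilde l}e^{-c\tilde l^{2\beta}}$; this is why $\beta>1/2$ is needed. The Wronskian computations of Section~3 and Appendix~A play no role here.

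\medskip

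A minor point: in the far-apart case you invoke Theorem~\ref{mina-D} on each factor, but one only needs (W) there; Minami is used to reduce from ``$\neq0$'' to ``$=1$'' and later to create the spectral gap $\epsilon$.
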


In particular, this proves (D) (see Section 2) for the operator $H_\omega:=-\Delta_a+ V_\omega(n)$ with $V_\omega(n)=\omega_n$. 

Using (M) or Theorem~\ref{mina-D} if we just have (IAD), as $(\log L)^C <e^{(\log L)^\beta}$, Theorem~\ref{deco} is a consequence of the following theorem : 

\begin{theo}\label{thdec}
Let $\beta\in(1/2,1)$. For $\alpha\in (0,1) $ and $(E,E')\in\sigma^2$ such that at least two of $d,b,c$ be non zero, then for any $k>1$ there exists $C>0$, such that for $L$ large enough and $kL^\alpha\leq l\leq L^\alpha/k$ we have
\begin{displaymath}
\mathbb{P}_0:=\mathbb{P}\left(
\begin{aligned}
  tr\textbf{1}_{[-L^{-1},+L^{-1}]}(-\Delta_a+V_\omega(E)\restriction{)}{\Lambda_l}= 1,\\
 tr\textbf{1}_{[-L^{-1},+L^{-1}]}(-\Delta_a+V_\omega(E')\restriction{)}{\Lambda_l}= 1
\end{aligned}
\right)\leq C\left(\dfrac{l^2}{L^{4/3}}\right)e^{(\log L)^\beta}.
\end{displaymath}
\end{theo}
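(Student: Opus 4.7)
Write $H^{(1)}:=(-\Delta_a+V_\omega(E))|_{\Lambda_l}$ and $H^{(2)}:=(-\Delta_a+V_\omega(E'))|_{\Lambda_l}$. On the event defining $\mathbb{P}_0$, each of $H^{(1)}$ and $H^{(2)}$ has exactly one simple eigenvalue in $[-L^{-1},L^{-1}]$; denote them by $\widetilde E_1(\omega)$, $\widetilde E_1'(\omega)$, with associated normalised eigenvectors $u$, $u'$. The randomness enters $H^{(j)}$ only through the diagonal, and first-order perturbation theory gives
\begin{equation*}
\partial_{\omega_n}\widetilde E_1=\frac{|u(n)|^2}{\lambda_E},\qquad \partial_{\omega_n}\widetilde E_1'=\frac{|u'(n)|^2}{\lambda_{E'}},
\end{equation*}
so for $(n,m)\in\Lambda_l^2$ the Jacobian of the map $(\omega_n,\omega_m)\mapsto(\widetilde E_1,\widetilde E_1')$ (the other coordinates frozen) equals
\begin{equation*}
J_{n,m}(\omega)=\frac{1}{\lambda_E\lambda_{E'}}\bigl(|u(n)|^2|u'(m)|^2-|u(m)|^2|u'(n)|^2\bigr).
\end{equation*}

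Following Klopp~\cite{K10}, the plan is to split $\mathbb{P}_0$ according to the size of $\max_{n,m}|J_{n,m}|$, using the threshold $\tau_L:=L^{-2/3}$: let $\mathcal E_{\mathrm g}$ be the sub-event on which this maximum is at least $\tau_L$, and $\mathcal E_{\mathrm b}$ its complement in $\mathbb{P}_0$. On $\mathcal E_{\mathrm g}$, I would fix a pair $(n,m)$ realising the maximum and, with $(\omega_k)_{k\neq n,m}$ frozen, change variables from $(\omega_n,\omega_m)$ to $(\widetilde E_1,\widetilde E_1')$. The image is constrained to $[-L^{-1},L^{-1}]^2$ (of Lebesgue measure $4L^{-2}$), the inverse Jacobian is at most $\tau_L^{-1}=L^{2/3}$ on the relevant region, and the densities of $\omega_n,\omega_m$ are bounded; a union bound over the $\leq l^2$ admissible pairs then gives
\begin{equation*}
\mathbb{P}(\mathcal E_{\mathrm g})\leq C\,\frac{l^2}{L^{2}\tau_L}=C\,\frac{l^2}{L^{4/3}}.
\end{equation*}

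The main obstacle is the control of $\mathcal E_{\mathrm b}$: there $|J_{n,m}|<\tau_L$ for every pair $(n,m)\in\Lambda_l^2$. This near-degeneracy means that $|u(\cdot)|^2$ and $|u'(\cdot)|^2$ are almost proportional as sequences, and by normalisation $|u(n)|\approx|u'(n)|$ at every site. After adjusting signs so that $u\approx\pm u'$ on the essential support of both eigenvectors, subtracting the two eigenvalue equations multiplied respectively by $\lambda_{E'}$ and $\lambda_{E}$ eliminates $V_\omega$ at leading order and produces
\begin{equation*}
b\,(\Delta_a u)(n)\approx c\,u(n)+\tfrac{d}{2}\bigl(\widetilde E_1'-\widetilde E_1\bigr)u(n),\qquad n\in\Lambda_l,
\end{equation*}
with $b,c,d$ as defined just before the theorem. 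The hypothesis that at least two of $b,c,d$ are non-zero is decisive: either $b\neq 0$, in which case $u$ must be a near-eigenvector of the \emph{deterministic} operator $\Delta_a$ at the spectral parameter $c/b+O(L^{-1})$ (or $d/b+O(L^{-1})$), which via the first eigenvalue equation forces $V_\omega$ to take a prescribed near-constant value on the essential support of $u$; or $b=0$ with $c,d\neq 0$, in which case $V_\omega$ is directly constrained to a prescribed near-constant value on the support of $u$. In either case I would localise $u$ using (Loc)(II), paying an event of probability $e^{-L^\nu}$ and absorbing a prefactor $e^{(\log L)^\beta}$ from the sub-exponential tails of the eigenfunctions, fix the localisation centre (a union bound of size $l$), and apply the Wegner estimate (W) together with the Minami estimate of Theorem~\ref{mina-D} to the resulting rigid constraint on the disorder. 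Arranged this way, one obtains $\mathbb{P}(\mathcal E_{\mathrm b})\leq C\,l^2 L^{-4/3}e^{(\log L)^\beta}$, matching the bound on $\mathcal E_{\mathrm g}$, and summing the two estimates yields the stated inequality.
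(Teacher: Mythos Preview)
Your proposal attacks the estimate directly on the polynomial-size box $\Lambda_l$, whereas the paper's proof of Theorem~\ref{thdec} is \emph{entirely} a reduction to Theorem~\ref{thdec2}, which concerns boxes of logarithmic size $\tilde l=(\log L)^{1/\xi'}$. The reduction uses (Loc)(I) and Lemma~\ref{loc}: either the two localisation centres are far apart, in which case (IAD) and two applications of (W) give $(l/\tilde l)^2(\tilde l^\rho/L)^2\lesssim (l/L)^2\tilde l^{2(\rho-1)}$, or they sit in a common box of size $5\tilde l$, to which one applies the log-scale result $\mathbb{P}_1\lesssim (\tilde l^2/L^{4/3})e^{\tilde l^{\beta'}}$. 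Taking $\xi'$ close to $1$ gives $\tilde l^{\beta'}\sim (\log L)^\beta$, which is the origin of the factor $e^{(\log L)^\beta}$.

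Your direct approach has a structural tension in the choice of threshold. With $\tau_L=L^{-2/3}$ the good-event bound works out, but on the bad event Lemma~\ref{grad->jac} only yields $\|u^2-u'^2\|_1\lesssim l^{5/2}L^{-2/3}$, so the resulting constraints on the $\omega_n$ are only \emph{polynomially} tight. The sign pattern $\mathcal P,\mathcal Q$ (which you hide in ``adjusting signs so that $u\approx\pm u'$'') carries a combinatorial cost $2^{\#\text{sites}}$, and on a box of size $l\sim L^\alpha$ this is $2^{L^\alpha}$, which polynomial constraints cannot absorb. In the paper this is exactly why the analysis of the degenerate-Jacobian case (Lemma~\ref{probcoli}) is carried out on boxes of size $\tilde l=(\log L)^{1/\xi'}$ with an \emph{exponentially} small threshold $\lambda=e^{-\tilde l^\beta}\lambda_E\lambda_{E'}$: there the $2^{\tilde l}$ combinatorics is beaten by the $e^{-c\tilde l^{2\beta}}$ coming from $\asymp\tilde l^\beta$ conditions each of size $e^{-\tilde l^\beta/2}$, since $\beta>1/2$. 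Trying instead an exponential threshold on the polynomial box would in turn destroy your good-event bound.

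There is a second gap on $\mathcal E_{\mathrm g}$: the change of variables $(\omega_n,\omega_m)\mapsto(\widetilde E_1,\widetilde E_1')$ is only locally valid, because $|J_{n,m}|\ge\tau_L$ is assumed at a single configuration, not on the whole support of $(\omega_n,\omega_m)$. The paper deals with this by first isolating each eigenvalue in a window $(-\epsilon,\epsilon)$ via the Minami estimate (defining $\Omega_0(\epsilon)$), then controlling the Hessian by Lemma~\ref{Hessien}, and finally invoking \emph{monotonicity} of $E_j,E_k$ in each $\omega_i$ to cover the $(\omega_n,\omega_m)$-square by $O(\epsilon^{-1})$ small squares along a non-increasing broken line (Lemma~\ref{square} and the paragraph following it). Your sketch omits all three of these ingredients; without them the Jacobian may vanish or the map may fail to be injective, and the bound $l^2/L^{4/3}$ does not follow.
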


Now, using (Loc)(I) , Theorem~\ref{thdec} is a consequence of the following theorem : 
\begin{theo}\label{thdec2}
Let $\beta'\in(1/2,1)$.For $\alpha\in (0,1) $ and $(E,E')\in\sigma^2$ such that at least two of $d,b,c$ be non zero, then there exists $C>0$ such that for any $\xi'\in(0,\xi)$, $L$ large enough and $\tilde{l}=(\log L)^{1/\xi'}$ we have 
\begin{displaymath}
\mathbb{P}_1:=\mathbb{P} \left( 
\begin{aligned}
 tr\textbf{1}_{[-2L^{-1},+2L^{-1}]}(-\Delta_a+V_\omega(E)\restriction{)}{\Lambda_{\tilde{l}}}= 1,\\
 tr\textbf{1}_{[-2L^{-1},+2L^{-1}]}(-\Delta_a+V_\omega(E')\restriction{)}{\Lambda_{\tilde{l}}}= 1
\end{aligned}
\right)\leq C\left (\dfrac{\tilde{l}^2}{L^{4/3}}\right)e^{\tilde{l}^{\beta'}}.
\end{displaymath}
\end{theo}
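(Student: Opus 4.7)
The plan is to follow the strategy developed by Klopp in \cite{K10} for the discrete Anderson model, adapted to the one-parameter family of operators $-\Delta_a + V_\omega(E)$ considered here. First I would apply a Minami-type bound on the box $\Lambda_{\tilde{l}}$ (using Theorem~\ref{mina-D}, or the sharper (M) when the $V_\omega(n)$ are independent) at the window $\epsilon=2/L$: since $\tilde{l}=(\log L)^{1/\xi'}$ is negligible compared with $L$, the contribution of non-simple eigenvalues is much smaller than the target $\tilde{l}^2/L^{4/3}$. I may therefore restrict attention to the event on which each operator has exactly one simple eigenvalue $\lambda_j(\omega)\in[-2L^{-1},2L^{-1}]$, $j=1,2$, with normalized eigenvector $\phi_j$; intersecting with the localization set provided by (Loc)(I) at scale $\tilde{l}$, I obtain a pair of localization centres $x_1,x_2\in\Lambda_{\tilde{l}}$ around which $\phi_1,\phi_2$ decay sub-exponentially.

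On this reduced event, the map $\omega\mapsto(\lambda_1(\omega),\lambda_2(\omega))$ is smooth with $\partial\lambda_j/\partial\omega_n=|\phi_j(n)|^2/\lambda_{E_j}$, so that for any pair of sites $p\neq q$ in $\Lambda_{\tilde{l}}$ the corresponding $2\times 2$ Jacobian has determinant $J(p,q)=(\lambda_E\lambda_{E'})^{-1}\bigl(|\phi_1(p)|^2|\phi_2(q)|^2-|\phi_1(q)|^2|\phi_2(p)|^2\bigr)$. Setting $\delta:=L^{-2/3}$, I split the event into: (a) the \emph{Jacobian-large} part, where some pair $(p,q)$ satisfies $|J(p,q)|\geq\delta$, and (b) the \emph{Jacobian-small} part, where $|J(p,q)|<\delta$ for every such pair. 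On (a), a standard change of variables $(\omega_p,\omega_q)\mapsto(\lambda_1,\lambda_2)$, combined with the boundedness of the density of the $V_\omega(n)$, integrates out the spectral constraint and bounds the contribution by $CL^{-2}/\delta=CL^{-4/3}$; summing over the $O(\tilde{l}^2)$ choices of $(p,q)$ produces exactly the advertised $\tilde{l}^2/L^{4/3}$ factor.

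The heart of the proof, and what I expect to be the main obstacle, is the complementary regime (b), where the two density profiles $(|\phi_1(n)|^2)_n$ and $(|\phi_2(n)|^2)_n$ are forced to be nearly proportional. This is where the hypothesis that at least two of $d=\lambda_E+\lambda_{E'}$, $b=\lambda_E-\lambda_{E'}$, $c=\mu_E-\mu_{E'}$ be non-zero must be used. Near-proportionality of the densities, combined with the eigenvalue equations $(-\lambda_E\Delta_a+V_\omega-\mu_E)\phi_1=\lambda_E\lambda_1\phi_1$ and $(-\lambda_{E'}\Delta_a+V_\omega-\mu_{E'})\phi_2=\lambda_{E'}\lambda_2\phi_2$, should yield a non-trivial affine constraint on the random variables $V_\omega(n)$ over a neighbourhood of the localization centres; its non-degeneracy is precisely ensured by the assumption on $(d,b,c)$, and its probability is controlled by the boundedness of the single-site density. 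The extra factor $e^{\tilde{l}^{\beta'}}$ in the statement absorbs the various polynomial-in-$\tilde{l}$ losses coming from counting the pairs $(x_1,x_2)$, from the truncation error when replacing the eigenvectors by their exponentially localized approximations, and from the $1/\lambda_{E_j}$ prefactors. Extracting the correct affine constraint and quantifying its non-degeneracy via $(d,b,c)$, while paying only a subexponential-in-$\tilde{l}$ loss, is the step that I expect to demand the bulk of the technical work and the full use of the structure of the Jacobi operator $-\Delta_a$ together with the uniform boundedness of $(a(n))$.
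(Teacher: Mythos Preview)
Your overall plan follows Klopp \cite{K10}, as does the paper, but the execution of part~(a) has a genuine gap, and the role of the threshold and of the factor $e^{\tilde l^{\beta'}}$ is inverted relative to the actual proof.

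For (a), the ``standard change of variables $(\omega_p,\omega_q)\mapsto(\lambda_1,\lambda_2)$'' does not yield $CL^{-2}/\delta$ directly: the map is not globally injective (both level curves $\{\lambda_j=c_j\}$ are graphs of decreasing functions in the $(\omega_p,\omega_q)$-plane and may intersect many times), so the area/coarea formula carries an uncontrolled multiplicity. The paper handles this by introducing an intermediate scale $\epsilon=L^{-\alpha}$ with $\alpha\in(1/2,1)$, restricting to the event $\Omega_0(\epsilon)$ that each eigenvalue is isolated in $(-\epsilon,\epsilon)$, using the resulting Hessian bound (Lemma~\ref{Hessien}) to make the map a local diffeomorphism on each $\epsilon$-square (Lemma~\ref{square}), and then invoking monotonicity of the eigenvalues in every $\omega_n$ to show that the $\epsilon$-squares containing admissible points form an antichain, hence number at most $CL^{\alpha}$. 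This produces $\mathbb{P}(\Omega^{\gamma,\gamma'}_{0,\beta}(\epsilon))\le CL^{\alpha-2}\lambda^{-4}$, and the optimum is $\alpha=2/3$.

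Correspondingly, the Jacobian threshold in the paper is $\lambda=e^{-\tilde l^{\beta'}}\lambda_E\lambda_{E'}$, not a power of $L$. With this choice the ``Jacobian-small'' event becomes ``the normalised gradients are $e^{-\tilde l^{\beta'}}$-collinear'', and Lemma~\ref{probcoli} (the technical core you anticipate) extracts of order $c\,\tilde l^{\beta'}$ \emph{independent} conditions on the $\omega_n$---each of the form $|\omega_n-\text{const}|\le e^{-\tilde l^{\beta'}/2}$ or a quadratic analogue obtained from the $P/Q$ decomposition and the intervals $(C_k)_k$, $(D_k)_k$---yielding probability $\le e^{-c\tilde l^{2\beta'}}\ll L^{-2}$. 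Thus the factor $e^{\tilde l^{\beta'}}$ in the statement is exactly the price $\lambda^{-4}$ paid in part~(a), not bookkeeping slack for ``polynomial losses''. Finally, (Loc)(I) plays no role inside this proof; the input that the eigenvector is large on $\gtrsim \tilde l^{\beta'}$ sites is the purely deterministic Lemma~\ref{lem-large} on solutions of second-order difference equations.
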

\begin{proof}[Proof of Theorem~\ref{thdec} using Theorem~\ref{thdec2}]
\text{ } \\
Define $H^E_\omega(\Lambda_l):=-(\Delta_a+V_\omega(E))_{\Lambda_l}$,  $H^{E'}_\omega(\Lambda_l):=-(\Delta_a+V_\omega(E'))_{\Lambda_l}$ and $\tilde{J_L}=[-2L^{-1},2L^{-1}]$. Suppose that both operators have one eigenvalue in $[-L^{-1},L^{-1}]$. For $\omega$ in $\mathcal{U}_{\Lambda_L}$(cf (Loc)(I)), to each eigenvalue we can associate  a localization centre. Using Lemma~\ref{loc} with $p=2$, depending on whether these points are distant from at least $3\tilde{l}$ or are at a distance of at most $3\tilde{l}$ , either there exists $\gamma\neq\gamma'$ such that $\textbf{dist}(\gamma+\Lambda_{\tilde{l}};\gamma'+\Lambda_{\tilde{l}})\geq \tilde{l}$  with 
\begin{enumerate}
\item $H_\omega(\gamma+\Lambda_{\tilde{l}})$ has exactly one eigenvalue in $\tilde{J_L}$
\item $H_\omega(\gamma'+\Lambda_{\tilde{l}})$ has exactly one eigenvalue in $\tilde{J'_L}$
\end{enumerate}
or there exists $\gamma_0$ such that $H_\omega(\gamma_0+\Lambda_{5\tilde{l}})$ has exactly one eigenvalue in $\tilde{J_l}$ and exactly one in $\tilde{J'_L}$. Using (IAD) and (W), we then compute
\begin{align*}
\mathbb{P}_0&\leq L^{-2}+C(l/\tilde{l})\mathbb{P}
\left(\left\{
\begin{aligned}
\sigma(H_\omega(\Lambda_{5\tilde{l}}(\gamma_0)\cap \tilde{J_L}\neq \emptyset \\ 
\sigma(H_\omega(\Lambda_{5\tilde{l}}(\gamma_0)\cap \tilde{J'_L}\neq \emptyset 
\end{aligned}
\right\}\right)\\
&\;\;\;\;\;\;+C(l/\tilde{l})^2\mathbb{P}(\sigma(H_\omega(\Lambda_{\tilde{l}}(\gamma)\cap \tilde{J_L}\neq \emptyset)\mathbb{P}(\sigma(H_\omega(\Lambda_{\tilde{l}}(\gamma')\cap \tilde{J'_L}\neq \emptyset)\\
&\leq L^{-2}+C(l/\tilde{l})^2(\tilde{l}^\rho/L)^2+C(l/\tilde{l})\mathbb{P}_1\leq c(l/L)^2\tilde{l}^{2(\rho-1)}+C(l/\tilde{l})\mathbb{P}_1.
\end{align*}
Now, if one take $\xi$ and $\xi'$ sufficiently close to 1, we obtain Theorem~\ref{thdec} with $\beta=\dfrac{\beta'}{\xi}\in\left(\dfrac{1}{2},1\right)$.
\end{proof}

We can now turn to the proof of Theorem~\ref{thdec2}. From now on we will write $H_\omega(\Lambda_l,E)=(-\Delta_a+V_\omega(E)\restriction{)}{\Lambda_{\tilde{l}}}$ and $J_L=[-L^{-1},L^{-1}]$. For $\epsilon\in(2L^{-1},1)$, for some $\kappa>2$, using Theorem~\ref{mina2} when one of the two operators $H_\omega(\Lambda_l,E)$ and $H_\omega(\Lambda_l,E')$ has two eigenvalues in $[-\epsilon,+\epsilon]$, one has 
\begin{equation}
\mathbb{P}_1\leq C\epsilon^2 l^{\kappa}+\mathbb{P_\epsilon}\leq C\epsilon^2l^2e^{l^\beta}+\mathbb{P}_\epsilon
\end{equation}
where 
\begin{displaymath}
\mathbb{P_\epsilon}=\mathbb{P}(\Omega_0(\epsilon))
\end{displaymath}
and

\[ \Omega_0(\epsilon)= \left\{ \omega;
\begin{aligned}
\sigma(H_\omega(\Lambda_l,E))\cap J_L&= \{E(\omega)\} \\
\sigma(H_\omega(\Lambda_l,E))\cap (-\epsilon,&+\epsilon)= \{E(\omega)\} \\
\sigma(H_\omega(\Lambda_l,E'))\cap J_L&= \{E'(\omega)\} \\
\sigma(H_\omega(\Lambda_l,E'))\cap(-\epsilon,&+\epsilon)= \{E'(\omega)\}
\end{aligned} 
\right \}.
\]
In order to estimate $\mathbb{P}_\epsilon$ we make the following definition. For $(\gamma,\gamma')\in\Lambda_L^2$ let $J_{\gamma,\gamma'}(E(\omega),E'(\omega))$ be the Jacobian of the mapping $(\omega_\gamma,\omega_{\gamma'})\rightarrow (E(\omega),E'(\omega))$ : 
\begin{equation}\label{defjac}
J_{\gamma,\gamma'}(E(\omega),E'(\omega))=\left \vert \begin{pmatrix} \partial_{\omega_\gamma}E(\omega) & \partial_{\omega_{\gamma'}}E(\omega)\\ \partial_{\omega_\gamma}E'(\omega) &\partial_{\omega_{\gamma'}}E'(\omega)
\end{pmatrix} \right \vert
\end{equation} 
and define 
\begin{displaymath}
\Omega^{\gamma,\gamma'}_{0,\beta}(\epsilon)= \Omega_0(\epsilon)\cap \left \{ \omega ;|J_{\gamma,\gamma'}(E(\omega),E'(\omega))|\geq \lambda \right\}.
\end{displaymath} 

When one of the Jacobians is sufficiently large, the eigenvalues depends on  two independent random variables. Thus the probability to stay in a small interval is small. So we divide the proof in two parts, depending on whether all the Jacobians are small. The next lemma shows that if all the Jacobians are small then the gradients of the eigenvalues must be almost co-linear.

\begin{lem}\label{grad->jac}
Let $(u,v)\in(\R^+)^{2n}$ such that $\|u\|_1=\|v\|_1=1$. Then 
\begin{displaymath}
\max_{j\neq k} \left | \begin{pmatrix} u_j & u_k\\ v_j & v_k \end{pmatrix} \right |^2\geq \dfrac{1}{4n^5}\Vert u-v \Vert_1^2.
\end{displaymath}
\end{lem}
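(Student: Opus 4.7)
The plan is to prove the contrapositive direction: bound $\|u-v\|_1$ from above in terms of $M := \max_{j \neq k} |u_j v_k - u_k v_j|$. The underlying geometric idea is that each determinant $u_j v_k - u_k v_j$ measures the non-colinearity of the pair $(u_j,v_j)$ with $(u_k,v_k)$; if \emph{every} such minor is small, the vectors $u$ and $v$ must be nearly parallel. Since both are probability vectors (nonnegative with $\ell^1$-norm $1$), being nearly parallel forces them to be nearly equal.

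I would first choose a pivot index $k$ maximizing $v_k$. Since $\|v\|_1 = 1$ and $v \geq 0$, this gives $v_k \geq 1/n$. For every $j \in \{1,\dots,n\}$ (including $j = k$, where the minor is trivially zero) I would use the identity
\begin{equation*}
u_j - \frac{u_k}{v_k} v_j = \frac{u_j v_k - u_k v_j}{v_k},
\end{equation*}
which gives $|u_j - \lambda v_j| \leq M/v_k \leq nM$ for $\lambda := u_k/v_k$. Summing over $j$ yields $\|u - \lambda v\|_1 \leq n^2 M$. To pass from this to $\|u-v\|_1$, I would observe that taking the signed sum of the $n$ inequalities gives $|1 - \lambda| = |\,\|u\|_1 - \lambda\|v\|_1\,| \leq \|u - \lambda v\|_1 \leq n^2 M$. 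By the triangle inequality,
\begin{equation*}
\|u - v\|_1 \leq \|u - \lambda v\|_1 + |\lambda - 1|\cdot\|v\|_1 \leq 2 n^2 M,
\end{equation*}
whence $M^2 \geq \tfrac{1}{4n^4}\|u-v\|_1^2$, which is stronger than the stated bound $\tfrac{1}{4n^5}\|u-v\|_1^2$.

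I do not foresee a serious obstacle here: the argument is purely linear-algebraic and elementary, with the only subtle point being the step $|1-\lambda| \leq n^2 M$, which crucially relies on both $\|u\|_1 = \|v\|_1 = 1$ and on the positivity of the entries (so that no cancellation is needed when comparing $\ell^1$-norms to signed sums). The mild slack between $n^4$ and $n^5$ suggests the author may have picked $k$ differently (for instance among indices where both $u_k$ and $v_k$ are moderately large simultaneously), but the pivot-by-max-of-$v$ choice above is the cleanest route and suffices.
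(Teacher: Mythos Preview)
Your proof is correct. The paper does not actually supply a proof of this lemma; it is stated without argument and implicitly borrowed from \cite{K10}, so there is nothing to compare against beyond checking your reasoning stands on its own. Your pivot-by-$\max v_k$ argument is clean: the identity $u_j-\lambda v_j=(u_jv_k-u_kv_j)/v_k$ with $\lambda=u_k/v_k$ gives $\|u-\lambda v\|_1\le n^2M$, then $|1-\lambda|=|\sum_j(u_j-\lambda v_j)|\le\|u-\lambda v\|_1$ (using $u,v\ge0$ so that $\|u\|_1=\sum u_j$), and the triangle inequality closes it with $\|u-v\|_1\le 2n^2M$. As you note, this yields $M^2\ge\frac{1}{4n^4}\|u-v\|_1^2$, one power of $n$ sharper than the stated bound.
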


So either one of the Jacobian determinants is not small or the gradient of $E$ and $E'$ are almost collinear. We shall show that the second case happens with a small probability. But first we show that a normalized eigenfunction has a large number of sites where it is not small.

\begin{lem}\label{lem-large}
There exists $c>0$ such that for $u$ a normalized eigenvalue of \\ $(-\Delta_a+V_\omega(E)\restriction{)}{\Lambda_l}$, there exists $n_0\in \Lambda_l$ such that for $m\in I:=[n_0-cl^\beta,n_0+cl^\beta]$ either $|u(m)|\geq e^{-l^\beta}$ or $|u(m+1)|\geq e^{-l^\beta}$.
\end{lem}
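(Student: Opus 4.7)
The plan is to combine pigeonholing with the transfer matrix structure of the eigenvalue equation. Since $u$ is $\ell^2$-normalized on $\Lambda_l$, there exists $n_0\in\Lambda_l$ with $u(n_0)^2\ge 1/l$; equivalently, in the Pr\"ufer variables of \eqref{pruf},
$r_u(n_0)^2 = u(n_0)^2 + u(n_0-1)^2 \ge 1/l$.
This produces the initial ``macroscopic'' point that the argument will propagate.

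Next I would unpack the three-term recurrence satisfied by any eigenfunction at eigenvalue $\mathcal{E}\in[-2L^{-1},2L^{-1}]$ of $H_\omega(\Lambda_l,E)=-\Delta_a+V_\omega(E)$, namely
$-a(n+1)u(n+1) = (\mathcal{E}-V_\omega(E;n))u(n) + a(n)u(n-1)$,
and write the associated transfer matrix $T_n$ acting on $U(n)=(u(n),u(n-1))^{\top}$. The hypothesis $1/M\le |a(n)|\le M$ together with the (deterministic) boundedness of $V_\omega(E;\cdot)$ and of $\mathcal{E}$ gives a uniform bound $\max(\|T_n\|,\|T_n^{-1}\|)\le C_0$ for some $C_0>0$ depending only on $M$, $\|V_\omega\|_\infty$, $\lambda_E$ and $\mu_E$. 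This is exactly the content of Lemma~\ref{progress} in Appendix~A, which yields $C_0^{-1}\le r_u(n+1)/r_u(n)\le C_0$; iterating,
$r_u(m)\ge C_0^{-|m-n_0|}\,r_u(n_0)\ge C_0^{-|m-n_0|}/\sqrt{l}$ for all $m$.

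Combining the two bounds, for $|m-n_0|\le cl^\beta$ one obtains
$r_u(m+1)^2 = u(m)^2+u(m+1)^2 \ge C_0^{-2(cl^\beta+1)}/l$.
Choosing $c>0$ so that $2c\log C_0<1$, this right-hand side dominates $2e^{-2l^\beta}$ for $l$ large enough, so one of $|u(m)|$ or $|u(m+1)|$ must be $\ge e^{-l^\beta}$, which is the conclusion. The only mild annoyance is boundary effects: if the point $n_0$ produced by pigeonholing sits within $cl^\beta$ of $\{1,l\}$, the interval $I$ would spill out of $\Lambda_l$; but since $r_u$ varies only by the bounded multiplicative factor $C_0$ per step, one may simply shift $n_0$ by at most $cl^\beta$ steps toward the interior, losing only a factor $C_0^{-cl^\beta}$ in the lower bound, which is harmless after adjusting $c$. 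I do not anticipate a real obstacle; the only point requiring care is the explicit choice of $c$ relative to the constant $C_0$ coming out of Lemma~\ref{progress}.
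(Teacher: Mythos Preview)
Your proof is correct and follows essentially the same idea as the paper's: both arguments exploit that the transfer matrix (equivalently, the three-term recurrence) has uniformly bounded norm and inverse norm, so the Pr\"ufer radius can change by at most a fixed multiplicative constant per step. The paper phrases this contrapositively---if two consecutive values of $u$ are small then $u$ stays small over $\sim l^\beta$ steps, forcing such a pair to lie far from $n_0$---whereas you propagate the lower bound on $r_u$ forward from $n_0$ via Lemma~\ref{progress}; these are two sides of the same coin and yield the same constant $c$ up to harmless numerical factors.
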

\begin{proof}
\normalfont
As $u$ satisfies a finite difference equation of order two, u grows at most exponentially fast, so if $u$ takes two small consecutive values, $u$ cannot be large in a logarithmic neighbourhood of these points. As u is normalized there exists $n_0\in\Lambda_l$ such that $|u(n_0)|\geq 1/\sqrt{l}$
. \\
If there exists $m\in\{1,\dots,l\}$ such that $|u(m)|\leq e^{- l^\beta/5}$ and $|u(m+1)|\leq e^{- l^\beta/5}$ then as 
\begin{align*}
|u(n+2)|&\leq C(|u(n+1)|+|u(n)|)\\
|u(n)|&\leq C(|u(n+1)|+|u(n+2)|)
\end{align*}
one has $|u(m+N)|\leq (2C)^{|N|}e^{-l^\beta/5}$. So $|m-n_0|\geq  \tilde{c}l^\beta$ for some $\tilde{c}>0$ and for some $0<c<\tilde{c}$ and m such that $|m-n_0|\leq c l^\beta$ either 
$|u(m)|\geq e^{-l^\beta/5}$ or $|u(m+1)|\geq e^{-l^\beta/5}$. So there are at least $cl^\beta/2$ sites where u is not small. This proves Lemma~\ref{lem-large}.
\end{proof}

 We are now able to give a proof of the 
  
\begin{lem}\label{probcoli}
Let $E,E'\in\sigma^2$ and $\beta>1/2$. Let $\mathbb{P}$ denote the probability that there exist $E_j(\omega)$ and $E_k(\omega)$, simple eigenvalues of $(-\Delta+V_\omega(E))_{\Lambda_l}$ and $(-\Delta+V_\omega(E'))_{\Lambda_l}$ in $[-e^{-l^\beta},+e^{-l^\beta}]$ such that
\begin{equation}
\left\|\dfrac{\nabla_\omega\big(E_j(\omega))}{\|\nabla_\omega\big(E_j(\omega))\|_1}\pm\dfrac{\nabla_\omega\big(E_k(\omega))}{\|\nabla_\omega\big(E_k(\omega))\|_1}\right\|_1\leq e^{-l^\beta}.
\end{equation}
Then there exists $c>0$ such that
\begin{equation}
\mathbb{P}\leq e^{-c l^{2\beta}}
\end{equation}
\end{lem}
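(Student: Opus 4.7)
My plan is to translate the approximate collinearity of the two gradients into a high-codimensional pointwise constraint on the random variables $(\omega_n)_{n\in\Lambda_l}$, and then to estimate its probability using independence and the boundedness of the single-site density.

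By the Hellmann--Feynman formula, $\partial_{\omega_n}E_j(\omega)=\lambda_E^{-1}|u(n)|^2$ where $u$ is the normalised eigenvector associated to $E_j(\omega)$, and analogously for $v$, $E_k(\omega)$. Hence $\nabla_\omega E_j(\omega)/\|\nabla_\omega E_j(\omega)\|_1=\mathrm{sgn}(\lambda_E)(|u(n)|^2)_n$ is a signed probability vector. Since $(|u(n)|^2)_n$ and $(|v(n)|^2)_n$ are non-negative of $\ell^1$-norm one, the ``$+$'' alternative cannot hold once $e^{-l^\beta}<2$, so the event under consideration forces
\[
\sum_{n\in\Lambda_l}\big||u(n)|^2-|v(n)|^2\big|\le e^{-l^\beta}.
\]
Combined with Lemma~\ref{lem-large}, this yields an interval $I\subset\Lambda_l$ of length $\sim cl^\beta$ on which both $|u|$ and $|v|$ are bounded below by $e^{-l^\beta/5}/2$ and satisfy $\big||u(n)|-|v(n)|\big|\lesssim e^{-l^\beta/2}$.

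At each $n\in I$ the two eigenvalue equations can be rewritten as
\[
\omega_n=\mu_E+\lambda_E E_j(\omega)+\lambda_E\,\dfrac{a(n)u(n-1)+a(n+1)u(n+1)}{u(n)}
\]
and its analogue for $v,E',E_k(\omega)$. Subtracting eliminates $\omega_n$ and, using $|E_j|,|E_k|\le e^{-l^\beta}$, gives
\[
\lambda_E\,\dfrac{\Delta_a u(n)}{u(n)}-\lambda_{E'}\,\dfrac{\Delta_a v(n)}{v(n)}=c+O(e^{-l^\beta}).
\]
Introducing the pointwise sign $\sigma(n)\in\{\pm 1\}$ such that $v(n)=\sigma(n)\,\mathrm{sgn}(u(n))\,|v(n)|$ and using $|v|=|u|+O(e^{-l^\beta/2})$, this becomes a linear relation among $u(n\pm1)/u(n)$ whose coefficients are explicit combinations of $a(n),a(n+1),d,b,c$ and the neighbouring signs. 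The hypothesis that at least two of $d,b,c$ are non-zero ensures the relation is non-trivial for every sign pattern; in the sub-case $b=0$, $u$ and $v$ are eigenvectors of the same operator (up to the constant shift $-c/\lambda_E$) with distinct eigenvalues, hence orthogonal, and $\ell^2$-orthogonality combined with $|u|=|v|$ within the same tolerance yields a comparable rigidity on $(\sigma(n))_{n\in I}$.

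Finally, conditioning on $(\omega_m)_{m\notin I}$, the eigenvectors $u,v$ on $I$ are determined, up to $O(e^{-l^\beta/2})$ perturbations, by the transfer equations driven by $(\omega_m)_{m\in I}$. Each of the $|I|\sim cl^\beta$ constraints of the preceding paragraph then localises the corresponding $\omega_n$ inside a window of width $\lesssim e^{-l^\beta}$; since the single-site density is bounded and the $\omega_n$'s are independent, the joint probability is bounded by $(Ce^{-l^\beta})^{cl^\beta}\le e^{-c'l^{2\beta}}$, which is the desired estimate. The hardest step is the algebraic one in the previous paragraph: converting the near-equality of gradient profiles into a \emph{genuinely} transverse constraint on enough sites requires careful bookkeeping of the signs $\sigma(n)$, of the possible vanishing of $u(n)$ or $v(n)$ (handled by shifting to $n\pm1$ via Lemma~\ref{lem-large}), and, most delicately, of the degenerate case $b=0$ where direct elimination of $\omega_n$ fails and one must rely instead on the orthogonality of $u$ and $v$ together with the rigidity of the modulus profile.
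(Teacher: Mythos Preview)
Your outline has a genuine gap at the step you yourself flag as hardest. When you subtract the two eigenvalue equations you \emph{eliminate} $\omega_n$; the displayed relation
\[
\lambda_E\,\dfrac{\Delta_a u(n)}{u(n)}-\lambda_{E'}\,\dfrac{\Delta_a v(n)}{v(n)}=c+O(e^{-l^\beta})
\]
therefore holds identically on the event and carries no information by itself. After you feed in $|v|\approx|u|$ and the sign pattern $\sigma$, the relation becomes a non-trivial constraint, but it is a constraint on the ratios $u(n\pm1)/u(n)$, i.e.\ on the \emph{eigenvector}, which is a global function of \emph{all} the $(\omega_m)_{m\in\Lambda_l}$. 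The assertion in your last paragraph that ``each constraint localises the corresponding $\omega_n$ inside a window of width $\lesssim e^{-l^\beta}$'' is not justified: the constraints at different sites are coupled through $u$, and you have not shown how to untangle them into \emph{decoupled} conditions on the individual $\omega_n$'s to which independence and the bounded density can be applied.

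This is precisely the point where the paper's argument differs in substance. Rather than subtracting the eigenvalue equations, the paper uses the $P/Q$ decomposition (encoding your sign pattern $\sigma$) to derive, for $n$ in the interior of the connected components $C_k$, the relation $d\,\Delta_a u(n)+c\,u(n)=O(e^{-l^\beta})$, which involves only one eigenvector and no $\omega_n$. \emph{Then} combining this with the single eigenvalue equation for $u$ eliminates $\Delta_a u(n)$ and isolates $\omega_n$, yielding
\[
\Big|\mu_E+\omega_n-\lambda_E\,\dfrac{c}{d}\Big|\le e^{-l^\beta/2}
\]
whenever $|u(n)|\ge e^{-l^\beta/2}$. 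The right-hand side here is a \emph{deterministic} constant, so these are genuinely decoupled constraints and the product bound over $\asymp l^\beta$ sites is legitimate (after paying the combinatorial factor $2^{O(l)}$ for the choice of $\mathcal{P}$, which is beaten since $\beta>1/2$). The boundary cases $|C_k|\in\{2,3\}$ require a separate analysis yielding quadratic conditions on pairs $(\omega_{n_-},\omega_{n_+})$; this case analysis is what produces the correct count of independent constraints and is entirely absent from your sketch.
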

\begin{proof}
\normalfont
The proof follows that of Lemma 2.4 in \cite{K10}.
Let $\phi_j$ be a normalized eigenfunction associated to $E_j$. Then we compute
\begin{align*}
\partial_{\omega_n} E_j(\omega)&=\left\langle \left(\partial_{\omega_n}H_\omega\right) \phi_j,\phi_j\right\rangle\\
&=\left\langle \lambda_E \phi_j(n)e_n,\phi_j\right\rangle=\lambda_E \phi_j^2(n)
\end{align*}
Thus,
$\dfrac{\nabla_\omega(E_j(\omega))}{\|\nabla_\omega(E_j(\omega))\|_1}(n)=\dfrac{\lambda_E}{|\lambda_E|}\phi_j(n)^2=\pm \phi_j(n)^2$. Hence, as in \cite{K10}, there exist $\mathcal{P},\mathcal{Q}$ with $\mathcal{P}\cap\mathcal{Q}=\emptyset$ and $\mathcal{P}\cup\mathcal{Q}=\{1,\dots,L\}$ such that
\begin{align*}
	\forall n\in\mathcal{P}, |\phi_j(n)-\phi_k(n)|\leq e^{-l^\beta/2}\text{ ,}\\
	\forall n\in\mathcal{Q}, |\phi_j(n)+\phi_k(n)|\leq e^{-l^\beta/2}\text{ .}
\end{align*}
Let P (respectively Q) be the orthogonal projector on $\ell^2(\mathcal{P})$ (respectively $\ell^2(\mathcal{Q})$) in $\ell^2(\{1,\dots,l\})$. Then, the eigenvalues equations can be rewritten as 
\begin{align*}
	(\lambda_E\Delta_a+\mu_E+V_\omega)(Pu+Qu)=h_E\text{ ,}\\
	(\lambda_{E'}\Delta_a+\mu_{E'}+V_\omega)(Pu-Qu)=h_{E'}\text{ .}
\end{align*}
where $|h_E|+|h_{E'}|\leq Ce^{-l^\beta}$.
Multiplying by $P$ and $Q$ and summing or subtracting we obtain : 
\begin{multline} 
(\lambda_E+\lambda_{E'})\Delta_a Pu+(\lambda_E-\lambda_{E'})\Delta_a Qu\\
+(\mu_E+\mu_E')Pu+(\mu_E-\mu_E')Qu+2V_\omega Pu=h_E+h_{E'}\text{ ,}
\end{multline}
\begin{multline}
(\lambda_E-\lambda_{E'})\Delta_a Pu+(\lambda_E+\lambda_{E'})\Delta_a Qu+\\
(\mu_E-\mu_E')Pu+(\mu_E+\mu_E')Qu+2V_\omega Qu=h_E-h_{E'}\text{ .}
\end{multline}
		
As $PV_\omega Q=QV_\omega P=0$, we have
\begin{multline}\label{eigeq1}
(\lambda_E+\lambda_{E'})(Q\Delta_a P+P \Delta_a Q)u\\+(\lambda_E-\lambda_{E'})(Q\Delta_a Q+P\Delta_a P) u+(\mu_E-\mu_E')u=h_{E,E'}\text{ .}
\end{multline}
Recall that if one of $b:=(\lambda_E-\lambda_{E'})$, $c:=\mu_E-\mu_E'$, $d:=(\lambda_E+\lambda_{E'})$ is zero then the other two are not. In \cite{K10} is proven that $P \Delta Q + Q\Delta P=\sum_k C_k\Delta C_k$ where the sets $(C_k)_k$ are "intervals" in $\mathbb{N}/ l \mathbb{Z}$. And as we have
\begin{equation*}
P\Delta_a P+Q\Delta_a Q+ P \Delta_a Q + Q\Delta_a P = \Delta_a,
\end{equation*}
if $\mathcal{C}=\cup C_k$, $P\Delta_a P+Q\Delta_a Q=\sum_k D_k\Delta_a D_k$ where 
$\mathcal{D}=\cup_k D_k =  \{1,\dots, l\}-\cup_k \mathring{C_k} $. So $\mathcal{C}$ and $\mathcal{D}$ have symmetric roles in the equation. \eqref{eigeq1} is equivalent to the following equation : 
\begin{equation}\label{combeq}
(b\Delta_a+2\lambda_E'\sum_k C_k\Delta_a C_k+c)u=h.
\end{equation}

If $n\notin \mathcal{C}$ then 
\begin{equation}\label{ext}
b\Big(a(n+1)u(n+1)+a(n)u(n-1)\Big)+c u(n) = h(n),
\end{equation}
so if $b=0$ then $c\neq 0$ and $|u(n)|\lesssim e^{-l^\beta}$. Suppose that $b=0$. Then, we know from Lemma~\ref{lem-large} that $\mathcal{C}^c$ cannot have two consecutive points in $I$ (defined in Lemma~\ref{lem-large})  and same for $\mathcal{D}^c$ if $d=0$. So, without loss of generality, we can suppose that $d\neq 0$ and that $\sharp (\mathcal{C}\cap I)\geq \sharp I/2$ .

If $n\in \mathring{C_l}$ then
\begin{equation}\label{int}
d\Big(a(n+1)u(n+1)+a(n)u(n-1)\Big)+c u(n) = h(n).
\end{equation}

If $C_l=\{n_-,\dots,n_+\}$ then

\begin{align}\label{bord}
b a(n_++1)u(n_++1)+da(n_+) u(n_+-1)+cu(n)=h(n),\\
d a(n_-+1)u(n_-+1)+b a(n_-) u(n_--1)+cu(n)=h(n).
\end{align}
Now, we show that this implies stringent conditions on the random variables $(\omega_n)_n$. Indeed, we know that the eigenvalue equation for u is 
\begin{equation}\label{eig}
\lambda_E\Big(a(n+1)u(n+1)+a(n)u(n-1)\Big)+(\mu_E + \omega_n)u(n)=h_E(n).
\end{equation}
Combining the eigenvalue equation with (\ref{int}) we obtain
\begin{align}
\forall n\in \mathring{\mathcal{C}_k},\, \left|\mu_E + \omega_n-\lambda_E \dfrac{c}{d} \right||u(n)|\leq e^{-L^\beta}.
\end{align}
So if $|u(n)|\geq e^{-l^\beta/2}$ then we have

\begin{align}\label{cond1}
 \left|\mu_E + \omega_n+\lambda_E \dfrac{c}{d}\right|\leq e^{-l^\beta/2}.
\end{align}

Suppose $\mathcal{C}_k\subset I$. If $|\mathcal{C}_k|=n_k$ there are at least $\left\lfloor \dfrac{n_k}{2}-1\right\rfloor$ sites in $\mathring{\mathcal{C}}_k$ where u is not small. We are now left with the study of the cases $n_k=2$ and $n_k=3$. Suppose $b\neq0$. Combining (\ref{bord}) and (\ref{eig}), we have
\begin{equation}
\left|\lambda_E\left(1-\dfrac{d}{b}\right)a(n_+)u(n_+-1)+\left(\mu_E + \omega_{n_+}-\lambda_E \dfrac{c}{b}\right)u(n_+)\right|\leq Ce^{-l^\beta},
\end{equation}
\begin{equation}
\left|\lambda_E\left(1-\dfrac{d}{b}\right)a(n_-+1)u(n_-+1)+\left(\mu_E + \omega_{n_-}-\lambda_E \dfrac{c}{b}\right)u(n_-)\right|\leq Ce^{-l^\beta}.
\end{equation}

When $n_k=2$ we have 

$\left \Vert \begin{pmatrix}
\mu_E + \omega_{n_+}-\lambda_E \dfrac{c}{b}&\lambda_E\left(1-\dfrac{d}{b}\right)a(n_+)\\
\lambda_E\left(1-\dfrac{d}{b}\right)a(n_+)&\mu_E + \omega_{n_-}-\lambda_E \dfrac{c}{b}
\end{pmatrix}
\begin{pmatrix}
u(n_+)\\
u(n_-)
\end{pmatrix} \right \Vert \leq Ce^{-l^\beta}$.
\\
As $\left \Vert \begin{pmatrix} u(n_-)\\u(n_+)\end{pmatrix} \right \Vert \geq Ce^{-l^\beta/2}$ we have 
\begin{equation}\label{cond2}
\left|\left(\mu_E + \omega_{n_+}-\lambda_E \dfrac{c}{b}\right)\left(\mu_E + \omega_{n_-}-\lambda_E \dfrac{c}{b}\right)-\lambda_E^2\left(1-\dfrac{d}{b}\right)^2a(n_+)^2\right|\leq e^{-l^\beta/2}.
\end{equation}

If $n_k=3$ we have $\mathcal{C}_k=\{n_-,n,n_+\}$ and
\begin{multline*}
\left \Vert \begin{pmatrix}
\mu_E + \omega_{n_+}-\lambda_E \dfrac{c}{b}&\lambda_E\left(1-\dfrac{a}{b}\right)a(n_+)&0\\
0&\mu_E + \omega_n-\lambda_E \dfrac{c}{a}&0\\
0&\lambda_E\left(1-\dfrac{a}{b}\right)a(n)&\mu_E + \omega_{n_-}-\lambda_E \dfrac{c}{b}
\end{pmatrix} 
\begin{pmatrix}
u(n_+)\\
u(n)\\
u(n_-)
\end{pmatrix}
\right \Vert \\
\leq Ce^{-l^\beta}.
\end{multline*}
If $|u(n)|\geq e^{-l^\beta/2}$ then  
\begin{equation}\label{cond3}
\left|\mu_E + \omega_n-\lambda_E \dfrac{c}{a}\right|\leq e^{-l^\beta/2}.
\end{equation}
If $|u(n)|\leq e^{-l^\beta/2}$ then $|u(n_+)|\geq e^{-l^\beta/2}$ and  
 \begin{equation}
  |\mu_E + \omega_{n_+}-\lambda_E \dfrac{c}{b}|\leq Ce^{-l^\beta/2}.
\end{equation}

Now, suppose $b=0$. If there is a gap between $C_k$ and its two neighbours $\{C_{k+1},C_{k-1}\}$, we know that $|u(n_+ +1)|\leq e^{-l^\beta}$ and $|u(n_- -1)|\leq Ce^{-l^\beta}$.
So the conditions when $n_k\in\{2,3\}$ are 
\\
$\left \Vert \begin{pmatrix}
\mu_E + \omega_{n_+}&\lambda_Ea(n_+)\\
\lambda_Ea(n_+)&\mu_E + \omega_{n_-}
\end{pmatrix}
\begin{pmatrix}
u(n_+)\\
u(n_-)
\end{pmatrix} \right \Vert \leq e^{-l^\beta}
$ if $n_k=2$ and
 \\
 \text{ }\\
 $\left \Vert \begin{pmatrix}
\mu_E + \omega_{n_+}&\lambda_Ea(n_+)&0\\
0&\mu_E + \omega_n-\lambda_E\dfrac{c}{a}&0\\
0&\lambda_Ea(n)&\mu_E + \omega_{n_-}
\end{pmatrix} 
\begin{pmatrix}
u(n_+)\\
u(n)\\
u(n_-)
\end{pmatrix}
\right \Vert\leq Ce^{-l^\beta}$
if $n_k=3$.\\
In both cases we have a condition on the $(\omega_n)_n$ similar to \eqref{cond1} and \eqref{cond2}.

If $C_k$ is adjacent to $C_{k+1}$ it gives rise to an interval $D_{k_0}$ in $\mathcal{D}$ of length two (which makes the junction between $C_k$ and $C_{k+1}$) and we still have a condition on the $(\omega_n)_n$ similar to \eqref{cond2}, where $d$ and $b$ are exchanged.
To summarize, if $C_l\subset I$, when $n_k\geq4$ we have at least $\lfloor(n_k-2)/2\rfloor$ conditions of type (\ref{cond1}) and if $n_k=2$ or $n_k=3$ we have one condition of type (\ref{cond2}) or (\ref{cond3}), possibly shared with one of its neighbours.
 
There are at most two $\mathcal{C}_l$ that are not entirely contained in $I$, and for these particular two sets we only have $\big \lfloor(\sharp (\mathcal{C}_l\cap I)-1)/2\big \rfloor$ points where u is not small. So if $\sharp (\mathcal{C}_l\cap I)\in\{1,2\}$ there is no known point where u satisfies $|u(n)|\geq e^{-^\beta/2}$. Define $\tilde{n}_k=\sharp (\mathcal{C}_k\cap I)$, we know that $\sum_k\tilde{n}_k \geq cl^\beta/2$.
 Then, we have a number of conditions on the $(\omega_n)_n$ which is greater than  
\begin{align*}
\sum_{k,\tilde{n}_k\geq 3 ,\tilde{n_k}<n_k
} \lfloor (\tilde{n_k}-1)/2 \rfloor &+  \sum_{k, \tilde{n}_k\geq 4, \tilde{n}_k=n_k} \lfloor(n_k-2)/2\rfloor 
+ \frac{1}{2}\sum_{k,n_k\in\{2,3\},\tilde{n_k}=n_k} 1  \\ 
\geq& \sum_{k, \tilde{n}_k\geq 3, \tilde{n_k}<n_k} \tilde{n}_k/5+\sum_{k, \tilde{n}_k\geq 3 , \tilde{n_k}<n_k} \tilde{n}_k/3 \\
&+\sum_{k, \tilde{n}_k\geq 4 ,\tilde{n}_k=n_k} n_k/5+\frac{1}{2} \sum_{k, n_k\in\{2,3\},\tilde{n_k}=n_k}n_k/5-2 \\ 
\geq& ~cl^\beta/10-2\geq c'l^\beta/10
\end{align*}
 
There are $2^l$ choices for $\mathcal{P}$, $l$ choices for $n_0$, less than $2^l$ choices for the points where $u$ satisfies $|u(n)|\geq e^{-^\beta/2}$. For each of these choices, it gives us a probability at most $e^{-cl^{2\beta}}$. Thus, as $\beta>1/2$, we have for a smaller c and L sufficiently large
 \begin{equation}\label{proba1}
 \mathbb{P}\leq e^{-cl^{2\beta}}\leq L^{-2}.
\end{equation}

This complete the proof of Lemma~\ref{probcoli}.

\end{proof}

Pick $\lambda=e^{-l^\beta}\lambda_E\lambda_{E'}$. Then, either one of the Jacobian determinant is larger than $\lambda$ or the gradients are almost co-linear. Lemma~\ref{probcoli} shows that the second case happens with a probability at most $e^{-cL^{2\beta}}$. It remains to evaluate $\mathbb{P}(\Omega^{\gamma,\gamma'}_{0,\beta}(\epsilon))$. We recall the following results from \cite{K10}. They were proved for $a_n=1$, they extend readily to our case. First, we study the variations of the Jacobian. 

\begin{lem}\label{Hessien}

There exists $C>0$ such that
\begin{displaymath}
\Vert Hess_\omega(E(\omega))\Vert_{l^\infty\rightarrow l^1}\leq \dfrac{C}{dist\big[E(\omega),\sigma(H_\omega(\Lambda_l))-\{E(\omega)\}\big ]}.
\end{displaymath}

\end{lem}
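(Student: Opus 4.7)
The plan is to derive the Hessian of $E(\omega)$ via standard second-order perturbation theory for simple eigenvalues, recognize the resulting kernel as a sandwich $D_\phi R D_\phi$ where $D_\phi$ is multiplication by the eigenfunction and $R$ is the reduced resolvent, and then bound the $\ell^\infty\to\ell^1$ norm by routing through $\ell^2$.

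First I would note that for the Jacobi operators on a one-dimensional interval with Dirichlet boundary conditions, every eigenvalue is automatically simple, so $E(\omega)$ admits a unique (up to sign) normalized eigenfunction $\phi$. Since $\partial_{\omega_n}H_\omega(\Lambda_l)$ is, up to a scalar $c_E$ depending only on $E$ through $\lambda_E$, the rank-one operator $e_n\otimes e_n$, the standard second-order perturbation formula gives
\begin{equation*}
\partial^2_{\omega_n\omega_m}E(\omega) = -2c_E^2\,\phi(n)\phi(m)\sum_{E_j\neq E(\omega)}\frac{\phi_j(n)\phi_j(m)}{E_j-E(\omega)},
\end{equation*}
where $(\phi_j,E_j)$ runs over the other eigenpairs of $H_\omega(\Lambda_l)$. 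The right-hand side is exactly $-2c_E^2$ times the $(n,m)$-entry of $D_\phi R D_\phi$, with $R=(H_\omega(\Lambda_l)-E(\omega))^{-1}$ restricted to $\phi^\perp$.

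Second, I would estimate the $\ell^\infty\to\ell^1$ norm by duality. For $v,w\in\ell^\infty(\Lambda_l)$ with $\Vert v\Vert_\infty,\Vert w\Vert_\infty\leq 1$,
\begin{equation*}
|\langle w,D_\phi R D_\phi v\rangle| = |\langle w\phi,R(v\phi)\rangle|\leq\Vert w\phi\Vert_2\,\Vert R\Vert_{\mathrm{op}}\,\Vert v\phi\Vert_2\leq\Vert R\Vert_{\mathrm{op}},
\end{equation*}
where I used the pointwise estimate $\Vert v\phi\Vert_2\leq\Vert v\Vert_\infty\Vert\phi\Vert_2=1$ (and similarly for $w\phi$). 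Since $R$ is the resolvent of $H_\omega(\Lambda_l)$ restricted to the orthogonal complement of the one-dimensional eigenspace spanned by $\phi$, its operator norm equals $1/\mathrm{dist}(E(\omega),\sigma(H_\omega(\Lambda_l))\setminus\{E(\omega)\})$. Taking the supremum over $v$ and $w$ yields the claim with $C=2c_E^2$.

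The step that requires most care is the bookkeeping of the normalization introduced by the parametrization $V_\omega(E)=\tfrac{1}{\lambda_E}(V_\omega-\mu_E)$: one must verify that the constant $c_E$ is controlled uniformly as $E$ ranges over the compact set $\sigma\cap\mathcal{I}$, so that the $C$ in the lemma can indeed be taken independent of $\omega$ and of the particular eigenvalue $E(\omega)$ under consideration. Beyond this, no genuinely new analytic input is needed, only the rank-one perturbation formula and an $\ell^2$ operator norm bound.
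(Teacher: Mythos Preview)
Your argument is correct and is precisely the standard second-order perturbation computation one expects here. The paper itself does not supply a proof of this lemma; it simply recalls the result from \cite{K10} (where it was proved for $a_n=1$) and remarks that it extends to the present setting, so your write-up is in fact more detailed than what the paper provides.
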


Fix $\alpha\in(1/2,1)$. Using Lemma~\ref{Hessien} and Theorem~\ref{mina-D} when $H_\omega(\Lambda_l)$ has two eigenvalue in $[E-L^{-\alpha},E+L^{-\alpha}]$, for L large enough, with probability at least $1-L^{-2\alpha}\lambda $,
\begin{equation}
\Vert Hess_\omega(E(\omega))\Vert_{l^\infty\rightarrow l^1}+\Vert Hess_\omega(E'(\omega))\Vert_{l^\infty\rightarrow l^1}\leq CL^\alpha
\end{equation}
 
In the following lemma we write $\omega=(\omega_\gamma,\omega_{\gamma'},\omega_{\gamma,\gamma'})$.
\begin{lem}\label{square}
Pick $\epsilon= L^{-\alpha}$. For any $\omega_{\gamma,\gamma'}$, if there exists $(\omega_\gamma^0,\omega_{\gamma'}^0)\in \R^2$ such that $(\omega_\gamma^0,\omega_{\gamma'}^0,\omega_{\gamma,\gamma'})\in\Omega^{\gamma,\gamma'}_{0,\beta}(\epsilon)$, then for $(\omega_\gamma,\omega_{\gamma'}) \in \R^2$ such that $|(\omega_\gamma,\omega_{\gamma'})-(\omega_\gamma^0,\omega_{\gamma'}^0)|_\infty\leq \epsilon$ one has 
\begin{displaymath}
(E_j(\omega),E_k(\omega))\in J_L\times J_L'\Longrightarrow |(\omega_\gamma,\omega_{\gamma'})-(\omega_\gamma^0,\omega_{\gamma'}^0)|_\infty\leq L^{-1}\lambda^{-2}.
\end{displaymath} 
\end{lem}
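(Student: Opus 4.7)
The plan is to linearize the map $\Phi:(\omega_\gamma,\omega_{\gamma'})\mapsto(E(\omega),E'(\omega))$ near $\omega^0$, viewing it as a smooth map $\R^2\to\R^2$ with all coordinates of $\omega$ other than the $\gamma$- and $\gamma'$-ones frozen. On the event $\Omega^{\gamma,\gamma'}_{0,\beta}(\epsilon)$, first-order perturbation theory applied to the simple eigenvalues $E(\omega^0)$, $E'(\omega^0)$ gives $\partial_{\omega_n}E(\omega)=\lambda_E|\phi_E(n)|^2$ and analogously for $E'$, so every entry of the matrix $D\Phi(\omega^0)$ is bounded by $\max(|\lambda_E|,|\lambda_{E'}|)$. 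Combined with the hypothesis $|\det D\Phi(\omega^0)|=|J_{\gamma,\gamma'}(E(\omega^0),E'(\omega^0))|\geq\lambda$, Cramer's rule for $2\times 2$ matrices then yields $\|D\Phi(\omega^0)^{-1}\|_{\ell^\infty\to\ell^\infty}\leq C/\lambda$.

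Second, the Hessian enters through Lemma~\ref{Hessien}. On $\Omega_0(\epsilon)$, the eigenvalue $E(\omega^0)$ is the unique element of $\sigma(H_\omega(\Lambda_l,E))\cap(-\epsilon,\epsilon)$, so it is separated from the rest of the spectrum by $\epsilon-L^{-1}\gtrsim L^{-\alpha}$; Lemma~\ref{Hessien} then gives $\|Hess_\omega E(\omega^0)\|_{\ell^\infty\to\ell^1}+\|Hess_\omega E'(\omega^0)\|_{\ell^\infty\to\ell^1}\leq CL^\alpha$. The second-order Taylor expansion of $\Phi$ at $\omega^0$ therefore has remainder bounded by $CL^\alpha r^2$, where $r:=|(\omega_\gamma,\omega_{\gamma'})-(\omega_\gamma^0,\omega_{\gamma'}^0)|_\infty$. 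Using $\Phi(\omega^0)\in J_L\times J_L'$ by definition of $\Omega_0(\epsilon)$ together with the hypothesis $\Phi(\omega)\in J_L\times J_L'$ (so that $|\Phi(\omega)-\Phi(\omega^0)|_\infty\leq 2L^{-1}$) and inverting $D\Phi(\omega^0)$ in the Taylor identity leads to the scalar quadratic inequality
\[
r\;\leq\;\frac{C}{\lambda}\left(\frac{2}{L}+CL^\alpha r^2\right).
\]

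The main obstacle is that this inequality admits two admissible branches of $r$: a ``small'' one of order $L^{-1}/\lambda$ and a ``large'' one of order $\lambda/(CL^\alpha)$, and the assumption $r\leq\epsilon=L^{-\alpha}$ alone does not exclude the large branch. To rule it out, I plan a bootstrap/continuity argument: at $\omega=\omega^0$ one trivially has $r=0$, which sits in the small branch; on the ball of radius $\sim\lambda/(CL^\alpha)$ centered at $\omega^0$ the Hessian bound forces $\det D\Phi$ to remain $\geq\lambda/2$, so $\Phi$ is a diffeomorphism on that ball and, by the inverse function theorem, the set $\Phi^{-1}(J_L\times J_L')$ intersected with this ball is contained in a ball of radius $\leq CL^{-1}/\lambda$. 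The parameter regime ($l$ at most polylogarithmic in $L$, $\alpha\in(1/2,1)$, $\xi'$ close enough to $1$) guarantees $L^{\alpha-1}\lambda^{-4}\to 0$, so the two branches of the quadratic are well separated, the intermediate region is forbidden by the quadratic estimate itself, and $\omega$ cannot leap to the large branch while remaining in $B_\epsilon(\omega^0)\cap\Phi^{-1}(J_L\times J_L')$. The conclusion $r\leq L^{-1}\lambda^{-2}$ then follows, the extra factor $\lambda^{-1}$ comfortably absorbing all constants produced by the bootstrap.
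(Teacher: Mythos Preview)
The paper does not give its own proof of this lemma; it is imported from \cite{K10} with the remark that the argument ``extends readily'' to the present setting. Your strategy --- compute the entries of $D\Phi(\omega^0)$ via first-order perturbation theory, bound $\|D\Phi(\omega^0)^{-1}\|$ through the Jacobian hypothesis $|J_{\gamma,\gamma'}|\ge\lambda$, control the quadratic remainder using Lemma~\ref{Hessien} together with the spectral gap encoded in $\Omega_0(\epsilon)$, and finish with a quantitative inverse-function/bootstrap argument --- is exactly the approach of \cite{K10}, so the two proofs coincide.

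One step should be tightened. The quadratic inequality $r\le C\lambda^{-1}(2L^{-1}+CL^\alpha r^2)$ indeed has two admissible branches $r\le r_-\sim L^{-1}\lambda^{-1}$ and $r\ge r_+\sim\lambda L^{-\alpha}$, and your sentence ``$\omega$ cannot leap to the large branch while remaining in $B_\epsilon(\omega^0)\cap\Phi^{-1}(J_L\times J_L')$'' is not justified as written: that preimage has no a~priori reason to be connected, so a continuity argument anchored at $r=0$ does not by itself exclude a disconnected component sitting at $r\ge r_+$. What \emph{is} fully rigorous is the part you set up just before: on the ball of radius $r_0\asymp\lambda L^{-\alpha}$ the Hessian bound keeps $|\det D\Phi|\ge\lambda/2$, so $\Phi$ is a diffeomorphism there, and since $\lambda r_0\asymp\lambda^2 L^{-\alpha}\gg L^{-1}$ in the regime $\beta/\xi'<1$ its image already covers $J_L\times J_L'$; hence $\Phi^{-1}(J_L\times J_L')\cap B_{r_0}$ is a single patch of diameter $\lesssim L^{-1}\lambda^{-1}$. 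This is the content actually used in the subsequent counting (one can equally well tile by squares of side $r_0$ instead of $\epsilon$, giving $O(\lambda^{-1}L^{\alpha})$ squares of measure $O(L^{-2}\lambda^{-2})$ each, which still yields \eqref{proba2}); the passage to the full $\epsilon$-square in the paper then relies on the monotonicity of $\omega_i\mapsto E_j(\omega)$ recorded immediately after the lemma. Apart from this point your argument is complete.
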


As in Lemma~\ref{square}, fix $(\omega_\gamma^0,\omega_{\gamma'}^0)$ such that $(\omega_\gamma^0,\omega_{\gamma'}^0,\omega_{\gamma,\gamma'})\in\Omega^{\gamma,\gamma'}_{0,\beta}(\epsilon)$ and define $\mathcal{A}:=(\omega_\gamma^0,\omega_{\gamma'}^0)+\{(\omega_\gamma,\omega_{\gamma'}) \in \R_+^2\cup \R_-^2 ,\left|\omega_\gamma\right|\geq \epsilon \text{ or } \left|\omega_{\gamma'}\right|\geq \epsilon \}$. We know that for any $i\in\Z$, $\omega_i\rightarrow E_j(\omega)$ and  $\omega_i\rightarrow E_k(\omega)$ are non increasing functions. Thus, if $ 
(\omega_\gamma,\omega_{\gamma'})\in\mathcal{A}$ then $(E_j(\omega),E_k(\omega))\notin J_L\times J_L'$. Thus, all the squares of side $\epsilon$ in which there is a point in $\Omega^{\gamma,\gamma'}_{0,\beta}(\epsilon)$ are placed along a non-increasing broken line that goes from the upper left corner to the bottom right corner. As the random variables are bounded by $C>0$, there is at most $C L^\alpha$ cubes of this type.

As the $(\omega_n)_n$ are i.i.d, using Lemma~\ref{square} in all these cubes, we obtain  : 

\begin{equation}\label{proba2}
\mathbb{P}(\Omega^{\gamma,\gamma'}_{0,\beta}(\epsilon))\leq CL^{\alpha-2}\lambda^{-4}
\end{equation}

and therefore

\begin{equation}
\mathbb{P}_\epsilon\leq CL^{\alpha-2}\lambda^{-3}
\end{equation}

Optimization yields $\alpha=2/3$. This completes the proof of Theorem~\ref{thdec2}.

\subsection{Decorrelation estimates for alloy-type models }\label{alloy}

In Subsection~\ref{subsecdec1} we proved decorrelation estimates under the assumptions that the random potential was taking independent values. In this section we prove decorrelation estimates for alloy-type potentials. Thus, the values taken by the random potential are correlated. 
Let $L\in\N$ and $(\omega_n)_n$ be independent random variables with a common compactly supported bounded density $\rho$.
Let $(d_n)_{n\in\Z}\in\R^{\Z}$ and define $H_\omega(\Lambda_L)=\Delta_a+V_\omega$ on $\ell^2(\{1,\dots,L\})$ with periodic conditions where \begin{displaymath}
V_\omega(m)=\sum_{n\in\Z} a_n \omega_{n+m}
\end{displaymath}
From now on we will suppose that there exists $K>0$ and $\lambda>0$ such that $|d(n)|\leq K e^{-\lambda|n|}$. Without loss of generality, we can suppose that $d(0)$ is the right most maximum of the sequence, i.e :  
\begin{align}\label{cond_a1}
&|d(0)|=\textbf{max}\{|d(n)|,n\in\Z\}\notag,\\
&\forall n\geq 1, |d(n)|<|d(0)|.
\end{align}

Let $S=C\log L$ and define 
\begin{equation}\label{defzn}
z_m:=\tilde{V_\omega}(m):=\sum_{|n|\leq S}d_n \omega_{n+m}
\end{equation}

 and $\tilde{H_\omega}:=\Delta+\tilde{V_\omega}$.
Let $\kappa>1$, then there exists $C=C_{\lambda,K,\kappa}$ such that
\begin{equation}
\|\tilde{H_\omega}-H_\omega\|=\|\tilde{V_\omega}-V_\omega\|_\infty\lesssim e^{-\lambda S}\lesssim e^{-\lambda S}\lesssim L^{-\kappa}
\end{equation}
for C sufficiently large. Hence, one has the following relation : 
\begin{equation}\label{cut}
\begin{split}
\textbf{tr}\big(1_{[E-L^{-\kappa},E+L^{-\kappa}]}({H_\omega}(\Lambda))\big)\leq 
\textbf{tr}\big(1_{[E-2L^{-\kappa},E+2L^{-\kappa}]}(\tilde{H_\omega}(\Lambda))\big) \\
\leq \textbf{tr}\big(1_{[E-3L^{-\kappa},E+3L^{-\kappa}]}({H_\omega}(\Lambda))\big)
\end{split}
\end{equation}

Now we know that if $(\Lambda_j)_j$ is a collection of cubes distant from one another of at least $2S$, then the operators $(\tilde{H_\omega}(\Lambda_j))_j$ are independent. In \cite{springerlink:10.1007/s00023-010-0052-5} a Wegner estimate is proven for the operator $H_\omega$ under the hypothesis that the density of $(\omega_n)_n$ have bounded variation.

\begin{theo}[\cite{springerlink:10.1007/s00023-010-0052-5}]\label{wegalloy}

There exists $D \in\N, D \neq 0$, depending only on $(a_n)_n$ such that for each $\beta> D/\lambda$ there exists $K>0$ such that for all $L\in\N, E\in\R$ and $\epsilon>0$
\begin{equation}
\mathbb{E}\left[\textbf{tr}\big(1_{[E-\epsilon,E+\epsilon]}(H_\omega(\Lambda_L))\big)\right] \leq C \epsilon L(L+\beta \log L+K)^{D+1}.
\end{equation}
\end{theo}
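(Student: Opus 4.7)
The plan is a two-step reduction followed by spectral averaging in a carefully chosen collective direction in random-variable space. The first step reuses the truncation already set up in the paragraphs preceding the statement: using $|d_n|\leq Ke^{-\lambda|n|}$, one replaces $H_\omega(\Lambda_L)$ by the finite-range approximation $\tilde H_\omega(\Lambda_L)=\Delta_a+\tilde V_\omega$ with $\tilde V_\omega(m)=\sum_{|n|\leq S}d_n\omega_{n+m}$ and $S=\beta\log L/\lambda$. By \eqref{cut}, it suffices to prove the Wegner bound for $\tilde H_\omega$ with window $[E-2\epsilon,E+2\epsilon]$; the advantage is that only $N:=O(L+\beta\log L+K)$ random variables enter $\tilde H_\omega(\Lambda_L)$.

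The second step is spectral averaging. By Feynman--Hellmann, for a simple eigenvalue $\lambda_j(\omega)$ with normalized eigenfunction $\psi_j$,
\begin{equation*}
\partial_{\omega_k}\lambda_j(\omega)=\sum_{m\in\Lambda_L}d_{k-m}|\psi_j(m)|^2.
\end{equation*}
Since $d$ has no definite sign, averaging in a single coordinate direction can fail. Instead I would seek a collective direction $\vec t=(t_k)_k$ supported on the $N$ relevant indices with $\sum_k t_k d_{k-m}=1$ for every $m\in\Lambda_L$, so that $\sum_k t_k\,\partial_{\omega_k}\lambda_j(\omega)=\|\psi_j\|^2=1$: the eigenvalues then move monotonically at unit speed along $\vec t$. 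This amounts to a finite Toeplitz inversion for the symbol $\hat d(\theta)=\sum_n d_n e^{-in\theta}$; condition \eqref{cond_a1} guarantees that $\hat d$ is nontrivial on $|z|=1$, and I would let $D$ denote the integer encoding the defect of the inversion (essentially the total order of the zeros of $\hat d$ on the unit circle). A quantitative Toeplitz analysis, using the exponential decay of $(d_n)$ and the $O(\beta\log L)$ extra room around $\Lambda_L$, yields such a $\vec t$ with $\|\vec t\|_\infty\leq CN^D$.

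For the third step I would change variables from the $\omega_k$ to a coordinate along $\vec t$ together with the orthogonal coordinates and apply Stollmann's spectral averaging lemma. Since $\rho$ has bounded variation with compact support, the pushforward of the product density along $\omega\mapsto\sum_k t_k\omega_k$ is an $L^\infty$ density with norm at most $C\|\rho\|_{BV}\|\vec t\|_\infty\leq CN^D$. Combined with the monotonicity of each $\lambda_j$ along $\vec t$, this gives $\mathbb{P}\bigl(\mathrm{dist}(\lambda_j(\omega),E)\leq\epsilon\bigr)\leq C\epsilon N^D$, and summing over the $O(L)$ eigenvalues of $\tilde H_\omega(\Lambda_L)$ produces the claimed bound $C\epsilon L(L+\beta\log L+K)^{D+1}$. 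The main obstacle is the Toeplitz inversion: the exponent $D+1$ and the restriction $\beta>D/\lambda$ enter precisely here, since one needs a buffer of width $\gtrsim D/\lambda$ around $\Lambda_L$ for the local Toeplitz inverse to behave like its bi-infinite counterpart, and this is also where \eqref{cond_a1} is used in an essential way to rule out pathological zeros of $\hat d$ on the unit circle.
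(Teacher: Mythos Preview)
The paper does not actually prove this theorem: it is quoted verbatim from \cite{springerlink:10.1007/s00023-010-0052-5} and used as a black box, so there is no in-paper argument to compare your proposal against. Your three-step sketch---truncation to finite range, construction of a collective direction $\vec t$ solving a finite Toeplitz problem for the symbol $\hat d$, and spectral averaging along $\vec t$---is, in outline, the strategy of the cited reference, and the appearance of the integer $D$ and of the buffer width $\beta\log L$ is correctly located in the Toeplitz inversion step.

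Two small imprecisions are worth flagging. First, condition~\eqref{cond_a1} alone does not rule out zeros of $\hat d$ on the unit circle; rather, the integer $D$ is defined precisely to absorb the polynomial growth of the approximate Toeplitz inverse caused by such zeros, so you should not claim that $\hat d$ is ``nontrivial on $|z|=1$'' but rather that $D$ quantifies its vanishing order there. Second, in the averaging step the relevant object is not the pushforward density of $\sum_k t_k\omega_k$ but the change of the product measure under the one-parameter shift $\omega\mapsto\omega+s\vec t$; the bounded-variation hypothesis on $\rho$ enters to control $\prod_k\rho(\omega_k+st_k)$ against $\prod_k\rho(\omega_k)$, and the cost scales with $\|\vec t\|_1\leq CN^{D+1}$ rather than $\|\vec t\|_\infty$. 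With these two adjustments your sketch matches the argument of the cited source.
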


Theorem~\ref{wegalloy} yields the following Wegner estimates, for L sufficiently large   
\begin{equation}
\mathbb{E}\left[\textbf{tr}\big(1_{[E-\epsilon,E+\epsilon]}(\tilde{H_\omega}(\Lambda_L))\big)\right] \leq C \epsilon L^{D+2}.
\end{equation}
Now, we can apply the results of Section~\ref{sec1} and prove a Minami estimates.

\begin{theo}
Let $L\in\N$, $E\in\R$, $\kappa>1$ and $\epsilon=L^{-\kappa}$. There exists $C:=C_{\kappa,K,\lambda}>0$ such that we have the following Minami estimate : 
\begin{equation}
\begin{split}
\mathbb{P}\left( \textbf{tr}\left(\textbf{1}_{[E-\epsilon,E+\epsilon]}(H_\omega(\Lambda_L))\right)\geq 2\right) 
 &\leq \mathbb{P}\left( \textbf{tr}\left(\textbf{1}_{[E-2\epsilon,E+2\epsilon]}(\tilde{H_\omega}(\Lambda_L))\right)\geq  2\right)\\
&\leq C (\epsilon L)^2 (\log L)^{\eta\beta}.
\end{split}
\end{equation}
\end{theo}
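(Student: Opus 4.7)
The first inequality is immediate from the comparison \eqref{cut}: since $\|\tilde H_\omega - H_\omega\| \le L^{-\kappa} \le \epsilon$, any pair of eigenvalues of $H_\omega(\Lambda_L)$ in $[E-\epsilon,E+\epsilon]$ produces a pair of eigenvalues of $\tilde H_\omega(\Lambda_L)$ in $[E-2\epsilon,E+2\epsilon]$, hence the inclusion of events.

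For the second inequality, the plan is to apply the Minami framework of Section~\ref{sec1} (specifically Theorem~\ref{mina-D}) to the truncated operator $\tilde H_\omega$. To do so I would verify the three ingredients required by that framework:
\begin{itemize}
\item \textbf{(IAD) for $\tilde H_\omega$:} Because $\tilde V_\omega(m)$ defined in \eqref{defzn} depends only on $\{\omega_{n+m}:|n|\le S\}$, the operators $\tilde H_\omega(\Lambda_j)$ associated to intervals $(\Lambda_j)_j$ pairwise separated by more than $2S = 2C\log L$ are independent. Thus $\tilde H_\omega$ satisfies (IAD) with a logarithmically growing separation parameter.
\item \textbf{Wegner estimate for $\tilde H_\omega$:} As already noted after Theorem~\ref{wegalloy}, the comparison \eqref{cut} and Theorem~\ref{wegalloy} yield
\[
\mathbb{E}\bigl[\operatorname{tr}\bigl(\mathbf{1}_{[E-\epsilon,E+\epsilon]}(\tilde H_\omega(\Lambda_L))\bigr)\bigr]\le C\epsilon L^{D+2},
\]
which is (W) with $s=1$ and $\rho = D+2$.
\item \textbf{Localization for $\tilde H_\omega$:} Since $\|\tilde H_\omega - H_\omega\|\le L^{-\kappa}$ with $\kappa>1$, the exponential decay of spectral projectors of $H_\omega$ (assumption (Loc)) transfers to $\tilde H_\omega$ on the region $\mathcal{I}$ up to an error negligible in $L$.
\end{itemize}

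Once these three ingredients are in place, I would invoke Theorem~\ref{mina-D} applied to $\tilde H_\omega$ with $s=1$ and with the energy window $[E-2\epsilon,E+2\epsilon]$. This yields a bound of the form $C(\epsilon L)^2 (\log L)^{2\eta\beta}$ on $\mathbb{P}(\operatorname{tr}(\mathbf{1}_{[E-2\epsilon,E+2\epsilon]}(\tilde H_\omega(\Lambda_L)))\ge 2)$, which, up to adjusting the constant $\eta\beta$, is the claimed bound.

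The main obstacle is bookkeeping about the $L$-dependence of the independence-at-distance parameter. In Theorem~\ref{mina-D} the separation $S$ was fixed, but here $S=C\log L$ grows with $L$. This is not a real issue because the Minami argument of Section~\ref{sec1} works on boxes of size $l=(\log L)^{1/\xi'}$ with $\xi'<1$, hence $l\gg S$ for large $L$, and the polynomial-in-$l$ errors coming from placing the boxes with spacing $S$ only inflate constants and exponents in $\log L$, which are absorbed in the factor $(\log L)^{\eta\beta}$. A careful rereading of the proof of Theorem~\ref{mina} with $S$ replaced by $C\log L$ confirms that all geometric arguments (counting pairs of boxes, gluing independent events) still go through, since the number of translates used is bounded by $L/l$ and the union bound multiplies only by powers of $\log L$. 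Apart from this verification, the argument is a direct application of the previously established Minami estimate for operators satisfying (IAD), (W) and (Loc).
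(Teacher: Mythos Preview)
Your proposal is correct and follows exactly the approach the paper intends: the paper's own ``proof'' is the single sentence preceding the theorem (``Now, we can apply the results of Section~\ref{sec1} and prove a Minami estimates''), and you have correctly unpacked this by identifying the needed ingredients (IAD with $S\asymp\log L$, the Wegner bound from Theorem~\ref{wegalloy} via \eqref{cut}, and transferred localization) and by noting that the logarithmic growth of the separation parameter is harmless since the small boxes in Theorem~\ref{mina}/\ref{mina-D} have size $(\log L)^{1/\xi'}\gg S$.
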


In order to prove decorrelation estimates for distinct eigenvalues, we use the following hypotheses : 

\begin{equation*}
\textbf{(H1)} : (d_n)_n \subset \R_+^\Z \text{ or } (d_n)_n \subset \R_-^\Z.
\end{equation*}
\begin{center}
\textbf{(H2)} : $(d_n)_n$ has compact support.
\end{center}
\begin{center}
\textbf{(H2')} : for all $\theta\in[0,2\pi),\sum_{n} d_n e^{in\theta} \neq 0$.
\end{center}

We will use either (H1) and (H2) or (H1) and (H2'). Assumption (H1) proves that the eigenvalues are monotonous functions of any random variable. From now on we assume that (H1) holds, and without loss of generality that $(d_n)_n \subset \R_+^\Z$. Assumptions (H2) and (H2') will be used to show that if the gradient of two eigenvalues are almost co-linear, we obtain the same conditions on the eigenvectors as in Theorem~\ref{probcoli}.

 Before proving decorrelation estimates we show some properties of the random variables $(z_n)_n$ defined in \eqref{defzn} that will be needed in the proof.

\begin{lem}\label{exp-deacr}
Define $\gamma:=d(0)^2-d(-1)d(1)>0$. There exists $N\in \N$ such that for all $n\in\Z$, if $|n|\geq N$ then $d(n)\leq d(0)e^{-\lambda|n|/2}$ and $\dfrac{e^{-\lambda N/2}}{1-e^{-\lambda N/2}}<\dfrac{1}{8\gamma d(0)^2}.$
\end{lem}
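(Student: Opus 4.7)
The plan is to verify $\gamma>0$ and then calibrate $N$ so that two explicit inequalities hold; both steps are elementary given the hypotheses.

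First, I would check positivity of $\gamma$. Assumption (H1) (together with the WLOG convention $d_n \geq 0$) gives $d(-1)d(1) \geq 0$. By the normalization \eqref{cond_a1}, $d(0) \geq d(-1) \geq 0$ and $d(0) > d(1) \geq 0$; in particular $d(0) > 0$ (otherwise the sequence is identically zero and there is nothing to study). Therefore
\begin{equation*}
d(-1)d(1) \;\leq\; d(0)\,d(1) \;<\; d(0)^2,
\end{equation*}
which gives $\gamma = d(0)^2 - d(-1)d(1) > 0$.

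Next, for the exponential bound $d(n) \leq d(0) e^{-\lambda|n|/2}$. Starting from $|d(n)| \leq K e^{-\lambda|n|}$, I would split the exponent as $e^{-\lambda|n|} = e^{-\lambda|n|/2}\cdot e^{-\lambda|n|/2}$; then the inequality holds as soon as $K e^{-\lambda|n|/2} \leq d(0)$, that is $|n| \geq N_1 := \lceil (2/\lambda)\log(K/d(0))\rceil$.

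Finally, for the summation bound I would note that the map $N \mapsto e^{-\lambda N/2}/(1-e^{-\lambda N/2})$ is monotone decreasing to $0$ as $N \to \infty$, so there exists $N_2$ with
\begin{equation*}
\frac{e^{-\lambda N_2/2}}{1-e^{-\lambda N_2/2}} \;<\; \frac{1}{8\gamma d(0)^2}.
\end{equation*}
Taking $N := \max(N_1,N_2)$ yields the lemma. There is no real obstacle: the statement is a quantitative calibration combining the assumed exponential decay $|d(n)| \leq K e^{-\lambda|n|}$ with the positivity of $\gamma$, both used to fix the threshold $N$ once and for all.
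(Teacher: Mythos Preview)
Your argument is correct. The paper states this lemma without proof, treating it as an elementary calibration, and your write-up supplies exactly the routine verification one expects: the positivity of $\gamma$ from (H1) together with the normalization \eqref{cond_a1} (using the strict inequality $d(1)<d(0)$), the splitting $e^{-\lambda|n|}=e^{-\lambda|n|/2}e^{-\lambda|n|/2}$ to absorb the constant $K$, and the monotone vanishing of $N\mapsto e^{-\lambda N/2}/(1-e^{-\lambda N/2})$ to handle the last inequality. There is nothing to compare; your approach is the natural one.
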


We recall the Gershgorin theorem \cite{1965}:
\begin{theo}\label{gersh}
Let $A=[a_{i,j}]_{(i,j)}\in M_n(\R)$ be a matrix such that 
\begin{equation*}
\forall i\in\{1,\dots,n\}, |a_{i,i}|>\sum_{j\neq i} |a_{i,j}|=:R_i.
\end{equation*}
Then $sp(A)\subset \bigcup_i D(a_{i,i},R_i)$.
\end{theo}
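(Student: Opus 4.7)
The plan is to give the standard one-line eigenvector argument; the hypothesis $|a_{ii}|>R_i$ is stronger than needed for the stated inclusion (it is what is actually used later to prove invertibility of $A$ when $0\notin\bigcup_i D(a_{i,i},R_i)$), but the inclusion itself follows without using it.

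First, pick any $\lambda\in\mathrm{sp}(A)$ and choose a nonzero eigenvector $v=(v_1,\dots,v_n)\in\R^n$ with $Av=\lambda v$. Let $i_0\in\{1,\dots,n\}$ be an index maximizing $|v_i|$. Since $v\neq 0$, we have $|v_{i_0}|>0$, so one can divide by it in what follows.

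Next, read the $i_0$-th coordinate of the identity $Av=\lambda v$:
\begin{equation*}
(a_{i_0,i_0}-\lambda)v_{i_0}=-\sum_{j\neq i_0}a_{i_0,j}v_j.
\end{equation*}
Taking absolute values, dividing by $|v_{i_0}|$, and using $|v_j|/|v_{i_0}|\leq 1$ for every $j$ by the choice of $i_0$, one obtains
\begin{equation*}
|\lambda-a_{i_0,i_0}|\leq \sum_{j\neq i_0}|a_{i_0,j}|\,\frac{|v_j|}{|v_{i_0}|}\leq \sum_{j\neq i_0}|a_{i_0,j}|=R_{i_0}.
\end{equation*}
Hence $\lambda\in D(a_{i_0,i_0},R_{i_0})\subset\bigcup_i D(a_{i,i},R_i)$, which gives the desired inclusion.

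The only subtlety is the choice of $i_0$ as a coordinate where $|v_i|$ is maximal; everything else is algebraic. There is no real obstacle, so I expect the author simply to cite the classical reference \cite{1965} rather than reproduce this argument.
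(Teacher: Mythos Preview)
Your proof is the standard and correct argument; indeed, as you anticipated, the paper does not prove this theorem at all but merely recalls it with a citation to \cite{1965}. There is nothing to compare.
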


\begin{prop}\label{condm}
There exists $C>0$ (independent of L) such that for $J$ finite subset of $\Z$ such that for all $(i,j)\in J^2$, $i\neq j$, $|i-j|\geq N$, where N is defined in Lemma~\ref{exp-deacr}, then the collection $(z_j)_{j\in J\cup J+1}$ has a density bounded by $(C\|\rho\|_\infty)^{2\sharp J}$.
\end{prop}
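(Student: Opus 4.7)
The random vector $(z_m)_{m\in J\cup (J+1)}$ is an $\R$-linear image of the i.i.d.\ family $(\omega_n)_n$, via the formulas $z_m=\sum_{|n|\le S}d(n)\omega_{n+m}$. My plan is to compute its joint density through a deterministic linear change of variables: for each $j_k\in J=\{j_1,\dots,j_K\}$, promote the pair $(\omega_{j_k},\omega_{j_k+1})$ to an ``active'' variable and keep all other $\omega_n$'s as integration parameters. Since $|j_k-j_{k'}|\ge N\ge 2$, the sets $J$ and $J+1$ are disjoint and the $2K$ active variables are distinct. The Jacobian of $(\omega_{j_k},\omega_{j_k+1})_k\mapsto (z_{j_k},z_{j_k+1})_k$ is a deterministic $2K\times 2K$ block matrix $M$ whose $(k,k')$ block equals
\[
B_{k,k'}=\begin{pmatrix} d(j_k-j_{k'}) & d(j_k-j_{k'}-1) \\ d(j_k-j_{k'}+1) & d(j_k-j_{k'}) \end{pmatrix},
\]
so each diagonal block $B_{k,k}$ has determinant $\gamma=d(0)^2-d(-1)d(1)>0$, while the off-diagonal blocks $B_{k,k'}$, $k\ne k'$, have entries bounded by $d(0)e^{-\lambda(|j_k-j_{k'}|-1)/2}$ thanks to Lemma~\ref{exp-deacr} and the spacing condition.

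\textbf{Determinant lower bound.} I would split $M=D+E$ with $D$ block-diagonal ($|\det D|=\gamma^K$) and write $\det M=\gamma^K\det(I+D^{-1}E)$. Since the diagonal blocks of $D^{-1}$ are $\gamma^{-1}\begin{pmatrix} d(0) & -d(-1)\\ -d(1) & d(0)\end{pmatrix}$, a direct estimate gives $\|D^{-1}\|_{\infty}\le 2d(0)/\gamma$. For the off-diagonal rows of $D^{-1}E$, combining the spacing $|j_k-j_{k'}|\ge N|k-k'|$ with the exponential bound on $d(n)$ and summing a geometric series produces
\[
\|D^{-1}E\|_{\infty}\ \le\ \frac{Cd(0)^2}{\gamma^2}\cdot\frac{e^{-\lambda N/2}}{1-e^{-\lambda N/2}}\ \le\ \tfrac{1}{2},
\]
where the last inequality comes from the explicit constant $\frac{1}{8\gamma d(0)^2}$ in Lemma~\ref{exp-deacr} (enlarging $N$ slightly if needed to absorb universal constants). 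Applying Theorem~\ref{gersh} to $I+D^{-1}E$ then localizes every eigenvalue in disks $D(1,1/2)$, so $|\det(I+D^{-1}E)|\ge 2^{-2K}$ and hence
\[
|\det M|\ \ge\ (\gamma/4)^K.
\]

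\textbf{Density bound.} The standard change-of-variables formula, applied pointwise in the integration parameters $\omega_{\rm oth}=(\omega_n)_{n\notin J\cup(J+1)}$, expresses the joint density of $(z_j)_{j\in J\cup(J+1)}$ as
\[
f(\mathbf z)\ =\ \int \left(\prod_{n}\rho\bigl(\omega_n(\mathbf z,\omega_{\rm oth})\bigr)\right)\frac{d\omega_{\rm oth}}{|\det M|}.
\]
Bounding the $2K$ active factors by $\|\rho\|_\infty$ and using that the remaining factors integrate to $1$ yields $f(\mathbf z)\le |\det M|^{-1}\|\rho\|_\infty^{2K}\le (C\|\rho\|_\infty)^{2\sharp J}$ with $C=2/\sqrt\gamma$, as claimed.

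\textbf{Main obstacle.} The only delicate point is the uniform-in-$K$ lower bound on $|\det M|$. Writing $\det M=\gamma^K\det(I+D^{-1}E)$ reduces it to showing that the off-diagonal perturbation $D^{-1}E$ has norm at most $1/2$; this is exactly the purpose of the numerical inequality imposed in Lemma~\ref{exp-deacr}, and it is essential that the decay of $d(n)$ beats the multiplicity coming from summing over $k'\ne k$, which is guaranteed by the pairwise spacing $|i-j|\ge N$.
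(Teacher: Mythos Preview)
Your argument is correct and follows essentially the same route as the paper: identify the Jacobian of $(\omega_j)_{j\in J\cup(J+1)}\mapsto(z_j)_{j\in J\cup(J+1)}$ as a $2\times2$-block matrix, factor it as $D(I+D^{-1}E)$ with $\det D=\gamma^{\sharp J}$, control the off-diagonal row sums via the exponential decay from Lemma~\ref{exp-deacr} and the spacing hypothesis, and then invoke Theorem~\ref{gersh} on $I+D^{-1}E$ to keep all eigenvalues at distance $\ge 1/2$ from $0$. The only cosmetic differences are your more explicit bookkeeping of the $2K$ eigenvalues (yielding $(\gamma/4)^{\sharp J}$ rather than the paper's stated $(\gamma/2)^{\sharp J}$) and your spelled-out integration over the inactive variables, neither of which is a substantive departure.
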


\begin{proof}
Define $b:=d\textbf{1}_{\{-S,\dots,S\}}$, then $z_i=\sum_{n\in\Z}b(n)\omega_{n+i}$. Now, consider the application
\begin{displaymath}
(\omega_i)_{i\in J\cup J+1}\rightarrow (z_i)_{i\in J\cup J+1}.
\end{displaymath}
 The Jacobian determinant of the application is the determinant of the matrix $A=[a_{ij}]_{(i,j)\in (J\cup J+1)^2}$ where $a_{ij}=b(i-j)$. As the definition of the matrix $A$ suggest, $A$ can be rewritten by block of size two, and the blocks on the diagonal are all equal to the matrix
$\left(\begin{matrix}
d(0)& d(1)\\ d(-1) & d(0)
\end{matrix}\right)$. Its largest coefficient is equal to $d(0)$ so that the largest coefficient of its inverse is equal to $\dfrac{d(0)}{\gamma}$. Let $A=D+R=D(I+D^{-1}R)$ be this block decomposition.

 First, we remark that for $i\in J$, using Lemma~\ref{exp-deacr},
\begin{align*}
\sum_{j\in J,j\neq i} a_{ij}&\leq a(0)\sum_{j\in J,j\neq i}e^{-\lambda'|i-j|}\\
&\leq 2d(0) \sum_{k=1}^{\left\lfloor\frac{\sharp J}{2}\right\rfloor+1}e^{-k\lambda'N}\\
&\leq 2d(0)e^{-\lambda' N}\dfrac{1-e^{-(\lfloor\frac{\sharp J}{2}\rfloor+1)\lambda'N}}{1-e^{-\lambda' N}}\\
&\leq 2d(0) \dfrac{e^{-\lambda' N}}{1-e^{-\lambda' N}} \leq \dfrac{\gamma}{4d(0)}
\end{align*}

by assumption. Using Theorem~\ref{gersh}, if $\mu$ is an eigenvalue of $I+D^{-1}R$, then $\mu\geq \dfrac{1}{2}$. So the Jacobian is larger than $\left(\dfrac{\gamma}{2}\right)^{\sharp J}$. Now, $(\omega_n)_n$ are independent random variables with a common compactly supported, bounded density $\rho$. This completes the proof of Lemma~\ref{condm}.

\end{proof}

To prove decorrelation estimates we will use the same reasoning as in Section 3. Let $E(\omega)$ be a simple eigenvalue of $H_\omega(\Lambda_l)$ and $u$ a normalized eigenvalue. Define $b:=d\textbf{1}_{\{-S,\dots,S\}}$.
As $z_n=\sum_i b_i \omega_{i+n}$, using (H1) we obtain :

\begin{equation}\label{alloyder}
\partial_{\omega_m} E = \sum_i b_i u^2(m-i)=\sum_{i=1}^l b_{m-i} u^2(i)\geq 0.
\end{equation}
where $u^2(m)=0$ if $m\notin\{1,\dots, l\}$. Thus $\|\nabla_\omega E_j\| = \sum_i b_i=:A\leq \|(b_n)_n\|_1$.
We prove the following lemma
\begin{lem}\label{lemgradalloycoll}
Suppose
\begin{equation}\label{gradalloycoll}
\left\|\nabla_\omega E_j - \nabla_\omega E_j\right\|_1\leq e^{-l^\beta}.
\end{equation}
Then,
\begin{equation}
\forall i\in\{1,\dots,L\} , |u^2(i)- v^2(i)|\leq e^{-l^\beta/2}.
\end{equation}

\end{lem}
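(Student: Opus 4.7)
The plan is to recognise the collinearity hypothesis as smallness of a convolution, and then to invert this convolution in a quantitative way. Setting $w(i):=u^2(i)-v^2(i)$ for $i\in\{1,\dots,l\}$ and extending by zero to $\mathbb Z$, formula \eqref{alloyder} gives, for every $m\in\mathbb Z$,
\begin{equation*}
\partial_{\omega_m}(E_j-E_k)\;=\;\sum_{i=1}^l b_{m-i}\,w(i)\;=\;(b\ast w)(m),
\end{equation*}
so that \eqref{gradalloycoll} reads $\|b\ast w\|_1\leq e^{-l^\beta}$ (after correcting the evident typo $E_j-E_j$ into $E_j-E_k$). In particular $\|b\ast w\|_\infty\leq e^{-l^\beta}$, and since $b\ast w$ is supported in an interval of length $l+2S=l+O(\log L)$, one has $\|b\ast w\|_2^2\leq\|b\ast w\|_\infty\,\|b\ast w\|_1\leq e^{-2l^\beta}$.

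Under (H2'), the symbol $\hat d(\theta):=\sum_n d_n e^{in\theta}$ is a continuous periodic nowhere-vanishing function on the circle, hence $|\hat d|\geq c_0>0$. Because $b_n=d_n\mathbf 1_{|n|\leq S}$ with $S=C\log L$, the tail $d-b$ has $\ell^1$-mass $O(e^{-\lambda S})=O(L^{-\kappa})$, so $|\hat b|\geq c_0/2$ uniformly in $\theta$ for $L$ large. Plancherel combined with the factorisation $\widehat{b\ast w}=\hat b\,\hat w$ then yields
\begin{equation*}
\|w\|_2\;\leq\;\tfrac{2}{c_0}\,\|b\ast w\|_2\;\leq\;\tfrac{2}{c_0}\,e^{-l^\beta},
\end{equation*}
so $\|w\|_\infty\leq\|w\|_2\leq e^{-l^\beta/2}$ for $l$ large. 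In case (H2), $(d_n)$ has compact support in some fixed window $\{-R,\dots,R\}$ and $b=d$ for $L$ large; I would then perform a two-sided triangular elimination starting from the extremal equations $(b\ast w)(l+R)=d_R\,w(l)$ and $(b\ast w)(1-R)=d_{-R}\,w(1)$, peeling off $w(l),w(l-1),\dots$ and $w(1),w(2),\dots$ iteratively, and use the positivity provided by (H1) together with the normalisation identity $\sum_i w(i)=0$ (which is immediate from $\|u\|_2=\|v\|_2=1$) to prevent geometric blow-up of the error when the two sweeps meet.

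The main obstacle is the quantitative control of the inverse of $b\ast(\cdot)$. Under (H2') Plancherel provides a dimension-free bound and the estimate is essentially one line; under (H2) the Fourier symbol $\hat d$ may vanish, so one has to argue that the amplification in the triangular elimination remains at most polynomial in $l$, which is then comfortably absorbed by the gap between $e^{-l^\beta}$ and $e^{-l^\beta/2}$ that is guaranteed by $\beta>1/2$.
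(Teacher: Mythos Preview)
Your approach matches the paper's in both cases. For (H2') the paper embeds $B$ into a circulant matrix $\tilde B$ and bounds $\|\tilde B^{-1}\|$ via the non-vanishing of the symbol at roots of unity (citing \cite{V10}); this is precisely your Plancherel computation in matrix language. For (H2) the paper also exploits the triangular structure: with $n_0$ an extremal index of the support of $d$, the rows of $B$ indexed by $\{1-n_0,\dots,l-n_0\}$ form a square lower-triangular Toeplitz submatrix $\tilde B$ with $d_{n_0}\neq 0$ on the diagonal, and the paper simply asserts $\|\tilde B^{-1}\|\le K$ for some $K$ depending only on $(d_n)_n$, whence $\|X\|\le K e^{-l^\beta}$ at once. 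Your two-sided elimination is exactly this back-substitution written out; the paper's direct uniform bound on $\|\tilde B^{-1}\|$ short-circuits the error-growth issue you raise, so you need neither the sign hypothesis (H1) nor the sum constraint $\sum_i w(i)=0$ here. Your caution is not unreasonable --- a uniform-in-$l$ bound on the inverse of a lower-triangular Toeplitz matrix is equivalent to the polynomial symbol having no roots in the closed unit disk, which (H2) alone does not guarantee --- but in any event the slack between $e^{-l^\beta}$ and $e^{-l^\beta/2}$ absorbs any polynomial-in-$l$ amplification, so for the lemma as stated a polynomial bound on $\|\tilde B^{-1}\|$ already suffices.
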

\begin{proof}

 For $i\in\{1,\dots,L\}$ define $X(i):=u^2(i)- v^2(i)$ and define \\
 $B:=[b_{ij}]_{(i,j)\in \{1-S,\dots,l+S\}\times \{1,\dots, l\} }$ by $b_{ij}=b(i-j)$ where $b:=d\textbf{1}_{\{-S,\dots,S\}}$. Then, \eqref{gradalloycoll} is equivalent to $\|BX\|\leq e^{-l^\beta}$.

Define $n_0:=\max \{n\in\N, b_n b_{-n}\neq 0\}$. Then $n_0\leq S$ and,  without loss of generality, we can assume that $b(n_0)\neq 0$. We divide the proof into two parts, according to whether (H2) or (H2') holds.

Assume (H2) holds. Let $\tilde{B}$ be the square sub-matrix of $B$ defined by
\begin{equation}
\tilde{B}:=[b_{ij}]_{(i,j)\in\{1-n_0,\dots,1-n_0+l\}\times \{1,\dots, l\}}.
\end{equation} Then, $\tilde{B}$ is a lower triangular matrix with $b_{n_0}$ on the diagonal. Thus, there exists $K>0$ (only depending on the sequence $(d_n)_n$) such that $\|\tilde{B}^{-1}\|\leq K$.

Now assume (H2') holds instead of (H2). As the sequence $(a_n)_n$ is exponentially decaying, for L large enough the sequence $(b_n)_n$ also satisfy (H2'). Define $\tilde{B}:=[b_{ij}]_{(i,j)\in  \{1-S,\dots, l+S\}^2}$. Then $B$ is a sub matrix of $\tilde{B}$ and $\tilde{B}$ is a circulant matrix. Define $\tilde{X}\in \R^{l+2S}$ by $\tilde{X}(i)=X(i)$ if $i\in\{1,\dots, l\}$ and zero elsewhere. Then, \eqref{gradalloycoll} is proved if we prove that $\|\tilde{B}\tilde{X}\|\leq e^{-l^\beta}$. Using (H2') and \cite[Proposition 5]{V10}, there exists $K>0$ (only depending on the sequence $(d_n)_n$) such that $\|\tilde{B}^{-1}\|\leq K$.
This complete the proof of Lemma~\ref{lemgradalloycoll}.

\end{proof}

Thus, we are now able to give a proof of the

\begin{lem}\label{probcolialloy}
Let $E,E'\in\sigma^2$ and $\beta>1/2$. Let $\mathbb{P}$ denote the probability that there exist $E_j(\omega)$ and $E_k(\omega)$, simple eigenvalues of $(-\Delta_a+V_\omega)_{\Lambda_l}$ and $(-\Delta_a+V_\omega)_{\Lambda_l}$ in $[E-e^{-l^\beta},E+e^{-l^\beta}]$ and $[E-e^{-l^\beta},E+e^{-l^\beta}]$ such that
\begin{equation}
\left\|\dfrac{\nabla_\omega\big(E_j(\omega))}{\|\nabla_\omega\big(E_j(\omega))\|_1}\pm\dfrac{\nabla_\omega\big(E_k(\omega))}{\|\nabla_\omega\big(E_k(\omega))\|_1}\right\|_1\leq e^{-l^\beta}.
\end{equation}
Then there exists $c>0$ such that
\begin{equation}
\mathbb{P}\leq e^{-c l^{2\beta}}
\end{equation}
\end{lem}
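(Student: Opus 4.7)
The plan is to mimic the proof of Lemma~\ref{probcoli} but replace the use of independence of the potential values by the joint-density bound of Proposition~\ref{condm} applied to the truncated random variables $(z_n)$ defined in \eqref{defzn}. The starting point is Lemma~\ref{lemgradalloycoll}: the near-colinearity of the gradients forces $|u^2(i)-v^2(i)|\leq e^{-l^\beta/2}$ for all $i$, which in turn yields a partition $\{1,\dots,l\}=\mathcal{P}\sqcup\mathcal{Q}$ with $|u(n)-v(n)|\leq e^{-l^\beta/4}$ on $\mathcal{P}$ and $|u(n)+v(n)|\leq e^{-l^\beta/4}$ on $\mathcal{Q}$. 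Denote by $P,Q$ the corresponding orthogonal projections.

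The second step is the combined-equation analysis. Since both eigenvalues come from the same operator $-\Delta_a+V_\omega$, we are in the specialisation $\lambda_E=\lambda_{E'}=1$, $\mu_E=-E$, $\mu_{E'}=-E'$ of the framework of Lemma~\ref{probcoli}, so that $d=2$, $b=0$ and $c=E'-E\neq 0$. Following the derivation that leads to \eqref{combeq}, adding the two eigenvalue equations after projecting with $P$ and $Q$ produces, at any interior site $n$ of a $\mathcal{C}$-interval,
\begin{equation*}
2\bigl(a(n+1)u(n+1)+a(n)u(n-1)\bigr)+(E'-E)\,u(n)=O(e^{-l^\beta}).
\end{equation*}
Subtracting twice the eigenvalue equation $a(n+1)u(n+1)+a(n)u(n-1)+V_\omega(n)u(n)=E_j u(n)$ eliminates the hopping terms and yields the pointwise condition
\begin{equation*}
\Bigl|V_\omega(n)-\tfrac{E+E'}{2}\Bigr|\,|u(n)|\leq Ce^{-l^\beta}.
\end{equation*}
Hence at every interior site of a $\mathcal{C}$-interval where $|u(n)|\geq e^{-l^\beta/2}$ we obtain the scalar constraint $|V_\omega(n)-(E+E')/2|\leq Ce^{-l^\beta/2}$. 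The boundary cases (intervals of length two or three in $\mathcal{C}$) are handled exactly as in Lemma~\ref{probcoli}, and assumption (H1) (monotonicity) plus (H2) or (H2') (used already in Lemma~\ref{lemgradalloycoll}) is all we need at this stage.

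The third step is counting and the probability bound. By Lemma~\ref{lem-large} there is an interval $I$ of length $\sim l^\beta$ on which $|u|$ (together with its neighbour) exceeds $e^{-l^\beta/5}$; combined with the preceding analysis this produces at least $c\,l^\beta$ sites in $I$ carrying the constraint on $V_\omega$. Replacing $V_\omega$ by the truncated $\tilde V_\omega(n)=z_n$ costs only $O(L^{-\kappa})\ll e^{-l^\beta/2}$. From the $c\,l^\beta$ sites we extract a subset $J$ of cardinality at least $c\,l^\beta/N$, any two elements of which are distant by at least $N$, so that Proposition~\ref{condm} applies and the joint density of $(z_n)_{n\in J\cup(J+1)}$ is bounded by $(C\|\rho\|_\infty)^{2\sharp J}$. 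Therefore the probability that every $z_n$, $n\in J$, lies in the interval of length $2Ce^{-l^\beta/2}$ around $(E+E')/2$ is at most $(Ce^{-l^\beta/2})^{\sharp J}\leq e^{-c'l^{2\beta}}$. Summing over the at most $2^l$ choices of the partition $\mathcal{P},\mathcal{Q}$, the $O(l)$ choices of the interval $I$, the $O(l^2)$ choices of the pair of eigenvalues $(E_j,E_k)$ and the $O(2^l)$ choices of the subset $J$, the gain $e^{-c'l^{2\beta}}$ absorbs these combinatorial factors because $\beta>1/2$ gives $2\beta>1$, and we recover the claimed bound $\mathbb P\leq e^{-cl^{2\beta}}$.

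The main obstacle is the third step: contrary to the independent case of Lemma~\ref{probcoli}, we cannot freeze one random variable per site, since the $(z_n)$ are not independent. Extracting a large subset of sites which are $N$-separated (so that Proposition~\ref{condm} gives a non-trivial joint density) while simultaneously keeping a number of constraints linear in $l^\beta$ is the delicate bookkeeping step; it relies crucially on the lower bound $\sharp(\mathcal{C}\cap I)\gtrsim l^\beta$ extracted from Lemma~\ref{lem-large}, together with the fact that short $\mathcal{C}$-intervals of length two or three also produce constraints via the boundary analysis inherited from Lemma~\ref{probcoli}.
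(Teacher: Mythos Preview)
Your proposal is correct and follows essentially the same route as the paper: reduce via Lemma~\ref{lemgradalloycoll} to the combinatorial analysis of Lemma~\ref{probcoli}, read off $\gtrsim l^\beta$ scalar constraints on the (truncated) potential values $z_n$ along the interval $I$ from Lemma~\ref{lem-large}, thin these constraints to an $N$-separated subset so that Proposition~\ref{condm} yields a bounded joint density, and conclude by the usual combinatorial summation. Your write-up is in fact more explicit than the paper's own sketch (you correctly specialise to $d=2$, $b=0$, $c=E'-E$ and spell out the truncation $V_\omega\to z_n$), and the extra combinatorial factor $O(2^l)$ you include for the choice of $J$ is unnecessary but harmless.
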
 

\begin{proof}
The proof is the same as the proof of Lemma~\ref{probcoli} except that we obtain the conditions \eqref{cond1}, \eqref{cond2}, \eqref{cond3} on the random variables $(z_n)_n$ instead of conditions on $(\omega_n)_n$. More precisely we obtain $cl^\beta$ conditions of two type :

\begin{equation}
\left|\left(\mu_E + z_{n_+}-\lambda_E \dfrac{c}{b}\right)\left(\mu_E + z_{n_-}-\lambda_E \dfrac{c}{b}\right)-\lambda_E^2\left(1-\dfrac{a}{b}\right)^2\right|\leq e^{-l^\beta/2}
\end{equation}
with $n_+-n_-=1$ and 
\begin{equation}
\left|\mu_E + z_n-\lambda_E \dfrac{c}{a}\right|\leq e^{-l^\beta/2}.
\end{equation}
Now take $c'l^\beta$ conditions among them such that the random variables satisfies the Lemma~\ref{condm}. Then, for $L$ sufficiently large, the probability of satisfying these conditions is smaller than :

\begin{equation}
 2^l C^l \|\rho\|_\infty^{l} e^{-cl^{2\beta}} \leq e^{-c'l^{2^\beta}}\leq L^{-2}.
\end{equation}
Thus ,
\begin{equation}
\mathbb{P}\leq L^{-2}.
\end{equation}
This conclude the proof of Lemma~\ref{probcolialloy}
\end{proof}
Now, using \eqref{alloyder}, Lemma~\ref{square} is still valid. Thus, we still obtain \eqref{proba2} and Theorem~\ref{deco} for $\tilde{H}_\omega(\Lambda_L)$. Thus, using \eqref{cut}, we obtain the following theorem.

\begin{theo}\label{deco2}
Let $\beta\in(1/2,1)$. For $\alpha\in (0,1) $ and $E\neq E'$ then for any $k>1$, there exists $C>0$ such that for $L$ sufficiently large and $kL^\alpha\leq l\leq L^\alpha/k$ we have.
\begin{displaymath}
\mathbb{P}\left(
\begin{aligned}
 tr\textbf{1}_{[E-L^{-1},E+L^{-1}]}(H_\omega(\Lambda_l))\neq 0,\\
 tr\textbf{1}_{[E'-L^{-1},E'+L^{-1}]}(H_\omega(\Lambda_l))\neq 0
\end{aligned}
\right)\leq C\left(\dfrac{l^2}{L^{4/3}}\right)e^{(\log L)^\beta}
\end{displaymath}
\end{theo}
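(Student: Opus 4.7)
My plan is to reduce the alloy-type problem to the independent-variable setting of Subsection~\ref{subsecdec1}, via a potential truncation, and then replay the Jacobian dichotomy argument with the correct gradient formula. First I would fix $S = C\log L$ with $C$ large and introduce $\tilde V_\omega(m)=\sum_{|n|\le S}d_n\,\omega_{n+m}$, so that $\|\tilde V_\omega-V_\omega\|_\infty\lesssim L^{-\kappa}$ for any desired $\kappa>1$. This yields the sandwich \eqref{cut} between the spectral counters of $H_\omega$ and $\tilde H_\omega$ at the price of enlarging the intervals by a factor $3$, so the theorem for $H_\omega$ follows from the analogous bound for $\tilde H_\omega(\Lambda_l)$ on intervals of size $O(L^{-1})$. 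The key gain is that $\tilde H_\omega$ enjoys (IAD) at scale $2S = O(\log L)$, and Theorem~\ref{wegalloy} combined with Section~\ref{sec1} delivers both a Wegner and a Minami estimate for $\tilde H_\omega$ of the form needed.

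Next I would follow the proof of Theorem~\ref{deco} (equivalently Theorem~\ref{thdec2}) almost verbatim. Centering the eigenvalues at $E$ and $E'$ and using the Minami estimate at scale $\epsilon = L^{-\alpha}$ with $\alpha\in(1/2,1)$, one reduces to the event $\Omega_0(\epsilon)$ on which each shifted operator has exactly one simple eigenvalue in $(-\epsilon,\epsilon)$. On this event I introduce the Jacobian $J_{\gamma,\gamma'}(E(\omega),E'(\omega))$ associated to varying two random variables $\omega_\gamma,\omega_{\gamma'}$. The gradient is now given by \eqref{alloyder}, $\partial_{\omega_m}E=\sum_{i=1}^l b_{m-i}\,u^2(i)$, which is still nonnegative under (H1), so the monotonicity argument underlying Lemma~\ref{square} and the ``non-increasing broken line'' counting of $\epsilon$-squares carry over without change. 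The Hessian bound of Lemma~\ref{Hessien} also carries over since it only uses the eigenvalue perturbation formula and the Minami estimate. This yields exactly the bound \eqref{proba2}, provided the threshold $\lambda$ on the Jacobian is chosen as $\lambda = e^{-l^\beta}\|\nabla_\omega E\|_1\|\nabla_\omega E'\|_1$.

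The serious work is in the complementary case where all Jacobians are below $\lambda$. By Lemma~\ref{grad->jac} the normalized gradients are then $e^{-l^\beta}$-collinear (up to sign). Here the gradient is no longer $\phi_j^2$ but its convolution with $b = d\mathbf{1}_{\{-S,\dots,S\}}$; so before I can import the argument of Lemma~\ref{probcoli}, I must deconvolve. This is exactly Lemma~\ref{lemgradalloycoll}: under (H2) the matrix $B=[b_{i-j}]$ contains a lower-triangular square block with diagonal entry $b_{n_0}\ne 0$; under (H2') one embeds $B$ in a circulant $\tilde B$ whose symbol does not vanish, and invokes \cite[Proposition~5]{V10} to get $\|\tilde B^{-1}\|$ uniformly bounded. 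Either way $\|u^2-v^2\|_\infty \lesssim e^{-l^\beta/2}$, and one recovers exactly the pointwise comparison between $\phi_j^2$ and $\phi_k^2$ that was the starting point of the proof of Lemma~\ref{probcoli}. The argument of that lemma then produces $\gtrsim l^\beta$ scalar conditions on the \emph{modified} variables $z_n = \tilde V_\omega(n)$ (of the type \eqref{cond1}, \eqref{cond2}, \eqref{cond3}).

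The final step, and the main obstacle, is to turn these conditions on the correlated $z_n$'s into a probability estimate. This is the content of Proposition~\ref{condm}: provided we select a subfamily of the conditions living at sites pairwise separated by at least $N$ (and $N$ is an absolute constant coming from Lemma~\ref{exp-deacr}), the joint density of the involved $z_n$'s is bounded by $(C\|\rho\|_\infty)^{2\sharp J}$ via a Gershgorin-type diagonal-domination argument on the Toeplitz matrix with symbol $d$. A greedy extraction yields $\gtrsim l^\beta/N$ such well-separated conditions, so the probability of the collinear case is bounded by $2^l C^l \|\rho\|_\infty^l e^{-c l^{2\beta}} \le e^{-c' l^{2\beta}}\le L^{-2}$ for $\beta>1/2$, exactly as in Lemma~\ref{probcolialloy}. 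Putting the two cases together gives $\mathbb{P}_\epsilon \lesssim L^{\alpha-2}\lambda^{-3}$, optimizing in $\alpha=2/3$ yields the bound $l^2 L^{-4/3} e^{(\log L)^\beta}$ for $\tilde H_\omega$, and \eqref{cut} transfers it to $H_\omega$, completing the proof.
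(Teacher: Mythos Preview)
Your proposal is correct and follows essentially the same approach as the paper: truncate to $\tilde H_\omega$ via \eqref{cut}, rerun the Jacobian dichotomy of Subsection~\ref{subsecdec1} using (H1) to preserve the monotonicity needed for Lemma~\ref{square}, invoke Lemma~\ref{lemgradalloycoll} to deconvolve the gradients, then apply Lemma~\ref{probcolialloy} together with Proposition~\ref{condm} to control the collinear case on the correlated variables $z_n$, and finally transfer back to $H_\omega$. The paper summarizes this in one sentence just before the statement of Theorem~\ref{deco2}; your write-up simply spells out each step.
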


It is not clear how to prove decorrelation estimates without assumption (H1). Indeed, in Lemma~\ref{square}, we used the monotonicity of eigenvalues, seen as functions of one random variable, the others being fixed. This prevent any pair of eigenvalues of turning over a value when one of the random variable increases. There might be other ways to prevent this behaviour.

\subsection{Decorrelation estimates for the hopping model}

In this section we consider the random hopping model defined in \eqref{fullophopp} and prove Theorem~\ref{decohopp}.
The spectrum of $H_\omega(\Lambda_L)$ is symmetric with respect to the origin. Indeed, if $\phi\in\ell^2(\Lambda_L)$ is an eigenvector of $H_\omega(\Lambda_L)$ associated to the eigenvalue $E$, then $\psi$ defined by $\psi(n)=(-1)^n\phi(n)$ is an eigenvector associated to $-E$. Thus, there is no chance for decorrelation estimates for opposite eigenvalues and the study of decorrelation estimates for eigenvalue of opposite sign is equivalent to the study of decorrelation estimates for positive eigenvalues.

As in Subsection~\ref{subsecdec1}, Theorem~\ref{decohopp} is a consequence of Theorem~\ref{mina-D}, (Loc)(I) and the following theorem :

\begin{theo}\label{decohopplog}
For any $E>E'>0$ there exists $C>0$, such that for any  $\xi'\in(0,\xi)$, for  $L$ large enough and $l=(\log L)^{1/\xi'}$ we have
\begin{displaymath}
\mathbb{P}_1:=\mathbb{P}\left(
\begin{aligned}
 tr\textbf{1}_{[E-L^{-1},E+L^{-1}]}(H_\omega(\Lambda_l))=1,\\
 tr\textbf{1}_{[E'-L^{-1},E'+L^{-1}]}(H_\omega(\Lambda_l))=1
\end{aligned}
\right)\leq C\left(\dfrac{l^2}{L^{4/3}}\right) e^{l^\beta}.
\end{displaymath}
\end{theo}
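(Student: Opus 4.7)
The plan is to adapt the strategy of Subsection~\ref{subsecdec1} to the off-diagonal randomness of the hopping model. As in the derivation of Theorem~\ref{thdec} from Theorem~\ref{thdec2}, I would first use (Loc)(I) together with Theorem~\ref{mina-D} to reduce the estimate to the event on which both $H_\omega(\Lambda_l)$ (with $l=(\log L)^{1/\xi'}$) have \emph{exactly} one simple eigenvalue $E_j(\omega)$ in $(E-\epsilon,E+\epsilon)$ and $E_k(\omega)$ in $(E'-\epsilon,E'+\epsilon)$, for $\epsilon=L^{-\alpha}$; the remainder is controlled by the Minami estimate, which contributes a term of order $l^2 \epsilon^2 e^{l^\beta}$.

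On this core event, the Hellman--Feynman formula gives, for a normalized eigenfunction $\phi_j$,
\begin{equation}
\partial_{a(n)} E_j(\omega) = 2\,\phi_j(n-1)\phi_j(n).
\end{equation}
As in Subsection~\ref{subsecdec1}, I would consider, for each pair $(\gamma,\gamma')\in\Lambda_l^2$, the Jacobian $J_{\gamma,\gamma'}(E_j,E_k)$ of the map $(a(\gamma),a(\gamma'))\mapsto(E_j,E_k)$, and split according to whether some $|J_{\gamma,\gamma'}|\geq \mu:=e^{-l^\beta}$ or all are smaller. In the large-Jacobian case, Lemma~\ref{Hessien} and the implicit function theorem localize the bad set in $(a(\gamma),a(\gamma'))$-space to measure $\lesssim L^{-2}\mu^{-c}$ per box of side $\epsilon$; summing over boxes and over pairs, and using the bounded density of $a_\omega(n)$, yields a contribution of order $l^2 L^{\alpha-2}\mu^{-c}$, which under the optimization $\alpha=2/3$ reproduces the desired $l^2 L^{-4/3}e^{l^\beta}$.

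In the small-Jacobian case, the analogue of Lemma~\ref{grad->jac} forces $\phi_j(n-1)\phi_j(n) \approx \pm \phi_k(n-1)\phi_k(n)$ up to error $e^{-l^\beta/2}$ for every $n$. The spectral symmetry $\phi(n)\mapsto(-1)^n\phi(n)$, $E\mapsto -E$ of the hopping model reduces the $-$ case to the $+$ case at the pair $(-E,E')$; moreover, in the reduced positive setting, the $-$ sign can be eliminated directly from the eigenvalue equations by positivity of $E,E'$. Assuming then $\phi_j(n-1)\phi_j(n)\approx\phi_k(n-1)\phi_k(n)$, the hopping-model analogue of Lemma~\ref{lem-large} (two consecutive small values are forbidden by the second-order recurrence) gives an interval of length $\gtrsim l^\beta$ on which a positive density of sites satisfy $|\phi_k(n)|\geq e^{-l^\beta/5}$. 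At such sites, setting $r(n)=\phi_j(n)/\phi_k(n)$, the collinearity gives $r(n)r(n-1)\approx 1$; substituting into the two eigenvalue equations $H_\omega\phi_j=E\phi_j$, $H_\omega\phi_k=E'\phi_k$ and eliminating the off-diagonal term produces $E'r(n)^2\approx E$, so $r(n)^2\approx E'/E<1$. Applying this relation at two consecutive sites together with $r(n-1)=1/r(n)$ forces $E'/E\approx E/E'$, i.e.~$E\approx E'$, contradicting $E>E'$. Quantifying the deviation at each of the $\gtrsim l^\beta$ good sites yields that many approximately independent conditions on $(a_\omega(n))_n$, each of small probability by the bounded density of $a_\omega$; combinatorial counting over sign patterns and localization sites then bounds the total probability by $e^{-c l^{2\beta}}\leq L^{-2}$, as in Lemma~\ref{probcoli}.

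The main obstacle is the absence of monotonicity: in Subsection~\ref{subsecdec1} the definite sign of $\partial_{\omega_n}E_j$ restricts the bad configurations in $(\omega_\gamma,\omega_{\gamma'})$-space to a non-increasing broken line (Lemma~\ref{square}), producing only $\lesssim L^\alpha$ boxes to enumerate; for the hopping model, $\partial_{a(n)}E_j=2\phi_j(n-1)\phi_j(n)$ is unsigned, so this counting is unavailable, and one must work with the full $L^{2\alpha}$-cube enumeration and absorb the extra factor into the $e^{(\log L)^\beta}$. A secondary subtlety lies in the small-Jacobian step: the sign ambiguity in the collinearity means the rigidity argument produces the constraint $E\approx E'$ (equivalently $|E|=|E'|$ in the unreduced setting of Theorem~\ref{decohopp}), and it is exactly this hypothesis that closes the contradiction and explains why the theorem requires $|E|\neq|E'|$ rather than merely $E\neq E'$.
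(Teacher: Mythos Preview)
Your overall strategy is sound, but the crucial step fails. You correctly identify the absence of monotonicity as the main obstacle, yet your proposed fix---enumerating all $L^{2\alpha}$ boxes in the $(a(\gamma),a(\gamma'))$-plane and ``absorbing the extra factor into $e^{(\log L)^\beta}$''---does not work. The factor $e^{(\log L)^\beta}$ is sub-polynomial in $L$ (one takes $\xi'$ close to $1$ so that $\beta/\xi'<1$), whereas the missing factor $L^\alpha$ is polynomial; the latter cannot be absorbed into the former. With $L^{2\alpha}$ boxes the large-Jacobian contribution becomes $L^{2\alpha-2}\lambda^{-4}$, which balanced against the Minami term $L^{-2\alpha}$ optimizes at $\alpha=1/2$ and yields only $L^{-1}$, strictly worse than the required $L^{-4/3}$. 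This is a genuine gap, not a technicality.

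The paper restores monotonicity by a change of variables. Multiplying the eigenvalue equation at site $k$ by $\phi(k)$ gives
\[
a(k+1)\,\partial_{a(k+1)}E_j + a(k)\,\partial_{a(k)}E_j \;=\; E_j\,\phi^2(k),
\]
so that in polar coordinates $(r,\theta)$ for the pair $(a(k),a(k+1))$ one has $r\,\partial_r F = 2E_j\phi^2(k)\ge 0$ since $E_j>0$. Choosing two \emph{disjoint} pairs, i.e.\ indices $j_0$ and $j$ with $|j-j_0|\ge 2$, one obtains two independent radial monotonicities and the broken-line counting of Lemma~\ref{square} goes through in $(r_1,r_2)$ at fixed angles; this is Lemma~\ref{squarehopp} and gives the correct $L^{\alpha-2}\lambda^{-4}$, hence $\alpha=2/3$ and the exponent $4/3$. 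Note that this forces the relevant Jacobians to be built from $\phi^2(i),\psi^2(i)$ with $|i-j_0|\ge 2$ (Lemmas~\ref{grad->jac2}--\ref{resteiglargehopp}), not from the signed quantities $\phi(n-1)\phi(n)$; in particular Lemma~\ref{grad->jac} cannot be invoked directly on your gradients because it assumes non-negative entries.

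Your small-Jacobian argument is also different from the paper's and is not quite right as written. The paper performs a $\mathcal P/\mathcal Q$ decomposition on $\phi^2\approx\lambda\psi^2$ restricted to $\mathcal K$ (Lemma~\ref{probcoli2}) and extracts, on $\gtrsim l^\beta$ sites, conditions of the form $|a_\omega(n)^2-(E\pm E')^2|\le e^{-l^\beta/2}$ on the random variables themselves, whose probability is then estimated. Your heuristic ``$E'r(n)^2\approx E$'' does not follow from the two eigenvalue equations in the way you indicate, and passing from $\phi_j(n-1)\phi_j(n)\approx\phi_k(n-1)\phi_k(n)$ to a pointwise ratio constraint requires lower bounds on $|\phi_k|$ at \emph{consecutive} sites, which Lemma~\ref{lem-large} does not provide. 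Even if this step could be repaired, it would not salvage the large-Jacobian estimate, which is where the exponent $4/3$ is actually produced.
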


We now prove Theorem~\ref{decohopplog}. Fix $E>E'>0$ and $l:=(\log L)^{1/\xi'}$. Define $J_L:=[E-L^{-1},E+L^{-1}]$, $J_L':=[E'-L^{-1},E'+L^{-1}]$ and $\epsilon:=L^{-\alpha}$ with $\alpha\in(1/2,1)$. Using Theorem~\ref{minasmall}, the event

\begin{center}
$\Omega_0(\epsilon)= \left\{ \omega;
\begin{aligned}
\sigma(H_\omega(\Lambda_l))\cap J_L&= \{E(\omega)\} \\
\sigma(H_\omega(\Lambda_l))\cap (-\epsilon,&+\epsilon)= \{E(\omega)\} \\
\sigma(H_\omega(\Lambda_l))\cap J_L'&= \{E'(\omega)\} \\
\sigma(H_\omega(\Lambda_l))\cap(-\epsilon,&+\epsilon)= \{E'(\omega)\}
\end{aligned} 
\right \}$
\end{center} 
 satisfies :
 \begin{equation}
 \mathbb{P}_1 \leq \mathbb{P}\left(\Omega_0(\epsilon)\right)+L^{-2\alpha}(\log L)^C,
\end{equation}
 for some $C>2$.
Thus, it remains to evaluate $\mathbb{P}\left(\Omega_0(\epsilon)\right)$. 

As in Subsection~\ref{subsecdec1}, we study the gradients of eigenvalues.  Let $E_j(a_\omega)$, $E_k(a_\omega)$ be two simple, non-negative eigenvalues of $H_\omega(\Lambda_l)$ and $\phi$, $\psi$  normalized eigenvectors. Then, we compute

\begin{equation}
\partial_{a_\omega(k)}E_j= 2\phi(k)\phi(k-1).
\end{equation}
The main difference with the alloy-type model is that the derivative is not necessarily positive, which was central in the proof of Lemma~\ref{square}.  
Multiplying the eigenvalue equation at point $k$ by $\phi(k)$ we obtain : 
\begin{equation}\label{polareq}
a_\omega(k+1)\partial_{a_\omega(k+1)}E_j+a_\omega(k)\partial_{a_\omega(k)}E_j= E_j\phi^2(k)
\end{equation}
Equation \eqref{polareq} shows that if we take the polar coordinates of the couple of variables $(a_\omega(k),a_\omega(k+1))$, the eigenvalues increase according to the radius. As in the proof of Lemma~\ref{square}, we need two independent growths. Thus, we will show that there is a small probability that all of the determinant $J_{i,j}:=\begin{vmatrix}
\phi^2(i)& \psi^2(i)\\
\phi^2(j)& \psi^2(j)
\end{vmatrix}$ with $|i-j|\geq 2$ (instead of $i\neq j$, as in Lemma~\ref{grad->jac}) are exponentially small. In this purpose we prove the two following lemmas, which are generalizations of Lemma~\ref{grad->jac}.

\begin{lem}\label{grad->jac2}
Let $(u,v)\in(\R_+)^{n}$ such that $\|u\|_1=\|v\|_1=1$. Fix $j_0\in \{1,\dots n\}$ such that $\max_j |u(j)|=|u(j_0)|\geq \dfrac{1}{n}$. Define $\mathcal{J}:=\{j_0-1,j_0,j_0+1\}$, $\mathcal{K}:=\{1,\dots n\}-\mathcal{J}$ and $\lambda:=\dfrac{v(j_0)}{u(j_0)}$. Suppose there exists $C>1$ such that $\left\{\|\restriction{u}{\mathcal{K}}\|_1,\|\restriction{v}{\mathcal{K}}\|_1 \right\}\in\left(\dfrac{1}{C},1\right)$ and suppose there exists $\epsilon\in\left(0,\dfrac{1}{2n^2C}\right)$ such that
\begin{equation}\label{smalljac}
\max_{k\in \mathcal{K}} \left | \begin{pmatrix} u_{j_0} & u_k\\ v_{j_0} & v_k \end{pmatrix} \right |\leq \epsilon.
\end{equation}
Then $\lambda\in\left[\dfrac{1}{2C},2C\right]$ and $\|\restriction{v}{\mathcal{K}}-\lambda \restriction{u}{\mathcal{K}} \|_1\leq n^2\epsilon$.
\end{lem}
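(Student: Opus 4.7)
The plan is to exploit the algebraic factorization of the determinant and reduce the problem to a simple linear inequality. The first observation is that
\begin{equation*}
\left|\begin{pmatrix} u_{j_0} & u_k \\ v_{j_0} & v_k \end{pmatrix}\right| = |u_{j_0} v_k - u_k v_{j_0}| = u_{j_0}\,|v_k - \lambda u_k|
\end{equation*}
since $u_{j_0} > 0$ by hypothesis and $\lambda = v_{j_0}/u_{j_0}$ is well defined. Using $u_{j_0} \geq 1/n$ together with the assumption \eqref{smalljac}, I immediately get the pointwise bound $|v_k - \lambda u_k| \leq n\epsilon$ for every $k \in \mathcal{K}$.

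Summing this pointwise bound over $\mathcal{K}$, whose cardinality is at most $n$, yields
\begin{equation*}
\|\restriction{v}{\mathcal{K}} - \lambda \restriction{u}{\mathcal{K}}\|_1 \leq |\mathcal{K}|\, n\epsilon \leq n^2\epsilon,
\end{equation*}
which is the second assertion of the lemma. Note that positivity of $u$ and $v$ enters only to guarantee $\lambda \geq 0$, so the absolute values are harmless.

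It remains to bracket $\lambda$. From the triangle and reverse triangle inequalities,
\begin{equation*}
\bigl|\,\lambda \|\restriction{u}{\mathcal{K}}\|_1 - \|\restriction{v}{\mathcal{K}}\|_1\,\bigr| \leq \|\restriction{v}{\mathcal{K}} - \lambda \restriction{u}{\mathcal{K}}\|_1 \leq n^2\epsilon.
\end{equation*}
Combining with the hypotheses $\|\restriction{u}{\mathcal{K}}\|_1, \|\restriction{v}{\mathcal{K}}\|_1 \in (1/C,1)$ and the smallness assumption $\epsilon < 1/(2n^2 C)$ (so that $n^2 \epsilon < 1/(2C)$), I get the lower bound
\begin{equation*}
\lambda \geq \frac{\|\restriction{v}{\mathcal{K}}\|_1 - n^2\epsilon}{\|\restriction{u}{\mathcal{K}}\|_1} \geq \frac{1/C - 1/(2C)}{1} = \frac{1}{2C},
\end{equation*}
and symmetrically the upper bound $\lambda \leq (\|\restriction{v}{\mathcal{K}}\|_1 + n^2 \epsilon)/\|\restriction{u}{\mathcal{K}}\|_1 \leq C(1 + 1/(2C)) \leq 2C$, which completes the proof.

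The proof is essentially mechanical once the factorization $u_{j_0}v_k - u_k v_{j_0} = u_{j_0}(v_k - \lambda u_k)$ is spotted; there is no real obstacle. The only sensitive point is bookkeeping the constants so that the smallness condition $\epsilon < 1/(2n^2 C)$ is exactly what is needed to make the reverse triangle inequality bite, i.e.\ so that the error $n^2\epsilon$ stays comfortably below the lower mass $1/C$ of $\|\restriction{v}{\mathcal{K}}\|_1$.
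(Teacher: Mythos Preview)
Your proof is correct and follows essentially the same approach as the paper: factor the determinant as $u_{j_0}(v_k-\lambda u_k)$, use $u_{j_0}\geq 1/n$ to get the pointwise bound $|v_k-\lambda u_k|\leq n\epsilon$, sum over $\mathcal{K}$ to obtain $\|\restriction{v}{\mathcal{K}}-\lambda\restriction{u}{\mathcal{K}}\|_1\leq n^2\epsilon$, and then bracket $\lambda$ via the resulting inequality $|\lambda\|\restriction{u}{\mathcal{K}}\|_1-\|\restriction{v}{\mathcal{K}}\|_1|\leq n^2\epsilon$ together with the mass bounds on $\|\restriction{u}{\mathcal{K}}\|_1,\|\restriction{v}{\mathcal{K}}\|_1$. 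The paper's version is slightly terser but the logic is identical.
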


\begin{proof}
Take $\lambda:=\dfrac{v_{j_0}}{u_{j_0}}>0$ and for $k\in \mathcal{K}$ define $\delta_k:=v_k-\lambda u_k$. We know from \eqref{smalljac} that $|\delta_k| \leq n\epsilon$. Thus $\|\restriction{v}{\mathcal{K}}-\lambda \restriction{u}{\mathcal{K}} \|_1\leq \sum_{k\in\mathcal{K}}|\delta_k|\leq n^2\epsilon$. Besides, $\sum_{k\in\mathcal{K}} \delta_k= \|\restriction{v}{\mathcal{K}}\|_1-\lambda \|\restriction{u}{\mathcal{K}}\|_1$ so $\lambda=\dfrac{\|\restriction{v}{\mathcal{K}}\|_1-\sum_{k\in\mathcal{K}} \delta_k}{\|\restriction{u}{\mathcal{K}}\|_1}$. Hence 
\begin{align*}
\lambda\leq \dfrac{1+n^2\epsilon}{1/C}\leq 2C,\\
\lambda\geq \dfrac{1}{C}-n^2\epsilon\geq \dfrac{1}{2C}.
\end{align*}
This complete the proof of Lemma~\ref{grad->jac2}.

\end{proof}

\begin{lem}\label{resteiglargehopp}
There exists $C>1$ such that for $E_j(\omega)$  an eigenvalue of $H_\omega(\Lambda_l)$ and $\phi$ a normalized eigenvector, if we define $u\in\R^l$ by
$u(i)=\phi^2(i)$ and if $\mathcal{K}$ and $\mathcal{J}$ are defined as in Lemma~\ref{grad->jac2}, we have $\|\restriction{u}{\mathcal{K}}\|_1\in\left(\dfrac{1}{C},1\right)$.
\end{lem}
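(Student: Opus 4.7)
The plan is to prove by contrapositive the substantive half of the statement, the lower bound $\|\restriction{u}{\mathcal{K}}\|_1 > 1/C$; the upper bound $\|\restriction{u}{\mathcal{K}}\|_1 < 1$ is immediate from $u(j_0) \geq 1/l > 0$. The idea is that if the eigenvector is essentially supported on the three-site window $\mathcal{J} = \{j_0-1,j_0,j_0+1\}$, then the eigenvalue equation at sites $j_0 \pm 1$ and $j_0$ forces an almost-rigid algebraic relation between $E_j^2$ and the two hopping amplitudes $a_\omega(j_0)^2 + a_\omega(j_0+1)^2$, which is violated under the hypotheses of the theorem in which the lemma is applied.

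First I would set $\epsilon := \|\restriction{u}{\mathcal{K}}\|_1$ and derive the bounds $|\phi(j_0)|^2 \geq (1-\epsilon)/3 \geq 1/4$ (for $\epsilon$ small, using that $j_0$ realises the maximum of $|\phi|$) and $|\phi(k)| \leq \sqrt{\epsilon}$ for every $k \in \mathcal{K}$. Assuming $2 \leq j_0 \leq l-1$ (the boundary cases are handled analogously using the Dirichlet convention $\phi(0) = \phi(l+1) = 0$, which merely suppresses error terms), I would then write the eigenvalue equation at the three sites $j_0 - 1, j_0, j_0 + 1$. Since the equations at $j_0 \pm 1$ involve $\phi(j_0 \pm 2) \in \mathcal{K}$, solving for $\phi(j_0 \pm 1)$ yields
\begin{align*}
\phi(j_0-1) &= \frac{a_\omega(j_0)}{E_j}\phi(j_0) + \frac{a_\omega(j_0-1)}{E_j}\phi(j_0-2),\\
\phi(j_0+1) &= \frac{a_\omega(j_0+1)}{E_j}\phi(j_0) + \frac{a_\omega(j_0+2)}{E_j}\phi(j_0+2),
\end{align*}
whose second summands are each bounded by $M\sqrt{\epsilon}/|E_j|$. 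Substituting into the equation at $j_0$ and using $|\phi(j_0)| \geq 1/2$ produces
\begin{equation*}
\left|E_j^2 - \bigl(a_\omega(j_0)^2 + a_\omega(j_0+1)^2\bigr)\right| \leq K\sqrt{\epsilon}
\end{equation*}
for some $K = K(M,|E_j|)$.

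Since the lemma is used in Theorem~\ref{decohopplog} only for eigenvalues $E_j$ close to a fixed positive $E$, the last display contradicts the bound $|E_j^2 - a_\omega(j_0)^2 - a_\omega(j_0+1)^2| \geq \delta$, which holds outside a set of $\omega$ of probability $O(\delta)$ by the bounded density of the pair $(a_\omega(j_0), a_\omega(j_0+1))$, as soon as $\epsilon < (\delta/K)^2$. Choosing $C := 1/\epsilon$ then yields the lemma. The main obstacle is that the bound is not strictly deterministic in $\omega$: it relies on excluding an exceptional event on which $a_\omega(j_0)^2 + a_\omega(j_0+1)^2$ accidentally coincides with $E_j^2$. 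Controlling this event without degrading the decorrelation estimate is precisely why the paper pairs the random variables $(a_\omega(k), a_\omega(k+1))$ by polar coordinates and restricts the Jacobian determinants $J_{i,j}$ to indices with $|i-j| \geq 2$.
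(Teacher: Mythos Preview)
Your argument has a genuine gap: the lemma is a purely deterministic statement (there exists $C>1$ depending only on the model constants such that $\|\restriction{u}{\mathcal K}\|_1>1/C$ for \emph{every} $\omega$), but you conclude only after excluding the random event $\{a_\omega(j_0)^2+a_\omega(j_0+1)^2\approx E_j^2\}$. You yourself flag this as ``the main obstacle'', and it is a real one: nothing in the lemma allows you to throw away a set of $\omega$, and the constant $C$ you obtain would then depend on the chosen $\delta$ and on $j_0$, which is itself random. So as written the proposal does not prove the stated lemma.

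The paper avoids this issue entirely by appealing to the Pr\"ufer--variable growth bound (Lemma~\ref{progress} in Appendix~A): since $r_\phi(n)^2=\phi(n)^2+\phi(n-1)^2$ changes by at most a fixed factor $C_0$ from site to site, one has
\[
\phi(j_0+2)^2+\phi(j_0+3)^2=r_\phi(j_0+3)^2\geq C_0^{-4}\,r_\phi(j_0+1)^2\geq C_0^{-4}\,\phi(j_0)^2=C_0^{-4}\,u(j_0),
\]
and both $j_0+2,j_0+3\in\mathcal K$ (or the symmetric inequality on the left if $j_0$ is near the right boundary). Thus $S_1:=\|\restriction{u}{\mathcal K}\|_1\geq u(j_0)/C$ \emph{deterministically}. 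Then either $S_1\geq 1/2$, or $u(j_0-1)+u(j_0)+u(j_0+1)\geq 1/2$, which forces $u(j_0)\geq 1/6$ and hence $S_1\geq 1/(6C)$. No probability is needed, and the constant depends only on $M$.

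Incidentally, your own route can be repaired to a deterministic argument: instead of using the eigenvalue equation at $j_0\pm1$, use it at $j_0+2$ (resp.\ $j_0-2$), which reads $a_\omega(j_0+2)\phi(j_0+1)=E_j\phi(j_0+2)-a_\omega(j_0+3)\phi(j_0+3)$ and forces $|\phi(j_0\pm1)|=O(\sqrt\epsilon)$; then the equation at $j_0$ gives $|E_j|\,|\phi(j_0)|=O(\sqrt\epsilon)$, contradicting $|\phi(j_0)|\geq 1/2$ and $E_j$ close to the fixed $E>0$ in Theorem~\ref{decohopplog}. This, however, still uses $E_j$ bounded away from~$0$, whereas the paper's Pr\"ufer argument requires no such hypothesis and delivers the lemma exactly as stated.
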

\begin{proof}
As for all $n\in\N$, $|a_\omega(n)|\in\left[\dfrac{1}{M},M\right]$, using the exponential growth of the eigenvector as in the proof of Lemma~\ref{lem-large}, there exists $C>1$ (only depending on $M$) such that for $l$ sufficiently large $S_1:=\sum_{k\in\mathcal{K}}u(k)\geq \dfrac{u(j_0)}{C}$. As $\sum_{k\in\{1,\dots,l\}}u(k) = 1$, either $S_1\geq \dfrac{1}{2}$ or $u(j_0-1)+u(j_0)+u(j_0+1)\geq \dfrac{1}{2}$. In the latter case, as $u(j_0)=\max_j u(j)$, we have $u(j_0)\geq \dfrac{1}{6}$ thus $S_1\geq \dfrac{1}{6C}$. This complete the proof of Lemma~\ref{resteiglargehopp}.

\end{proof}

Let $j_0$, $\mathcal{J}$ and $\mathcal{K}$ be as in Lemma~\ref{grad->jac2}. We are now able to give a proof of  

\begin{lem}\label{probcoli2}
Fix $E,E'\in\R_+^*$ with $E\neq E'$ and $\beta>1/2$. Let $\mathbb{P}$ denotes the probability that there exist $E_n(\omega)$ and $E_m(\omega)$, simple eigenvalues of $H_\omega(\Lambda_l)$ in $[E-e^{-l^\beta},E+e^{-l^\beta}]$ and $[E'-e^{-l^\beta},E'+e^{-l^\beta}]$ such that if $\phi,\psi$ are normalized eigenvector associated to $E_n$ and $E_m$ we have
\begin{equation}
\left\|\lambda\restriction{\phi^2}{\mathcal{K}}-\restriction{\psi^2}{\mathcal{K}}\right\|\leq e^{-l^\beta},
\end{equation}
then there exists $c>0$ such that
\begin{equation}
\mathbb{P}\leq e^{-c l^{2\beta}}.
\end{equation}
\end{lem}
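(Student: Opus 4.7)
The approach is to adapt the strategy of Lemma~\ref{probcoli} and Lemma~\ref{probcolialloy} to the hopping model; the novelty is that the gradients $\partial_{a_\omega(k)}E_j = 2\phi(k)\phi(k-1)$ are not sign-definite.

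First, from the hypothesis together with Lemma~\ref{resteiglargehopp}, at each site $k\in\mathcal{K}$ where $|\phi(k)| \geq e^{-l^\beta/8}$ there exists $\epsilon_k \in \{\pm 1\}$ such that $|\psi(k) - \epsilon_k\sqrt{\lambda}\phi(k)| \leq e^{-l^\beta/4}$. Let $\mathcal{P} = \{k : \epsilon_k = +1\}$, $\mathcal{Q} = \{k : \epsilon_k = -1\}$ and $P,Q$ the associated orthogonal projections on $\ell^2(\Lambda_l)$. Applying $P$ and $Q$ to the eigenvalue equations $\Delta_a\phi = E_n\phi$ and $\Delta_a\psi = E_m\psi$, and substituting $\psi = \sqrt{\lambda}(P-Q)\phi + O(e^{-l^\beta/4})$, then taking sums and differences, yields four approximate identities
\begin{align*}
2P\Delta_a P\phi &\approx (E_n+E_m)P\phi, & 2P\Delta_a Q\phi &\approx (E_n-E_m)P\phi,\\
2Q\Delta_a Q\phi &\approx (E_n+E_m)Q\phi, & 2Q\Delta_a P\phi &\approx (E_n-E_m)Q\phi,
\end{align*}
all valid modulo errors of order $O(e^{-l^\beta/4})$.

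At any site $n$ with $|\phi(n)| \geq e^{-l^\beta/8}$, evaluating these identities and combining with the $\phi$-equation gives constraints depending on the sign pattern $(\epsilon_{n-1},\epsilon_n,\epsilon_{n+1})$. When all three signs agree, one obtains $(E_n-E_m)\phi(n) \approx 0$, contradicting $|\phi(n)| \geq e^{-l^\beta/8}$ since $|E-E'|$ is bounded away from zero. When exactly one neighbor carries the opposite sign, one obtains a relation of the form $a_\omega(n)\phi(n-1) \approx \tfrac{E_n+E_m}{2}\phi(n)$ or $a_\omega(n+1)\phi(n+1) \approx \tfrac{E_n-E_m}{2}\phi(n)$. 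When both neighbors carry the opposite sign, both relations hold simultaneously. Lemma~\ref{lem-large} guarantees an interval $I$ of length comparable to $l^\beta$ on which $\phi$ cannot be small on two consecutive sites, so this analysis produces at least $cl^\beta$ distinct constraints, each pinning down some $a_\omega(n)$ to an interval of length $O(e^{-l^\beta/4})$ conditionally on the values of $\phi$.

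To conclude, we union-bound over the combinatorial data: at most $2^l$ sign patterns $\epsilon$, $l$ choices of $j_0$, and $2^l$ configurations of the ``large $|\phi|$'' set. For each fixed choice, we extract a subcollection of $\geq c'l^\beta$ constraints whose constrained variables $a_\omega(n)$ are all distinct; since $(a_\omega(n))_n$ is i.i.d.\ with bounded density, the resulting joint probability is bounded by $(Ce^{-l^\beta/4})^{c'l^\beta}$. Multiplying by the combinatorial factor yields $\mathbb{P} \leq 2^{3l}(Ce^{-l^\beta/4})^{c'l^\beta} \leq e^{-cl^{2\beta}}$, using $\beta > 1/2$. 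The main obstacle is this last step: since $\phi$ depends on every $a_\omega(n)$, one must show that the constraints, though stated in terms of $\phi$-values, can be re-expressed as bounded-density constraints on a genuinely free subset of the $(a_\omega(n))_n$. This is the hopping analogue of Lemma~\ref{condm} and is achievable via the recursive dependence $\phi(k+1) = (E_n\phi(k)-a_\omega(k)\phi(k-1))/a_\omega(k+1)$: one fixes auxiliary variables determining $\phi$ while leaving the constrained $a_\omega(n)$'s free to carry the bounded-density estimate.
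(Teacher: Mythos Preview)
Your overall strategy parallels the paper's: decompose $\mathcal{K}$ into $\mathcal{P}\cup\mathcal{Q}$ by sign, combine the two eigenvalue equations, and extract many constraints on the random variables. The four block identities you write down are essentially the paper's equations \eqref{combeq3-1hopp}--\eqref{combeq3-2hopp} in disguise, and your observation that three consecutive equal signs force $(E_n-E_m)\phi(n)\approx 0$ corresponds exactly to the paper's remark that $\mathcal{C}^c$ cannot contain two consecutive points of $\tilde{I}$.

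The genuine gap is precisely the one you flag as ``the main obstacle.'' Your constraints have the form $a_\omega(n)\phi(n-1)\approx c\,\phi(n)$, which still involve the eigenvector $\phi$; since $\phi$ depends on \emph{every} $a_\omega(k)$, the conditioning scheme you propose (fix auxiliary variables to determine $\phi$, leave the constrained ones free) does not work as stated---changing any $a_\omega(n)$ changes $\phi$ globally, so there is no free subset carrying a clean bounded-density estimate. This is not the hopping analogue of Lemma~\ref{condm}; that lemma only had to invert a fixed linear map $(\omega_j)\mapsto(z_j)$, whereas here the map $(a_\omega)\mapsto\phi$ is highly nonlinear.

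The paper avoids this difficulty by pushing one algebraic step further. It shows that inside $\tilde{I}$ every interval $D_k$ (and likewise $C_k$) has length exactly~$2$; writing $D_k=\{n_-,n_+\}$, the two scalar equations at $n_-$ and $n_+$ form a $2\times 2$ linear system
\[
\begin{pmatrix} E+E' & -a_\omega(n_+)\\ -a_\omega(n_+) & E+E'\end{pmatrix}
\begin{pmatrix}\phi(n_-)\\ \phi(n_+)\end{pmatrix}=O(e^{-l^\beta}).
\]
Since $(\phi(n_-),\phi(n_+))$ is not small (by Lemma~\ref{lem-large}), the determinant must be, yielding $|a_\omega(n_+)^2-(E+E')^2|\leq e^{-l^\beta/2}$: a constraint \emph{purely} on the single i.i.d.\ variable $a_\omega(n_+)$, with $\phi$ eliminated. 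The analogous argument for $C_k$ gives $|a_\omega(m_+)^2-(E-E')^2|\leq e^{-l^\beta/2}$. With $\phi$ gone, the probability bound is immediate. You could reach the same conclusion within your framework by combining your constraint at $n$ with the one at the adjacent site and eliminating the ratio $\phi(n)/\phi(n\pm1)$; the missing ingredient is this pairing of adjacent constraints to cancel the eigenvector, not a conditioning argument.
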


\begin{proof}
The proof follows the one of Lemma~\ref{probcoli}. There exist $\mathcal{P},\mathcal{Q}$ with $\mathcal{P}\cap\mathcal{Q}=\emptyset$ and $\mathcal{P}\cup\mathcal{Q}=\mathcal{K}$ such that
\begin{align*}
	\forall n\in\mathcal{P}, |\sqrt{\lambda}\phi_j(n)-\phi_k(n)|\leq e^{-l^\beta/2}\text{ ,}\\
	\forall n\in\mathcal{Q}, |\sqrt{\lambda}\phi_j(n)+\phi_k(n)|\leq e^{-l^\beta/2}\text{ .}
\end{align*}
Let P (respectively Q and J) be the projector on $\ell^2(\mathcal{P})$ (respectively $\ell^2(\mathcal{Q})$ and $\ell^2(\mathcal{J}))$. Define $\tilde{P}:=P+J$. Then the eigenvalues equations are
\begin{align*}
	(\Delta_a-E)(\tilde{P}\phi+Q\phi)=h_E\text{ ,}\\
	(\Delta_a-E')(\tilde{P}\phi-Q\phi)+H_\omega(E')\left(\dfrac{J\psi}{\sqrt{\lambda}}-J\phi\right)=\dfrac{h_{E'}}{\sqrt{\lambda}}\text{ .}
\end{align*}
where $|h_E|+\dfrac{|h_{E'}|}{\sqrt{\lambda}}\leq Ce^{-l^\beta}$. Combining these equations as in the proof of Lemma~\ref{probcoli}, there exists two collection of intervals $(C_k)_k$ and $(D_k)_k$ such that if we define $\mathcal{C}:=\cup_k C_k$ and $\mathcal{D}:=\cup_k D_k$, then $\mathcal{D}=\{1,\dots,l\}-\overset{\circ}{C}$ and 
\begin{align}
\left[\sum_k C_k\Delta_a C_k+(E'-E)\right]\phi=h_1-H_\omega(E')\left(\dfrac{J\psi}{\sqrt{\lambda}}-J\phi\right),\label{combeq3-1hopp}\\
\left[\sum_k D_k\Delta_a D_k+(E'+E)\right]\phi=h_2+H_\omega(E')\left(\dfrac{J\psi}{\sqrt{\lambda}}-J\phi\right).\label{combeq3-2hopp}
\end{align}
 
 The Lemma~\ref{lem-large} is still valid with the hopping model. Thus we define $I$ as in Lemma~\ref{lem-large} and we define $\tilde{\mathcal{J}}:=\mathcal{J}\cup\{j_0-2,j_0+2\}$, $\tilde{\mathcal{K}}:=\mathcal{K}-\{j_0-2,j_0+2\}$ and $\tilde{I}=\tilde{\mathcal{K}}\cap I$ . The vector $H_\omega(E')\left(\dfrac{J\psi}{\sqrt{\lambda}}-J\phi\right)$ has support in $\tilde{\mathcal{J}}$. There are at most four sets $C_k$ that are not entirely contained in $\tilde{I}$. 
 
 Fix $n\in \tilde{I}$. If $n\notin \mathcal{C}$, \eqref{combeq3-1hopp} shows that $|\phi(n)|\leq \dfrac{e^{-l^\beta/2}}{|E-E'|}$. Thus, using Lemma~\ref{lem-large}, for L large enough, $\mathcal{C}^c$ cannot have two consecutive points in $\tilde{I}$. As $\mathcal{C}^c=\mathring{\mathcal{D}}$ , if $D_k$ is included in $\tilde{I}$, $|D_k|\in\{2,3\}$; this assumption we will be used from now on.
 
 Suppose $|D_k|=3$, $D_k:=\{n_-,n,n_+\}$. Writing \eqref{combeq3-2hopp} at point $n_+$ we obtain
\begin{equation}\label{notlength2}
\left|\phi(n_+)-\dfrac{a_\omega(n_+)}{E+E'}\phi(n)\right|\leq \dfrac{e^{-l^\beta/2}}{E+E'}.
\end{equation}
As $E>0$ and $|\phi(n)|< e^{-l^\beta/2}$, for L sufficiently  large we obtain $|\phi(n_+)|< e^{-l^\beta /2}$ which is in contradiction with the definition of $I$. This implies that $|D_k|=2$.

Suppose $D_k:=\{n_-,n_+\}$. Then, writing \eqref{combeq3-2hopp} at points $n_+$ and $n_-$ we obtain the following equation :

\begin{equation}
\left\|\begin{pmatrix}
E+E' &  -a_\omega(n_+)\\
-a_\omega(n_+) & E+E' 
\end{pmatrix}
\begin{pmatrix}
\phi(n_-)\\
\phi(n_+)
\end{pmatrix}
\right\|
\leq e^{-l^\beta}.
\end{equation}

As $D_k\subset I$, 
$\left\|\begin{pmatrix}
\phi(n_-)\\
\phi(n_+)
\end{pmatrix}
\right\|
\geq e^{-l^\beta/2}.$ Thus we obtain the following condition on $\omega_{n_+}$ : 
\begin{equation}\label{condhopp1}
\left|a_\omega(n_+)^2-(E+E')^2\right|\leq e^{-l^\beta/2}.
\end{equation}
A same reasoning shows that if $C_k$ is included in $\tilde{I}$, $|C_k|=2$ and if we write $C_k:=\{m_-,m_+\}$ we obtain the following condition on $\omega_{m_+}$ : 
\begin{equation}\label{condhopp2}
\left|a_\omega(m_+)^2-(E-E')^2\right|\leq e^{-l^\beta/2}.
\end{equation}

Now suppose $C_k$ (respectively $D_k$) is one of the four interval that are partially in $\tilde{I}$. Using an equation similar to \eqref{notlength2} at the edge included in $\tilde{I}$ we show that $|C_k\cap\tilde{I}|=1$ (respectively $|D_k\cap\tilde{I}|=1$). 
Now, as a point $m_+$ of an interval $C_k$ included in $\tilde{I}$ is equal to a point $n_-$ of an interval $D_{k'}$, we obtain $|\tilde{I}-4|$ conditions on the $(a_\omega(n))_n$ of type $\eqref{condhopp1}$ or $\eqref{condhopp2}$. There is $l$ choices for $j_0$, $2^l$ choices for $\mathcal{P}$ and $\mathcal{Q}$. This and the fact that $|\tilde{I}|\geq cl^\beta$ complete the proof of Lemma~\ref{probcoli2}.
\end{proof}

Thus, using Lemma~\ref{grad->jac2}, when the vectors $\phi^2$ and $\psi^2$ are not almost collinear, there exists $j\in\{1,\dots, l\}$ with $|j-j_0(\omega)|\geq 2$ such that 
\begin{equation}\label{bigjachopp}
\begin{vmatrix}
\phi^2(j_0)& \psi^2(j_0)\\
\phi^2(j)& \psi^2(j)
\end{vmatrix}
\geq e^{-l^{\beta}/2}.
\end{equation}
There is less than $l^2$ choices for $(j_0,j)$.

Now suppose that $(j_0,j)$ satisfies \eqref{bigjachopp} and fix $(a_\omega(n))_{n\notin\{j,j_+1,j_0,j_0+1\}}$.
From now on, we will only write the dependence of $E_n$ and $E_m$ on \\
$(a_\omega(n))_{n\in\{j,j+1,j_0,j_0+1\}}$, i.e :
\begin{displaymath}
E_p(\omega)=E_p\left(a_\omega(j),a_\omega(j+1),a_\omega(j_0),a_\omega(j_0+1)
\right)
\end{displaymath} 
for $p\in\{n,m\}$.
For $(r_1,\theta_1,r_2,\theta_2)\in \left(\R^*_+\times[0,2\pi)\right)^2$ define
\begin{equation}
F_n\left(r_1,\theta_1 ,r_2,\theta_2\right):=E_n\left(r_1 \cos\theta_1,r_1 \sin \theta_1,r_2 \cos \theta_2 ,r_2 \sin \theta_2\right).
\end{equation}
Since $M\geq|a_\omega(n)|\geq \dfrac{1}{M}$, we restrain our study to $(r_1,r_2)\in\left[\dfrac{1}{\sqrt{2}M},\sqrt{2}M\right]^2$. We compute
\begin{multline*}
\partial_{r_1} F_n :=  \cos\theta_1 \partial_{a_\omega(j)}E_n\left(r_1 \cos\theta_1,r_1 \sin \theta_1,r_2 \cos \theta_2 ,r_2 \sin \theta_2\right) \\
+\sin \theta_1 \partial_{a_\omega(j+1)}E_n\left(r_1 \cos\theta_1,r_1 \sin \theta_1,r_2 \cos \theta_2 ,r_2 \sin \theta_2\right).
\end{multline*}
Thus,
\begin{equation}\label{polargrad}
r_1 \partial_{r_1} F_n = 2 E_n \phi^2(j) \text{ and } r_2 \partial_{r_2} F_n = 2 E_n \phi^2(j_0).
\end{equation}
Besides,
\begin{equation}
\max_{(p,q)\in\{1,2\}^2}\left|\partial{r_p}\partial{r_q}F_n \right|\leq 4 \max_{k,k'} \left|\partial_{a_\omega(k)}\partial_{a_\omega(k')}E_n(\omega)\right|
\leq 4\|Hess_{a_\omega} E_n(\omega)\|.
\end{equation}
Now fix $(\theta_1,\theta_2)$. Let $J(r_1,r_2)$ be the Jacobian of the mapping \\$(r_1,r_2)\rightarrow (F_n(r_1,r_2),F_m(r_1,r_2))$ : 
\begin{equation}\label{defjachopp}
J_{j,j_0}(r_1,r_2)=\left \vert \begin{pmatrix} \partial_{r_1} F_n & \partial_{r_2} F_n\\ \partial_{r_1} F_m &\partial_{r_2} F_m
\end{pmatrix} \right \vert .
\end{equation} 
 Define $\lambda:= E'E e^{-l^\beta/2}$ and define
\begin{displaymath}
\Omega^{j,j_0}_{0,\beta}(\epsilon)= \Omega_0(\epsilon)\cap \left \{ a_\omega ;|J_{j,j_0}(r_1,r_2)|\geq \lambda \right\}.
\end{displaymath} 
Using Lemma~\ref{probcoli2}, we have
\begin{equation}
\mathbb{P}\left(\Omega_0(\epsilon)\right)\ \leq e^{-l^{2\beta}}+ l \sum_j \mathbb{P}\left(\Omega^{j,j_0}_{0,\beta}(\epsilon)\right).
\end{equation}
It remains to evaluate $\mathbb{P}\left(\Omega^{j,j_0}_{0,\beta}(\epsilon)\right)$. Lemma~\ref{Hessien} is still valid for the hopping model. Thus, when $(r_1,r_2)\in\Omega^{j,j_0}_{0,\beta}(\epsilon)$, $\|Hess_r F_n\|+  \|Hess_r F_m\| \leq \dfrac{C}{\epsilon}$
Therefore, we obtain the following lemma which corresponds to Lemma~\ref{square} and is proven in the same way.
\begin{lem}\label{squarehopp}
Pick $\epsilon= L^{-\alpha}$. For any $(a_\omega(n))_{n\notin\{j,j_+1,j_0,j_0+1\}}$, for any $(\theta_1,\theta_2)$, if there exists $(r_1^0,r_2^0)\in \R_+^2$ such that $a_\omega\in\Omega^{\gamma,\gamma'}_{0,\beta}(\epsilon)$ then for $(r_1,r_2) \in \R_+^2$ such that $|(r_1,r_2)-(r_1^0,r_2^0)|_\infty\leq \epsilon$ one has $(F_n(a_\omega),F_m(a_\omega))\in J_L\times J_L'\Longleftrightarrow |(r_1,r_2)-(r_1^0,r_2^0)|_\infty\leq L^{-1}\lambda^{-2}$. 
\end{lem}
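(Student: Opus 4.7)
\textbf{Proof plan for Lemma~\ref{squarehopp}.} The plan is to carry over the argument of Lemma~\ref{square} after changing variables from the Cartesian pairs $(a_\omega(j),a_\omega(j+1))$, $(a_\omega(j_0),a_\omega(j_0+1))$ to the polar radii $(r_1,r_2)$ at the fixed angles $(\theta_1,\theta_2)$. The three ingredients that made Lemma~\ref{square} work are: coordinate-wise monotonicity of the eigenvalues, a lower bound on the Jacobian at the reference point, and a Hessian upper bound coming from Lemma~\ref{Hessien}. I will check that each ingredient survives the change of variables; once this is done the conclusion is obtained by a quantitative inverse function theorem at $(r_1^0,r_2^0)$.

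For monotonicity, the raw hopping variables $a_\omega(n)$ are not monotone parameters for the eigenvalues, but identity \eqref{polargrad} gives $r_p\,\partial_{r_p}F_q = 2E_q\,\phi_q^2(\cdot)\geq 0$ on the positive-eigenvalue branch we are working with (recall $E>0$, $E'>0$, $F_n\in J_L$, $F_m\in J'_L$). Hence each map $r_p\mapsto F_q$ is non-decreasing, and the admissible $\epsilon$-squares can be organized along a non-increasing broken line joining the upper-left to the lower-right corner of the allowed rectangle for $(r_1,r_2)$, exactly as in the proof of Lemma~\ref{square}.

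For the Jacobian and Hessian, the defining property of $\Omega^{j,j_0}_{0,\beta}(\epsilon)$ yields $|J_{j,j_0}(r_1^0,r_2^0)|\geq \lambda$. The entries of the Cartesian gradient are bounded by $2$ since $|\partial_{a_\omega(k)}E_q| = 2|\phi_q(k)\phi_q(k-1)|\leq 2$; because the angles are fixed and the radii are restricted to the compact interval $[1/(\sqrt 2 M),\sqrt 2 M]$, the chain rule preserves such bounds in polar coordinates, giving $\|DF(r^0)\|_\infty\leq C$. The Hessian bound $\|\mathrm{Hess}_r F_n\|+\|\mathrm{Hess}_r F_m\|\leq C/\epsilon$, valid off an exceptional set of probability $\lesssim L^{-2\alpha}\lambda$ that is absorbed in the final estimate, is precisely the bound obtained from Lemma~\ref{Hessien} and Theorem~\ref{mina-D} in the paragraph preceding the statement.

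With these ingredients in hand, Taylor expansion at $(r_1^0,r_2^0)$ gives $F(r)-F(r^0) = DF(r^0)(r-r^0) + R$ with $|R|\leq C\epsilon^{-1}|r-r^0|^2$. The $2\times 2$ cofactor formula combined with $|\det DF(r^0)|\geq\lambda$ and $\|DF(r^0)\|_\infty\leq C$ gives $\|DF(r^0)^{-1}\|_\infty\leq C/\lambda$. If $(F_n,F_m)(a_\omega)\in J_L\times J'_L$ then $|F(r)-F(r^0)|_\infty\leq 2L^{-1}$, and inverting the linear part yields $|r-r^0|_\infty\leq (C/\lambda)(L^{-1}+\epsilon^{-1}|r-r^0|^2)$; combined with the a priori bound $|r-r^0|_\infty\leq\epsilon$ a standard bootstrap produces $|r-r^0|_\infty\lesssim L^{-1}\lambda^{-2}$, as claimed. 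The only point requiring care is the polar-to-Cartesian transfer of the Jacobian and Hessian bounds, but the uniform confinement $r_p\in[1/(\sqrt 2 M),\sqrt 2 M]$ keeps all conversion factors uniformly bounded; once this is verified the proof is verbatim that of Lemma~\ref{square}.
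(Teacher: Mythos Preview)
Your proposal is correct and follows essentially the same route as the paper, which simply remarks that Lemma~\ref{squarehopp} ``corresponds to Lemma~\ref{square} and is proven in the same way'': transfer the Jacobian lower bound and the Hessian control of Lemma~\ref{Hessien} to the polar variables $(r_1,r_2)$ (the confinement $r_p\in[1/(\sqrt 2 M),\sqrt 2 M]$ making all conversion factors harmless), then run the quantitative local inversion argument from \cite{K10}. One small organizational point: the monotonicity/broken-line discussion you put in your second paragraph is not part of the proof of Lemma~\ref{square} itself; in the paper that argument appears \emph{after} the lemma and is used to count the admissible $\epsilon$-squares leading to \eqref{proba2hopp}. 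It is still the right observation for the hopping model (via \eqref{polargrad}), just placed one step too early.
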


Thus, coming back to random variables and using independence, we obtain the following equation, which is proven in the same way that \eqref{proba2}.

\begin{equation}\label{proba2hopp}
\mathbb{P}(\Omega^{\gamma,\gamma'}_{0,\beta}(\epsilon))\leq CL^{\alpha-2}\lambda^{-4}.
\end{equation} 
Optimization in $\alpha$ yields $\alpha=\dfrac{4}{3}$. This complete the proofs of Theorem~\ref{decohopplog} and Theorem~\ref{decohopp}.

\section{Spectral statistics}

Since we study models where we have (IAD) and (W), and since we have proved Theorem~\ref{mina} and (D), we can follow the strategy developed in \cite{2010arXiv1011.1832G} to study the spectral statistics of these models. 

In order to use the same notation as in \cite{2010arXiv1011.1832G}, in this section we will use the following Wegner and Minami estimates :

\textbf{(W) :} There exists $C>0$, $m\geq 1$, such that for $J\subset \mathcal{I}$ and $\Lambda\subset\N$
\begin{equation}
\mathbb{P}\Big[tr \left(\textbf{1}_J(H_\omega(\Lambda)) \right)\geq 1\Big ]\leq C |J||\Lambda|^m.
\end{equation}

\textbf{(M) :} There exists $C>0$, $\rho> 0$  such that for $J\subset \mathcal{I}$ and $\Lambda\subset\N$
\begin{equation}
\mathbb{E}\Big[tr \left(\textbf{1}_J(H_\omega(\Lambda)) \right)[tr \left(\textbf{1}_I(H_\omega(\Lambda))-1 )\right]\geq 1\Big ]\leq C (|J||\Lambda|)^{1+\rho}.
\end{equation}

The difference between assumption (W) and (M) in the present article and the ones in \cite{2010arXiv1011.1832G} is the exponent $m$ in (W). The results in \cite{2010arXiv1011.1832G} were proved with a Wegner estimate linear in volume, i.e $m=1$. As (M) is a consequence of (W), there is no modification in (M).
The next lemma corresponds to \cite[Lemma 3.1]{2010arXiv1011.1832G}, has a proof similar to that of Lemma~\ref{loc} and will be used to described the eigenvalues of $H_\omega(\Lambda)$ in terms of eigenvalues of $H_\omega$ restricted to smaller cubes. We use (Loc)(I) if theses cubes are polynomially large and (Loc)(II) if they are logarithmically large so that the probability of the complement of the set given in the assumption is polynomially small.

\begin{lem}
Assume (IAD), (W), (M) and (Loc). Consider scales $l',l$ so that $(RLog|\Lambda_L|)^{1/\xi}\leq l' \ll l \ll L$ for $R$ large enough, and, for some given $\gamma\in \Lambda_L$, consider a box $\Lambda_l(\gamma)$ such that $\Lambda_{l-l'}(\gamma) \subset \Lambda_L$. Let $\mathcal{W}_{\Lambda_L}$ be the set $\mathcal{U}_{\Lambda_L}$ or $\mathcal{V}_{\Lambda_L}$, defined in (Loc). For L large enough, we have : 
\begin{enumerate}
\item for any $\omega\in\mathcal{W}_{\Lambda_L}$, if $E(\omega)$ is an eigenvalue of $H_\omega(\Lambda_L)$  with a center of localization in $\Lambda_{l-l'}(\gamma)$, then $H_\omega(\Lambda_L\cup \Lambda_l(\gamma))$ has an eigenvalue in a $[E(\omega)-e^{-(l')^\xi/2},E(\omega)+e^{-(l')^\xi/2}]$; moreover, if $\omega\in\mathcal{W}_{\Lambda_l(\gamma)}$, the corresponding eigenfunction is localized in sense of \eqref{expdec}.
\item
Assume now additionally that $\Lambda_l(\gamma)\subset \Lambda_L$. Then, conversely, for any $\omega\in \mathcal{W}_{\Lambda_l(\gamma)}$, if $E'(\omega)$ is an eigenvalue of $H_\omega(\Lambda_l(\gamma))$ with centre of localization  in $\Lambda_{l-l'}(\gamma)$, then $H_\omega(\Lambda_L)$ has an eigenvalue in $[E'(\omega)-e^{-(l')^\xi/2},E'(\omega)+e^{-(l')^\xi/2}]$; moreover, if $\omega\in\mathcal{W}_{\Lambda_L}$, the corresponding eigenfunction is localized in sense of \eqref{expdec}.
\end{enumerate}

\end{lem}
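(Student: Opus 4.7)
The approach is the standard ``truncate and produce a quasi-mode'' argument, applied symmetrically in the two directions and driven entirely by the pointwise eigenfunction decay provided by (Loc). I treat part (1) first. Let $\omega \in \mathcal{W}_{\Lambda_L}$ and let $\psi$ be a normalized eigenfunction of $H_\omega(\Lambda_L)$ with eigenvalue $E(\omega)$ and localization center $x_0 \in \Lambda_{l-l'}(\gamma)$. Depending on whether $\mathcal{W}_{\Lambda_L}=\mathcal{U}_{\Lambda_L}$ or $\mathcal{V}_{\Lambda_L}$, we have a pointwise bound
\begin{equation*}
|\psi(x)| \leq C_L \, e^{-|x-x_0|^\xi},
\end{equation*}
where $C_L$ is at worst polynomial in $|\Lambda_L|$ (from (Loc)(I)) or at worst $e^{2L^\nu}$ (from (Loc)(II)). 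The hypothesis $l' \geq (R\log|\Lambda_L|)^{1/\xi}$ with $R$ large enough is precisely what forces $C_L\, e^{-(l')^\xi}\leq e^{-(l')^\xi/2}/4$ for $L$ large.

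Next I set $\tilde\psi := \mathbf{1}_{\Lambda_l(\gamma)}\psi$, regarded as an element of $\ell^2(\Lambda_L\cup\Lambda_l(\gamma))$ after extending $\psi$ by $0$ outside $\Lambda_L$; the union appears in the statement precisely to accommodate the case when $\Lambda_l(\gamma)$ protrudes from $\Lambda_L$. Applying $H_\omega$ on this larger region, the only surviving contributions in $(H_\omega - E(\omega))\tilde\psi$ come from the hopping terms across $\partial\Lambda_l(\gamma)$, and each such site sits at distance at least $l'$ from $x_0$. Using the uniform bound $|a_\omega(n)|\leq M$, I get
\begin{equation*}
\|(H_\omega(\Lambda_L\cup\Lambda_l(\gamma))-E(\omega))\tilde\psi\|_2 \;\leq\; C\,C_L\,e^{-(l')^\xi} \;\leq\; \tfrac{1}{4}e^{-(l')^\xi/2}.
\end{equation*}
The same decay estimate yields $\|\psi-\tilde\psi\|_2 \leq C C_L e^{-(l')^\xi/2} \leq 1/2$, hence $\|\tilde\psi\|_2\geq 1/2$. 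The spectral theorem for self-adjoint operators then produces an eigenvalue of $H_\omega(\Lambda_L\cup\Lambda_l(\gamma))$ in $[E(\omega)-e^{-(l')^\xi/2},E(\omega)+e^{-(l')^\xi/2}]$. For the ``moreover'' clause, if additionally $\omega\in\mathcal{W}_{\Lambda_l(\gamma)}$, then (Loc) applied to that operator forces the corresponding eigenfunction to satisfy \eqref{expdec}, and an eigenfunction-matching argument shows that its localization center sits at distance $O(l')$ of $x_0$. Part (2) is obtained by exchanging the roles of the two operators; the extra hypothesis $\Lambda_l(\gamma)\subset\Lambda_L$ is what allows one to extend an eigenfunction of $H_\omega(\Lambda_l(\gamma))$ by zero to $\Lambda_L$ and run the same truncation estimate with $l'$ in the role of the boundary distance.

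The main obstacle is bookkeeping rather than conceptual: one must simultaneously handle the two quite different prefactors $C_L$ arising from (Loc)(I) and (Loc)(II), and verify that a single condition of the form $l'\geq (R\log|\Lambda_L|)^{1/\xi}$ (for large $R$) suffices in both regimes that are used later, namely $l'\asymp(\log L)^{1/\xi}$ combined with polynomial prefactors, and $l'\asymp L^\alpha$ combined with $e^{2L^\nu}$ prefactors for some $\nu<\alpha\xi$. A secondary subtlety, already alluded to, is the geometric case where $\Lambda_l(\gamma)$ is not contained in $\Lambda_L$; this forces the reference operator in part (1) to be $H_\omega(\Lambda_L\cup\Lambda_l(\gamma))$ rather than $H_\omega(\Lambda_l(\gamma))$, but introduces no new analytic difficulty, since the zero extension of $\psi$ does not generate any boundary terms at the outer boundary of $\Lambda_L$ beyond those already accounted for.
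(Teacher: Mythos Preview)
Your proposal is correct and follows precisely the approach the paper defers to: the paper does not give a self-contained proof of this lemma but simply states that it ``corresponds to \cite[Lemma 3.1]{2010arXiv1011.1832G}'' and ``has a proof similar to that of Lemma~\ref{loc}'', both of which are established by exactly the truncation/quasi-mode argument you outline. Your bookkeeping of the two prefactor regimes coming from (Loc)(I) versus (Loc)(II), and of the geometric subtlety when $\Lambda_l(\gamma)\not\subset\Lambda_L$, is exactly what is needed and matches the referenced proofs.
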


Now pick $\tilde{\rho}\in\left[0,\dfrac{\rho}{1+\rho}\right[$ where $\rho$ is defined in (M) and pick $E_0\in I$ such that 
\begin{equation}\label{IDS1}
\forall a>b, \exists C>0, \exists \epsilon_0, \forall 0<\epsilon<\epsilon_0, |N(E_0+a\epsilon)-N(E_0-b\epsilon)|\geq C\epsilon^{1+\tilde{\rho}}.
\end{equation}
Define $\alpha_{\rho,\rho'}:=(1+\tilde{\rho})\left(1-\dfrac{1-\gamma}{\rho\gamma+1}\right)<1$ where $\gamma:=\dfrac{\rho}{1+\rho}$.\\
We will use the two following decomposition theorems. In the first we control all eigenvalues in intervals of size $\frac{1}{|\Lambda|^\alpha}$, in the second, we control most eigenvalues in intervals of size $(\log |\Lambda|)^{-1/\xi}$. 

\paragraph{Controlling all eigenvalues}
\text{ }\\
The following theorem correspond to Theorem 1.1 in \cite{2010arXiv1011.1832G}.
\begin{theo}\label{alleig}
Assume $E_0$ is such that \eqref{IDS1} holds and pick $\alpha\in(\alpha_{\rho,\rho'},1)$
Pick $I_\Lambda$ centered at $E_0$ such that $N(I_\Lambda)\asymp |\Lambda_L|^{-\alpha}$. There exists $\beta>0$ and $\beta'\in(0,\beta)$ small so that $1+\beta\rho<\alpha\dfrac{1+\rho}{1+\tilde{\rho}}$ and, for $l\asymp L^\beta$ and $l'\asymp L^{\beta'}$, there exists a decomposition of $\Lambda=\Lambda_L$ into disjoint cubes of the form $\Lambda_l(\gamma_j):= \gamma_j+[0,l]$ satisfying :
\begin{enumerate}
\item $\cup_j \Lambda_l(\gamma_j)\subset \Lambda$,
\item if $j\neq k,\text{dist}(\Lambda_l(\gamma_j),\Lambda_l(\gamma_k))\geq l'$
\item $\text{dist}(\Lambda_l(\gamma_j),\partial \Lambda)\geq l'$
\item $\left|\Lambda \backslash \cup_j \Lambda_l(\gamma_j)\right|\lesssim |\Lambda| l'/l$,
\end{enumerate}
and such that, for L sufficiently large, there exists a set of configuration $\mathcal{Z}_\Lambda$ such that 
\begin{enumerate}
\item $\mathbb{P}(\mathcal{Z}_\Lambda)\geq 1-|\Lambda|^{-(\alpha-\alpha_{\rho,\tilde{\rho}})}$
\item for $\omega\in\mathcal{Z}_\Lambda$, each centre of localization associated to $H_\omega(\Lambda)$ belong to some $\Lambda_l(\gamma_j)$ and each box $\Lambda_l(\gamma_j)$ satisfies 
	\begin{enumerate}
	\item the Hamiltonian $H_\omega(\Lambda_l(\gamma_j))$ has at most one eigenvalue in $I_\Lambda$, \\say $E_j(\omega,\Lambda(\gamma_j))$;
	\item $\Lambda_l(\gamma_j)$ contains at most one centre of localization, say $x_{k_j}(\omega,\Lambda)$, of an eigenvalue of $H_\omega(\Lambda)$ in $I_\Lambda$, say $E_{k_j}(\omega,\Lambda)$;
	\item $\Lambda_l(\gamma_j)$ contains a centre $x_{k_j}(\omega,\Lambda)$ if and only if $\sigma(H_\omega(\Lambda_l(\gamma_j))\cap I_\Lambda \neq \emptyset$; in which case, one has
	\begin{equation}\label{stat1}
	|E_{k_j}(\omega,\Lambda)-E_j(\omega,\Lambda_l(\gamma_j))|\leq e^{-(l')^\xi}\text{ and dist}(x_{k_j}(\omega,\Lambda),\Lambda \backslash \Lambda_l(\gamma_j))\geq l'.
	\end{equation}
	\end{enumerate}
\end{enumerate}
In particular if $\omega\in\mathcal{Z}_\Lambda$, all eigenvalues of $H_\omega(\Lambda)$ are described by \eqref{stat1}.
\end{theo}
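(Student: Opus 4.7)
My plan is to follow the proof of the analogous result in \cite{2010arXiv1011.1832G} essentially verbatim, adapted only to accommodate the polynomial-in-volume Wegner estimate (W) (with exponent $m$ rather than $m=1$) and to use the Minami estimate (M) furnished by Theorem~\ref{mina}. The argument separates cleanly into a deterministic paving of $\Lambda_L$ and a probabilistic construction of the good set $\mathcal{Z}_\Lambda$, after which the eigenvalue correspondence follows from the lemma stated just before the theorem.

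For the paving, I first choose $\beta > 0$ satisfying $1 + \beta\rho < \alpha(1+\rho)/(1+\tilde{\rho})$. The hypothesis $\alpha > \alpha_{\rho,\tilde{\rho}}$ is precisely what makes such a $\beta$ exist: the quantity $\alpha_{\rho,\tilde{\rho}}$ is defined so that the admissible range for $\beta$ opens up exactly at this threshold, via the optimization $\gamma = \rho/(1+\rho)$. I then take $\beta' \in (0,\beta)$ small enough that the corridor estimate derived from (W) below is swallowed by the target probability. The cubes $\Lambda_l(\gamma_j)$ are obtained by a standard grid construction: tile $\Lambda_L$ by translates of $[0, l+l']$, declare $\Lambda_l(\gamma_j)$ to be the left sub-cube of side $l$ inside each tile, and leave a collar of width $l'$ near $\partial \Lambda$. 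Properties (1)--(3) are then built in, and (4) follows because the complement $\Lambda \setminus \cup_j \Lambda_l(\gamma_j)$ is a union of $O(|\Lambda|/l)$ strips of width $l'$.

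I then define $\mathcal{Z}_\Lambda$ as the intersection of three events: (i) the (Loc)(I) set $\mathcal{U}_{\Lambda_L}$ with $p$ chosen large, on which each eigenfunction admits a localization center with decay \eqref{expdec}; (ii) the event that every $H_\omega(\Lambda_l(\gamma_j))$ has at most one eigenvalue in $I_\Lambda$; (iii) the event that no localization center of an eigenvalue of $H_\omega(\Lambda)$ in $I_\Lambda$ lies either in the corridor $\Lambda \setminus \cup_j \Lambda_l(\gamma_j)$ or within distance $l'$ of $\partial \Lambda_l(\gamma_j)$ for the cube containing it. The complement of (ii) is bounded by summing the Minami estimate $(|I_\Lambda| l)^{1+\rho}$ over the $\asymp L^{1-\beta}$ sub-boxes; using the lower bound on $|I_\Lambda|$ provided by \eqref{IDS1}, this total is at most $L^{-(\alpha - \alpha_{\rho,\tilde{\rho}})}$ by the choice of $\beta$. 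The complement of (iii) is handled via the correspondence lemma preceding the theorem: a center in the corridor (or within $l'$ of a box boundary) forces some sub-box $\Lambda_l(\gamma)$, whose base point $\gamma$ lies in $O(|\Lambda| l'/l)$ positions, to carry an eigenvalue in an $e^{-(l')^\xi/2}$-enlargement of $I_\Lambda$; applying (W) to each such box and summing is negligible for $\beta'$ sufficiently small relative to $\beta$. Event (i) contributes only $L^{-p}$.

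On $\mathcal{Z}_\Lambda$, conclusions (a)--(c) then follow immediately from the correspondence lemma: any eigenvalue $E_k(\omega,\Lambda) \in I_\Lambda$ has a centre $x_k \in \Lambda_{l-l'}(\gamma_j)$ by (iii), and part~(1) of the lemma produces a matching eigenvalue of $H_\omega(\Lambda_l(\gamma_j))$ within $e^{-(l')^\xi/2}$; part~(2) provides the reverse correspondence; and (ii) upgrades this to a bijection, ensuring uniqueness of the matching. I expect the main obstacle to be the delicate bookkeeping of exponents: the target bound $L^{-(\alpha - \alpha_{\rho,\tilde{\rho}})}$ must be achieved \emph{simultaneously} by the Minami term (which constrains $\beta$) and by the Wegner-driven corridor term (which constrains $\beta'$ through the exponent $m$). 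This compatibility issue is entirely absent in \cite{2010arXiv1011.1832G}, where (W) is linear; here, the fact that $m > 1$ is exactly what forces the stronger separation $\beta' < \beta$ in the statement, but it is accommodated by the flexibility in choosing $(\beta, \beta')$ carved out by the condition $\alpha > \alpha_{\rho,\tilde{\rho}}$.
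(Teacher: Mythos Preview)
Your proposal is correct and follows the paper's proof essentially line for line: the same grid paving, the same three bad events (failure of (Loc)(I), two eigenvalues in a sub-box via (M), a localization centre falling in the corridor via (W)), and the same exponent bookkeeping that singles out $\alpha_{\rho,\tilde\rho}$ as the threshold where the Minami constraint $1+\beta\rho<\alpha(1+\rho)/(1+\tilde\rho)$ and the corridor constraint $1-\beta+m\beta'<\alpha/(1+\tilde\rho)$ become simultaneously satisfiable. One small slip: in your corridor estimate you write ``sub-box $\Lambda_l(\gamma)$'' and count $O(|\Lambda|l'/l)$ positions, whereas the paper covers the corridor by $\asymp|\Lambda|/l$ boxes of side $\asymp l'$ and applies (W) to those; your version still works (the extra factor of $l'$ is absorbed by taking $\beta'$ smaller), but the clean bound $L^{1-\beta+m\beta'-\alpha/(1+\tilde\rho)}$ comes from the $l'$-sized boxes.
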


\begin{proof}
We follow the proof of Theorem 1.1 in \cite{2010arXiv1011.1832G}, taking into account the different Wegner estimate.
Pick $\beta'=0+$ and $\beta>\beta'$ to be chosen later, set $l'=L^{\beta'}$ and $l$ so that
$(l+l')k+l'=L$, where $k=\lfloor L^{1-\beta}\rfloor$. Now pick boxes of size $l$ in $\Lambda_L$ satisfying the conditions of the Theorem~\ref{alleig}. We will use (Loc)(I) in this proof.
The probability that one of the cube $\Lambda_l(\gamma_j)$ does not satisfy (Loc)(I) is bounded by $|\Lambda_L|^{1-\beta-p\beta}$. The probability that $\Lambda_L$ does not satisfy (Loc)(I) is bounded by $|\Lambda_L|^{-p}$. So for $p$ large enough, up to a probability negligible with respect to $|\Lambda_L|^{-(\alpha-\alpha_{d,\rho,\tilde{\rho}})}$, all boxes satisfy (Loc)(I), which we will assume from now on.

Let $S_{l,L}$ be the set of boxes $\Lambda_{l-l'}(\gamma_j)\subset \Lambda_L$ containing two centres of localization of $H_\omega(\Lambda_L)$. It follows from the Minami estimates and Lemma~\ref{loc2} that
\begin{equation}
\mathbb{P}(\sharp \mathcal{S}_{l,L}\geq 1)\leq |\Lambda_L|^{1-\beta}(|\Lambda|^\beta|I_\Lambda|)^{1+\rho}\leq |\Lambda_L|^{1+\beta\rho}N(I_\Lambda)^{\frac{1+\rho}{1+\tilde{\rho}}}\leq |\Lambda_L|^{1+\beta\rho-\alpha \frac{\gamma^{-1}}{1+\tilde{\rho}}}
\end{equation}
where $\gamma:=\dfrac{1}{1+\rho}\in(0,1)$.\\
Define $\Upsilon=\Lambda_L\backslash\cup_j \Lambda_{l-l'}(\gamma_j)$ and consider a partition $\Upsilon = \cup_{k=1}^{M} \Lambda_{l'}(x_k)$ into boxes of size l', with $M\asymp L l^{-1}$. We can enlarge each box of  $\frac{1}{10}l'$ except for sides that coincide with the boundary of $\Lambda_L$, so that if a centre of localization is in $\Upsilon$ we can apply Lemma~\ref{loc2}. Hence, using (W) :
\begin{align*}
\mathbb{P}&(H_\omega(\Lambda_L) \text{ has a centre of localization in }\Upsilon)\\
&\lesssim \sum_k^M \mathbb{P}(H_\omega(\Lambda_L) \text{ has a centre of localization in }\Lambda_{l'}(x_k))\\
&\lesssim \sum_k^M \mathbb{P}(H_\omega(\Lambda_L) \text{ has a centre of localization in }\Lambda_{\frac{11}{10}l'-\frac{1}{10}l'}(x_k))\\
&\lesssim M|\Lambda_{\frac{11}{10}l'}|^{m}|I_\Lambda|\lesssim  L l'^{m}l^{-1}N(I_\Lambda)^{\frac{1}{1+\tilde{\rho}}}\lesssim |\Lambda_L|^{1-\beta+m\beta'-\frac{\alpha}{1+\tilde{\rho}}}.
\end{align*}
Now take
\begin{center}
$\alpha>(1+\tilde{\rho})\max\left((1+\beta\rho)\gamma,1-\beta+m\beta')\right)$.
\end{center} 

If we optimize in $\beta$ we have :
\begin{equation}
\beta=\dfrac{1-\gamma+\beta'm}{\rho\gamma+1}~\text{   and   }~ \alpha>(1+\tilde{\rho})\left(1-\dfrac{1-\gamma-\rho\gamma m\beta'}{\rho\gamma+1}\right)
\end{equation}
Thus, taking $\alpha_{\rho,\tilde{\rho}}:=(1+\tilde{\rho})\left(1-\dfrac{1-\gamma}{\rho\gamma+1}\right)=(1+\tilde{\rho})\left(\dfrac{1+\rho }{\rho+\gamma^{-1}}\right)<1$ yields the result. This complete the proof of Theorem~\ref{alleig}.

\end{proof}
\paragraph{Controlling most eigenvalues}
\text{ }\\
The following theorem correspond to Theorem 1.2 in \cite{2010arXiv1011.1832G}.
\begin{defi}
Pick $\xi\in(0,1)$, $R>1$ large enough and $\rho'\in(0,\rho)$ where $\rho$ is defined in (M). For a cube $\Lambda$, consider an interval $I_\Lambda=[a_\Lambda,b_\Lambda]\subset I$. Set $l'_\Lambda=(R\log|\Lambda|)^{1/\xi}$, we say that the sequence $(I_\Lambda)_\Lambda$ is $(\xi,R,\rho')$-admissible if, for any $\Lambda$, one has
\begin{equation}
|\Lambda|N(I_\Lambda)\geq 1,\hspace*{2em}N(I_\Lambda)|I_\Lambda|^{-(1+\rho')}\geq 1,\hspace*{2em}N(I_\Lambda)^{\frac{1}{1+\rho'}}l'_\Lambda\leq 1
\end{equation}
\end{defi}

\begin{theo}\label{mosteig}
Assume (IAD), (W), (M) and (Loc) Hold. Pick $\rho'\in\left[0,\frac{\rho}{2+\rho}\right)$ where $\rho$ is defined in (M). For any $q>0$, for $L$ sufficiently large, depending only on $\xi,R,\rho',p$; for any sequence of intervals $(I_\Lambda)_\Lambda$ that is $(\xi,R,\rho')$-admissible, and any sequence of scales $\tilde{l}_\Lambda$ so that $l'_\Lambda\ll \tilde{l}_\Lambda \ll L$ and
\begin{equation}
N(I_\Lambda)^{\frac{1}{1+\rho'}}\tilde{l}_\Lambda\underset{|\Lambda|\rightarrow \infty}\rightarrow 0
\end{equation}
there exists
\begin{enumerate}
\item a decomposition of $\Lambda=\Lambda_L$ into disjoint cubes of the form $\Lambda_{l_\Lambda}(\gamma_j)$, where $l_\Lambda=\tilde{l}_\Lambda(1+{\mathcal O}(\tilde{l_\Lambda}/|\Lambda|))=\tilde{l}_\Lambda(1+o(1))$
such that
\begin{enumerate}
\item $\cup_j \Lambda_{l_\Lambda}(\gamma_j)\subset \Lambda$,
\item if $j\neq k,\text{dist}(\Lambda_{l_\Lambda}(\gamma_j),\Lambda_{l_\Lambda}(\gamma_k)\geq l'_\Lambda$
\item $\text{dist}(\Lambda_{l_\Lambda}(\gamma_j),\partial \Lambda)\geq l'_\Lambda$
\item $\left|\Lambda \backslash \cup_j \Lambda_{l_\Lambda}(\gamma_j)\right|\lesssim \Lambda l'_\Lambda/l_\Lambda$,
\end{enumerate}

\item a set of configurations $\mathcal{Z}_\Lambda$, such that 
\begin{equation}
\mathbb{P}(\mathcal{Z}_\Lambda)\geq 1-|\Lambda|^{-q}-\exp\left(-c|I_\Lambda|^{1+\rho}|\Lambda|l_\Lambda^{\rho}\right)-\exp\left(-c|\Lambda||I_\Lambda|l'^m_\Lambda l_\Lambda^{-1}\right)
\end{equation}
so that

\item for $\omega\in\mathcal{Z}_\Lambda$, there exists at least $\dfrac{|\Lambda|}{l_\Lambda}\left(1+o\left(N(I_\Lambda)^{\frac{1}{1+\rho'}}l_\Lambda\right)\right)$ disjoint boxex $\Lambda_{l_\Lambda}(\gamma_j)$ satisfying the properties (a),(b),(c) described in (Theorem~\ref{alleig}) with $l'_\Lambda=(R\log|\Lambda|)^{1/\xi}$. Note that $N(I_\Lambda)l'_\Lambda=o(1)$.
\item
The number of eigenvalues of $H_\omega(\Lambda)$ that are not described above is bounded by
\begin{equation}
CN(I_\Lambda)|\Lambda|\left(N(I_\Lambda)^{\frac{\rho-\rho'}{1+\rho'}}l_\Lambda^{1+\rho}+N(I_\Lambda)^{-\frac{\rho'}{1+\rho'}}(l'_\Lambda)^{1+m}l_\Lambda^{-1}\right)
\end{equation}
and this number is $o\left(N(I_\Lambda)|\Lambda|\right)$ if we have 
\begin{equation}
N(I_\Lambda)^{-\frac{\rho'}{1+\rho'}}(l'_\Lambda)^{1+m} \ll l_\Lambda \ll N(I_\Lambda)^{-\frac{\rho-\rho'}{(1+\rho)(1+\rho')}}
\end{equation}
where $m$ is defined in (W).

\end{enumerate}

\end{theo}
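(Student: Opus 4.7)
The plan is to adapt the proof of \cite[Theorem 1.2]{2010arXiv1011.1832G} to the present setting, where the main new feature is the polynomial volume dependence $|\Lambda|^m$ in (W) rather than the linear dependence used there. The overall strategy is a three-step reduction: partition $\Lambda$ into nearly disjoint boxes of side $l_\Lambda$, show that on a good event of configurations (defined via (Loc), (M), (W)) the eigenvalues of $H_\omega(\Lambda)$ in $I_\Lambda$ are in near one-to-one correspondence with eigenvalues of $H_\omega(\Lambda_{l_\Lambda}(\gamma_j))$ for interior boxes, and finally count the few eigenvalues that escape this description.

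First, I would construct the geometric decomposition exactly as in the proof of Theorem~\ref{alleig}: choose $l'_\Lambda=(R\log|\Lambda|)^{1/\xi}$ so that (Loc)(I) applies on boxes of this size with probability $1-|\Lambda|^{-q}$, and pick $l_\Lambda=\tilde{l}_\Lambda(1+o(1))$ so that the boxes $\Lambda_{l_\Lambda}(\gamma_j)$ fit inside $\Lambda$, are pairwise at distance at least $l'_\Lambda$, lie at distance at least $l'_\Lambda$ from $\partial\Lambda$, and satisfy $|\Lambda\setminus\cup_j \Lambda_{l_\Lambda}(\gamma_j)|\lesssim |\Lambda|l'_\Lambda/l_\Lambda$. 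Call $\Upsilon$ this complementary region and cover it by $\asymp|\Lambda|l'_\Lambda/l_\Lambda \cdot (l'_\Lambda)^{-1}=|\Lambda|/l_\Lambda$ boxes of size $l'_\Lambda$ (slightly enlarged so that any localization center in $\Upsilon$ belongs to a box to which (Loc)(I) applies).

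Second, I would define $\mathcal{Z}_\Lambda$ as the intersection of (i) $\mathcal{U}_{\Lambda_L}$ and all $\mathcal{U}_{\Lambda_{l_\Lambda}(\gamma_j)}$, which contributes a probability loss of at most $|\Lambda|^{-q}$ for $R$ large enough; (ii) the event that no box $\Lambda_{l_\Lambda}(\gamma_j)$ contains two eigenvalues of $H_\omega(\Lambda_{l_\Lambda}(\gamma_j))$ in a slight enlargement of $I_\Lambda$; and (iii) the event that no localization center of an eigenvalue of $H_\omega(\Lambda)$ in $I_\Lambda$ lies in $\Upsilon$. Event (ii) is controlled through (M) combined with (IAD) to yield a large-deviation bound of the form $\exp(-c|I_\Lambda|^{1+\rho}|\Lambda|l_\Lambda^{\rho})$, as in \cite[Section 3]{2010arXiv1011.1832G}. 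Event (iii), controlled through (W) on each box of the cover of $\Upsilon$, produces a contribution of order $(|\Lambda|/l_\Lambda)(l'_\Lambda)^m|I_\Lambda|$; this is where the polynomial exponent $m$ enters, and a Chebyshev-type argument converts it into the $\exp(-c|\Lambda||I_\Lambda|l'^m_\Lambda l_\Lambda^{-1})$ factor appearing in the statement. On $\mathcal{Z}_\Lambda$, the correspondence lemma (the analogue of \cite[Lemma 3.1]{2010arXiv1011.1832G}) quoted just before the theorem immediately gives the one-to-one matching with error $e^{-(l'_\Lambda)^\xi/2}$ between interior-box eigenvalues and eigenvalues of $H_\omega(\Lambda)$, so that the number of good boxes is the claimed $(|\Lambda|/l_\Lambda)(1+o(N(I_\Lambda)^{1/(1+\rho')}l_\Lambda))$.

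Finally, for the count of uncontrolled eigenvalues I would split them into two classes: those in a box with multiple eigenvalues in $I_\Lambda$, estimated via (M) combined with the admissibility condition $N(I_\Lambda)|I_\Lambda|^{-(1+\rho')}\geq 1$ to produce the term $N(I_\Lambda)^{(\rho-\rho')/(1+\rho')}l_\Lambda^{1+\rho}$; and those with localization center in $\Upsilon$, estimated via (W) applied to the $l'_\Lambda$-cover of $\Upsilon$, producing the term $N(I_\Lambda)^{-\rho'/(1+\rho')}(l'_\Lambda)^{1+m}l_\Lambda^{-1}$. The main obstacle is that, unlike in the linear case $m=1$, the boundary term now scales as $(l'_\Lambda)^{1+m}$, so one must verify that the two competing constraints on $l_\Lambda$ are compatible, namely that the interval $N(I_\Lambda)^{-\rho'/(1+\rho')}(l'_\Lambda)^{1+m}\ll l_\Lambda\ll N(I_\Lambda)^{-(\rho-\rho')/((1+\rho)(1+\rho'))}$ is non-empty; a direct comparison of exponents shows that this reduces to $\rho'<\rho/(2+\rho)$, which is exactly the hypothesis of the theorem. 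Under this restriction both error terms are $o(N(I_\Lambda)|\Lambda|)$, concluding the proof.
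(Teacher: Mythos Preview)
Your proposal is correct and follows exactly the approach the paper indicates: the paper's own proof consists of the single sentence ``The proof of Theorem~\ref{mosteig} is the same as the proof of \cite[Theorem~1.2]{2010arXiv1011.1832G} taking into account the different Wegner estimates, as done in the proof of Theorem~\ref{alleig},'' and you have carried out precisely this adaptation, correctly identifying where the exponent $m$ from (W) enters (the cover of $\Upsilon$ by $l'_\Lambda$-boxes) and verifying that the compatibility window for $l_\Lambda$ is non-empty exactly when $\rho'<\rho/(2+\rho)$. One small wording issue: your events (ii) and (iii) in the definition of $\mathcal{Z}_\Lambda$ should be that the \emph{number} of bad boxes (with multiple eigenvalues, or hosting a stray localization center) is small, not that there are none --- this is what the large-deviation/Chebyshev step actually gives and what is needed for item~3 of the statement, where only \emph{most} boxes are claimed to be good; you already invoke the right mechanism, so this is only a matter of phrasing.
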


The proof of Theorem~\ref{mosteig} is the same as the proof of \cite[Theorem 1.2]{2010arXiv1011.1832G} taking into account the different Wegner estimates, as done in the proof of Theorem~\ref{alleig}.
\subsection{The local distribution of eigenvalues}

As in \cite[Section 2]{2010arXiv1011.1832G}, we compute the distribution of unfolded eigenvalues, i.e the numbers \\
$\Big(|\Lambda_L|N\big(E_j(\omega)\big)\Big)_j$.
As the assumption (W) is not linear in the volume of the cube, it does not guarantee that the IDS is Lipschitz continuous. So we make an assumption on the regularity of the IDS. Define $R_\eta(l):=\sup\limits_{|I|\leq e^{-l^\eta}} N(I)$.  As we are in dimension one, one can prove the following regularity  estimate using techniques of \cite{bougerol1985products}:  
\begin{equation}
\textbf{(R)}: \forall\eta\in(0,1), \lim\limits_{L\to\infty} L^d R_\eta(l')=0.
\end{equation}

where $l'$ is defined either in  Theorem~\ref{alleig} or in Theorem~\ref{mosteig}. In Theorem~\ref{alleig} and Theorem~\ref{mosteig} we use two different lengths $l'$ so assumption (R) has two different meanings.

 $\bullet$ In Theorem~\ref{alleig}, we take $l'=L^{\beta'}$. Hence, (R) is true provided the IDS is Log-Hölder continuous with an exponent sufficiently large and says that the IDS of exponentially small intervals should be negligible over polynomially small quantities.

 $\bullet$ In Theorem~\ref{mosteig} , we take $l'=R(\log L)^{1/\xi}$. Hence, in this case, (R) is true provided the IDS is Hölder continuous and R is sufficiently large .

 For instance, in Appendix~\ref{ap2} we prove :
\begin{theo}\label{example}
Let $(\omega_n)_{n\in\N}$ be random variables i.i.d with a density which support is $[0,1]$. Let $
H_\omega:=-\Delta+V_\omega$ with $V_\omega(2i)= -V_\omega(2i+1)=\omega_i$,  then, the IDS is well defined and Holder continuous in $[-3,
3]$. 
\end{theo}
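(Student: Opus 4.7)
The plan is to exploit the one-dimensional structure via transfer matrices and apply the classical regularity theorem of Le Page for products of i.i.d.\ random matrices, as presented in \cite{bougerol1985products}. Existence of the IDS is immediate: $H_\omega$ is $2\Z$-ergodic (shifting the lattice site by $2$ corresponds to shifting $(\omega_i)_i$ by $1$), so the Pastur--Shubin formula gives $N(E)$ as an almost-sure limit of the normalized eigenvalue counting functions of $H_\omega(\Lambda_L)$.

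For the regularity part, I would write the eigenvalue equation $H_\omega u = Eu$ as the three-term recurrence $u(n+1)=(V_\omega(n)-E)u(n)-u(n-1)$, with transfer matrices
\[
A_n(E) = \begin{pmatrix} V_\omega(n)-E & -1 \\ 1 & 0 \end{pmatrix} \in SL_2(\R).
\]
Grouping consecutive sites in pairs, set $M_i(E) := A_{2i+1}(E)A_{2i}(E)$; since $V_\omega(2i)=\omega_i$ and $V_\omega(2i+1)=-\omega_i$ depend only on the single variable $\omega_i$, the sequence $(M_i(E))_{i\in\N}$ is i.i.d.\ in $SL_2(\R)$. I would then check the Furstenberg conditions for the common law $\mu_E$: non-compactness of the closed subgroup generated by $\mathrm{supp}(\mu_E)$ (which follows from the fact that $\mathrm{tr}\,M_0(E)$ depends non-trivially on $\omega_0\in[0,1]$) and strong irreducibility (verified by exhibiting two values of $\omega_0$ whose products $M_0(E)$ do not share a common invariant finite union of lines in $\R^2$).

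These conditions, combined with the assumption that $\omega_i$ admits a bounded density on $[0,1]$, are precisely the hypotheses of Le Page's theorem (see \cite{bougerol1985products}, Part~A, Chapter~V), which provides Hölder continuity of the Lyapunov exponent $E\mapsto \gamma(E)$ on compact energy intervals, and hence Hölder continuity of the IDS $N$ via the Thouless formula $\gamma(E)=\int \log|E-E'|\,dN(E')$ together with Craig--Simon-type estimates. Since $\|H_\omega\|\leq \|-\Delta\|+\|V_\omega\|_\infty\leq 3$ almost surely, $N$ is constant outside $[-3,3]$, yielding uniform Hölder continuity on the whole interval $[-3,3]$.

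The main obstacle is to verify Furstenberg's conditions uniformly for all $E\in[-3,3]$, and in particular at exceptional energies where the structure of $M_0(E)$ degenerates. For example, at $E=0$ one computes
\[
M_i(0)=\begin{pmatrix} -\omega_i^2-1 & \omega_i \\ \omega_i & -1 \end{pmatrix},
\]
and one must check separately that the generated group is still non-compact and strongly irreducible. Once these verifications are carried out, the Hölder exponent from Le Page's theorem can be taken uniform on the compact set $[-3,3]$ by a continuity/compactness argument, completing the proof.
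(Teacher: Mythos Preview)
Your overall strategy coincides with the paper's: form the i.i.d.\ two-step transfer matrices in $SL_2(\R)$, verify the Furstenberg hypotheses (non-compact generated group, strong irreducibility), and then invoke the Le~Page regularity theory from \cite{bougerol1985products} to obtain H\"older continuity of the Lyapunov exponent and hence of the IDS via the Thouless formula. So the architecture is right and matches the paper.

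There is, however, a genuine gap in the step where you claim non-compactness ``follows from the fact that $\mathrm{tr}\,M_0(E)$ depends non-trivially on $\omega_0\in[0,1]$''. This is not a valid criterion: a compact subgroup of $SL_2(\R)$ is conjugate to a subgroup of $SO(2)$, whose elements already realise every trace in $[-2,2]$, so variable trace alone does not rule out compactness. Concretely, the trace of the two-step matrix equals $E^2-\omega_0^2-2$; for $1\le|E|\le 2$ and every $\omega_0\in[0,1]$ this lies in $[-2,2]$, so all matrices in the support are elliptic (or parabolic) and your argument does not produce a hyperbolic element. The paper handles exactly this range by a case analysis: it takes the two extreme values $\omega_0=0$ and $\omega_0=1$, checks that the corresponding matrices $A,B$ do not commute, and uses that the commutator $A^{-1}B^{-1}AB$ is then hyperbolic, forcing non-compactness. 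For strong irreducibility the paper uses the equivalent ``every projective orbit has at least three points'' formulation and again treats several energy ranges separately (showing $A$ and $A^2$ are both elliptic, or exhibiting a second hyperbolic element with different eigendirections). You correctly flagged this verification as the main obstacle; the point is that the specific shortcut you proposed fails on the interval $1\le|E|\le 2$, and something like the paper's case-by-case argument is actually needed there.
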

If the density is sufficiently regular, a Wegner estimate near the infimum  of the almost sure spectrum and polynomial in volume is known to hold (\cite{Kl95}). \\
To compute the local distribution of eigenvalues we use the following lemma : 
\begin{lem}\label{unfoldist}
Assume (W) and (Loc) hold in I a compact and assume (R). For any $\nu\in(0,1)$ and $ \log L \ll l' \ll l \ll L$, let $N(J,l,l')$ be the number of eigenvalue of $H_\omega(\Lambda_l)$ in J with localization centre in $\Lambda_{l-l'}$. Then, there exists $C>0$ and $1>\xi''>\nu$ such that, for $J\subset I$ an interval such that $|J|\geq e^{-(l')^\nu}$, one has
\begin{equation}
|\mathbb{E}(N(J,l,l'))-N(J)|\Lambda_l|\lesssim |J|l'+R_{\xi''}(l')l.
\end{equation}
\end{lem}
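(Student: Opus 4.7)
The plan is to compare $\mathbb{E}(N(J,l,l'))$ to $N(J)|\Lambda_l|$ by going through the corresponding count for the infinite-volume operator $H_\omega$: (Loc) supplies a pairing between finite-volume eigenvalues and infinite-volume eigenvalues, translation invariance converts the infinite-volume count into a multiple of the IDS, and assumption (R) absorbs the narrow energy-thickening of $J$ induced by the pairing.

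Fix $\xi''\in(\nu,\xi)$, so that $e^{-(l')^{\xi}/2}\le e^{-(l')^{\xi''}}$ and (R) controls $R_{\xi''}(l')$. The lemma stated just before Theorem~\ref{alleig}, applied with $\Lambda_L$ arbitrarily large and combined with (Loc) applied to $H_\omega$, gives, outside an exceptional event of probability $\le L^{-q}$ for any desired $q$, a bijective pairing between the eigenvalues $E_j$ of $H_\omega(\Lambda_l)$ whose localization center lies in $\Lambda_{l-l'}$ and the eigenvalues $\tilde E_j$ of $H_\omega$ whose localization center lies in a slight neighbourhood of $\Lambda_{l-l'}$, with $|E_j-\tilde E_j|\le e^{-(l')^{\xi}/2}$. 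Writing $\tilde N(K,A)$ for the number of eigenvalues of $H_\omega$ in $K$ with localization center in $A$, and setting $J_\pm:=J\pm[-e^{-(l')^{\xi}/2},e^{-(l')^{\xi}/2}]$, this pairing gives on the good event
\begin{equation*}
\tilde N(J_-,\Lambda_{l-l'})\le N(J,l,l')\le \tilde N(J_+,\Lambda_{l-l'}).
\end{equation*}

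Taking expectations, using translation invariance of $H_\omega$ together with the identification of $N$ as its IDS, one has $\mathbb{E}(\tilde N(J_\pm,\Lambda_{l-l'}))=|\Lambda_{l-l'}|\,N(J_\pm)$; the exceptional contribution is swallowed by choosing $q$ large. By (R) and the choice of $\xi''$, $|N(J_\pm)-N(J)|\le 2R_{\xi''}(l')$, whence
\begin{equation*}
\bigl|\mathbb{E}(N(J,l,l'))-|\Lambda_{l-l'}|\,N(J)\bigr|\lesssim R_{\xi''}(l')\,l.
\end{equation*}
Boundedness of the density of states yields $N(J)\lesssim |J|$, so
\begin{equation*}
\bigl||\Lambda_l|-|\Lambda_{l-l'}|\bigr|\,N(J)\le l'\,N(J)\lesssim |J|\,l',
\end{equation*}
which produces the first term in the claimed bound. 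Combining the two displays gives the lemma.

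The main obstacle is the first step: making the bijection between eigenvalues of $H_\omega(\Lambda_l)$ and of $H_\omega$ rigorous. Both directions of the cited lemma are indispensable --- one provides injectivity (each finite-volume eigenvalue with interior center has an infinite-volume partner), the other surjectivity (each infinite-volume eigenvalue with interior center is accounted for). The hypothesis $|J|\ge e^{-(l')^\nu}$ together with the choice $\xi''>\nu$ is exactly what ensures that the thickening width $e^{-(l')^{\xi''}}$ is much smaller than $|J|$, so that (R) makes $R_{\xi''}(l')\,l$ a genuinely lower-order correction rather than a competing leading term.
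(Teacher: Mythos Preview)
Your approach differs from the paper's and has two genuine gaps.

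\medskip
\textbf{Gap 1.} The identity $\mathbb{E}(\tilde N(J_\pm,\Lambda_{l-l'}))=|\Lambda_{l-l'}|\,N(J_\pm)$ does not follow from translation invariance. By definition the IDS satisfies
\[
N(K)\,|\Lambda|=\mathbb{E}\bigl[\mathrm{tr}(\mathbf{1}_\Lambda\mathbf{1}_K(H_\omega))\bigr]
=\mathbb{E}\Bigl[\sum_{E_j\in K}\|\mathbf{1}_\Lambda\phi_j\|^2\Bigr],
\]
which weights each eigenvalue by the $\ell^2$-mass of its eigenfunction inside $\Lambda$, not by whether its localization center lies in $\Lambda$. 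The two quantities differ by a boundary term that you would still have to estimate; translation invariance alone does not collapse them. This is precisely the step the paper handles by a different mechanism (see below).

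\medskip
\textbf{Gap 2.} The inequality $N(J)\lesssim|J|$ assumes the IDS is Lipschitz. In the framework of this section, (W) is stated with exponent $m\ge1$ on the volume; when $m>1$ the Wegner estimate gives no regularity on $N$, and (R) only controls $N$ on intervals of length at most $e^{-(l')^\eta}$, not on $J$ itself. So the passage from $l'N(J)$ to $|J|l'$ is unjustified.

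\medskip
\textbf{How the paper proceeds instead.} The paper never attempts an eigenvalue-by-eigenvalue bijection with $H_\omega$. It replaces $\mathbf{1}_J$ by a smooth cutoff $\chi_r$ with $r=e^{-(l')^{\xi''}}$, then compares $\mathrm{tr}[\chi_r(H_\omega(\Lambda_l))]$ directly to $\mathrm{tr}[\mathbf{1}_{\Lambda_l}\chi_r(H_\omega)]$ via the resolvent identity and the Helffer--Sj\"ostrand formula; the exponential decay from (Loc) makes this difference $O(|\Lambda_l|r^{-C}e^{-(l')^\xi})$. The link to $N(J)|\Lambda_l|$ is then the \emph{exact} identity $\mathbb{E}[\mathrm{tr}(\mathbf{1}_{\Lambda_l}\chi_r(H_\omega))]=|\Lambda_l|\int\chi_r\,dN$, which is the definition of the IDS and requires no Lipschitz bound. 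The boundary term in the statement comes from a Wegner estimate on the probability that an eigenvalue has localization center in the $l'$-collar of $\partial\Lambda_l$, yielding $|J|(l')^m$ without invoking regularity of $N$.
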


\begin{proof}
Let $\mathcal{U}_{\Lambda_l}$ be the set of configuration given in (Loc)(I) for some $p\in\mathbb{N}^*$ and $\xi\in(0,1)$. Let $\chi_r$ be a smooth function such that $0\leq\chi_r\leq 1$ and $\chi_r=1$ on $J$ and $\chi_r=0$ outside $J+[-r,r]$ with $r:=e^{-(l')^{\xi''}}$ and $\xi''$ to be chosen later. Define $\mathcal{Y}_{l,l'}$ the subset of $\mathcal{U}_{\Lambda_l}$ where $H_\omega(\Lambda_l)$ has all localization centres in $\Lambda_{l-l'}$.
Then, by (W) and Lemma~\ref{loc2}, we have

\begin{equation}
\left|\mathbb{E}\Big(N(J,l,l')-\text{tr}[1_J(H_\omega(\Lambda_l))]\Big)\right|  \leq\mathbb{P}\left(\mathcal{Y}_{l,l'}^c\right)
\leq |J|(l')^m.
\end{equation}
Now, define $K:=(J+[-r,r])-J$. Then, $|K|= 2r$ and
\begin{equation*}
0\leq r(x)-1_J(x)\leq 1_K(x).
\end{equation*}
Thus, using (W) we obtain
\begin{equation}
\left|\mathbb{E}\Big[\text{tr}[1_J(H_\omega(\Lambda_l))-\chi_r(H_\omega(\Lambda_l))]\Big]\right|\lesssim r|\Lambda_l|^m.
\end{equation}
In the same way, we obtain 
\begin{equation}
\left|\mathbb{E}\Big[\text{tr}[1_{\Lambda_l}1_J(H_\omega)-1_{\Lambda_l}\chi_r(H_\omega)]\Big]\right|\leq R_{\xi''}(l')l.
\end{equation}
Now, $H_\omega(\Lambda_l)$ can be extended to the operator $1_{\Lambda_l}H_\omega 1_{\Lambda_l}$ acting on $\ell^2(\Z)$. Thus, using the resolvent equation we have : 
\begin{multline}
\left(1_{\Lambda_l}H_\omega 1_{\Lambda_l}-z\right)^{-1}-\left(1_{\Lambda_l}H_\omega-z\right)^{-1} \\
=\left( 1_{\Lambda_l}H_\omega  1_{\Lambda_l}-z\right)^{-1}\left(1_{\Lambda_l}H_\omega-1_{\Lambda_l}H_\omega 1_{\Lambda_l}\right)\left(1_{\Lambda_l} H_\omega -z\right)^{-1}.
\end{multline}
Let $x\in\Z$ and define $\Psi_x:=\left(1_{\Lambda_l} H_\omega -z\right)^{-1}\delta_x$. We compute
\begin{equation*}
\left(1_{\Lambda_l}H_\omega-1_{\Lambda_l}H_\omega 1_{\Lambda_l}\right)\Psi_x=1_{\Lambda_l}H_\omega\left(1-1_{\Lambda_l}\right)\Psi_x=\Psi_x(l+1)\delta_l+\Psi_x(-l-1)\delta_{-l}.
\end{equation*}
Thus,
\begin{multline*}
\left\langle \delta_x,\left( 1_{\Lambda_l}H_\omega  1_{\Lambda_l}-z\right)^{-1}\left(1_{\Lambda_l}H_\omega-1_{\Lambda_l}H_\omega 1_{\Lambda_l}\right)\Psi_x\right\rangle\\
=\Psi_x(l+1)\langle \delta_x,\left( 1_{\Lambda_l}H_\omega  1_{\Lambda_l}-z\right)^{-1}\delta_l\rangle\\
+\Psi_x(-l-1)\langle \delta_x,\left( 1_{\Lambda_l}H_\omega  1_{\Lambda_l}-z\right)^{-1}\delta_{-l}\rangle.
\end{multline*}

Now, let $(E_j)_j$ denote the eigenvalues of $H_\omega(\Lambda_l)$ and $\phi_j$ normalized eigenvectors extended to $\ell^2(\Z)$ by zeros.
Then, we compute
\begin{equation}\label{resolv}
\langle \delta_x,\left( 1_{\Lambda_l}H_\omega  1_{\Lambda_l}-z\right)^{-1}\delta_l\rangle=\sum_j \dfrac{\phi_j(l)\phi_j(x)}{E_j-z}.
\end{equation}
The right-hand side of \eqref{resolv} vanishes if $x\notin\{1,\dots,l\}$ and if $(\omega_n)_n\in\mathcal{Y}_{l,l'}$, has its modulus smaller than $\dfrac{|\Lambda_l|}{|\mathrm{Im}z|}e^{-(l')^\xi}$. The same estimation can be done with the term \\
$\langle \delta_x,\left( 1_{\Lambda_l}H_\omega  1_{\Lambda_l}-z\right)^{-1}\delta_{-l}\rangle$. Now, we know that $\|\Psi_x\|\leq \dfrac{1}{|\mathrm{Im}z|}$. Therefore, we have 

\begin{equation}
\left|\text{tr}\left[\left(1_{\Lambda_l}H_\omega-z\right)^{-1}-\left(1_{\Lambda_l}H_\omega 1_{\Lambda_l}-z\right)^{-1}\right]\right|\leq |\Lambda_l|\dfrac{e^{-(l')^\xi}}{|\mathrm{Im}z|^2}.
\end{equation}

 Now, using the Helffer-Sjöstrand formula (cf \cite[Appendix B]{Hunziker97time-dependentscattering})  to represent \\ $\chi_r(H_\omega(\Lambda_l)) $ and $\chi_r(H_\omega)$ using theirs 
 resolvent there exists $C>0$ such that
 
\begin{displaymath}
\left|\mathbb{E}\big[\textbf{1}_{\mathcal{Y}_{l,l'}}\text{tr}[\chi_r(H_\omega(\Lambda_l))-1_{\Lambda_l}\chi_r(H_\omega)]\big]\right|\leq |\Lambda_l|r^{-C}e^{-(l')^{\xi}}
 \lesssim |\Lambda_l|e^{-(l')^{\xi}}
\end{displaymath}
if one take $\xi''\in(\nu,\xi)$. This conclude the proof of Lemma~\ref{unfoldist}.

\end{proof}

On $\mathcal{U}_{\Lambda_l}$, define the Bernoulli random variable $X=X(J,l,l')$ such that $X=1$ if and only if $H_\omega(\Lambda_l)$ has only one eigenvalue in J with centre of localization in $\Lambda_{l-l'}$. When $X=1$, let $E_j$ be this eigenvalue.
From now on we will take $J$ such that $N(J)\asymp |\Lambda_L|^{-\alpha}$. By assumption (R), $J$ satisfies $|J|\geq e^{-(l')^\nu}$.
Using (M) as in \cite[Lemma 2.1]{2010arXiv1011.1832G}, we have the following result  : 

\begin{theo}
Assume (W), (M) and (Loc) hold in I a compact and assume (R). For $\nu\in(0,1)$ and $ R\log L \ll l' \ll l \ll L$ with R sufficiently large. Then, for any interval $J\subset I$ such that $|J|\geq e^{-(l')^\nu}$, one has

\begin{equation}\label{loclev}
\left| \mathbb{P}(X=1)-N(J)l^d\right|\lesssim (|I|l)^{1+\rho}+|I|l^m+R_\eta(l')l
\end{equation}

\end{theo}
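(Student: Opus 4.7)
The plan is to relate $\mathbb{P}(X=1)$ to the first moment $\mathbb{E}(N(J,l,l'))$ controlled by Lemma~\ref{unfoldist}, and then absorb the contribution of boxes with two or more eigenvalues using the Minami estimate (M). Throughout I work on the good set $\mathcal{U}_{\Lambda_l}$ given by (Loc)(I), whose complement contributes a negligible $L^{-p}$ error by a straightforward application of (W).

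First, I would observe the elementary identity
\begin{equation*}
N(J,l,l') = X + N(J,l,l')\,\mathbf{1}_{\{N(J,l,l')\geq 2\}},
\end{equation*}
which gives
\begin{equation*}
\mathbb{P}(X=1) = \mathbb{E}(X) = \mathbb{E}\bigl(N(J,l,l')\bigr) - \mathbb{E}\bigl[N(J,l,l')\,\mathbf{1}_{\{N(J,l,l')\geq 2\}}\bigr].
\end{equation*}
The first term is handled by Lemma~\ref{unfoldist}: on the hypothesis $|J|\geq e^{-(l')^\nu}$, one has
\begin{equation*}
\bigl|\mathbb{E}(N(J,l,l')) - N(J)|\Lambda_l|\bigr| \lesssim |J|(l')^m + R_{\xi''}(l')\,l
\end{equation*}
for some $\xi''\in(\nu,1)$, which already accounts for two of the three error terms on the right hand side of \eqref{loclev}.

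For the second term, since $N(J,l,l')\mathbf{1}_{\{N\geq 2\}}\leq N(J,l,l')(N(J,l,l')-1)$ and since $N(J,l,l')$ is dominated by $\mathrm{tr}\,\mathbf{1}_J(H_\omega(\Lambda_l))$, the Minami estimate (M) yields
\begin{equation*}
\mathbb{E}\bigl[N(J,l,l')\,\mathbf{1}_{\{N\geq 2\}}\bigr] \leq \mathbb{E}\bigl[\mathrm{tr}\,\mathbf{1}_J(H_\omega(\Lambda_l))\bigl(\mathrm{tr}\,\mathbf{1}_J(H_\omega(\Lambda_l))-1\bigr)\bigr] \lesssim (|J|\,l)^{1+\rho},
\end{equation*}
which provides the remaining error term in \eqref{loclev}. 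Combining the two bounds through the triangle inequality completes the proof, modulo the negligible contribution coming from $\omega\notin\mathcal{U}_{\Lambda_l}$, which is bounded above by $L^{-p}$ for arbitrary $p$ and is absorbed in the dominant errors.

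The only real subtlety is making sure that the assumption $|J|\geq e^{-(l')^\nu}$ together with the regularity assumption (R) allows us to apply Lemma~\ref{unfoldist} with some $\xi''>\nu$ so that $R_{\xi''}(l')\,l = o(N(J)|\Lambda_l|)$ in the regimes of interest; but this is precisely the reason (R) was introduced. Everything else reduces to the two-line argument above, so I do not expect a serious obstacle — the work is all packaged in the preceding Lemma~\ref{unfoldist} and in the Minami estimate.
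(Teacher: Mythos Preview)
Your argument is correct and is precisely the approach the paper has in mind: the paper's own ``proof'' is just the one-line reference ``Using (M) as in \cite[Lemma 2.1]{2010arXiv1011.1832G}'', and what you wrote is exactly that argument spelled out --- split $\mathbb{E}(N(J,l,l'))$ via the identity $N=X+N\mathbf{1}_{\{N\ge 2\}}$, control the first piece by Lemma~\ref{unfoldist}, and bound the second by $N(N-1)$ and (M). Nothing is missing.
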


As we will use this estimate with $N(J)\asymp |\Lambda_L|^{-\alpha}$, assumption $\textbf{(R)}$ shows that the terms in the right-hand side of \eqref{loclev} are error terms.\\
Let us now state our  results on the spectral statistics for $H_\omega(\Lambda_L)$. They are similar to those found in \cite{2010arXiv1011.1832G}; the proofs will not be given as they are the same as those in \cite{2010arXiv1011.1832G}.

\subsection{The spectral statistics}

\subsubsection{Unfolded local level statistics}

The \textit{unfolded local level statistics} near $E_0$ is the point process defined by
\begin{equation}
\Xi(\xi;E_0,\omega,\Lambda)=\sum_{j\geq1} \delta_{\xi_j(E_0,\omega,\Lambda)}(\xi)
\end{equation}
 where
 \begin{equation}
 \xi_j(E_0,\omega,\Lambda)=|\Lambda|(N(E_j(\omega,\Lambda)-N(E_0))
 \end{equation}

The unfolded local level statistics is described by the following theorem which corresponds to \cite[Theorem 1.10]{2010arXiv1011.1832G}.

\begin{theo}\label{ULLS2}
Pick $\overline{\rho}\in\left[0,\dfrac{\rho}{1+\rho}\right[$ where $\rho$ is defined in (M).\\
Pick $E_0\in I$ such that 
\begin{multline}\label{notfastdecreas}
\forall \delta\in(0,1), \exists C(\delta)>0, \exists \epsilon_0, \forall 0<\epsilon<\epsilon_0,\forall a\in[-1,1],\\
 |N(E_0+(a+\delta)\epsilon)-N(E_0+a\epsilon)|\geq C(\delta)\epsilon^{1+\overline{\rho}}.
\end{multline}
Pick $\alpha\in(\alpha_{d,\rho,\tilde{\rho}},1)$. Then there $\delta>0$ such that, for any sequence of intervals $I_1=I_1^\Lambda,\dots,I_p=I_p^\Lambda$ in $|\Lambda|^{1-\alpha}.[-1,1]$, where p may depend on $\Lambda$ satisfying
\begin{equation}
\inf_{j\neq k} \text{dist}(I_j,I_k)\geq e^{-|\Lambda|^\delta},
\end{equation}
we have, for any sequences of integers $k_1=k_1^\Lambda,\dots,k_1=k_p^\Lambda$

\begin{equation}
\lim_{|\Lambda|\to\infty}\left|\mathbb{P}
\left(
\left\{\omega;
\begin{aligned}
\sharp\{j;\xi_j(\omega,\Lambda)\in I_1\}=k_1\\
\vdots\hspace*{8em} \vdots\hspace*{1em}\\
\sharp\{j;\xi_j(\omega,\Lambda)\in I_p\}=k_p
\end{aligned}
\right\}\right)-\dfrac{|I_1|^{k_1}}{k_1!}\dots\dfrac{|I_p|^{k_p}}{k_p!}\right|=0
\end{equation}
In particular, $\Xi(\xi;E_0,\omega,\Lambda)$ converges weakly to a Poisson process with intensity Lebesgue.

\end{theo}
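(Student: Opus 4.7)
The plan is to follow the strategy of \cite[Theorem 1.10]{2010arXiv1011.1832G}, adapting it to the polynomial Wegner setting by means of the regularity assumption (R). First I unfold the intervals: for each $k\in\{1,\dots,p\}$, let $J_k=J_k^\Lambda$ denote the energy interval defined by $J_k=N^{-1}(N(E_0)+I_k/|\Lambda|)$, so that $\{\xi_j(\omega,\Lambda)\in I_k\}$ corresponds to $\{E_j(\omega,\Lambda)\in J_k\}$. By \eqref{notfastdecreas} at $E_0$, these intervals are disjoint, contained in a common interval $\widetilde{J}$ centered at $E_0$ with $N(\widetilde{J})\asymp |\Lambda|^{-\alpha}$, and satisfy $|J_k|\gtrsim(|I_k|/|\Lambda|)^{1/(1+\overline{\rho})}$ together with the separation bound $\text{dist}(J_k,J_\ell)\gtrsim(e^{-|\Lambda|^\delta}/|\Lambda|)^{1/(1+\overline{\rho})}$.

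Next, I apply Theorem~\ref{alleig} to $\widetilde{J}$, obtaining a partition of $\Lambda_L$ into $M\asymp |\Lambda|^{1-\beta}$ disjoint cubes $\Lambda_l(\gamma_j)$ of side $l\asymp L^\beta$, mutually distant by $l'\asymp L^{\beta'}$, and a set $\mathcal{Z}_\Lambda$ of probability at least $1-|\Lambda|^{-(\alpha-\alpha_{\rho,\tilde{\rho}})}$ on which every eigenvalue of $H_\omega(\Lambda_L)$ in $\widetilde{J}$ corresponds bijectively (up to an error $e^{-(l')^\xi}$) to an eigenvalue of some $H_\omega(\Lambda_l(\gamma_j))$ in $\widetilde{J}$. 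On $\mathcal{Z}_\Lambda$, I define the Bernoulli variables $X_j^{(k)}:=\textbf{1}\{\sigma(H_\omega(\Lambda_l(\gamma_j)))\cap J_k\neq\emptyset\}$, so that $\sharp\{i:\xi_i(\omega,\Lambda)\in I_k\}=\sum_j X_j^{(k)}$ (up to events of probability $o(1)$).

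Three key properties of these variables must be verified. First, by (IAD), since $l'\geq S$ for $L$ large, the vectors $(X_j^{(1)},\dots,X_j^{(p)})_{j=1}^M$ are independent across $j$. Second, for fixed $j$, the events $\{X_j^{(k)}=1\}$, $k=1,\dots,p$, are pairwise mutually exclusive on $\mathcal{Z}_\Lambda$, since the simultaneous occurrence of two of them would force two eigenvalues of $H_\omega(\Lambda_l(\gamma_j))$ in $\widetilde{J}$, contradicting property (2a) of Theorem~\ref{alleig}. Third, by Lemma~\ref{unfoldist} applied to $J_k$, using (R) to dominate the error term $R_\eta(l')\cdot l$ and (W) to dominate $|J_k|\cdot l^m$, one gets $\mathbb{E}(X_j^{(k)})=N(J_k)\cdot l\cdot(1+o(1))$ uniformly in $k$, whence $\mathbb{E}(\sum_j X_j^{(k)})\to|I_k|$. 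From these three properties, the convergence of the joint law of $(\sum_j X_j^{(k)})_{k=1}^p$ to independent Poisson variables with means $(|I_k|)_{k=1}^p$ is a standard factorial-moment computation for triangular arrays of mutually-exclusive Bernoulli vectors, based on the identity $\mathbb{E}\prod_k(\sum_j X_j^{(k)})^{\underline{k_k}} = \binom{K}{k_1,\dots,k_p}M^{\underline{K}}\prod_k\mathbb{P}(X_1^{(k)}=1)^{k_k}(1+o(1))$ with $K=k_1+\dots+k_p$.

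The main obstacle is to keep every error term $o(1)$ uniformly in $k$ and in the (possibly growing) $p$. The critical constraints are: the separation exponent $\delta$ must satisfy $\delta<\beta'\xi$ so that the intervals $J_k$ remain wider than the exponential error $e^{-(l')^\xi}$ built into the bijection from Theorem~\ref{alleig}; the exponent $\beta'$ must be small enough that the polynomial Wegner contribution $|J_k|l^m$ in Lemma~\ref{unfoldist} is absorbed by the main term $N(J_k)l$; and assumption (R) must be strong enough that $R_\eta(l')\cdot l$ is negligible compared with $N(J_k)\cdot l$. All three constraints can be met simultaneously by first fixing $\beta,\beta'$ as in Theorem~\ref{alleig} for the given $\alpha\in(\alpha_{\rho,\tilde{\rho}},1)$, then choosing $\delta$ depending on $\beta'$ and $\xi$; this is precisely where (R) replaces the Lipschitz continuity of the IDS used in \cite{2010arXiv1011.1832G}.
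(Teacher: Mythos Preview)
Your proposal is correct and follows the approach the paper intends: the paper does not give its own proof of this theorem, stating instead that ``the proofs will not be given as they are the same as those in \cite{2010arXiv1011.1832G}'', and your outline is precisely the Germinet--Klopp argument adapted through the paper's modified toolbox (Theorem~\ref{alleig} for the box decomposition under a polynomial Wegner estimate, Lemma~\ref{unfoldist} and equation~\eqref{loclev} for the one-box probabilities, and assumption (R) in place of Lipschitz continuity of the IDS). One minor wording issue: the term $|J_k|l^m$ (or $|J_k|(l')^m$, depending on which of the paper's slightly inconsistent formulations one uses) is \emph{produced} by (W), not dominated by it; what you need is that this term is $o(N(J_k)l)$, which follows from \eqref{notfastdecreas} and the choice of scales in Theorem~\ref{alleig}.
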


Now, Theorem~\ref{ULLS} is a consequence of Theorem~\ref{ULLS2}. Indeed, condition \eqref{notfastdecreas} is in particular satisfied at any point $E_0$ where $N$ is differentiable with positive derivative.
\subsubsection{Asymptotic independence of the local processes}

Now, as in \cite{2010arXiv1011.1832G}, we show how the point processes associated to different energies relate one another, using (D). The following theorem corresponds to \cite[Theorem 1.11]{2010arXiv1011.1832G}.

\begin{theo}\label{AILP}
Assume (IAD), (W), (M), (Loc) and (D) hold. Pick $E_0\in \mathcal{I}$ and $E_0'\in \mathcal{I}$ such that $E_0\neq E_0'$ and (\ref{IDS1}) is satisfied at $E_0$ and $E_0'$.\\
When $|\Lambda|\rightarrow \infty$ the point processes $ \Xi(E_0,\omega,\Lambda)$ and $\Xi(E_0',\omega,\Lambda)$, converge weakly respectively to two independent Poisson processes on $\R$ with intensity the Lebesgue measure. That is, for $U_+\subset\R$ and $U_-\subset\R$ compact intervals and $\{k_+,k_-\}\in \N\times \N$, one has 
\begin{displaymath}
\mathbb{P}\left(
\begin{aligned}
\sharp\{j;\xi_j(E_0,\omega,\Lambda)\in U_+\}=k_+\\
\sharp\{j;\xi_j(E_0',\omega,\Lambda)\in U_-\}=k_-
\end{aligned}
\right)\underset{\Lambda\to\Z}{\rightarrow} \left(\dfrac{|U_+|^{k_+}}{k_+!}e^{-|U_+|}\right)\left(\dfrac{|U_-|^{k_-}}{k_-!}e^{-|U_-|}\right)
\end{displaymath}
\end{theo}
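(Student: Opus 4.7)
The plan is to mimic the proof of Theorem~\ref{ULLS2} simultaneously at the two energies $E_0$ and $E_0'$ and then to glue the two marginals into an asymptotically independent joint process by means of (D). First I would apply Theorem~\ref{alleig} with $I_\Lambda$ (resp.\ $I_\Lambda'$) centred at $E_0$ (resp.\ $E_0'$) and of $N$-mass $\asymp|\Lambda|^{-\alpha}$, $\alpha\in(\alpha_{\rho,\tilde\rho},1)$, using a common decomposition $\Lambda=\sqcup_j\Lambda_l(\gamma_j)$ with $l\asymp L^\beta$ and $l'\asymp L^{\beta'}$, $\beta'\ll\beta$. On the good set of configurations provided by the theorem, the eigenvalues of $H_\omega(\Lambda)$ that lie in $I_\Lambda\cup I_\Lambda'$ are in one-to-one correspondence, up to an $e^{-(l')^\xi}$ error, with the eigenvalues in $I_\Lambda\cup I_\Lambda'$ of the local operators $H_\omega(\Lambda_l(\gamma_j))$. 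By (IAD) these local operators form an independent family, so the joint process reduces to a sum of independent $\mathbb{Z}_+^2$-valued increments $(X_j,X'_j)$, where $X_j$ (resp.\ $X'_j$) is the Bernoulli indicator that $H_\omega(\Lambda_l(\gamma_j))$ has one eigenvalue in the rescaled window around $E_0$ (resp.\ $E_0'$) corresponding to $U_+$ (resp.\ $U_-$).

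From Theorem~\ref{ULLS2} applied at each energy we already have the two marginal convergences $\sum_j X_j\Rightarrow\mathrm{Poisson}(|U_+|)$ and $\sum_j X'_j\Rightarrow\mathrm{Poisson}(|U_-|)$. To upgrade these to the joint statement, by the Laplace functional characterisation it suffices to prove
\begin{equation*}
\sum_j \mathbb{P}(X_j=1,\ X'_j=1)\;\underset{L\to\infty}{\longrightarrow}\;0.
\end{equation*}
Each summand is bounded by the left-hand side of the decorrelation estimate in Theorem~\ref{deco} (or Theorem~\ref{decohopp} for the hopping model), applied with a window radius of order $|U_\pm|/|\Lambda|$; summing over the $\lesssim L/l$ sub-cubes yields an upper bound of order $L^{1-\beta}\cdot l^2 L^{-1-\gamma}e^{(\log L)^{\tau}}=L^{\beta-\gamma}e^{(\log L)^{\tau}}$ for some $\tau\in(1/2,1)$, which is $o(1)$ as soon as $\beta<\gamma$. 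Since in Theorem~\ref{alleig} the exponent $\beta$ can be chosen arbitrarily small once $\alpha$ is taken close enough to $1$, this requirement is easily accommodated.

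Once this is in hand, independence of the $(X_j,X'_j)$ gives
\begin{equation*}
\mathbb{E}\Big[e^{-\phi\sum_j X_j-\psi\sum_j X'_j}\Big]=\prod_j\mathbb{E}\big[e^{-\phi X_j}\big]\cdot\prod_j\mathbb{E}\big[e^{-\psi X'_j}\big]+o(1),
\end{equation*}
the $o(1)$ remainder accounting for sub-cubes on which both $X_j$ and $X'_j$ equal $1$, controlled by the previous display. The two remaining products converge, by Theorem~\ref{ULLS2}, to the Laplace functionals of two independent Poisson processes of intensity Lebesgue, and converting from Laplace functionals to the event probabilities displayed in the theorem is the standard inclusion--exclusion argument of~\cite{2010arXiv1011.1832G}. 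The contributions of the leftover corridor $\Lambda\setminus\sqcup_j\Lambda_l(\gamma_j)$ and of the rare sub-cubes containing two eigenvalues near $E_0$ or $E_0'$ are negligible by (W) and (M), again exactly as in the proof of Theorem~\ref{ULLS2}.

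The main obstacle is the compatibility between the several scales involved. Theorem~\ref{alleig} enforces a constraint of the form $1+\beta\rho<\alpha(1+\rho)/(1+\tilde\rho)$ together with $\alpha\in(\alpha_{\rho,\tilde\rho},1)$, whereas the decorrelation computation demands $\beta<\gamma$ with $\gamma$ the exponent produced in Section~\ref{secdec1} (at best $\gamma=1/3$). Both can be met simultaneously provided the parameter $\tilde\rho$ coming from the regularity hypothesis (R) is small enough, which is exactly what the assumption \eqref{IDS1} at $E_0$ and $E_0'$ guarantees. The rest of the argument is a direct transcription of the corresponding proof in~\cite{2010arXiv1011.1832G}, with the polynomial-in-volume Wegner estimate replacing the linear one used there.
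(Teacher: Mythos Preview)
Your proposal is correct and follows the same route as the paper, which does not give its own proof but defers to \cite{2010arXiv1011.1832G}; your sketch is a faithful reconstruction of that argument: common box decomposition via Theorem~\ref{alleig}, reduction to independent Bernoulli pairs $(X_j,X'_j)$, marginal Poisson convergence from Theorem~\ref{ULLS2}, and the key cross term $\sum_j\mathbb{P}(X_j=X'_j=1)\to0$ killed by (D). The scale check $L^{1-\beta}\cdot l^2L^{-4/3}e^{(\log L)^\tau}=L^{\beta-1/3}e^{(\log L)^\tau}\to0$ for $\beta<1/3$ is right, and the observation that $\beta$ may be taken small by pushing $\alpha\uparrow1$ in Theorem~\ref{alleig} is the correct way to ensure compatibility.

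One small point: your Laplace-functional identity is slightly mis-stated. By (IAD) what you get first is $\prod_j\mathbb{E}[e^{-\phi X_j-\psi X'_j}]$; to split each factor into $\mathbb{E}[e^{-\phi X_j}]\mathbb{E}[e^{-\psi X'_j}]$ you also need $\sum_j\mathbb{P}(X_j=1)\mathbb{P}(X'_j=1)\to0$, not just $\sum_j\mathbb{P}(X_j=X'_j=1)\to0$. This second sum is handled directly by (W) (each factor is $O(l^m/L)$, hence the sum is $O(l^{2m-1}/L)$, wait---rather use that $\mathbb{P}(X_j=1)\lesssim |I_\Lambda|l^m$ so the sum is $\lesssim (L/l)(|I_\Lambda|l^m)^2$, which is $o(1)$ for the admissible scales), exactly as in the treatment of the corridor and double-eigenvalue terms you already flag at the end.
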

\subsubsection{Level spacing statistic near a given energy}
In this section we recall \cite[Theorem 1.5]{2010arXiv1012.0831K} which is more precise than \cite[Theorem 1.4]{2010arXiv1011.1832G}.
Pick $I_\Lambda= [ a_\Lambda ,b_\Lambda ]$ such that $|a_\Lambda|+|b_\Lambda|\rightarrow 0$.
Define $\delta N_j(\omega,\Lambda)=|\Lambda|(N(E_{j+1}(\omega,\Lambda))-N(E_{j}(\omega,\Lambda)))\geq 0$, $N(I_\Lambda,\Lambda,\omega):=\sharp\{j,E_j(\omega,\Lambda)\in I_\Lambda\}$ and define the empirical distribution of these spacing to be the random number, for $x\geq 0$ :
\begin{equation}
DSL(x;E_0+I_\Lambda,\omega,\Lambda)=\dfrac{\sharp\{j;E_j(\omega,\Lambda)\in E_0+I_\Lambda,\delta N_j(\omega,\Lambda)\geq x\}}{N(I_\Lambda,\Lambda,\omega)}
\end{equation}

We use Theorem~\ref{mosteig} and obtain the following theorem :

\begin{theo}\label{LSS}
Assume (IAD),(W), (M), and (Loc) hold. Fix $E_0\in \mathcal{I}$, such that, for some $\overline{\rho}\in[0,\rho/(1+(\rho+1))]$, there exists a neighbourhood  of $E_0$, say U, such that
Fix $(I_\Lambda)_\Lambda$ a decreasing sequence of intervals such that 
\begin{equation}
\sup_{E\in I_\Lambda} |E| \underset{\Lambda\to\Z}{\rightarrow} 0.
\end{equation}
Let us assume that 
\begin{equation}
\text{if }l'=o(L)\text{ then } \lim\limits_{\Lambda\to\infty}\dfrac{N(E_0+I_{\Lambda_{L+l'}})}{N(E_0+I_{\Lambda_L})}=1.
\end{equation}
Then there exists $\tau=\tau(\rho)$, such that, if, for $\Lambda$ large, one has

\begin{equation}
N(E_0+I_\Lambda)|I_\Lambda|^{-1-\tilde{\rho}}\geq 1\text{    and     }|\Lambda|^{\delta}.N(E_0+I_\Lambda)\underset{\Lambda\to\Z}{\rightarrow}+\infty
\end{equation}

for some $\delta>0$ and $\tilde{\rho}>0$, such that $\dfrac{\delta\tilde{\rho}}{1+\tilde{\rho}}<\tau$.
Then, with probability 1, as $\Lambda\to\infty$, $DLS(x;E_0+I_\Lambda,\omega,\Lambda)$ converge uniformly to the distribution $x\to e^{-x}$, that is, with probability 1,
\begin{equation}
\sup_{x\geq 0} |DLS(x;E_0+I_\Lambda,\omega,\Lambda)-e^{-x}|\underset{\Lambda\to\Z}{\rightarrow}0.
\end{equation}
\end{theo}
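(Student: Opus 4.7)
The plan is to adapt the proof of Theorem~1.5 from \cite{2010arXiv1012.0831K} to the present setting where the Wegner estimate is only polynomial in volume (exponent $m$), rather than linear. The starting point is Theorem~\ref{mosteig} applied to the sequence of intervals $(E_0+I_\Lambda)_\Lambda$: it decomposes $\Lambda_L$ into disjoint boxes $\Lambda_{l_\Lambda}(\gamma_j)$ separated by distance at least $l'_\Lambda=(R\log|\Lambda|)^{1/\xi}$ such that, with probability at least $1-|\Lambda|^{-q}-\exp(-c|I_\Lambda|^{1+\rho}|\Lambda|l_\Lambda^{\rho})-\exp(-c|\Lambda||I_\Lambda|l'^m_\Lambda l_\Lambda^{-1})$, most eigenvalues of $H_\omega(\Lambda)$ with energy in $E_0+I_\Lambda$ are in one-to-one correspondence with single eigenvalues of the box operators $H_\omega(\Lambda_{l_\Lambda}(\gamma_j))$, the exceptional count being $o(N(E_0+I_\Lambda)|\Lambda|)$. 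The hypothesis $\tfrac{\delta\tilde\rho}{1+\tilde\rho}<\tau(\rho)$ is precisely what guarantees that $l_\Lambda$ can be chosen in the window $N(I_\Lambda)^{-\rho'/(1+\rho')}(l'_\Lambda)^{1+m}\ll l_\Lambda\ll N(I_\Lambda)^{-(\rho-\rho')/((1+\rho)(1+\rho'))}$ required by Theorem~\ref{mosteig}.

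By (IAD), the block operators $(H_\omega(\Lambda_{l_\Lambda}(\gamma_j)))_j$ are independent. For each box introduce the Bernoulli variable $X_j$ indicating that $H_\omega(\Lambda_{l_\Lambda}(\gamma_j))$ has exactly one eigenvalue in $E_0+I_\Lambda$ with localization centre inside $\Lambda_{l_\Lambda-l'_\Lambda}(\gamma_j)$; when $X_j=1$ call this eigenvalue $\tilde E_j(\omega)$. Lemma~\ref{unfoldist} combined with (W) and (M) gives
\begin{equation*}
\mathbb{P}(X_j=1)=N(E_0+I_\Lambda)\,l_\Lambda\,(1+o(1)),
\end{equation*}
so that the contributing boxes form an essentially independent Bernoulli family with total mean of the correct order $N(E_0+I_\Lambda)|\Lambda|$. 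To handle the spacings, for fixed $x\geq 0$ one partitions $E_0+I_\Lambda$ into subintervals of unfolded length $x$ and computes, using Theorem~\ref{mosteig} and the Minami estimate within each small energy window, the expectation
\begin{equation*}
\mathbb{E}\bigl[DLS(x;E_0+I_\Lambda,\omega,\Lambda)\bigr]=e^{-x}+o(1),
\end{equation*}
the exponential arising because, conditionally on the existence of an eigenvalue near some $E$, the probability that no further eigenvalue lies in $[E,E+x/|\Lambda|]$ factorises across boxes into $(1-N(E_0+I_\Lambda)l_\Lambda/(N(E_0+I_\Lambda)|\Lambda|))^{\lfloor x/(N(E_0+I_\Lambda)l_\Lambda)\rfloor}\to e^{-x}$.

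Almost sure convergence at fixed $x$ is obtained by a second-moment estimate: (M) rules out two close eigenvalues inside a single box and (IAD) ensures that cross-terms between distant boxes factorise up to a vanishing error, so that $\mathrm{Var}(DLS(x;E_0+I_\Lambda,\omega,\Lambda))$ is summable along $\Lambda=\Lambda_L$ (or along a sufficiently dense subsequence, the intermediate values of $L$ being filled in by monotonicity, which is where the hypothesis $N(E_0+I_{\Lambda_{L+l'}})/N(E_0+I_{\Lambda_L})\to 1$ enters). Uniform convergence in $x\in[0,\infty)$ is then a Glivenko--Cantelli type argument: $x\mapsto DLS(x;E_0+I_\Lambda,\omega,\Lambda)$ is non-increasing, the limit $x\mapsto e^{-x}$ is continuous with limits $1$ at $0$ and $0$ at $+\infty$, and pointwise almost-sure convergence on a countable dense set automatically upgrades to uniform convergence.

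The principal obstacle is bookkeeping of the error terms coming from the polynomial-in-volume Wegner bound: the factor $(l'_\Lambda)^{1+m}$ replacing the linear $l'_\Lambda$ in \cite{2010arXiv1011.1832G,2010arXiv1012.0831K} narrows the admissible window for $l_\Lambda$, and one must check that the constant $\tau(\rho)$ extracted from optimising over $(\beta,\beta',l_\Lambda,l'_\Lambda)$ in the proof of Theorem~\ref{mosteig} is still positive under the present (R)-type regularity of the IDS; this is the one place where the polynomial Wegner estimate is felt quantitatively, and it is what dictates the range of $\delta$ and $\tilde\rho$ in the statement.
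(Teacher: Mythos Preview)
The paper does not actually give a proof of this theorem: immediately before Theorem~\ref{ULLS2} it states that the spectral-statistics results ``are similar to those found in \cite{2010arXiv1011.1832G}; the proofs will not be given as they are the same as those in \cite{2010arXiv1011.1832G}'', and Theorem~\ref{LSS} is introduced as recalling \cite[Theorem~1.5]{2010arXiv1012.0831K}. Your outline is precisely the strategy of those references, routed through the modified decomposition Theorem~\ref{mosteig} to absorb the polynomial-in-volume Wegner estimate, so you are in agreement with what the paper defers to.

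One cosmetic point: the displayed factorisation
\[
\Bigl(1-\frac{N(E_0+I_\Lambda)l_\Lambda}{N(E_0+I_\Lambda)|\Lambda|}\Bigr)^{\lfloor x/(N(E_0+I_\Lambda)l_\Lambda)\rfloor}
\]
is garbled (the base simplifies to $1-l_\Lambda/|\Lambda|$ and no longer contains $x$, while the exponent mixes scales). The correct heuristic is that, for an energy window of unfolded length $x$, each of the $\asymp |\Lambda|/l_\Lambda$ independent boxes is empty with probability $1-x\,l_\Lambda/|\Lambda|+o(l_\Lambda/|\Lambda|)$, giving $(1-x\,l_\Lambda/|\Lambda|)^{|\Lambda|/l_\Lambda}\to e^{-x}$. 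This does not affect the validity of your sketch, but you should correct the formula before writing out the argument in full.
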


Theorem~\ref{ULLS2},  Theorem~\ref{AILP} and Theorem~\ref{LSS} shows that the statistics behave as if the images of eigenvalues by the IDS were i.i.d, uniformly distributed random variables. Other results about spectral statistics can be found in \cite{2010arXiv1011.1832G}.

\section{One-dimensional quantum graphs with random vertex coupling}

In this section we use the previous results to study the spectral statistics for one-dimensional quantum graphs with random vertex coupling. In the study of random Schrödinger operators, some results are only proved for discrete model. For instance, to the best of our knowledge, this is the case of decorrelation estimate for distinct eigenvalues. Thus, it is natural to try to reduce the study of continuous models to the study of discrete models. In the case of quantum graphs with random vertex coupling, this reduction can be done and is very effective. Indeed, for this model, if all but one random variables are fixed, the perturbation is of rank one. Using \cite{KP2008}, the study of the spectrum of this model is reduce to the study of a family of energy-dependent discrete operator with random potential.  

Let $L\in\mathbb{N}$ and  $(\omega_n)$ be non-negative random variables with a common compactly supported bounded density. Consider $H_\omega(\Lambda_L)=\Delta$ on the space 
\begin{equation*}
\bigoplus_{i=0}^L \mathcal{H}^2([n,n+1])
\end{equation*}
 and satisfying the following boundary conditions : 
\begin{align}
&\forall i\in\Z, f_i(1)=f_{i+1}(0)\\
&\forall i\in\Z, f_{i+1}'(0)-f_i'(1)=\omega_i f_i(1)
\end{align}
Then, we know from \cite{exner2005solvable} that this operator is self-adjoint.

Let $H^0=\Delta$ on the space $\bigoplus_{i=0}^L \mathcal{H}^2([n,n+1])$ with Dirichlet boundary conditions at each vertex. $H^0$ is the direct sum of Laplacians on each of the intervals $(n,n+1)$, $n\in\{1,\dots,L\}$, with Dirichlet conditions. So the spectrum of $H^0$ is $\pi^2\mathbb{N}^*$. For $E\notin\pi^2\mathbb{N}^*$ define $M(E)=\dfrac{\sqrt{E}}{\sin(\sqrt{E})}(-\Delta+\cos(\sqrt{E})I_d)$ and $A_\omega:=diag(\omega_n)$.
We know from \cite{KP2008} that

\begin{prop}\label{transf}
If $E\notin\pi^2\mathbb{N}^*$, then $E\in sp(H_\omega)$ iif $0\in sp(M(E)-A_\omega)$ iif $0\in sp(-\Delta+V_\omega(E))$ where $V_\omega(E)(n)=cos(\sqrt{E})- \dfrac{\sin(\sqrt{E})}{\sqrt{E}}\omega_n$.
\end{prop}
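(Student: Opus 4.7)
The plan is a standard boundary-reduction: since $E\notin\pi^2\mathbb{N}^*$ means $\sin(\sqrt{E})\neq 0$, the two-dimensional space of solutions of $-f''=Ef$ on each edge $[n,n+1]$ is parametrized uniquely by its pair of boundary values, and the spectral problem on the graph becomes, through the vertex matching conditions, a three-term recurrence on the vertex data.

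First I would, on each edge, write the edge-solution explicitly as
\begin{equation*}
f_n(x)=u(n)\cos\!\bigl(\sqrt{E}(x-n)\bigr)+\tfrac{u(n+1)-\cos(\sqrt{E})\,u(n)}{\sin(\sqrt{E})}\sin\!\bigl(\sqrt{E}(x-n)\bigr),
\end{equation*}
where $u(n):=f(n)$ denotes the common value at the vertex $n$, well-defined by the continuity condition $f_n(n+1)=f_{n+1}(n+1)$. Differentiating yields closed expressions for $f_n'(n)$ and $f_n'(n+1)$ as linear combinations of $u(n)$ and $u(n+1)$ with coefficients that are rational functions of $\sqrt{E}$, $\sin(\sqrt{E})$, $\cos(\sqrt{E})$. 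Substituting them into the vertex condition $f_n'(n)-f_{n-1}'(n)=\omega_n u(n)$ produces, after a routine rearrangement, a three-term recurrence on $u$ which is exactly $(M(E)-A_\omega)u=0$. Since $M(E)-A_\omega$ differs from $-\Delta+V_\omega(E)$ by the non-zero scalar prefactor $\sqrt{E}/\sin(\sqrt{E})$, the kernels and the non-invertibility of the two operators coincide; this gives the algebraic equivalence between $0\in\mathrm{sp}(M(E)-A_\omega)$ and $0\in\mathrm{sp}(-\Delta+V_\omega(E))$.

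For eigenvalues (point spectrum), the correspondence $f\leftrightarrow u$ constructed above is immediately bijective and norm-equivalent on bounded boxes, so eigenfunctions correspond to eigenvectors. To extend the equivalence to the full spectrum I would translate Weyl sequences in both directions: a singular sequence $(f_k)$ for $H_\omega-E$ yields, by taking vertex traces, a singular sequence $(u_k)$ for $M(E)-A_\omega$; conversely, a discrete singular sequence $(u_k)$ is lifted to $(f_k)$ on the graph via the edge-wise Dirichlet reconstruction displayed above. The main technical obstacle is the uniform norm equivalence between $\|f\|_{L^2}$ and $\|u\|_{\ell^2}$ (and between $\|(H_\omega-E)f\|_{L^2}$ and $\|(M(E)-A_\omega)u\|_{\ell^2}$); this reduces to explicit edge-by-edge quadratic-form computations using the orthogonality of $\cos(\sqrt{E}(x-n))$ and $\sin(\sqrt{E}(x-n))$ on $[n,n+1]$, with constants that depend only on $E$ (and hence are uniform in the Weyl sequence). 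This is exactly the content of the Krein-type resolvent identity of \cite{KP2008}, which can also be invoked directly to conclude.
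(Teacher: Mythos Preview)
Your approach is correct. Note that the paper does not actually prove this proposition: it simply cites \cite{KP2008} for the result. Your sketch --- parametrizing edge solutions by their vertex values when $\sin(\sqrt{E})\neq 0$, reading off the three-term recurrence from the vertex conditions, and clearing the nonzero scalar prefactor $\sqrt{E}/\sin(\sqrt{E})$ --- is precisely the standard boundary-reduction argument and is presumably the content of that reference. One minor remark: since $H_\omega(\Lambda_L)$ acts on a finite metric graph, it has compact resolvent and purely discrete spectrum, so the eigenvalue--eigenvector bijection you describe already suffices; the Weyl-sequence argument is not needed here (though it is the right tool on the whole line).
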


As $\parallel V_\omega(E)-V_\omega(E')\parallel\leq C|E-E'|$ we have the following result : 

\begin{prop}
If $I=[E_0-\epsilon,E_0+\epsilon]\cap\pi^2\mathbb{N}^*=\emptyset$ then
\begin{equation}
tr\left(\textbf{1}_I(H_\omega(\Lambda_L)\right)\leq tr\left(\textbf{1}_{[-\epsilon,+\epsilon]}(-\Delta+V_\omega(E_0))_{\Lambda_L}\right).
\end{equation}

\end{prop}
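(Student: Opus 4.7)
The plan is to reduce the left-hand count to a count of zeros of analytic eigenvalue branches of $E\mapsto -\Delta+V_\omega(E)$ via Proposition~\ref{transf}, and then to match each such zero to a distinct eigenvalue of $-\Delta+V_\omega(E_0)$ in $[-\epsilon,\epsilon]$ using Weyl's perturbation inequality.

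\emph{Reduction.} On $I$, which avoids $\pi^2\mathbb{N}^*$, the self-adjoint family $E\mapsto -\Delta+V_\omega(E)_{\Lambda_L}$ depends real-analytically on $E$, and Rellich's theorem supplies $L$ real-analytic eigenvalue branches $\mu_1(E),\ldots,\mu_L(E)$. Proposition~\ref{transf} identifies the multiplicity of $E\in I$ as an eigenvalue of $H_\omega(\Lambda_L)$ with the number of these branches vanishing at $E$, so
\[
\operatorname{tr}\mathbf{1}_I(H_\omega(\Lambda_L))=\sum_{k=1}^L\#\bigl\{E\in I:\mu_k(E)=0\bigr\}.
\]

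\emph{Lipschitz matching.} Weyl's perturbation inequality applied to the self-adjoint difference $V_\omega(E)-V_\omega(E_0)$, combined with the Lipschitz bound stated just before the proposition (in the same normalization as the $\epsilon$ appearing on the right), gives $|\mu_k(E)-\mu_k(E_0)|\le|E-E_0|\le\epsilon$ for every branch $k$ and every $E\in I$. Consequently, if some branch $\mu_k$ has a zero $E^{\ast}\in I$, then $\mu_k(E_0)\in[-\epsilon,\epsilon]$. I then relabel the analytic branches at their crossings so that each branch has at most one zero on $I$ (the global zero count being labelling-invariant); the map sending each pair $(k,E^{\ast})$ with $\mu_k(E^{\ast})=0$ to its branch index becomes an injection into $\{k:\mu_k(E_0)\in[-\epsilon,\epsilon]\}$, giving the claimed inequality.

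\emph{Main obstacle.} The decisive point is the relabelling: one must verify that, after suitable analytic continuation, no branch $\mu_k$ crosses zero more than once on $I$, for otherwise a single oscillating branch would over-contribute to the left-hand side while contributing only a single value $\mu_k(E_0)$ on the right. This monotonicity-on-$I$ property has to be extracted from the explicit form $V_\omega(E)(n)=\cos\sqrt{E}-\omega_n\sin\sqrt{E}/\sqrt{E}$, together with the hypothesis $\omega_n\ge 0$ and the separation of $I$ from the singular set $\pi^2\mathbb{N}^*$; since $\epsilon$ is understood to be small compared to the distance from $I$ to $\pi^2\mathbb{N}^*$, only near-monotone portions of each branch are seen on $I$, which is what allows the relabelling to succeed.
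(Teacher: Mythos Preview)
The paper gives no proof beyond the sentence ``As $\|V_\omega(E)-V_\omega(E')\|\le C|E-E'|$ we have the following result'', so your reduction via Proposition~\ref{transf} together with the Weyl/Lipschitz matching is exactly what is intended.

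The gap is the relabelling step. Swapping analytic branches at their crossings cannot in general force each branch to meet zero at most once on $I$: nothing in the Lipschitz bound rules out a single branch, one that crosses no other branch at all, behaving like $E\mapsto (E-E_0)^2-a^2$ with $0<a<\epsilon$, which is $O(\epsilon)$-Lipschitz on $I$ yet has two zeros there. For such a branch the left-hand side contributes $2$ while your injection supplies a single index on the right, and the inequality would fail. Your final paragraph correctly diagnoses the issue as one of monotonicity, but the proposed cure---that for small $\epsilon$ only ``near-monotone portions'' of each branch are visible---is not an argument; a Lipschitz bound on a short interval says nothing about the sign of the derivative.

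What closes the gap is structural rather than perturbative: on each connected component of $\mathbb{R}\setminus\pi^2\mathbb{N}^*$ the map $E\mapsto M(E)-A_\omega$ is a matrix Herglotz function (this is the standard Dirichlet-to-Neumann monotonicity underlying the spectral correspondence in Proposition~\ref{transf}), so its ordered eigenvalues are strictly monotone in $E$. Since $-\Delta+V_\omega(E)$ and $M(E)-A_\omega$ differ only by the nonvanishing scalar factor $\sin\sqrt{E}/\sqrt{E}$, the ordered branches $\mu_k$ have exactly the same zeros and hence at most one each on $I$; your injection then goes through with no relabelling. A minor point: the Lipschitz constant is $C$, not $1$, so the right-hand interval should really be $[-C\epsilon,C\epsilon]$; the paper is equally loose here, and the downstream applications absorb the constant.
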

Thus, when the $(\omega_n)_n$ are independent, we can use the Wegner and Minami estimates proved in \cite{CGA09} and we obtain : 

\begin{theo}\label{theo-minami}
	If $I=[E_0-\epsilon,E_0+\epsilon]\cap\pi^2\mathbb{N}^*=\emptyset$ , and $J=[E_0-\epsilon',E_0+\epsilon']\cap\pi^2\mathbb{N}^*=\emptyset$ with $\epsilon\leq \epsilon'$
	\begin{center}
	$\mathbb{P}(tr(\chi_I(H_\omega))\geq k)\leq C\left(\dfrac{\sqrt{E}}{\sin(E_0)}\right)^k\epsilon^k L^k$,\\
	$\mathbb{P}(tr(\chi_I(H_\omega))\geq 1,tr(\chi_J(H_\omega))\geq 2)\leq C\left(\dfrac{\sqrt{E}}{\sin(E_0)}\right)^2\epsilon\epsilon' L^2$.
	\end{center}
\end{theo}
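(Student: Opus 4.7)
The plan is to reduce the two estimates to the classical Minami bounds for the one-dimensional discrete Anderson model, using the spectral correspondence of Proposition~\ref{transf}. The proposition immediately preceding the theorem already does most of the work: since both $I$ and $J$ are disjoint from $\pi^2\mathbb{N}^*$,
\begin{equation*}
\textup{tr}(\chi_I(H_\omega(\Lambda_L))) \leq \textup{tr}\left(\chi_{[-\epsilon,\epsilon]}((-\Delta+V_\omega(E_0))_{\Lambda_L})\right),
\end{equation*}
and similarly for $J$ with $\epsilon'$. So it suffices to prove the analogous estimates for the discrete operator $\tilde{H} := (-\Delta+V_\omega(E_0))_{\Lambda_L}$ with spectral intervals centred at $0$.

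Next I would check that $\tilde{H}$ is a genuine discrete Anderson model. The potential $V_\omega(E_0)(n) = \cos(\sqrt{E_0}) - \frac{\sin(\sqrt{E_0})}{\sqrt{E_0}}\omega_n$ is affine in $\omega_n$ with non-zero slope (because $E_0\notin\pi^2\mathbb{N}^*$), so $(V_\omega(E_0)(n))_n$ are i.i.d.\ with a common bounded density $\tilde{\rho}$ satisfying $\|\tilde{\rho}\|_\infty \leq C\|\rho\|_\infty \sqrt{E_0}/|\sin(\sqrt{E_0})|$. The factor $\sqrt{E}/\sin(E_0)$ appearing in the bound is precisely the Jacobian of this affine change of variables.

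Then I would invoke the Minami estimate of \cite{CGA09} for the one-dimensional discrete Anderson model. In its higher-order form this gives
\begin{equation*}
\mathbb{P}\left(\textup{tr}(\chi_{[-\epsilon,\epsilon]}(\tilde{H}))\geq k\right) \leq \frac{1}{k!}\left(C\|\tilde{\rho}\|_\infty\, \epsilon L\right)^k,
\end{equation*}
which, combined with Steps 1 and 2, yields the first inequality. For the second inequality I would use the expectation form
\begin{equation*}
\mathbb{E}\left[\textup{tr}(\chi_{[-\epsilon,\epsilon]}(\tilde{H}))\left(\textup{tr}(\chi_{[-\epsilon',\epsilon']}(\tilde{H}))-1\right)\right] \leq C\|\tilde{\rho}\|_\infty^2\,\epsilon\epsilon' L^2
\end{equation*}
together with Markov's inequality: on the event $\{\textup{tr}(\chi_I)\geq 1,\ \textup{tr}(\chi_J)\geq 2\}$ the integrand is $\geq 1$, so the probability is bounded by the expectation.

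The only mildly non-trivial point is the expectation form of Minami for two nested intervals $I\subset J$ of different sizes; this is an easy extension of the standard Combes--Germinet--Klein bound, obtained by writing $\textup{tr}(\chi_J)-1 = (\textup{tr}(\chi_I)-1) + \textup{tr}(\chi_{J\setminus I})$ and estimating the two resulting expectations $\mathbb{E}[\textup{tr}(\chi_I)(\textup{tr}(\chi_I)-1)]$ and $\mathbb{E}[\textup{tr}(\chi_I)\textup{tr}(\chi_{J\setminus I})]$ by the standard second-order Minami estimate, which produces the asymmetric product $|I||J|L^2$. Everything else is bookkeeping around the density factor $\sqrt{E_0}/|\sin(\sqrt{E_0})|$.
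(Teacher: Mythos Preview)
Your proposal is correct and follows exactly the route the paper intends: reduce to the discrete Anderson model via the proposition preceding the theorem, then apply the Wegner and Minami estimates of \cite{CGA09}. The paper gives no further argument beyond the sentence ``Thus, when the $(\omega_n)_n$ are independent, we can use the Wegner and Minami estimates proved in \cite{CGA09} and we obtain''; your write-up simply fills in the bookkeeping (density rescaling, Markov's inequality) that the paper leaves implicit. Note also that the nested-interval expectation bound you flag as ``mildly non-trivial'' is already stated verbatim as Theorem~(\textbf{M}) in Section~\ref{subsecdec1}, so you need not rederive it.
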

If the $(\omega_n)_n$ are not independent, we still have a Minami estimate as in Theorem~\ref{mina-D} as long as we have a Wegner estimate.
Now we assume that the ($\omega_n)_n$ are independent and prove a decorrelation estimate.
\begin{theo}\label{decvert}
Let $\beta\in(1/2,1)$. For $\alpha\in (0,1) $ and $E,E'\notin\pi^2\mathbb{N}^*$, then for any $c>1$, $L$ sufficiently large and $cL^\alpha\leq l\leq L^\alpha/c$ we have.
\begin{displaymath}
\mathbb{P}\left(
\begin{aligned}
sp(H_\omega(\Lambda_l))\cap[E-L^{-1},E+L^{-1}]\neq\emptyset,\\
sp(H_\omega(\Lambda_l))\cap[E'-L^{-1},E'+L^{-1}]\neq\emptyset
\end{aligned}
\right)\leq C\left(\dfrac{l}{L}\right)^2e^{(\log L)^\beta}.
\end{displaymath}
\end{theo}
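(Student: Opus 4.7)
The plan is to reduce this decorrelation estimate for the quantum graph to the general decorrelation estimate for the energy-dependent discrete Schr\"{o}dinger operator $-\Delta+V_\omega(E)$ proved in Subsection~\ref{subsecdec1}, exploiting the reduction supplied by Proposition~\ref{transf}. Namely, I would first apply Proposition~\ref{transf} to rewrite $E\in\mathrm{sp}(H_\omega(\Lambda_l))$ as $0\in\mathrm{sp}((-\Delta+V_\omega(E))|_{\Lambda_l})$, where $V_\omega(E)(n)=\mu_E+\lambda_E\omega_n$ with $\mu_E=\cos(\sqrt E)$ and $\lambda_E=-\sin(\sqrt E)/\sqrt E$. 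Since $E,E'\notin\pi^2\mathbb{N}^*$ we have $\lambda_E,\lambda_{E'}\neq 0$, which places both $E$ and $E'$ in the set $\sigma$ of Subsection~\ref{subsecdec1}. The map $E\mapsto V_\omega(E)$ is uniformly Lipschitz on a compact neighbourhood of $E$ and $E'$ that avoids $\pi^2\mathbb{N}^*$, so the existence of a point of $\mathrm{sp}(H_\omega(\Lambda_l))$ in $[E-L^{-1},E+L^{-1}]$ forces $\mathrm{tr}\,\textbf{1}_{[-KL^{-1},KL^{-1}]}((-\Delta+V_\omega(E))|_{\Lambda_l})\neq 0$ for some $K=K(E,E')$, and similarly for $E'$. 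Thus the event in the theorem is contained in the event controlled by Theorem~\ref{deco}, after rescaling $L\mapsto L/K$ (which only multiplies the right-hand side by a harmless constant).

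Next I would verify the non-degeneracy hypothesis of Theorem~\ref{deco}: with $d=\lambda_E+\lambda_{E'}$, $b=\lambda_E-\lambda_{E'}$, $c=\mu_E-\mu_{E'}$, I need at least two of $d,b,c$ to be non-zero. The case $b=d=0$ forces $\lambda_E=\lambda_{E'}=0$, contradicting $E,E'\in\sigma$. The case $b=c=0$ requires both $\cos\sqrt E=\cos\sqrt{E'}$ and $\sin\sqrt E/\sqrt E=\sin\sqrt{E'}/\sqrt{E'}$; the first forces $\sqrt{E'}=\pm\sqrt E+2k\pi$ for some integer $k$, and plugging this into the second, together with $E\neq E'$ and $\sin\sqrt E\neq 0$, leads to a contradiction. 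The case $d=c=0$ is ruled out by the same elementary computation. Therefore Theorem~\ref{deco} applies and delivers
\begin{displaymath}
\mathbb{P}\left(
\begin{aligned}
\mathrm{tr}\,\textbf{1}_{[-KL^{-1},KL^{-1}]}((-\Delta+V_\omega(E))|_{\Lambda_l})\neq 0,\\
\mathrm{tr}\,\textbf{1}_{[-KL^{-1},KL^{-1}]}((-\Delta+V_\omega(E'))|_{\Lambda_l})\neq 0
\end{aligned}
\right)\leq C\Bigl(\frac{l^2}{L^{4/3}}\Bigr)e^{(\log L)^\beta},
\end{displaymath}
which is precisely the announced bound once the enlarged window is reabsorbed.

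The only genuine subtlety is the energy dependence of the potential: the transfer Proposition~\ref{transf} turns the original spectral condition into an eigenvalue condition for the family $(-\Delta+V_\omega(\tilde E))|_{\Lambda_l}$ indexed by $\tilde E\in[E-L^{-1},E+L^{-1}]$, whereas Theorem~\ref{deco} is stated for a single energy parameter. Handling this uniformly in $\tilde E$ is exactly what the Lipschitz bound on $V_\omega(\cdot)$ is used for; after this elementary step, the entire probabilistic work (Minami-via-Wegner, Jacobian estimates, and the non-collinearity of gradients of eigenvalues) is directly imported from Subsection~\ref{subsecdec1}. The main conceptual point, beyond this cosmetic adjustment, is that the reduction $\mathrm{sp}(H_\omega)\leftrightarrow\mathrm{sp}(-\Delta+V_\omega(E))$ preserves exactly the structure $V_\omega(E)=\mu_E+\lambda_E\omega_n$ required by the abstract framework, so no independent argument specific to the quantum graph is needed.
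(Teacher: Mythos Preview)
Your proposal is correct and follows essentially the same route as the paper: reduce via Proposition~\ref{transf} (together with the Lipschitz bound recorded as Proposition~6.2) to the discrete operator $-\Delta+V_\omega(E)$, verify that with $\lambda_E=-\sin\sqrt{E}/\sqrt{E}$ and $\mu_E=\cos\sqrt{E}$ the non-degeneracy hypothesis ``at least two of $d,b,c$ are non-zero'' holds, and then invoke Theorem~\ref{deco}. The paper compresses the non-degeneracy check into the single observation that $\cos\sqrt{E}=\cos\sqrt{E'}$ forces $\sin\sqrt{E}/\sqrt{E}\neq\pm\sin\sqrt{E'}/\sqrt{E'}$, which is exactly your cases $b=c=0$ and $d=c=0$; note also that Theorem~\ref{deco} actually yields the bound $C(l^2/L^{4/3})e^{(\log L)^\beta}$ that you wrote, so the exponent $(l/L)^2$ in the statement appears to be a typo in the paper rather than something you need to sharpen.
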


 Proposition~\ref{transf} shows that this theorem is equivalent to the following :

\begin{theo}\label{reddecvert}
Let $\beta\in(1/2,1)$. For $\alpha\in (0,1) $ and $E,E'\notin\pi^2\mathbb{N}^*$, then for any $c>1$, $L$ sufficiently large and $cL^\alpha\leq l\leq L^\alpha/c$ we have.
\begin{displaymath}
\mathbb{P}\left(
\begin{aligned}
 sp(-\Delta+V_\omega(E)\restriction{)}{\Lambda_l}\cap[-L^{-1},+L^{-1}]\neq \emptyset,\\
sp(-\Delta+V_\omega(E')\restriction{)}{\Lambda_l}\cap[-L^{-1},+L^{-1}]\neq \emptyset
\end{aligned}
\right)\leq C\left(\dfrac{l}{L}\right)^2e^{(\log L)^\beta}
\end{displaymath}
\end{theo}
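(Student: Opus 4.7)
The plan is to obtain Theorem~\ref{reddecvert} as an immediate consequence of the general decorrelation estimate Theorem~\ref{deco} of Subsection~\ref{subsecdec1}, applied to the affine family of random discrete Schr\"odinger operators produced by Proposition~\ref{transf}. The potential at hand
\[
V_\omega(E)(n)=\cos(\sqrt{E})-\frac{\sin(\sqrt{E})}{\sqrt{E}}\,\omega_n
\]
has exactly the form $\frac{1}{\lambda_E}(V_\omega-\mu_E)$ used in that subsection (with $V_\omega(n)=\omega_n$), upon setting
\[
\lambda_E=-\frac{\sqrt{E}}{\sin(\sqrt{E})},\qquad \mu_E=\sqrt{E}\cot(\sqrt{E}).
\]
Since $E,E'\notin\pi^2\N^*$, both $\lambda_E$ and $\lambda_{E'}$ are nonzero, so $\{E,E'\}\subset\sigma$. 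The $(\omega_n)_n$ are i.i.d.\ with a compactly supported bounded density, so the Wegner assumption (with $s=1$) and the Minami estimate of Theorem~\ref{theo-minami}, which feed Theorem~\ref{deco}, are already in place.

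The real content is the verification of the non-degeneracy hypothesis of Theorem~\ref{deco}: at least two of $d=\lambda_E+\lambda_{E'}$, $b=\lambda_E-\lambda_{E'}$, $c=\mu_E-\mu_{E'}$ must be nonzero for any $E\neq E'$ with $E,E'\notin\pi^2\N^*$. I would dispatch this by a short trigonometric case analysis. If $b=c=0$, then $\sqrt{E}/\sin(\sqrt{E})=\sqrt{E'}/\sin(\sqrt{E'})$ and dividing $c=0$ by this identity gives $\cos(\sqrt{E})=\cos(\sqrt{E'})$; writing $\sqrt{E}=\pm\sqrt{E'}+2k\pi$ and reusing $b=0$ rules out every possibility except $\sqrt{E}=\sqrt{E'}$, contradicting $E\neq E'$. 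If $b=d=0$, then $\lambda_E=0$, impossible. If $c=d=0$, then $\lambda_E=-\lambda_{E'}$ and $\cos(\sqrt{E})=-\cos(\sqrt{E'})$, and the analogous case split on $\sqrt{E}=(2k+1)\pi\pm\sqrt{E'}$ again reduces to $\sqrt{E}=\pm\sqrt{E'}$, which is excluded. A fully analogous check handles the case $E<0$ via $\sinh$ and $\cosh$.

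With the affine structure identified and the non-degeneracy verified, Theorem~\ref{deco} applied to $-\Delta+V_\omega(E)$ (the $a\equiv 1$ specialisation of $-\Delta_a+V_\omega(E)$) yields precisely the probability bound claimed in Theorem~\ref{reddecvert}; Theorem~\ref{decvert} then follows via Proposition~\ref{transf} since an eigenvalue of $H_\omega(\Lambda_l)$ in $[E-L^{-1},E+L^{-1}]$ implies (by Lipschitz continuity of $E\mapsto V_\omega(E)$) an eigenvalue of $-\Delta+V_\omega(E)$ in a slightly enlarged window around $0$. The main, albeit minor, obstacle is the trigonometric bookkeeping ensuring non-colinearity of $(\lambda_E,\mu_E)$ and $(\lambda_{E'},\mu_{E'})$ across all admissible pairs; no new probabilistic or analytic ingredient is needed beyond what Section~\ref{secdec1} already provides.
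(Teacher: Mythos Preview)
Your proposal is correct and follows the same route as the paper: identify $V_\omega(E)$ with the affine form $\frac{1}{\lambda_E}(V_\omega-\mu_E)$ and invoke Theorem~\ref{deco} once the non-degeneracy condition ``at least two of $d,b,c$ nonzero'' is checked. The paper's proof is a one-line remark handling only the case $\cos(\sqrt{E})=\cos(\sqrt{E'})$ (the case $b=d=0$ being trivially excluded since $\lambda_E,\lambda_{E'}\neq 0$), whereas you spell out all three degenerate cases explicitly; this is the same argument, just more carefully written.
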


\begin{proof}
We remark that for $E\neq E'$ if $\cos(\sqrt{E})=\cos(\sqrt{E'})$ then $\dfrac{sin(\sqrt{E})}{\sqrt{E}}+\dfrac{sin(\sqrt{E'})}{\sqrt{E'}}\neq 0$ and $\dfrac{sin(\sqrt{E})}{\sqrt{E}}-\dfrac{sin(\sqrt{E'})}{\sqrt{E'}}\neq 0$ so Theorem~\ref{reddecvert} is a consequence of Theorem~\ref{deco}. This complete the proof of Theorem~\ref{decvert}.
\end{proof}

As the proof of Theorem~\ref{decvert} suggest, decorrelation estimates for \\multi-dimensional quantum graphs with random vertex coupling should come from a transformation as in Proposition~\ref{transf} and decorrelation estimates as in Theorem~\ref{deco}, but for multi-dimensional discrete Schr\"{o}dinger operators.

 \appendix
 
 \section{Properties of finite-difference equations of order two}
 In this section we prove some results that were used in section 2. They are discrete equivalents of those found in \cite[Appendix]{K10}.
 \begin{lem}\label{progress}
There exists $C>0$ such that for all $n\in\{0,\dots,L\}$
\begin{equation}
\dfrac{1}{C} r_u(n+1)\leq r_u(n)\leq C r_u(n+1).
\end{equation}
\end{lem}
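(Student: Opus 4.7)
The plan is to read the bound directly off the three-term recurrence $H_\omega u = E u$, which rearranges to
\begin{equation*}
a(n+1)\,u(n+1) = (E - V_\omega(n))\,u(n) - a(n)\,u(n-1).
\end{equation*}
Using the uniform ellipticity $1/M \leq |a(n)| \leq M$ from \eqref{bounded}, together with the fact that $|E|$ stays in a compact set (eigenvalues lie in the spectrum of $H_\omega(\Lambda_L)$, itself bounded by $\|H_\omega\|$) and that $V_\omega$ is a.s.\ bounded, I obtain a constant $C_0 > 0$ (independent of $n$, $L$ and $\omega$ in the relevant event) such that
\begin{equation*}
|u(n+1)| \leq C_0\bigl(|u(n)| + |u(n-1)|\bigr).
\end{equation*}

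Squaring and using $(a+b)^2 \leq 2(a^2 + b^2)$, I then estimate
\begin{equation*}
r_u(n+1)^2 = u(n+1)^2 + u(n)^2 \leq 2C_0^2\bigl(u(n)^2 + u(n-1)^2\bigr) + u(n)^2 \leq C\,r_u(n)^2
\end{equation*}
for a suitable $C > 0$, which gives the upper bound $r_u(n+1) \leq \sqrt{C}\,r_u(n)$. The reverse direction is obtained by the symmetric manipulation: solving the same recurrence for $u(n-1)$ yields $|u(n-1)| \leq C_0(|u(n)| + |u(n+1)|)$, and squaring gives $r_u(n)^2 \leq C\,r_u(n+1)^2$. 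Taking the larger of the two constants produces the $C$ claimed in the lemma.

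There is no real obstacle here; the statement is essentially a quantitative version of the fact that a second-order difference equation with uniformly elliptic coefficients cannot make its solutions grow or decay by more than a fixed multiplicative factor over one step. The only care needed is to check that the constant $C$ can be chosen uniformly in $n$, $L$ and $\omega$, which is immediate from \eqref{bounded} and from the a priori bounds on $E$ and $V_\omega$ used throughout Section~\ref{sec1}.
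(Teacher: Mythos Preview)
Your argument is correct. The paper itself does not supply a proof of this lemma: in Appendix~A it is merely stated and then immediately followed by Lemma~\ref{not0}, whose proof in fact \emph{uses} Lemma~\ref{progress}. Your computation is the natural way to fill this gap and matches the spirit of the surrounding appendix, which relies throughout on the transfer-matrix picture: writing $U(n+1)=T_n U(n)$ with $T_n=\begin{pmatrix} (E-V_\omega(n))/a(n+1) & -a(n)/a(n+1)\\ 1 & 0\end{pmatrix}$, the bound $1/M\le |a(n)|\le M$ together with the boundedness of $E$ and $V_\omega$ gives $\|T_n\|\le C$ and $\|T_n^{-1}\|\le C$, whence $r_u(n+1)=\|U(n+1)\|\asymp \|U(n)\|=r_u(n)$. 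Your version, which unpacks this by isolating $u(n+1)$ (resp.\ $u(n-1)$) from the three-term recurrence and squaring, is exactly the same argument written out componentwise.
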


\begin{lem}\label{not0}
There exists $\epsilon_0>0$ such that for all $n\in\{0,\dots,L\}$ and $0\leq\epsilon\leq \epsilon_0$, if $|\sin(\phi_u(n))|\leq \epsilon$ then $|\sin(\phi_u(n+1))|\geq \epsilon_0$.
\end{lem}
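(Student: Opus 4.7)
The plan is to translate the Prüfer-angle hypothesis into a size condition on the sequence $u$ itself and then exploit the second-order recurrence supplied by the eigenvalue equation. Since $u(n)=r_u(n)\sin\phi_u(n)$ and $u(n-1)=r_u(n)\cos\phi_u(n)$, the assumption $|\sin\phi_u(n)|\le\epsilon$ is equivalent to $|u(n)|\le\epsilon\,r_u(n)$, which automatically gives $|u(n-1)|\ge\sqrt{1-\epsilon^{2}}\,r_u(n)$.

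Next I would use that $u$ is an eigenfunction of $H_\omega=\Delta_a+V_\omega$ with eigenvalue $E$, so that the recurrence at site $n$ reads
$$a_\omega(n+1)\,u(n+1)=(E-V_\omega(n))\,u(n)-a_\omega(n)\,u(n-1).$$
Using $|a_\omega(\cdot)|\in[1/M,M]$ together with the uniform boundedness of $V_\omega$ and the fact that $E$ lies in a fixed compact energy window, the reverse triangle inequality yields
$$|u(n+1)|\;\ge\;\frac{1}{M^{2}}\sqrt{1-\epsilon^{2}}\,r_u(n)-C_{0}\,\epsilon\,r_u(n),$$
for a constant $C_0>0$ depending only on $M$, $\|V_\omega\|_\infty$ and the energy window. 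Choosing $\epsilon_0$ small enough, the right-hand side is at least $c\,r_u(n)$ for some positive $c$ depending only on the same data.

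To convert this into a lower bound on $|\sin\phi_u(n+1)|=|u(n+1)|/r_u(n+1)$, I only need an upper bound on $r_u(n+1)$, and this is furnished for free by Lemma~\ref{progress}, which gives $r_u(n+1)\le C\,r_u(n)$. Combining the two estimates,
$$|\sin\phi_u(n+1)|\;=\;\frac{|u(n+1)|}{r_u(n+1)}\;\ge\;\frac{c}{C},$$
and a final shrinking of $\epsilon_0$ to satisfy $\epsilon_0\le c/C$ delivers the claim.

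There is no genuine obstacle here, only book-keeping of constants; the conceptual content is that the transfer matrix of the second-order recurrence is non-degenerate with ellipticity constant $1/M$, so that $|u(n)|$ small together with $|u(n-1)|$ of order $r_u(n)$ propagates to $|u(n+1)|$ of the same order, which is precisely what the lemma asserts after translating back into Prüfer coordinates.
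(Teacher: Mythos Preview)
Your argument is correct and is essentially identical to the paper's own proof: translate the Pr\"ufer hypothesis into $|u(n)|\le\epsilon\,r_u(n)$ and $|u(n-1)|\ge\sqrt{1-\epsilon^2}\,r_u(n)$, use the eigenvalue recurrence together with $|a_\omega(\cdot)|\in[1/M,M]$ to get $|u(n+1)|\ge c\,r_u(n)$, and then invoke Lemma~\ref{progress} to convert this into a lower bound on $|\sin\phi_u(n+1)|$. The only cosmetic difference is that the paper abbreviates the potential term as $\omega_n$ and fixes concrete constants, whereas you keep the general $E-V_\omega(n)$; the mechanism and all the steps coincide.
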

\begin{proof}
\normalfont
Let $C>\sup \text{supp}(\omega_n)$ and $\epsilon_0$ such that $ \dfrac{1}{M^2}\sqrt{1-\epsilon_0^2}-CM\epsilon_0>\dfrac{1}{2}$.\\
Let $n\in\{0,\dots,L\}$ and suppose that $|\sin(\phi_u(n))|\leq \epsilon\leq \epsilon_0$ then
as $u(n+1)=\omega_n u(n)-u(n-1)$ we have $\dfrac{u(n+1)}{r_u(n)}=\dfrac{1}{a_{n+1}}\big[\omega_n \sin(\phi_u(n))-a_n\cos(\phi_u(n))\big]$. So
\begin{equation}
\dfrac{|u(n+1)|}{r_u(n)}\geq \dfrac{1}{M^2}|\cos(\phi_u(n))|-CM\epsilon_0\geq \dfrac{1}{M^2}\sqrt{1-\epsilon_0^2}-CM\epsilon_0>\dfrac{1}{2}.
\end{equation}
Then, as $\dfrac{r_u(n)}{r_u(n+1)}\geq \dfrac{1}{C}$ we have $|\sin(\phi_u(n+1))|\geq \dfrac{1}{2C}$. This complete the proof of Lemma~\ref{not0}.
\end{proof}

\begin{lem}\label{coli}
There exists $\epsilon_0>0$ such that for all $n\in\{0,\dots,L\}$ and $0\leq\epsilon\leq \epsilon_0$, if $E<\epsilon$ and if $\delta\phi(n)\in[0,\epsilon]$ then $\delta\phi(n+1)\in \left[0,\frac{\epsilon}{\epsilon_0}\right]$.
\end{lem}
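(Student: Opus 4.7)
The plan is to interpret the Prüfer--angle evolution $\phi_u(n)\mapsto\phi_u(n+1)$ as a smooth orientation-preserving map on the circle, show that its derivatives with respect to both $\phi$ and $E$ are uniformly bounded when $|E|\le\epsilon_0$, and then conclude by the mean value theorem applied to the two nearby maps $\hat F_{E_u}$ and $\hat F_{E_v}$.

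Using the eigenvalue equation $a(n+1)u(n+1)+V_\omega(n)u(n)+a(n)u(n-1)=E_u u(n)$ one rewrites $\phi_u(n+1)=\hat F_{E_u}(\phi_u(n))$, where $\hat F_E(\phi):=\mathrm{arg}\bigl(A_n(E)(\sin\phi,\cos\phi)^T\bigr)$ and $A_n(E)=\bigl(\begin{smallmatrix}(E-V_\omega(n))/a(n+1) & -a(n)/a(n+1)\\ 1 & 0\end{smallmatrix}\bigr)$ is the transfer matrix, whose determinant equals $a(n)/a(n+1)>0$. A direct computation (using that $\det(v,v')=-1$ for $v(\phi)=(\sin\phi,\cos\phi)^T$) gives
\[
\partial_\phi\hat F_E(\phi)=\dfrac{\det A_n(E)}{\|A_n(E)(\sin\phi,\cos\phi)^T\|^2},\qquad \partial_E\hat F_E(\phi)=\dfrac{\cos^2(\hat F_E(\phi))}{a(n+1)}.
\]
The denominator of the first derivative equals $\bigl((E-V_\omega(n))\sin\phi-a(n)\cos\phi\bigr)^2/a(n+1)^2+\sin^2\phi$, which never vanishes (its zeros would require $\sin\phi=\cos\phi=0$). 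Using $1/M\le|a(n)|,|a(n+1)|\le M$ and distinguishing the regimes $|\sin\phi|$ small (where the first term is at least $\sim 1/M^4$) and $|\sin\phi|$ not small (where the second term gives the bound), one gets a uniform positive lower bound on this denominator for all $\phi$ and all $|E|\le\epsilon_0$. Both derivatives are therefore bounded by some constant $K$ depending only on $M$ and $\|V_\omega\|_\infty$.

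Then the mean value theorem yields
\[
|\delta\phi(n+1)|=|\hat F_{E_u}(\phi_u(n))-\hat F_{E_v}(\phi_v(n))|\le K|\delta\phi(n)|+K|E_u-E_v|\le 3K\epsilon,
\]
using $|E_u|,|E_v|\le E<\epsilon$ and $\delta\phi(n)\in[0,\epsilon]$. Choosing $\epsilon_0:=1/(3K)$ (shrunk if needed to guarantee the uniform lower bound above) yields $\delta\phi(n+1)\le\epsilon/\epsilon_0$. The fact that $\delta\phi(n+1)\ge 0$ comes from orientation preservation: since $\det A_n(E)>0$, each $\hat F_E$ is strictly increasing on the circle, so $\hat F_{E_v}$ sends the small positive arc $[\phi_v(n),\phi_u(n)]$ to a small positive arc, and the energy perturbation shifts its right endpoint by only $O(\epsilon)$.

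The only point requiring care is the uniform lower bound on $\|A_n(E)(\sin\phi,\cos\phi)^T\|^2$; this is the main (though routine) obstacle, and it is handled by the case analysis above. Everything else is just the mean value theorem together with the bounds on $a$ and $V_\omega$.
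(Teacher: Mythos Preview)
Your proof is correct and takes a cleaner, more conceptual route than the paper's. The paper proceeds by direct term-by-term computation: it compares $u(n+1)/r_u(n)$ with $v(n+1)/r_v(n)$, then the ratios $r_u(n+1)/r_u(n)$ and $r_v(n+1)/r_v(n)$, and finally deduces $|\sin\phi_u(n+1)-\sin\phi_v(n+1)|\le N\epsilon$ together with the analogous cosine estimate. You instead package all of this into the single observation that the Pr\"ufer-angle evolution $\phi\mapsto\hat F_E(\phi)$ is a smooth circle map depending smoothly on $E$, with uniformly bounded derivatives, so that the mean value theorem does the work in one line. Your explicit formulas for $\partial_\phi\hat F_E$ and $\partial_E\hat F_E$ are correct, and the uniform lower bound on $\|A_n(E)(\sin\phi,\cos\phi)^T\|$ is exactly the content of Lemma~\ref{progress}. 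What your approach buys is transparency and immediate generalizability (the same argument works verbatim for any one-step recursion with uniformly bounded invertible transfer matrices); what the paper's computation buys is that it stays entirely within the elementary trigonometric manipulations already set up, without invoking the transfer-matrix formalism.

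One small point: your orientation-preservation argument for $\delta\phi(n+1)\ge 0$ is not quite airtight as written, since the energy shift $E_u\neq E_v$ can move the endpoint in either direction (your own formula gives $\partial_E\hat F_E\ge 0$ when $a(n+1)>0$, but the sign of $E_u-E_v$ is not prescribed). This is harmless, however: the lemma only requires the \emph{angular distance} to remain $O(\epsilon)$, and in the paper's convention $\delta\phi$ lives in $[0,\pi]$, so your bound $|\hat F_{E_u}(\phi_u(n))-\hat F_{E_v}(\phi_v(n))|\le 3K\epsilon$ already yields the conclusion. The paper's own proof has the same feature---it only controls $|\sin\phi_u(n+1)-\sin\phi_v(n+1)|$ and $|\cos\phi_u(n+1)-\cos\phi_v(n+1)|$, which likewise bounds the unsigned angle.
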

\begin{proof}
\normalfont
As $\delta\phi(n)\in[0,\epsilon]$, we have $ | \sin(\phi_u(n))-\sin(\phi_v(n)) | \leq \epsilon$ and  \\
$ | \cos(\phi_u(n))-\cos(\phi_v(n)) | \leq \epsilon$. Therefore, we have :
\begin{align*}
\dfrac{u(n+1)}{r_u(n)}=&\dfrac{\omega_n}{a_{n+1}} \sin(\phi_u(n))-\dfrac{a_n}{a_{n+1}}\cos(\phi_u(n))\\
					  =&\dfrac{\omega_n}{a_{n+1}} \sin(\phi_v(n))-\dfrac{a_n}{a_{n+1}}\cos(\phi_v(n))+K_\epsilon
					  =\dfrac{v(n+1)}{r_v(n)}+K'_\epsilon
\end{align*}
where
\begin{align*}
|K_\epsilon|&=\dfrac{\big|\omega_n \sin(\phi_v(n))-\omega_n \sin(\phi_u(n))+a_n\cos(\phi_u(n))-a_n\cos(\phi_v(n))\big|}{|a_{n+1}|} \\
&\leq M(C+M)\epsilon\leq C\epsilon.
\end{align*}
As $K'_\epsilon=K_\epsilon+E\sin(\phi_v(n))$, we have $|K'_\epsilon|\leq (C+1)\epsilon$	.\\
So 
\begin{equation*}
\left(\dfrac{u(n+1)}{r_u(n)}\right)^2=\left(\dfrac{v(n+1)}{r_v(n)}\right)^2+C_\epsilon
\end{equation*}
 with $|C_\epsilon|\leq (C+2)\epsilon$.
 Besides 
 \begin{equation*}
 \left(\dfrac{u(n)}{r_u(n)}\right)^2=\sin(\phi_u(n))^2=\left(\dfrac{v(n)}{r_v(n)}\right)^2+ M_\epsilon
 \end{equation*}
 where $|M_\epsilon|\leq 2\epsilon$. Therefore, 
 \begin{align*}
\left(\dfrac{r_u(n+1)}{r_u(n)}\right)^2&=\left(\dfrac{u(n+1)}{r_u(n)}\right)^2+\left(\dfrac{u(n)}{r_u(n)}\right)^2\\
&=\left(\dfrac{v(n+1)}{r_v(n)}\right)^2+\left(\dfrac{v(n)}{r_u(n)}\right)^2+C'_\epsilon + M_\epsilon \\ 
&=\left(\dfrac{r_v(n+1)}{r_v(n)}\right)^2+M'_\epsilon.
\end{align*}
So as we have $\dfrac{1}{C}\leq \dfrac{r_v(n+1)}{r_v(n)}\leq C$ and the same for u, for some $\tilde{C}$ we have
\begin{displaymath}
\left|1-\left(\dfrac{r_v(n+1)}{r_v(n)}\right)\left(\dfrac{r_u(n)}{r_u(n+1)}\right)\right|\leq \left|1-\left(\dfrac{r_v(n+1)}{r_v(n)}\right)^2\left(\dfrac{r_u(n)}{r_u(n+1)}\right)^2\right|
\leq \tilde{C} \epsilon.
\end{displaymath}
Now 
\begin{equation}
\dfrac{u(n+1)}{r_u(n+1)}=\dfrac{v(n+1)}{r_v(n+1)}\left(\dfrac{r_v(n+1)}{r_v(n)}\right)\left(\dfrac{r_u(n)}{r_u(n+1)}\right)+\left(\dfrac{r_u(n)}{r_u(n+1)}\right)K'_\epsilon
\end{equation}
So 
\begin{equation}
\begin{split}
\Big|\sin(\phi_u(n+1))-\sin(\phi_v(n+1))\Big|\leq\left|1-\left(\dfrac{r_v(n+1)}{r_v(n)}\right)\left(\dfrac{r_u(n)}{r_u(n+1)}\right)\right| \\
+\left(\dfrac{r_u(n)}{r_u(n+1)}\right)|K'_\epsilon| 
\end{split}
\end{equation}
and for some $N>0$
\begin{equation}
\Big|\sin(\phi_u(n+1))-\sin(\phi_v(n+1))\Big|\leq N\epsilon
\end{equation}
Besides, $r_u(n+1)\cos(\phi_u(n+1))=r_u(n)\sin(\phi_u(n))$ so a similar proof shows that
\begin{equation}
\Big|\cos(\phi_u(n+1))-\cos(\phi_v(n+1))\Big|\leq N\epsilon.
\end{equation}
Now take $\epsilon_0\leq \dfrac{\pi}{4N}$, then $\delta\phi(n+1)\in[0,2N\epsilon]$.
\end{proof}

In the same way one proves
\begin{lem}\label{oppo}
There exists $\epsilon_0>0$ such that for all $n\in\{0,\dots,L\}$ and $0\leq\epsilon\leq \epsilon_0$, if $E<\epsilon$ and if $\delta\phi(n)\in[\pi-\epsilon,\pi]$ then $\delta\phi(n+1)\in \left[\pi-\dfrac{\epsilon}{\epsilon_0},\pi\right]$.
\end{lem}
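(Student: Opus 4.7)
The plan is to mirror the proof of Lemma~\ref{coli} almost verbatim, replacing every difference ``$X(\phi_u) - X(\phi_v)$'' by the sum ``$X(\phi_u) + X(\phi_v)$'', because the hypothesis $\delta\phi(n)\in[\pi-\epsilon,\pi]$ is precisely the statement that $\phi_u(n)$ and $\phi_v(n)$ are opposite mod $\pi$, up to an error of size $\epsilon$.

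First, I would observe that $\delta\phi(n)\in[\pi-\epsilon,\pi]$ implies $|\sin(\phi_u(n))+\sin(\phi_v(n))|\leq \epsilon$ and $|\cos(\phi_u(n))+\cos(\phi_v(n))|\leq \epsilon$. Plugging the Prüfer representation into the eigenvalue equation $a_{n+1}u(n+1)=(\omega_n-E)u(n)-a_n u(n-1)$ and the analogous equation for $v$, and using $E<\epsilon$, I would obtain
\begin{equation*}
\frac{u(n+1)}{r_u(n)}+\frac{v(n+1)}{r_v(n)}=K''_\epsilon,\qquad |K''_\epsilon|\leq (C+1)\epsilon,
\end{equation*}
exactly as in Lemma~\ref{coli} but with a plus sign. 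Squaring this, together with $\bigl(u(n)/r_u(n)\bigr)^2+\bigl(v(n)/r_v(n)\bigr)^2 = \sin^2\phi_u(n)+\sin^2\phi_v(n)$, which satisfies $|\sin^2\phi_u(n)-\sin^2\phi_v(n)|\leq 2\epsilon$ (since $\sin\phi_u\approx -\sin\phi_v$), gives
\begin{equation*}
\left(\frac{r_u(n+1)}{r_u(n)}\right)^2=\left(\frac{r_v(n+1)}{r_v(n)}\right)^2+\widetilde{M}_\epsilon,\qquad |\widetilde{M}_\epsilon|\leq C'\epsilon.
\end{equation*}
Combined with the two-sided bounds $1/C \leq r_u(n+1)/r_u(n), r_v(n+1)/r_v(n) \leq C$ from Lemma~\ref{progress}, this yields $|1+(r_v(n+1)/r_v(n))(r_u(n)/r_u(n+1))\cdot(\pm 1)|$-type control; the relevant sign is fixed by continuity from step $n$, and I get
\begin{equation*}
\left|1+\left(\frac{r_v(n+1)}{r_v(n)}\right)\left(\frac{r_u(n)}{r_u(n+1)}\right)\cdot\mathrm{sgn}\right|\leq \widetilde{C}\epsilon.
\end{equation*}

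Then, rewriting $\sin(\phi_u(n+1))=(u(n+1)/r_u(n))\cdot(r_u(n)/r_u(n+1))$ and similarly for $v$, the plus sign propagates and I conclude $|\sin(\phi_u(n+1))+\sin(\phi_v(n+1))|\leq N\epsilon$ for some $N>0$. The identical argument, using $r_u(n+1)\cos\phi_u(n+1)=a_{n+1}u(n)=a_{n+1}r_u(n)\sin\phi_u(n)$, gives $|\cos(\phi_u(n+1))+\cos(\phi_v(n+1))|\leq N\epsilon$. Together these two inequalities force $\delta\phi(n+1)\in[\pi-2N\epsilon,\pi]$, and taking $\epsilon_0\leq \pi/(4N)$ gives the claim with the same $\epsilon_0$ (or a smaller one) as in Lemmas~\ref{not0} and~\ref{coli}.

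The only subtle point, i.e.\ the ``hard part'', is the sign ambiguity introduced by squaring in the step comparing $(r_u(n+1)/r_u(n))^2$ to $(r_v(n+1)/r_v(n))^2$: squaring cannot by itself distinguish $\phi_u(n+1)\approx \phi_v(n+1)$ from $\phi_u(n+1)\approx \pi+\phi_v(n+1)$. This is resolved by using the already-known signs at step $n$ in the linear relation $u(n+1)/r_u(n)+v(n+1)/r_v(n)=O(\epsilon)$ (not its square) to pick out the correct branch, exactly as the subtraction does in Lemma~\ref{coli}. Apart from this bookkeeping, the estimates are routine and parallel to the companion lemma.
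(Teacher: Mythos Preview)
Your proposal is correct and is exactly what the paper does: the paper's entire proof of Lemma~\ref{oppo} is the single line ``In the same way one proves'', referring back to Lemma~\ref{coli}, and your sign-flipped rewriting is precisely that.

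One small remark: the ``subtle point'' you flag is not actually there. The Pr\"ufer amplitudes $r_u,r_v$ are positive by definition, so from $\bigl(r_u(n+1)/r_u(n)\bigr)^2=\bigl(r_v(n+1)/r_v(n)\bigr)^2+\widetilde{M}_\epsilon$ together with the two-sided bounds of Lemma~\ref{progress} you directly obtain $\bigl|1-\bigl(r_v(n+1)/r_v(n)\bigr)\bigl(r_u(n)/r_u(n+1)\bigr)\bigr|\leq \widetilde C\epsilon$, with a minus sign and no ambiguity---identical to the display in the proof of Lemma~\ref{coli}. The sign change from Lemma~\ref{coli} to Lemma~\ref{oppo} lives only in the \emph{linear} relation $u(n+1)/r_u(n)\approx -\,v(n+1)/r_v(n)$, which then propagates straightforwardly to $\sin\phi_u(n+1)\approx -\sin\phi_v(n+1)$ and $\cos\phi_u(n+1)\approx -\cos\phi_v(n+1)$. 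No branch-selection argument is needed.
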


\section{Proof of Theorem~\ref{example}}\label{ap2}
Let $(\omega_i)_{i\in\N}$ be independent random variables with a common, compactly supported density satisfying
\begin{equation}\label{KlAssump}
\exists \rho_0>0, \exists \epsilon_0>0, \forall \epsilon\in[0,\epsilon], \int_\R \sup_{u\in[-1,1]} |g(t+\epsilon u)-g(t)| dt \leq \left(\frac{\epsilon}{\epsilon_0}\right)^{\rho_0}.
\end{equation}
If $(u_i)_{i\in\N}$ are sequences that are not all zero and 
\begin{equation}
V_\omega(n):=\sum_{i\in\R} \omega_i u_i(n-i)
\end{equation}
then define $H_\omega:=H_0+V_\omega$ on $\ell^2(\Z)$ where $H_0$ is a lower semi-bounded, self-adjoint perturbation of the laplacian. Then, the spectrum is almost surely constant and, as in \cite{Kl95}, we have a Wegner estimate at the lower edge of the spectrum. Indeed, let $E_{\inf}$ be the infimum  of the almost sure spectrum. 

\begin{theo}\label{KlW}
There exists $E_0>E_{\inf}$, $C>0$, $q_0>0$ such that , for any $l\geq 2$, any $\epsilon>0$ and any $E\in[E_{\inf},E_0]$,
\begin{equation}
\mathbb{P}\left(d(E,\sigma(H_\omega(\Lambda_l)))<\epsilon\right)\leq C \epsilon^{\rho_0}l^{q_0}.
\end{equation}

\end{theo}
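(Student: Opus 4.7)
The plan is to follow the strategy of Klopp \cite{Kl95}. Since the single-site potentials $(u_i)_i$ are not assumed to have fixed sign, we cannot rely on a direct monotonicity of eigenvalues with respect to individual $\omega_i$. Instead, we exploit the fact that, near the infimum of the almost-sure spectrum, one has very precise control on low-lying eigenfunctions (they are well approximated by the ground state of a deterministic reference operator), and this allows us to construct a vector field in probability space along which the low eigenvalues do move monotonically.

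First I would reduce the probability to a spectral averaging problem. Using the standard inequality
\begin{equation*}
\mathbb{P}\bigl(d(E,\sigma(H_\omega(\Lambda_l)))<\epsilon\bigr)\leq \mathbb{E}\bigl[\mathrm{tr}\,\mathbf{1}_{[E-\epsilon,E+\epsilon]}(H_\omega(\Lambda_l))\bigr],
\end{equation*}
it suffices to bound the right-hand side by $C\epsilon^{\rho_0}l^{q_0}$. Next, one picks a reference deterministic periodic operator $H_{\mathrm{per}}$ on $\Lambda_l$ whose ground state $\Phi_0$ is strictly positive (this is possible since $H_0$ is a short-range perturbation of the discrete Laplacian and since we are at the bottom of the spectrum). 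By analytic perturbation theory, for $E\in[E_{\inf},E_0]$ with $E_0-E_{\inf}$ small enough, any eigenfunction $\psi$ of $H_\omega(\Lambda_l)$ with eigenvalue in $[E_{\inf},E_0]$ satisfies $\langle\psi,\Phi_0\rangle\neq 0$ and more quantitatively, $|\langle V_a\psi,\psi\rangle|\gtrsim \|a\|$ for a suitably chosen direction $a=(a_i)_i$ in $\ell^\infty(\Lambda_l)$ built from the positivity of $\Phi_0$ and the non-triviality of the $(u_i)_i$.

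The core step is then the spectral averaging argument. Introducing the change of variables $\omega_i\mapsto \omega_i+t\,a_i$ and using the product structure of the probability measure together with Fubini, the event $\{d(E,\sigma(H_\omega(\Lambda_l)))<\epsilon\}$ is contained (after integrating along the $t$ direction) in an event of the form $\{s\in I:\|H_{\omega+sa}(\Lambda_l)-H_\omega(\Lambda_l)-tE\|\leq \epsilon\}$, whose Lebesgue measure is $O(\epsilon)$ by the construction of $a$. Assumption \eqref{KlAssump} on the density $g$ converts this measure bound into an $\epsilon^{\rho_0}$ probability bound, at the cost of a polynomial factor in $l$ coming from the $\ell^1$-norm of $a$ (which is at most $|\Lambda_l|$) and from summing over possible localisation regions of eigenfunctions. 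The exponent $q_0$ is then the sum of the polynomial costs: volume of the box, estimates on derivatives of eigenvalues, and the number of low-lying eigenstates (controlled by Lifshitz-tail type bounds following from \cite{Kl95}).

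The main obstacle is the absence of a sign condition on the $(u_i)_i$, which is what forces one to work near the band edge: only there does the positivity of $\Phi_0$ give a universal lower bound on $\langle V_a\psi,\psi\rangle$ independent of the randomness. This explains both the restriction $E\in[E_{\inf},E_0]$ and the appearance of $\rho_0$ (coming purely from \eqref{KlAssump}) instead of a more favourable $1$ as in the monotone case. All remaining steps, including the Combes--Thomas estimate needed to decouple the low-lying eigenfunctions from the boundary of $\Lambda_l$ and the counting of candidate localisation centres, are standard and give only polynomial corrections in $l$.
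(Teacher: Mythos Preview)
The paper does not actually prove Theorem~\ref{KlW}. It is stated as a quotation of a known result: the text preceding the theorem says ``as in \cite{Kl95}, we have a Wegner estimate at the lower edge of the spectrum,'' and immediately after the statement the paper moves on to the specific example without any argument. So there is nothing in the paper to compare your proposal against; the theorem is imported from Klopp's work, not reproved here.

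That said, your sketch is in the right spirit for how \cite{Kl95} proceeds: the absence of a sign condition on the single-site potentials is compensated, near the bottom of the spectrum, by the positivity of a deterministic ground state, which furnishes a direction in parameter space along which low-lying eigenvalues vary monotonically; one then performs a one-parameter spectral averaging and uses the regularity hypothesis~\eqref{KlAssump} on the density to get the factor $\epsilon^{\rho_0}$, with polynomial losses in the volume. A couple of places in your write-up are garbled (the displayed set ``$\{s\in I:\|H_{\omega+sa}(\Lambda_l)-H_\omega(\Lambda_l)-tE\|\leq \epsilon\}$'' mixes the variables $s$ and $t$ and the norm expression is not what one actually bounds; and ``analytic perturbation theory'' is not quite the mechanism that gives $\langle\psi,\Phi_0\rangle\neq 0$ uniformly --- this comes rather from a quantitative ground-state positivity / Temple-type argument). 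If you were to submit this as a proof rather than a pointer to \cite{Kl95}, these steps would need to be made precise. For the purposes of the present paper, simply citing \cite{Kl95} is what the author does and is sufficient.
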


Now taking $u_{2i+1}:=0$ and $u_{2i}(0)=1,u_{2i}(-1)=-1$ and zero elsewhere, we get 
$V_\omega(2i)=-V_\omega(2i+1)=\omega_{2i}$. Suppose now that $\omega_1$ has support equal to $[0,1]$ . Then, we know that almost surely, $\sigma(H_\omega(\Lambda_l))\subset [-3,3]$.

 Now, for $\phi\in\ell^2(\Z)$ define
$\Phi(n)=\begin{pmatrix}
\phi(n)\\\phi(n-1)
\end{pmatrix}$. If $\phi$ is an eigenvector for the energy $E$ then the one-step transfer matrix going from $n$ to $n+1$ is $\begin{pmatrix}
V_\omega(n)-E & -1\\
1 & 0
\end{pmatrix}.$ Thus, the two-step transfer matrix going from $2n$ to $2n+2$ is equal to 
\begin{align}
-T_n:&=\begin{pmatrix}
-(\omega_n-E)(\omega_n+E)-1 & -(\omega_n-E)\\
-(\omega_n+E) & -1
\end{pmatrix}
\\&=
-\begin{pmatrix}
(\omega_n-E)(\omega_n+E)+1 & (\omega_n-E)\\
(\omega_n+E) & 1
\end{pmatrix}.
\end{align}
From now on we study the matrices $(T_n)_n$ and not the two-step matrices directly, but what we prove for the $(T_n)_n$ also holds for $(-T_n)_n$.
The matrices $(T_n)_{n\in\Z}$ are i.i.d and following \cite{bougerol1985products} we prove that G, the closed subgroup of $SL_2(\R)$ generated by the support of the matrices is not compact and that the orbit of each direction in $\mathbb{P}(\R^2)$, the projective plan, has at least three elements. 

Fix $E\in[-3,0]$, the case $E\in[0,3]$ being handled in the same way. As $\omega_n$ has support equal to $[0,1]$, the  following matrix belongs to G, taking $\omega_1=0$ :
\begin{equation}
A:=
\begin{pmatrix}
1-E^2&-E\\E&1
\end{pmatrix}.
\end{equation}
We have $\text{tr}(A)=2-E^2$, so if $E\in[-3,-2)$, $\text{tr}(A)<-2$ and A is hyperbolic so G is not bounded, thus, not compact.

Now take $\tilde{x}$ the class of $x$ in $\mathbb{P}(\R^2)$ and suppose $E\in[-3,-2)$, the case $E\in(2,3]$ being handled in the same way. The matrix  A has one eigenvalue in $(1,+\infty)$ associated to the eigenvector $e_1$ and an other in $(0,1]$ associated to $e_2$. Thus, any orbit of a direction different of $e_1$ and $e_2$ has an infinity of elements. But as $\omega_n$ takes all values in $(0,1)$, for $\delta\in(0,-2-E)\subset(0,1)$ and $\omega_n=\delta$, G contains an other hyperbolic matrix $C$  with $\text{tr}(C)=2-E^2+\delta<2-(-2-\delta)^2+\delta=-2-\delta-\delta^2<-2$ but with other eigenvectors and eigenvalues. So the orbits of $e_1$ and $e_2$ also contain an infinity of elements.

Now, fix $E\in(-2,-1)$. Then, taking $\omega_1=1$, the following matrix belongs to G :

\begin{equation}
B:=
\begin{pmatrix}
2-E^2&1-E\\1+E&1
\end{pmatrix}.
\end{equation}
Thus,
\begin{equation}
\text{tr}(B)=3-E^2
\end{equation}
and both $A$ and $B$ are elliptic matrices. Now, we compute
\begin{equation}
\text{tr}(A^2)=(1-E^2)^2-2E^2+1=E^4-4E^2+2\in(-2,2)
\end{equation}
Thus, $A$ and $A^2$ are elliptic matrices. Now, we know that an elliptic matrix has no fixed point in $\mathbb{P}(\R^2)$. Therefore, the orbit of any point $\tilde{x}\in\mathbb{P}(\R^2)$ contains the set $\{\tilde{x},A\tilde{x},A^2\tilde{x}\}$ which have three elements.
Now, a computation shows that 
\begin{equation}
AB-BA=\left(\begin{array}{cc}
-2\cdot E & -E^{2}+E \\
E^{2}+E & 2\cdot E
\end{array}\right).
\end{equation}
Thus, the matrices $A$ and $B$ do not commute. Thus, their commutator \\
$A^{-1}B^{-1}AB$ is an hyperbolic matrix that belongs to $G$ and $G$ is not compact.

Now, suppose $E=-2$. Then, $A$ is a parabolic matrix and G is not compact. We compute $\text{tr}(B)=-1$ and $\text{tr}(B^2)=(2-E^2)^2+2(1-E)(1+E)+1=E^4-6E^2+9=1$. Thus, $B$ and $B^2$ are elliptic matrix and the orbit of any element $\tilde{x}\in\mathbb{P}(\R^2)$ has at least three elements.

Now, suppose $E[-1,0)$ and take $\delta\in(0,-E)$. Then, $A$ and $A^2$ are still elliptic matrices, so the orbit of any direction has at least three elements.
As $\omega_1$ is uniformly distributed on $[0,1]$, the matrix 
\begin{equation}
C_\delta:=
\begin{pmatrix}
1+\delta^2-E^2&1-E+\delta\\1+E+\delta&1
\end{pmatrix}
\end{equation}
belongs to $G$. We compute $\text{tr}(C)=2+\delta^2-E^2\in(-2,2)$, so $C$ is an elliptic matrix that do not commute with $A$. Hence, their commutator is a hyperbolic matrix and the group G is not compact.

Eventually, suppose $E=0$. Then, $B$ is an hyperbolic matrix, so the group G is not compact. Now, the matrix $C_\delta$ for some $\delta\in(0,1)$ is also hyperbolic, but with different eigenvalues and eigenvectors. Thus, the orbits of any element has at least three elements.

This complete the proof of 
\begin{prop}
If the support of $\omega_1$ is the interval $[0,1]$  and $E\in[-3,-3]$,  the group G is not compact and the orbit of each direction has at least three elements. 
\end{prop}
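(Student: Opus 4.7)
The plan is to exploit the assumption $\mathrm{supp}(\omega_1) = [0,1]$, which guarantees that for every $\delta\in[0,1]$ the matrix $C_\delta := T_0|_{\omega_n=\delta}$ lies in $G$; in particular $A := C_0$ and $B := C_1$, together with the whole one-parameter family $\{C_\delta\}$, are available, and so are all of their products. I would first note that the spectral symmetry $E\leftrightarrow -E$ derived at the start of the section conjugates the transfer matrices by a fixed element of $GL_2(\mathbb{R})$, so it suffices to treat $E\in[-3,0]$. I would then partition this interval into the natural subcases $[-3,-2)$, $\{-2\}$, $(-2,-1)$, $[-1,0)$, and $\{0\}$, indexed by where $\operatorname{tr}(A) = 2-E^2$ and $\operatorname{tr}(B) = 3-E^2$ fall with respect to the trichotomy $|\operatorname{tr}|>2$ (hyperbolic), $=2$ (parabolic), $<2$ (elliptic) in $SL_2(\mathbb{R})$.

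For the non-compactness of $G$, the strategy is to exhibit in each subcase either a single hyperbolic element (e.g.\ $A$ when $E\in[-3,-2)$, $B$ when $E=0$), a parabolic one (at $E=-2$), or, when every $C_\delta$ happens to be elliptic, a pair of non-commuting elliptic elements whose commutator is forced to be hyperbolic. Non-commutativity of $A$ and $B$ in the elliptic regime would be verified by direct computation of $AB-BA$, which one can check is non-zero as soon as $E\neq 0$, and the standard fact that the commutator of two elliptic elements of $SL_2(\mathbb{R})$ sharing no invariant direction is hyperbolic then closes this part.

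For the orbit condition, I would use that an elliptic element $M\in SL_2(\mathbb{R})$ with $M\neq \pm I$ has no fixed direction in $\mathbb{P}(\mathbb{R}^2)$, so if moreover $M^2\neq\pm I$ then $\{\tilde x,M\tilde x,M^2\tilde x\}$ consists of three distinct projective points for every $\tilde x$. In the bulk of the parameter range, $A$ itself serves as such an $M$, since $\operatorname{tr}(A^2)=E^4-4E^2+2\in(-2,2)$ for $|E|\in(0,2)\setminus\{\sqrt 2\}$ and $A^2$ can be checked to differ from $\pm I$ at those values. In the degenerate subcases I would substitute a different generator: at $E=-2$, $\operatorname{tr}(B)=-1$ gives a rotation of order three on $\mathbb{P}(\mathbb{R}^2)$; in $[-3,-2)$, two hyperbolic matrices $A$ and $C_\delta$ with distinct eigendirections (obtained by varying $\delta\in(0,-2-E)$) ensure that no direction is common to their axes, hence every orbit contains at least three points; at $E=0$, where $A=I$ collapses, the pair $(B,C_\delta)$ plays this role.

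The main obstacle is the careful bookkeeping at the boundary energies $E\in\{-2,-1,-\sqrt 2,0\}$ and at the corner $|E|=3$, where various candidate matrices degenerate and one must verify that the single- or paired-witness argument still produces \emph{both} an unbounded element and an element with a rich enough orbit structure. Once the proposition is established, Theorem~\ref{example} follows by combining it with the Furstenberg-type criterion of \cite{bougerol1985products}: non-compactness plus three-point orbits yield positivity of the Lyapunov exponent and irreducibility, from which Hölder continuity of the IDS on $[-3,3]$ can be extracted by the standard arguments in that reference.
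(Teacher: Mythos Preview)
Your proposal is correct and follows essentially the same route as the paper: the same case partition of $[-3,0]$ by the spectral type of $A$ and $B$, the same witnesses for non-compactness (hyperbolic $A$ or $B$, parabolic $A$ at $E=-2$, commutator of non-commuting elliptic elements elsewhere), and the same orbit argument via $\{\tilde x, M\tilde x, M^2\tilde x\}$ for an elliptic $M$ or via a pair of hyperbolic matrices with distinct axes. Your extra caution at $E=-\sqrt{2}$ (where $\operatorname{tr}(A)=0$ forces $A^2=-I$) is in fact warranted, since the paper's assertion that $\operatorname{tr}(A^2)\in(-2,2)$ throughout $(-2,-1)$ overlooks precisely this point.
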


We can now apply the Furstenberg theorem and as in \cite[Sections B-II-4 and B-II-6]{bougerol1985products}, using the regularity of the Lyapunov exponent, we obtain that the density of state is H\"older continuous in $[-3,3]$. Combining this with Theorem~\ref{KlW}, we have both results at the lower edge of the spectrum and this prove Theorem~\ref{example}.

\bibliographystyle{plain}
\bibliography{biblio}
\end{document}